\newif\ifsubmit     
\newif\ifllncs      
\newif\ifexabs      
\newif\ifblind      

\submittrue

\ifllncs
  \documentclass[runningheads,a4paper]{llncs}

\else
  \documentclass[letterpaper,11pt]{article}
  \usepackage[in]{fullpage}
  \usepackage{setspace}
  \usepackage[margin=1in]{geometry}
\fi

\usepackage{iftex}
\ifPDFTeX
  \usepackage[utf8]{inputenc}
  \usepackage[noTeX]{mmap}
  \usepackage[T1]{fontenc}
\fi
\ifLuaTeX
  \usepackage{luatex85}
  \usepackage[noTeX]{mmap}
\fi

\usepackage{mdframed}
\usepackage{amsmath}
\usepackage{amsfonts}
\usepackage{amssymb}
\usepackage{amsthm}
\usepackage{color}

\allowdisplaybreaks[3]

\usepackage{appendix}
\usepackage{algorithm}
\usepackage{algpseudocode}
\usepackage{braket}
\usepackage{comment}
\usepackage{url}
\usepackage{bbm}
\usepackage{multicol}
\usepackage{mdframed}
\usepackage{multirow}

\usepackage[bookmarks]{hyperref}
\usepackage[nameinlink]{cleveref}
\hypersetup{
  colorlinks,
  allcolors=blue
}

\ifllncs

  \spnewtheorem{claim}{Claim}{\bfseries}{\rmfamily}
  \crefname{claim}{claim}{claims}
  \Crefname{claim}{Claim}{Claims}
\else
  \newtheorem{theorem}{Theorem}[section]
  \newtheorem{definition}[theorem]{Definition}
  \newtheorem{remark}[theorem]{Remark}
  \newtheorem{lemma}[theorem]{Lemma}
  \newtheorem{corollary}[theorem]{Corollary}
  \newtheorem{proposition}[theorem]{Proposition}
  
  \newtheorem{claim}[theorem]{Claim}
  \newtheorem*{remark*}{Remark}

  \newtheorem*{theorem*}{Theorem}
  \newtheorem*{lemma*}{Lemma}

\usepackage[style=alphabetic,minalphanames=3,maxalphanames=4,maxnames=99,backref=true]{biblatex}

  \DeclareFieldFormat{eprint:iacr}{Cryptology ePrint Archive: \href{https://ia.cr/#1}{\texttt{#1}}}
  \DeclareFieldFormat{eprint:iacrarchive}{Cryptology ePrint Archive: \href{https://eprint.iacr.org/archive/#1}{\texttt{#1}}}
  \AtEveryBibitem{
    \clearlist{address}
    \clearfield{date}
    \clearfield{isbn}
    \clearfield{issn}
    \clearlist{location}
    \clearfield{month}
    \clearfield{series}
 
    \ifentrytype{book}{}{
      \clearlist{publisher}
      \clearname{editor}
    }
  }
\fi

\usepackage{appendix}
\usepackage{algorithm}
\usepackage{algpseudocode}
\usepackage{braket}
\usepackage{comment}
\usepackage{url}
\usepackage{multicol}
\usepackage{tikz}
\usetikzlibrary{arrows,automata,positioning}
\usepackage{caption}
\usepackage[caption=false]{subfig}
\usepackage[font=small,labelfont=bf]{caption}
\usepackage{comment}
\usepackage{pifont}
\usetikzlibrary{patterns}
\usetikzlibrary{arrows.meta}
\usepackage[T1]{fontenc}
\usepackage{textcomp}
\usepackage{dsfont}

\usepackage{enumitem}
\setlist[description]{noitemsep}
\setlist[enumerate]{noitemsep}
\setlist[itemize]{noitemsep}

\usepackage{soul, xcolor, xparse}
\makeatletter
  \ExplSyntaxOn
    \cs_new:Npn \white_text:n #1
    {
      \fp_set:Nn \l_tmpa_fp {#1 * .01}
      \llap{\textcolor{white}{\the\SOUL@syllable}\hspace{\fp_to_decimal:N \l_tmpa_fp em}}
      \llap{\textcolor{white}{\the\SOUL@syllable}\hspace{-\fp_to_decimal:N \l_tmpa_fp em}}
    }
    \NewDocumentCommand{\whiten}{ m }
    {
      \int_step_function:nnnN {1}{1}{#1} \white_text:n
    }
  \ExplSyntaxOff
  
  \NewDocumentCommand{ \varul }{ D<>{5} O{0.2ex} O{0.1ex} +m } {%
    \begingroup
    \setul{#2}{#3}%
    \def\SOUL@uleverysyllable{%
      \setbox0=\hbox{\the\SOUL@syllable}%
      \ifdim\dp0>\z@
      \SOUL@ulunderline{\phantom{\the\SOUL@syllable}}%
      \whiten{#1}%
      \llap{%
        \the\SOUL@syllable
        \SOUL@setkern\SOUL@charkern
      }%
      \else
      \SOUL@ulunderline{%
        \the\SOUL@syllable
        \SOUL@setkern\SOUL@charkern
      }%
      \fi}%
    \ul{#4}%
    \endgroup
  }
\makeatother

\ifsubmit
    \newcommand{\qipeng}[1]{}
    \newcommand{\zikuan}[1]{}
    \newcommand{\zihan}[1]{}
\else
    \newcommand{\qipeng}[1]{{\color{red} Qipeng: #1}}
    \newcommand{\zikuan}[1]{{\color{blue} Zikuan: #1}}
    \newcommand{\zihan}[1]{{\color{orange} Zihan: #1}}
\fi

\newcommand{\E}{\mathop{\mathbb{E}}}

\newcommand{\abs}[1]{\left|#1\right|}
\newcommand{\norm}[1]{\left\lVert#1\right\rVert}

\newcommand{\As}{\mathcal{A}}
\newcommand{\Bs}{\mathcal{B}}

\newcommand{\cA}{\mathcal{A}}

\newcommand{\Rs}{\mathcal{R}}

\newcommand{\gen}{{\sf Gen}}

\newcommand{\sto}{{\sf StO}}
\newcommand{\pho}{{\sf PhO}}
\newcommand{\csto}{{\sf CStO}}
\newcommand{\cpho}{{\sf CPhO}}

\newcommand{\stddecomp}{{\sf StdDecomp}}

\newcommand{\cD}{{\mathcal{D}}}

\renewcommand{\pi}{{\sf PI}}

\newcommand{\eps}{\varepsilon}

\renewcommand{\kappa}{\ell}

\newcommand{\poly}{{\sf poly}}

\DeclareMathOperator{\Tr}{Tr}

\newcommand{\ketbra}[2]{\ket{#1}\!\bra{#2}}
\renewcommand{\braket}[2]{\langle #1\vert #2\rangle}
\renewcommand{\cal}[1]{\mathcal{#1}}

\newcommand{\brackets}[1]{\left(#1 \right)}

\newcommand{\bit}{{\{0, 1\}}}

\DeclareMathAlphabet\mathbfcal{OMS}{cmsy}{b}{n}

\DeclareFontFamily{U}{skulls}{}
\DeclareFontShape{U}{skulls}{m}{n}{ <-> skull }{}

\DeclareRobustCommand{\substack}[1]{\subarray{c}#1\endsubarray}

\renewcommand{\set}[1]{\left\{#1\right\}}

\newenvironment{boxfig}[2]{\begin{figure}[#1]\fbox{\begin{minipage}{0.97\linewidth}
                        \vspace{0.2em}
                        \makebox[0.025\linewidth]{}
                        \begin{minipage}{0.95\linewidth}
            {{
                        #2 }}
                        \end{minipage}
                        \vspace{0.2em}
                        \end{minipage}}
                        }
                        {\end{figure}}
\newcommand{\protocol}[4]{
\begin{boxfig}{ht}{\footnotesize 
\centering{\textbf{#1}}
    #4
\vspace{0.2em} } \caption{\label{#3} #2}
\end{boxfig}
}

\newcommand{\trace}[1]{\Tr\brackets{#1}}

\title{
    On the Need for (Quantum) Memory with Short Outputs
}

\ifllncs
\author{
}
\institute{
}
\else
\author{Zihan Hao  \\ \small{UC San Diego} \and Zikuan Huang  \\ \small{Shanghai Qizhi Institute} \and Qipeng Liu \\ \small{UC San Diego}}
\fi

\date{}

\begin{document}

\maketitle

\ifllncs 
\begin{abstract}
\end{abstract}
\else
\begin{abstract}


In this work, we establish the first separation between computation with bounded and unbounded space, for problems with short outputs (i.e., working memory can be exponentially larger than output size), both in the classical and the quantum setting. Towards that, we introduce a problem called \emph{nested collision finding}, and show that optimal query complexity can not be achieved without exponential memory. 

Our result is based on a novel ``two-oracle recording'' technique, where one oracle ``records'' the computation's long outputs under the other oracle, effectively reducing the time-space trade-off for short-output problems to that of long-output problems. We believe this technique will be of independent interest for establishing time-space tradeoffs in other short-output settings.
\end{abstract}
\fi

\newpage

\tableofcontents

\newpage

\section{Introduction}

Does quantum speedup need large memory? This question has been extensively studied in different contexts, including advice-aided computation~\cite{nayebi2014quantum,chung2020tight} (preprocessing model), online learning~\cite{liu2023memory,chen2024optimal} (streaming model), quantum pebbling~\cite{blocki2022parallel} (pebbling model) and space-bounded computation with massive outputs~\cite{Hamoudi_2023,beame2024quantum} (general model). These works collectively demonstrate that quantum computation faces fundamental limitations when restricted to a small amount of (classical/quantum) memory, whether in the form of advice or working space.

One of the most important questions towards understanding the role of memory/space in quantum computation, is the quantum collision finding problem. In this problem, one is given a function $f: [M] \to [N]$ (modeled as a random function) and the goal is to find a pair of distinct inputs $x \ne x'$ such that $f(x) = f(x')$.
Classically, an algorithm needs $O(\sqrt{N})$ time to achieve the goal, and it is tight due to the Birthday Paradox. The famous Pollard-Rho algorithm~\cite{pollard1978monte} can achieve this with $O(\sqrt{N})$ time and $\widetilde{O}(1)$ space. 
Quantumly, the BHT algorithm~\cite{BHT98} can solve this problem with both time and space at most $O(N^{1/3})$. Unlike its classical counterpart, we do not know if $N^{1/3}$ space is needed to achieve the optimal running time. More surprisingly, we do not even know whether quantum algorithms require any nontrivial amount of memory~\footnote{At least $\log N + \log M$ to allow classical/coherent evaluations of the function $f$.} to achieve an advantage over classical algorithms. This question of understanding the time-space tradeoffs for collision finding was also listed as one of the ten open challenges by Aaronson~\cite{aaronson2021open}, yet little progress has been made.

One key difficulty in answering the above questions, stems from the fact that proving time-space tradeoffs for problems with short outputs in the ``generic model'' is challenging even in the \emph{classical} setting. Here, the ``generic model'' refers to algorithms that perform computation on bounded memory, can discard some used memory, and introduce new memory on demand.
This was already pointed out by Dinur~\cite{dinur2020tight}. Since the introduction of time-space tradeoffs by Cobham~\cite{cobham1966recognition}, establishing strong time-space tradeoffs for short-output problems is open;~\cite{fortnow2005time} demonstrated some tradeoffs for short-output problems, but both time and space are only logarithmic in the input size, making the statement not strong. Other models (preprocessing, streaming and pebbling models) do not have a similar barrier and thus most results directly apply to problems with short outputs. On the other hands, existing trade-off results in the ``generic model'' only apply to large-output problems, where the required memory is at least as large as the output size, like \cite{dinur2020tight} in the classical setting and \cite{Hamoudi_2023} in the quantum setting for multi-collision finding. Given these challenges, we pose the following fundamental questions in this paper: 

\begin{center}
    {\it Can we establish strong classical and quantum time-space trade-offs for problems with \textbf{short outputs}?}
\end{center}  

More precisely we ask: 
\begin{center}
    {\it If a problem has a short output ($\poly(n)$ bits), does an optimal (quantum) algorithm need $\exp(n)$ space?}
\end{center}

Short outputs are not an edge case: decision problems and most search problems in the (Quantum) Random Oracle Model all fall into this regime. Technically, all previous methods in the generic model (for long outputs) seem to require a certain progress measure, which is no longer well defined for short-output problems. Even partial results --- say, ruling out $T\cdot S=o(N^{\alpha})$ for some constant $\alpha>0$ on natural short-output tasks --- would represent a qualitative advance over the current logarithmic barrier and bring us materially closer to Aaronson’s question.


\subsection{Our results}

In this work, we introduce the \emph{$\ell$-Nested Collision Finding} problem for a constant \( \ell > 1 \).
Given two random oracles \( H: [M] \to [N] \) and \( G: [M^\ell] \to [N_0] \), the goal is to find two distinct ordered tuples \( (x_1, x_2, \ldots, x_\ell) \neq (x'_1, x'_2, \ldots, x'_\ell) \in [M^\ell] \) such that:  
\begin{itemize}
    \item \( \sum_{i=1}^\ell H(x_i) = \sum_{i=1}^\ell H(x'_i) = 0\), \quad (here $\sum$ denotes summation mod $N$), and,   
    \item \( G(x_1, \ldots, x_\ell) = G(x'_1, \ldots, x'_\ell) \).  
\end{itemize} 
We will sometimes call the problem as ``finding a nested collision pair''.
We can strengthen the problem so that instead of asking the two $\ell$-sums equal to $0$, we require the two $\ell$-sums to be any fixed $y$ or uniformly random $y$ that is presented in the input. This strengthening will not change any lemma below. For the simplicity, we will just define the problem with the $\ell$-sums being $0$. 

\medskip

Note that the problem has a short output; the output is only of size poly-logarithmic in the size of the input (if we treat as input the truth table of the oracles $H, G$).
At a high level, the goal is to find two \(\ell\)-tuples whose images under \( H \) have the same sum $0$ and whose images under \( G \) are identical.  
Although this formulation involves two random oracles, a single random oracle with a sufficiently large domain can serve as both \( H \) and \( G \). However, for clarity, we will continue to work with two separate oracles throughout this paper.  


\medskip

Our main contribution is to show that the $\ell$-Nested Collision Finding problem has time-space tradeoffs in both the classical and the quantum setting. 

\subsubsection*{Quantum algorithms for short-output problems require space.}
\begin{theorem}[Main quantum theorem] \label{thm:quantum_informal}
    For $\ell\geq 2$, any quantum algorithm that solves the $\ell$-Nested Collision Finding problem with constant probability with space $S$ and $T$ oracle queries must satisfy $S^{\frac{\ell+1}{2}}T=\Omega\brackets{N^{\frac{1}{2(\ell+1)}}N_0^{\frac{1}{4}}}$ if $S=\Omega(\log N)$ and $S=O\brackets{N^{\frac{1}{(\ell + 1)(2 \ell + 1)}}}$. 
\end{theorem}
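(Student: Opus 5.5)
We prove the bound by the standard time-space approach: slice the computation into blocks of queries and bound how well one block, starting from an $S$-qubit state, can make progress. The twist is that the output is short, so there is no natural progress measure. The ``two-oracle recording'' idea supplies one. We view $G$ as recording the long outputs of the computation on $H$. We analyze $G$ with the compressed-oracle technique, so every query to $G$ at a tuple $(x_1,\dots,x_\ell)$ whose $H$-sum is $0$ is effectively written down. A nested collision then forces the algorithm to have recorded many tuples in $G$'s database that are $\ell$-sum solutions for $H$. This reduces the problem to a long-output $\ell$-sum (multi-solution) search against $H$, for which a time-space tradeoff can be proven.

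The concrete steps are as follows.

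First, we give a $G$-side bound. Using compressed oracles for $G$ (range $N_0$), an algorithm that finds a $G$-collision among a set of $k$ inputs (the ``valid'' tuples with $H$-sum $0$) with $T$ queries succeeds with probability $O(T^2/N_0)$, or more finely $O(Tk/N_0)$-type bounds. Constant success therefore requires the number $K$ of distinct valid tuples queried to $G$ to satisfy $K \gtrsim \sqrt{N_0}$ (up to the interplay with $T$). Making this rigorous requires that queries to $G$ at invalid tuples are useless; this follows because they cannot contribute to a valid collision.

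Second, we give an $H$-side long-output bound. We split the $T$ queries into $T/T'$ blocks of $T'$ queries. Within one block, starting from arbitrary $S$-qubit advice (handled by the usual trick of replacing the state with the maximally mixed state at a $2^{-S}$ loss, as in \cite{Hamoudi_2023}), we bound the probability of producing $k = \Theta(S)$ fresh valid $\ell$-tuples, i.e., tuples whose $H$-sum is $0$. A direct-product / recording-based argument gives a success probability of at most $\bigl(O(T'^{\,\ell+1}\mathrm{poly}(S)/(k^{\ell+1}N))\bigr)^{k/2}$. The exponent $\ell+1$ reflects that, in the $\ell$-sum problem, finding one solution quantumly costs about $N^{1/(\ell+1)}$ queries. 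Choosing $T' \approx S\cdot N^{1/(\ell+1)}/\mathrm{poly}(S)^{1/(\ell+1)}$ makes this smaller than $2^{-2S}$. Hence each block contributes only $O(S)$ valid tuples, so in total $K = O(S\,T/T')$.

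Combining the two steps gives $\sqrt{N_0}\lesssim K\cdot(\text{polynomial factors})$. With the squaring coming from the collision amplitude, this rearranges to $S^{(\ell+1)/2}T \gtrsim N^{1/(2(\ell+1))}N_0^{1/4}$. The conditions $S=\Omega(\log N)$ and $S=O(N^{1/((\ell+1)(2\ell+1))})$ are exactly what make $k=\Theta(S)$ meaningful and keep $T'$ within the regime where the block bound holds.

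The main obstacle is the interface between the two oracles. A single query to $G$ in superposition touches many tuples, and $H$ is queried simultaneously. We must therefore show that the recorded $G$-database is (almost) a classical list of valid tuples which can be charged to the blocks, and that the $H$-side block bound still applies when the $G$-database is part of the output. I would first attempt this by treating the purified joint compressed oracle for $(H,G)$, defining a projector onto ``$G$-database contains $\ge k$ new valid tuples,'' and proving a progress-measure bound on how much each query can increase its weight. This is the step where the proof is most likely to need adjustment, for example by losing additional polynomial factors of $S$.
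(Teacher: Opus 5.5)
Your second step is where the argument breaks. You conclude that each block of $T'$ queries contributes only $O(S)$ new valid tuples, and you then add these up to get $K=O(ST/T')$. In \cite{Hamoudi_2023} this summation is legitimate because each block appends its outputs to a fresh register, which can be measured and charged to that block. Here the ``output'' is $G$'s compressed database. That is a single shared write-only memory held in superposition, and different blocks write into the same cells (by XOR, i.e.\ phase kickback) and can cancel one another. ``Tuples newly recorded in block $t$'' is not something you can define without measuring the database at block boundaries, which changes the computation, and the quantum union bound does not apply. Composing per-block tail bounds on such a memory is exactly the ``Sparse-Write Tail-Bound Composition Problem'' that the paper identifies as open; it would imply an unstudied Fourier-tail composition statement for Boolean functions. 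The projector-based progress measure you propose at the end runs into the same issue. There is also a symptom that the bookkeeping is off. Your $K\gtrsim\sqrt{N_0}$ is the classical birthday count; quantumly only $K\gtrsim N_0/T^2$ is forced. Even granting your steps, that requirement combined with $K=O(ST/T')$ and $T'\approx N^{1/(\ell+1)}$ gives a $T^3$-type bound $T\gtrsim N_0^{1/3}N^{1/(3(\ell+1))}/\mathrm{poly}(S)$. This is a different tradeoff, and a stronger one when $N_0\ge N^{2/(\ell+1)}$. The claimed ``rearrangement'' to $S^{(\ell+1)/2}T\gtrsim N^{1/(2(\ell+1))}N_0^{1/4}$ does not follow from what you wrote.

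The paper avoids this by working with expectations rather than tails on both sides.
\begin{itemize}
\item On the $G$ side, \Cref{remember-tuples} bounds the success probability by $O(T^2V/N_0)$, where $V$ is the time-averaged \emph{expected} number of sum-$y$ entries in the compressed database. Hence $V=\Omega(N_0/T^2)$.
\item On the $H$ side (\Cref{thm:two-oracle-K-l-tuple-time-space-tradeoff}), $G$ is split into $\log N_0$ one-bit oracles and the computation is sliced into $L$ segments of $T'=cS^{-1}N^{1/(\ell+1)}$ queries. For each segment one bounds only the probability $p_e$ that it flips a uniformly random output entry indexed by a valid tuple.
\item In this FlipTest game all other entries of the output register act as a fixed classical oracle, so each segment really is an algorithm with $S$-qubit advice. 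Since $p_e\approx S^{2\ell}/M^\ell$ is not exponentially small in $S$, the advice cannot be removed by guessing at a $2^{-S}$ loss. The paper instead uses the $S$-fold alternating-projector amplification of \cite{liu2022nonuniformityquantumadvicequantum}, followed by the uniform bound \Cref{thm:quantum-time-bound-for-finding-tuples-database} (see \Cref{cor:non-uniform-random-challenge}).
\item The segments are glued by a hybrid/Cauchy--Schwarz argument (\Cref{lem:quantum-boolean-function-union-bound}), giving expected capacity at most $L^2M^\ell p_e$.
\end{itemize}
The extra factor of $L$ is the price of composing expectations instead of tails. It is exactly what produces $K=\widetilde O(S^{2\ell+2}T^2N^{-2/(\ell+1)})$. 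Together with $K=\Omega(N_0/T^2)$ this yields $T^4S^{2\ell+2}\gtrsim N_0N^{2/(\ell+1)}$, which is the stated bound.
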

The main theorem has both parameters $\ell$ and $N_0$ that are yet to be determined. By comparing against the best algorithm we constructed (see \Cref{thm:quantum-algorithm} for the algorithm), the $\ell$-Nested Collision Finding problem establishes a separation between space-bounded and unbounded quantum computation, provided that $\ell \geq 4$ .
For clarity, we select the most straightforward parameter choices and state a specific instance of our result \Cref{thm:quantum_informal} when $\ell = 4$ and $N_0 = N$.
\begin{theorem}[Separation between space-bounded/unbounded quantum computation]\label{thm:quantum_separation_informal}
    Any quantum algorithm that solves $4$-Nested Collision Finding problem with oracles $H:[M] \to [N]$ and $G:[M^4] \to [N]$, with space $S$ and $T$ oracles queries, must satisfy $S^{5/2} T = \Omega(N^{0.35})$ if $S=\Omega(\log N)$ and $S=O(N^{1/45})$. 

    As a special case, when $S = \widetilde{\Theta}(1)$, the quantum algorithm requires at least $\widetilde{\Omega}(N^{0.35})$ quantum queries to solve the $4$-Nested Collision Finding problem with a constant probability. Also, as long as $S = o(N^{1/150})$, the optimal query complexity $N^{1/3}$ can not be achieved.
    On the other hand, the optimal quantum algorithm only requires $O(N^{1/3})$ quantum queries (as well as $O(N^{1/3})$ memory).
\end{theorem}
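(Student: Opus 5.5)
The lower-bound half will come directly from \Cref{thm:quantum_informal}, specialized to $\ell = 4$ and $N_0 = N$. Then $\frac{\ell+1}{2} = \frac{5}{2}$ and $\frac{1}{2(\ell+1)} = \frac{1}{10}$, so the right-hand side is $N^{1/10}\cdot N^{1/4} = N^{7/20} = N^{0.35}$. The validity range $S = O\bigl(N^{\frac{1}{(\ell+1)(2\ell+1)}}\bigr)$ becomes $S = O(N^{1/45})$, and $S=\Omega(\log N)$ carries over unchanged. This gives $S^{5/2}T = \Omega(N^{0.35})$ on the stated range.

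The two special cases are arithmetic on this bound.
\begin{itemize}
\item If $S = \widetilde{\Theta}(1)$ (in particular $S = \Theta(\log N)$ or polylogarithmic, which lies in the admissible range), dividing by $S^{5/2}$ gives $T = \widetilde{\Omega}(N^{0.35})$.
\item To rule out $T = O(N^{1/3})$, substitute it into the bound: $S^{5/2} = \Omega(N^{0.35 - 1/3}) = \Omega(N^{1/60})$, i.e.\ $S = \Omega(N^{1/150})$. Hence if $S = o(N^{1/150})$, then $T = O(N^{1/3})$ is impossible.
\end{itemize}
One bookkeeping point needs care: $N^{1/150} \le N^{1/45}$, so an algorithm with $S = o(N^{1/150})$ is either inside the range where the theorem applies, or has $S = O(\log N)$. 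In the second case we pad its memory up to $\Theta(\log N)$; this costs nothing, because the theorem's bound only gets weaker as $S$ grows.

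The upper-bound half cites \Cref{thm:quantum-algorithm}, stated later in the paper. I would instantiate it at $\ell=4$, $N_0=N$ and verify that it gives $O(N^{1/3})$ queries and memory. If I had to sketch that algorithm myself, I would proceed as follows:
\begin{itemize}
\item Assume $M$ is large enough that many $4$-tuples have $H$-sum $0$.
\item Use a Grover/BHT-style procedure to produce $K = N^{1/3}$ such tuples. Fix $x_1, x_2, x_3$ from a small prepared set and Grover-search for $x_4$ with $H(x_4) = -\sum_{i\le 3} H(x_i)$; the intended cost is $O(1)$ amortized per tuple, and the details are where care is needed.
\item Store these $K$ tuples, then run a BHT-type Grover search over further zero-sum tuples for a $G$-collision with the stored list. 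This costs about $\sqrt{N/K} = N^{1/3}$ queries, with each candidate tuple itself produced via Grover.
\end{itemize}

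The main obstacle is this upper bound: ensuring that producing zero-sum $4$-tuples is cheap enough (amortized $O(1)$ queries, e.g.\ by reusing a table of $H$-values so the inner search is nearly free) that the overall cost stays $O(N^{1/3})$ rather than picking up an extra $\sqrt{N}$ factor. If \Cref{thm:quantum-algorithm} is taken as given, the whole statement reduces to the parameter substitutions above.
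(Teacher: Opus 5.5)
Your argument is the paper's. The statement is the $\ell=4$, $N_0=N$ instance of \Cref{thm:quantum_informal} (proved as \Cref{thm:quantum-overall-lowerbound}) together with \Cref{thm:quantum-algorithm}. Your arithmetic is correct: $N^{1/10}N^{1/4}=N^{0.35}$, the range $O(N^{1/45})$, and $S=\Omega(N^{1/150})$ when $T=O(N^{1/3})$, including the padding remark.

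Do write out the upper-bound check you only promise. It is $N^{\frac{2}{2\ell+1}}N_0^{\frac{1}{2\ell+1}}=N^{2/9}N^{1/9}=N^{1/3}$, and the side conditions $N_0=\Omega(N^{1/4})$ and $N_0=O(N^{3/(\ell-1)})=O(N)$ both hold.

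One caveat is inherited from the paper, not from your derivation. The proof of \Cref{thm:quantum-overall-lowerbound} actually ends with $\widetilde\Omega$, because of the $\log N_0$ factor from splitting $G$ into bits. So, strictly, the $o(N^{1/150})$ consequence holds only up to polylogarithmic factors.

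Your fallback sketch of the algorithm, on the other hand, would fail as written. Grover-searching for an $x_4$ with prescribed $H(x_4)=-\sum_{i\le 3}H(x_i)$ costs $\Theta(\sqrt N)$ queries per tuple, with nothing to amortize. Producing each candidate tuple that way inside the outer collision search multiplies its $N^{1/3}$ cost by $\sqrt N$.

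The paper's algorithm uses no inner search.
\begin{enumerate}
\item $T_1=\Theta(N^{1/3})$ random classical $H$-queries already contain $\approx\binom{T_1}{4}/N=\Theta(N^{1/3})$ zero-sum $4$-tuples (\Cref{lem:tuple_of_sum_y_even}). This counting effect, not Grover, is what gives amortized $O(1)$ queries per tuple.
\item The $G$-values of these tuples are queried and stored.
\item $T_3=\Theta(N^{1/3})$ fresh $H$-queries yield $\Theta(T_3^3)=\Theta(N)$ triples. A random $x_4$ therefore completes to a zero-sum tuple with constant probability by a table lookup.
\item One Grover search over $x_4$, with marked fraction $\Theta(K/N_0)=\Theta(N^{-2/3})$, finishes in $O(N^{1/3})$ queries.
\end{enumerate}
Your ``table of $H$-values'' remark is the right idea for the second phase.
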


\subsubsection*{Classical algorithms for short-output problems require space.}
We have a similar theorem in the classical setting. 

\begin{theorem}[Main classical theorem]\label{thm:classical_informal}
    For $\ell\geq 2$, any classical algorithm that solves the $\ell$-Nested Collision Finding problem with constant probability with space $S$ and $T$ oracle queries must satisfy $S^{\frac{\ell^2-1}{2\ell}}T=\Omega\brackets{N^{\frac{1}{2\ell}}N_0^{\frac{1}{2}}}$ if $S=\Omega(\log N)$ and $S=O\brackets{N^{\frac{1}{\ell^2-1}}}$.
\end{theorem}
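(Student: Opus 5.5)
We split the algorithm into $T/T'$ consecutive blocks of $T'$ queries each, where $T'$ is a parameter to be optimized at the end. By a union bound, some block finds a nested collision pair with probability at least $\Omega(T'/T)$. At the start of that block the algorithm holds only $S$ bits of state. The standard way to remove this state is to guess the $S$-bit string uniformly: a $T'$-query algorithm with \emph{no} advice then succeeds with probability at least $\Omega(T'/T)\cdot 2^{-S}$. So it suffices to show the following: any stateless $T'$-query classical algorithm finds a nested collision pair with probability at most some $p(T')$ that is small compared to $2^{-S}$.

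The core idea is the ``two-oracle recording'' view. A nested collision requires a $G$-collision among tuples that are $\ell$-sum-zero under $H$. Within one block, we bound how many $\ell$-sum-zero tuples the algorithm can have ``recorded'', meaning $H$-queried, inside a $T'$-query window. This is an $\ell$-SUM counting problem, which has a long output.

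Step 1 (long-output lemma for $H$). Classically, a $T'$-query algorithm queries $H$ at at most $T'$ points. For a fixed target count $k$, the probability that these points contain $k$ disjoint $\ell$-sum-zero tuples is at most $\binom{T'^{\ell}}{k}N^{-k} \le (eT'^{\ell}/(kN))^{k}$. Combined with the $2^{S}$ guessing loss, this shows that with $S$ bits of advice the number of zero-sum tuples obtainable in a block is at most about $k=O(S + T'^{\ell}/N)$. More precisely, we would use a time-space bound for multi-$\ell$-SUM in the style of Dinur's lemma.

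Step 2 ($G$-collision among recorded tuples). $G$ is a fresh random oracle on distinct tuples. Finding a $G$-collision then requires either querying $G$ on two zero-sum tuples, or hitting a collision among the $T'$ $G$-queries restricted to tuples that are zero-sum. With $k$ available zero-sum tuples (each block can only make use of the ones it learns), the success probability is at most about $\min(k,T')^2/N_0$, or $kT'/N_0$ if tuples can be learned throughout the block.

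Step 3 (combine and optimize). Summing over blocks, the total success probability is roughly $(T/T')\cdot (S+T'^{\ell}/N)\,T'/N_0$. We choose $T'$ so that the long-output bound stays in its $S$-dominated regime, that is $T'^{\ell}\approx SN$, giving $T'\approx (SN)^{1/\ell}$. The exact exponents $S^{(\ell^2-1)/(2\ell)}$, $N^{1/(2\ell)}$ and $N_0^{1/2}$ will come out of balancing the $k^2/N_0$ collision term against the block count. We expect a square-root to appear from the birthday bound $k^2/N_0$ with $k\approx ST/T'$ accumulated across blocks once memory is accounted for. The upper range $S=O(N^{1/(\ell^2-1)})$ is where $T'$ stays below the trivial regime.

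\textbf{Main obstacle.} The hard part is making Step 2 rigorous. The information about which tuples are zero-sum can be carried across blocks through the $S$-bit memory, and $G$ can be queried on non-zero-sum tuples before those tuples are later certified via $H$. We plan to handle this by a recording or compression argument jointly on $(H,G)$: we encode the oracles given a successful run, charging $S$ bits of memory per block, and show the $G$-collision saves $\log N_0$ bits only if both tuples were certified zero-sum using $H$-entries whose savings are already counted by Step 1.
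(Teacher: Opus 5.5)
There is a genuine gap, and it lies in your top-level reduction, not only in Step 2. Union-bounding the collision event over blocks and then guessing the $S$-bit state is exactly the time-segmentation route that breaks for short outputs. The per-block event is a single collision, and for a stateless $T'$-query algorithm its probability is only inverse-polynomial. For instance, with $T'^\ell\approx SN$ the first $T'/2$ queried points already contain $\Theta(S)$ zero-sum tuples in expectation, and querying $G$ on them gives a collision with probability $\Omega(S^2/N_0)$. Since $2^{-S}$ drops below every fixed inverse polynomial once $S\gg\log N$, the requirement $p(T')\ll 2^{-S}$ cannot hold.

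There is also no sound per-block event to union-bound over. The two colliding tuples may enter $G$'s table in different blocks, and a block's state may carry one of them. Relatedly, your Step 3 uses the per-block count as the collision pool, but every same-sum tuple ever queried to $G$ remains a collision partner for all later queries. The compression fix you sketch does not close this either: a single $G$-collision saves only about $\log N_0$ bits, far less than the $S$ bits per block you would charge for memory.

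The missing idea is to convert the problem to a long-output quantity \emph{before} segmenting, and to handle the collision globally. View $G$ as lazily sampled, i.e.\ a write-only memory that records every tuple queried to it over the whole run. Let $K$ be the expected number of same-sum $\ell$-tuples in that table. A lazy-sampling argument over all $T$ queries bounds the success probability by $O(TK/N_0)$, so $TK=\Omega(N_0)$. This is the step that turns one collision into many recorded tuples.

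Only $K$ is then bounded by time segmentation, where the $2^S$ loss is affordable. Take blocks of $T'=cS^{1/\ell}N^{1/\ell}$ queries. A stateless block writes at least $S^\ell$ distinct same-sum tuples with probability $O(T'^\ell/(SN))^{S}=O(c^\ell)^S$, which survives the $2^S$ guessing loss for small $c$. To apply the counting bound, the analysis appends $H$-queries on the points of written tuples; this also disposes of your worry about uncertified $G$-queries. Hence $K=O(S^\ell T/T')=O(S^{(\ell^2-1)/\ell}T/N^{1/\ell})$, and combining with $TK=\Omega(N_0)$ gives the theorem.

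The per-block count is $S^\ell$, not your $O(S)$. You bound \emph{disjoint} zero-sum tuples, but $G$-collisions can occur among overlapping ones. For $\ell=2$, a state holding $\Theta(S/\log M)$ points, half with $H$-value $h$ and half with $-h$, lets one block write $\Theta((S/\log M)^2)$ distinct zero-sum pairs. This is why the paper needs an induction on $\ell$ with exponent $K^{1/\ell}$. The range $S=O(N^{1/(\ell^2-1)})$ is precisely the condition $T'=\Omega(S^\ell)$. Neither that range nor the exponent $S^{(\ell^2-1)/(2\ell)}$ would come out of a per-block count of $O(S)$.
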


On the other hand, we construct the following classical algorithm. For any $\ell\geq 1$, there exists a classical algorithm that uses $O\brackets{N^\frac{1}{\ell}N_0^\frac{1}{2\ell}}$ queries that solve the $\ell$-Nested Collision Finding Problem when $N_0=O\brackets{N^\frac{2}{\ell-1}}$ with constant probability. By setting $\ell = 2$ and $N_0 = N^2$, we have the following corollary:
\begin{theorem}[Separation between space-bounded/unbounded classical computation]\label{thm:classical_separation_informal}
    Any classical algorithm that solves $2$-Nested Collision Finding problem with oracles $H:[M] \to [N]$ and $G:[M^2] \to [N^2]$, with space $S$ and $T$ oracles queries, must satisfy $S^{3/4} T = \Omega(N^{5/4})$ if $S = \Omega(\log N) $ and $S = O(N^{1/3})$. 

    As a special case, when $S = \widetilde{\Theta}(1)$, the classical algorithm requires at least $\widetilde{\Omega}(N^{5/4})$ classical queries to solve the same problem with a constant probability. On the other hand, the optimal classical algorithm only requires $O(N)$ classical queries.
\end{theorem}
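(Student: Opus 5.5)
Theorem~\ref{thm:classical_separation_informal} is a specialization of Theorem~\ref{thm:classical_informal} together with an explicit algorithm, so the plan has two parts: instantiate the lower bound, then check the upper bound.

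\emph{Lower bound.} Apply Theorem~\ref{thm:classical_informal} with $\ell=2$ and $N_0=N^2$. The space exponent is $\frac{\ell^2-1}{2\ell}=\frac{3}{4}$. The right-hand side is
\[
N^{\frac{1}{2\ell}}N_0^{\frac12}=N^{1/4}\cdot N=N^{5/4}.
\]
The range condition becomes $S=O\brackets{N^{1/(\ell^2-1)}}=O(N^{1/3})$, with $S=\Omega(\log N)$ kept as is. This gives $S^{3/4}T=\Omega(N^{5/4})$ on exactly the stated range.

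For the special case, take $S=\widetilde{\Theta}(1)$. This lies in the range, and $S^{3/4}$ is polylogarithmic, so $T=\widetilde{\Omega}(N^{5/4})$.

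\emph{Upper bound.} Use the classical algorithm claimed just before the theorem, which makes $O\brackets{N^{1/\ell}N_0^{1/(2\ell)}}$ queries whenever $N_0=O\brackets{N^{2/(\ell-1)}}$. For $\ell=2$ the condition reads $N_0=O(N^2)$, which $N_0=N^2$ satisfies. The query count is $N^{1/2}\cdot N^{1/2}=N$.

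If that algorithm has to be justified here, I would do so as follows:
\begin{itemize}
\item Query $H$ on $q=\Theta(N^{1/\ell}N_0^{1/(2\ell)})$ random inputs.
\item Enumerate the roughly $q^\ell$ tuples among these inputs. Each has $H$-sum zero with probability $1/N$, so there are about $q^\ell/N=\Theta(\sqrt{N_0})$ zero-sum tuples.
\item Query $G$ on all of them. By the birthday bound, two of them collide under $G$ with constant probability.
\end{itemize}
The total query count is $q+O(\sqrt{N_0})$. For $\ell=2$ and $N_0=N^2$, $q=\Theta(N)$ and $\sqrt{N_0}=N$, so this is $O(N)$.

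\emph{Main obstacle.} The one step that needs care is showing the zero-sum tuples are numerous and distinct enough for the birthday argument. Since $q\le M$ is needed, I assume $M$ is large enough, say $M\ge N$. Then a second-moment calculation shows there are $\Theta(q^\ell/N)$ such tuples with constant probability. Distinct tuples have independent $G$-values, so the birthday argument applies. Everything else is arithmetic.
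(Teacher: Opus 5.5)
Your proposal is correct and follows the paper's route: the paper obtains this statement as a direct corollary of \Cref{thm:classical_informal} with $\ell=2$ and $N_0=N^2$. That gives the exponent $3/4$, the right-hand side $N^{5/4}$ and the range $S=O(N^{1/3})$; the upper bound comes from the algorithm of \Cref{thm:classical-algorithm}, whose second-moment count of zero-sum pairs (the Chebyshev argument of \Cref{lem:tuple_of_sum_y_even}) and birthday step on $G$ you reproduce, giving $O(N)$ queries.
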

Again, this is the first example in the classical setting, where a short-output problem requires exponential space to reach an optimal time.


\subsection{Discussions and open questions}

We discuss applications and directions related to our new framework.

\paragraph{Proof of space.}
By scaling down the parameter $N$, the classical/quantum tradeoffs yield a non-interactive proof-of-space in the random oracle model. The prover computes and returns a collision pair for the Nested Collision Finding problem as the proof; any prover lacking sufficient space incurs a longer running time to find such a collision. The main drawback is that the time gap between space-bounded and unbounded provers is only polynomial. A potential remedy could be to define a task that further nests the Nested Collision Finding problem, achieving an exponential separation between space-bounded and unbounded algorithms.

\paragraph{Does quantum advantage require space?}
Another direction is to construct a short-output problem whose classical query complexity (with sufficient space) is strictly smaller than its bounded-space quantum query complexity. This can be read as: quantum advantage requires space. If a hash function satisfies this property, then in virtually any hash-based cryptographic protocol, one could neutralize quantum advantages in attacks against the hash, as having access to exponential space is unrealistic for practical purposes. We believe that our Nested Collision Finding problem is such a candidate, but the current analysis is not tight enough for demonstrating the separation.

\paragraph{Distinguishing between quantum and classical space?}
All our results treat space as a whole, i.e. we do not distinguish between quantum and classical space. This technical limitation also appears in most prior works. Currently, there is no known general method for distinguishing quantum and classical memory in this context. The only known approaches apply to the streaming setting, such as the techniques developed in~\cite{liu2023memory}, are based on unproven conjectures~\cite{liu2025qma}, or are tailored to specific constructions~\cite{bostanci2025separating,bostanci2026separating}.


\subsection{Related work}
Our classical construction is also related to Dinur's~\cite{dinur2020tight} earlier post-filtering approach for short-output lower bounds. Conceptually, the classical part of our argument can be viewed as lifting that post-filtering viewpoint into a standard random-oracle-style formulation by introducing a separate helper oracle $G$ that records the relevant tuples.

\paragraph{Acknowledgment.} We thank Itai Dinur for pointing out that an earlier version did not sufficiently formalize the lazy-sampling extraction step in the classical proof, and for drawing our attention to the connection with his earlier post-filtering construction for classical short-output lower bounds.

\section{Technical overview}

We focus exclusively on lower bounds in this overview. The algorithm is relatively straightforward and is presented in~\Cref{thm:quantum-algorithm}.

\subsection*{Time-space tradeoffs in massive-output problems.}

We start by reviewing the ideas behind time-space tradeoffs in massive-output problems in the general computation model and the difficulty of obtaining a tradeoff in the short-output setting. We consider the problem of finding multiple two-collisions in~\cite{Hamoudi_2023}. The problem is as follows: given a random oracle $H: [M] \to [N]$, to find $K$ pairs of disjoint inputs $(x_1, x_2), \ldots, (x_{2K-1}, x_{2K})$ such that $H(x_{2 i - 1}) = H(x_{2i})$ for all $i = 1, 2, \ldots, K$, which was first studied by~\cite{dinur2020tight} in the classical setting and by~\cite{Hamoudi_2023} in the quantum setting. 

To achieve time-space tradeoffs in the general computation model, most approaches (if not all) rely on the time-segmentation method~\cite{borodin1993time,KlauckSW07}. Roughly speaking (see \Cref{fig:time-segment}), instead of enforcing a strict space bound of \( S \) at every step of an algorithm’s execution, this method divides time into \( L \) segments. The algorithm is then relaxed so that the space usage is only constrained to \( S \) when transitioning between segments, while no space bound is imposed within each segment. In this model, the algorithm is allowed to output the answers on the fly; i.e., for the multi-collision problem, at the end of each segment, it can output all the pairs of collisions it finds within that segment.

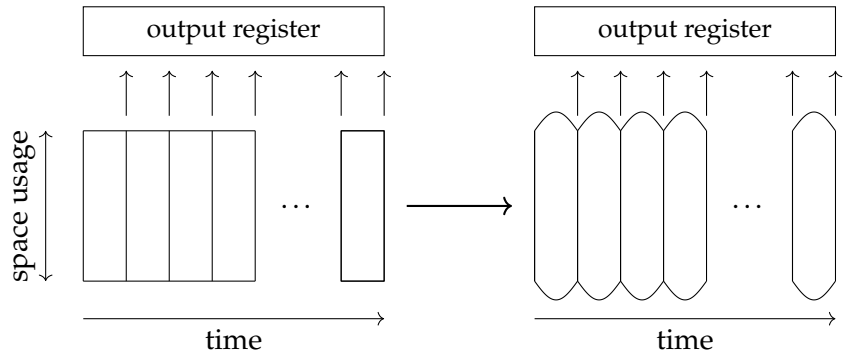
\begin{figure}[ht]
    \centering
\begin{tikzpicture}
    \def\width{4}
    \def\height{2}
    \def\n{7} 
    \def\shiftX{6} 
    \def\arcHeight{0.25} 
    \def\boxHeight{0.65} 
    \def\boxWidth{4} 
    \def\boxShiftY{1} 

    \foreach \i in {0,1,2,3,4,6,7} { 
        \draw (\i*\width/\n, 0) -- (\i*\width/\n, \height);
    }
    \draw (0,0) -- (4*\width/\n,0); 
    \draw (0,\height) -- (4*\width/\n,\height); 
    \draw (6*\width/\n,0) -- (\width,0); 
    \draw (6*\width/\n,\height) -- (\width,\height); 
    \node at (5*\width/\n, \height/2) {\dots};
    \draw (6*\width/\n, 0) rectangle (\width, \height);
    \draw[<->] (-0.5, 0) -- (-0.5, \height) node[midway, rotate=90, above] {space usage};
    \draw[->] (0, -0.5) -- (\width, -0.5) node[midway, below] {time};

    \foreach \i in {0,1,2,3,4,6,7} { 
        \draw (\i*\width/\n+\shiftX, 0) -- (\i*\width/\n+\shiftX, \height);
    }
    
    \foreach \i in {0,1,2,3,6} { 
        \draw (\i*\width/\n+\shiftX, 0) -- (\i*\width/\n+\shiftX, \height);
        \draw[domain=0:180, smooth, variable=\t] plot ({\shiftX + \i*\width/\n + \width/\n * \t / 180}, {\height + \arcHeight * sin(\t)});   
        \draw[domain=0:180, smooth, variable=\t] plot ({\shiftX + \i*\width/\n + \width/\n * \t / 180}, {-\arcHeight * sin(\t)});
    }
    \node at (\shiftX+5*\width/\n, \height/2) {\dots};
    
    \draw[->, thick] (\width+0.3, \height/2) -- (\shiftX-0.3, \height/2);
    \draw[->] (\shiftX, -0.5) -- (\shiftX+\width, -0.5) node[midway, below] {time};

    \draw (\width/2-\boxWidth/2, \height+\boxShiftY) rectangle (\width/2+\boxWidth/2, \height+\boxShiftY+\boxHeight);
    \draw (\shiftX+\width/2-\boxWidth/2, \height+\boxShiftY) rectangle (\shiftX+\width/2+\boxWidth/2, \height+\boxShiftY+\boxHeight);
    \node at (\width/2, \height+\boxShiftY+\boxHeight/2) {\small output register};
    \node at (\shiftX+\width/2, \height+\boxShiftY+\boxHeight/2) {\small output register};

    \foreach \i in {1,2,3,4,6,7} { 
        \draw[->] (\i*\width/\n, \height+0.2) -- (\i*\width/\n, \height+\boxShiftY-0.2);
        \draw[->] (\i*\width/\n+\shiftX, \height+0.2) -- (\i*\width/\n+\shiftX, \height+\boxShiftY-0.2);
    }

\end{tikzpicture}
    \caption{Conversion from a space-bounded computation to a time-segmented computation with space constraints applied only during transitions.}
    \label{fig:time-segment}
\end{figure}

Therefore, if an algorithm with space constraint \( S \) in the time-segmentation framework can find more than \( K \) pairs of disjoint collisions, it must be able to find at least \( K/L \) pairs of disjoint collisions within a single segment. The key idea in time-segmentation is to track progress toward finding more than \( K \) disjoint collisions --- i.e., how many disjoint collisions can be found in each segment. By appropriately choosing parameters, \cite{Hamoudi_2023} proved that any quantum algorithm making \( T/L \) queries cannot find more than \( K/L \) disjoint collisions, except with probability exponentially small in \( S \).  

They further extended this argument to non-uniform algorithms with arbitrary \( S \)-qubit memory, modeling the memory passed from one segment to the next. By setting \( K/L \approx \Theta(S) \), the probability of a uniform algorithm finding \( K/L \) disjoint collisions is on the order of \( 2^{-\Theta(S)} \). Thus, an $S$-qubit advice state can be removed by guessing it at random, introducing a multiplicative factor of \( 2^S \). This factor is completely absorbed into \( 2^{-\Theta(S)} \), preserving the asymptotic order.

This is the general approach used to establish time-space tradeoffs in multi-collision finding and other time-segmentation methods. However, several inherent barriers arise when dealing with short-output problems.  
First, the notion of a progress measure becomes unclear. Since the algorithm is only required to produce a short output, it does not necessarily follow a structured computational pattern imposed by a predefined progress measure.  
Second, as the \( S \)-qubit memory is passed between segments, the only general approach to handling this memory is to guess and remove it. This introduces a multiplicative factor of \( 2^S \) in the overall success probability for each segment. To ensure that this factor does not render the bound meaningless or trivial, the success probability within each segment must be at least exponentially small in \( S \).  
It appears that only massive-output problems can safely absorb this factor without changing the asymptotic order of the probability.

\subsection*{Nested Collision Finding problem.}

From the above discussion, finding an appropriate progress measure for a short-output problem is crucial for establishing a time-space tradeoff. Our starting point is the problem of finding a \emph{3-collision}, where the goal is to search for three distinct inputs that all map to the same output. 
This problem appears to be a promising candidate. Intuitively, an algorithm that finds a 3-collision must first identify multiple 2-collisions, which, as shown in the previous section, already admits a time-space tradeoff. Therefore, if we can transform an algorithm that finds a 3-collision into one that efficiently finds multiple 2-collisions, the 3-collision problem will inherently exhibit a non-trivial time-space tradeoff.

The above intuition serves as the key insight leading to our final construction and theorems. However, reducing the problem of finding multiple 2-collisions to that of finding a 3-collision has some technical difficulties, due to the following challenges:  
\begin{enumerate}
    \item \textbf{Preserving time and space:} The reduction must be both time- and space-efficient, as we aim to maintain the time-space tradeoff properties throughout the reduction.  
    \item \textbf{Applicability to general algorithms:} The reduction may not hold for arbitrary (non-optimal) algorithms. In particular, certain contrived algorithms may find a 3-collision without necessarily discovering many 2-collisions along the way. For instance, consider an algorithm that searches specifically for three distinct inputs mapping to zero and keeps only the preimages of zero during its execution. While this algorithm eventually finds a 3-collision with polylog space, extracting multiple 2-collisions from its execution could be challenging.  
\end{enumerate}

Although we are unable to apply this idea directly to the 3-collision finding problem, we implement the previous intuition through a different construction in the random oracle model.
Recall the \emph{$\ell$-Nested Collision Finding} problem (see \Cref{sec:nested_col}) for a constant \( \ell > 1 \). Given two random oracles \( H: [M] \to [N] \) and \( G: [M^\ell] \to [N_0] \), the goal is to find two distinct ordered tuples \( (x_1, x_2, \ldots, x_\ell) \neq (x'_1, x'_2, \ldots, x'_\ell) \in [M^\ell] \) such that:  
\begin{itemize}
    \item \( \sum_{i=1}^\ell H(x_i) = \sum_{i=1}^\ell H(x'_i) = 0\), and  \( G(x_1, \ldots, x_\ell) = G(x'_1, \ldots, x'_\ell) \).  
\end{itemize} 

Following the intuition from the $3$-collision finding problem, we aim to establish the following two parts:  
\begin{itemize}
    \item \textbf{Part 1.} Any algorithm that finds a collision pair for the Nested Collision Finding problem can be transformed (while preserving both time and space complexity) into an algorithm that outputs multiple $\ell$-tuples $(x_{1,1}, \ldots, x_{1, \ell})$, $\ldots$, $(x_{K, 1}, \ldots, x_{K, \ell})$ satisfying the same sum condition:  
        \begin{align*}
            \sum_{j=1}^\ell H(x_{1, j}) = \sum_{j=1}^\ell H(x_{2, j}) = \cdots = \sum_{j=1}^\ell H(x_{K, j}) = 0.
        \end{align*}
    We refer to this problem as ``finding $K$ $\ell$-tuples with the same sum''.
    
    \item \textbf{Part 2.} Establishing time-space tradeoffs for ``finding $K$ $\ell$-tuples with the same sum'' in both classical and quantum settings.
\end{itemize}

We elaborate on both parts in the following sections. The roadmap is given as in \Cref{fig:outline}.

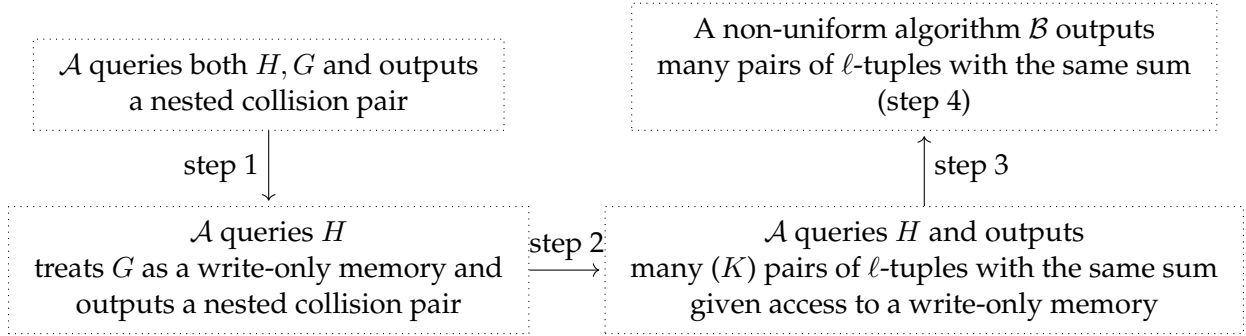
\begin{figure}[!hbt]
\centering
			\begin{tikzpicture}[scale=0.9,->,shorten >=2pt]
			    \node (first) [rectangle, dotted, draw=black]  {\begin{tabular}{c} $\As$ queries both $H, G$ and outputs \\  a nested collision pair \end{tabular}};
			    \node (second) [rectangle, dotted, draw=black, below=of first] {\begin{tabular}{c} $\As$ queries $H$ \\ treats $G$ as a write-only memory and \\ outputs a nested collision pair \end{tabular}};
			    \node (third) [rectangle, dotted, draw=black, right=of second]{\begin{tabular}{c} $\As$ queries $H$ and outputs \\ many ($K$) pairs of $\ell$-tuples with the same sum \\ given access to a write-only memory \end{tabular}};
			    \draw[->] (first) -- (second) node [midway, left] {step 1};
			    \draw[->] (second) -- (third) node [midway, above] {step 2};
			    \node (fourth) [rectangle, dotted, draw=black, above=of third] {\begin{tabular}{c} A non-uniform algorithm $\Bs$ outputs \\  many pairs of $\ell$-tuples with the same sum  \\ (step 4) \end{tabular}};
			    \draw[->] (third) -- (fourth) node [midway, right] {step 3};
			\end{tikzpicture}
        \caption{A roadmap in establishing our main results.}
        \label{fig:outline}
\end{figure}

\subsection*{``Two-oracle recording'' technique that preserve time and space (step 1).}

We begin by considering the classical setting. Let \(\mathcal{A}\) be a classical query algorithm with oracle access to both \(H\) and \(G\). Typically, \(G\) is viewed as a pre-sampled random function. However, here we adopt an alternative perspective by regarding \(G\) as a lazily sampled truth table.
In the standard formulation, a random oracle \(G: [M^\ell] \to [N_0]\) is chosen uniformly at random from the set of all functions. In contrast, the lazy sampling approach (often referred to as the ``principle of deferred decisions'') begins with \(G\) initialized to \(\{\bot\}^{\otimes M^\ell}\), meaning that every output is initially undefined. When a query \(x\) is made, if \(G(x) = \bot\) (i.e., if its value has not yet been determined), we sample a uniformly random \(y \gets [N_0]\) and then set \(G(x) := y\).
Since these two views are statistically identical, we can freely switch between them in our analysis.

Our key observation is that, in the lazy sampling view, $G$ can be interpreted as an exponentially large memory that records the computation of $\mathcal{A}$. This allows us to view the computation of $\mathcal{A}$ using the oracles $H$ and $G$ in an alternative way: $\mathcal{A}$ interacts with the oracle $H$ as usual while treating $G$ as a memory that supports a specific form of write-only operation:  
\begin{itemize}
    \item $\mathcal{A}$ can query $H$ as a standard random oracle.
    \item $\mathcal{A}$ can write to the memory $G$ by querying on $x$. Initially, all entries of $G$ are set to $\bot$, indicating that all memory cells are empty. When $\mathcal{A}$ queries $G(x)$, if $G(x) = \bot$, a random $y \gets [N_0]$ is sampled, stored in the $x$-th memory block, and returned to $\mathcal{A}$. Otherwise, the stored value is simply returned.
\end{itemize}  

Although $G$ returns $y$ to the algorithm, $y$ is chosen randomly and does not depend on prior computation, so $G$ can still be viewed as write-only memory. It is straightforward to see that this write-only memory model is equivalent to the lazy sampling view of a random oracle.  

By modeling a random oracle as a write-only memory, the computation with oracle access to both $H$ and $G$ can now be interpreted as a computation with oracle access to $H$ and a write-only memory $G$. This conversion is illustrated in \Cref{fig:G_as_writeonly}. We refer to this perspective as the ``two-oracle recording'' technique, which will serve as the foundation for our proofs.  

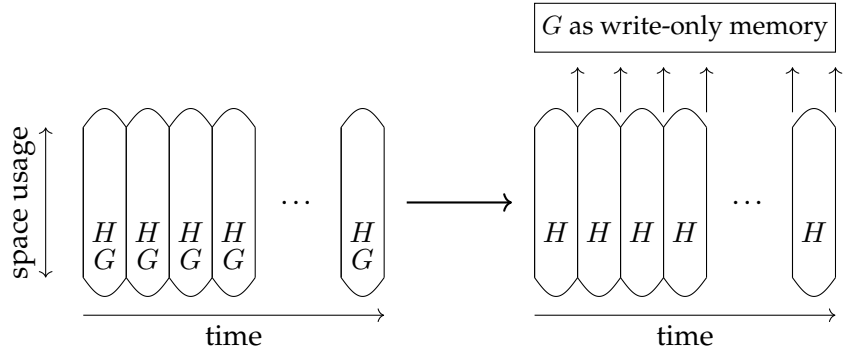
\begin{figure}[ht]
    \centering
\begin{tikzpicture}
    \def\width{4}
    \def\height{2}
    \def\n{7} 
    \def\shiftX{6} 
    \def\arcHeight{0.25} 
    \def\boxHeight{0.65} 
    \def\boxWidth{4} 
    \def\boxShiftY{1} 

    \foreach \i in {0,1,2,3,4,6,7} { 
        \draw (\i*\width/\n, 0) -- (\i*\width/\n, \height);
    }
    \foreach \i in {0,1,2,3,6} {
        \node at (\i*\width/\n+0.5*\width/\n, 0.25) {$G$};
        \node at (\i*\width/\n+0.5*\width/\n, 0.6) {$H$};
    }
    
    \foreach \i in {0,1,2,3,6} { 
        \draw (\i*\width/\n, 0) -- (\i*\width/\n, \height);
        \draw[domain=0:180, smooth, variable=\t] plot ({\i*\width/\n + \width/\n * \t / 180}, {\height + \arcHeight * sin(\t)});   
        \draw[domain=0:180, smooth, variable=\t] plot ({\i*\width/\n + \width/\n * \t / 180}, {-\arcHeight * sin(\t)});
    }
    \node at (5*\width/\n, \height/2) {\dots};
    \draw[<->] (-0.5, 0) -- (-0.5, \height) node[midway, rotate=90, above] {space usage};
    \draw[->] (0, -0.5) -- (\width, -0.5) node[midway, below] {time};

    \foreach \i in {0,1,2,3,4,6,7} { 
        \draw (\i*\width/\n+\shiftX, 0) -- (\i*\width/\n+\shiftX, \height);
    }
    \foreach \i in {0,1,2,3,6} {
        \node at (\i*\width/\n+0.5*\width/\n+\shiftX, 0.6) {$H$};
    }
    
    \foreach \i in {0,1,2,3,6} { 
        \draw (\i*\width/\n+\shiftX, 0) -- (\i*\width/\n+\shiftX, \height);
        \draw[domain=0:180, smooth, variable=\t] plot ({\shiftX + \i*\width/\n + \width/\n * \t / 180}, {\height + \arcHeight * sin(\t)});   
        \draw[domain=0:180, smooth, variable=\t] plot ({\shiftX + \i*\width/\n + \width/\n * \t / 180}, {-\arcHeight * sin(\t)});
    }
    \node at (\shiftX+5*\width/\n, \height/2) {\dots};
    
    \draw[->, thick] (\width+0.3, \height/2) -- (\shiftX-0.3, \height/2);
    \draw[->] (\shiftX, -0.5) -- (\shiftX+\width, -0.5) node[midway, below] {time};

    \draw (\shiftX+\width/2-\boxWidth/2, \height+\boxShiftY) rectangle (\shiftX+\width/2+\boxWidth/2, \height+\boxShiftY+\boxHeight);
    \node at (\shiftX+\width/2, \height+\boxShiftY+\boxHeight/2) {\small $G$ as write-only memory};

    \foreach \i in {1,2,3,4,6,7} {
        \draw[->] (\i*\width/\n+\shiftX, \height+0.2) -- (\i*\width/\n+\shiftX, \height+\boxShiftY-0.2);
    }

\end{tikzpicture}
    \caption{The LHS represents computation with oracle access to both $H, G$. The RHS represents computation with only oracle access to $H$, whereas having access to an (exponentially large) write-only memory $G$.}
    \label{fig:G_as_writeonly}
\end{figure}

\subsection*{The quantum ``two-oracle recording'' technique (step 1, quantum).} 

The above idea extends to the quantum case with additional effort. \cite{zhandry2019record} shows a formulation of lazy sampling for quantum accessible random oracle, called ``compressed oracle''. Since quantum queries can be made in superposition, a classical database/memory is no longer feasible to track all the information. Zhandry showed that, the notion of a ``database'' is still meaningful if the ``database'' itself is also stored in superposition. A quantum-query algorithm interacting with a quantum-accessible random oracle can be equivalently simulated (informally) as follows:
\begin{itemize}
    \item The database register is initialized as  $\ket \bot^{\otimes M^\ell}$: the algorithm has not yet queried anything.
    \item When the algorithm makes a quantum query, the database register gets updated in superposition: for a query $\ket x$ and a database $\ket D$, the simulator looks up $x$ in $D$.
    If the $x$-th entry is $\ket \bot$, it changes this entry to an equal superposition $\sum_y \ket y$ (up to normalization). 

    It then returns the $x$-th entry in the updated $D$ also in superposition.
\end{itemize}
Any quantum computation of $\cA$ using oracles $H, G$ can be alternatively viewed as $\cA$ using oracle $H$ and has access to a memory (the random oracle $G$) that supports coherent write-only operation, as described above. The above description provides a high-level idea, albeit slightly inaccurate. Since a quantum algorithm can forget  previously acquired information, a formal ``compressed oracle'' necessitates an additional step to perform these forgetting operations. We leave the details in \Cref{sec:compress_oracle}.


\subsection*{Relating to massive-output problems (step 2).}

Assume that an algorithm $\mathcal{A}$ finds a pair of nested collisions $(x_1, \ldots, x_\ell) \neq (x'_1, \ldots, x'_\ell)$. By the definition, we have  
\[
\sum_{i} H(x_i) = \sum_{i} H(x'_i) = 0
\]
and  
\[
G(x_1, \ldots, x_\ell) = G(x'_1, \ldots, x'_\ell).
\]

\medskip

\emph{The classical case.} Since $\mathcal{A}$ finds a collision under $G$, a standard lazy sampling argument implies that there exists some value $z$ such that at least $\Omega(N_0/T)$ distinct $\ell$-tuples queried to $G$ sum to $z$, where $T$ is the number of queries. This follows from the observation that if at any moment of the computation, the number of such tuples is bounded by $K$, then the probability of finding a collision in a single step is at most $K/N_0$. Given $T$ queries, the total success probability is thus at most $(T K) / N_0$\footnote{In the actual proof, we give an even tigher bound $K^2 / N_0$, since $K \leq T$.}.

Therefore, if $\mathcal{A}$ successfully finds a nested collision with a constant probability, then by interpreting $G$ as write-only memory (via the ``two-oracle recording'' technique), $G$ must store at least $K = \Omega(N_0 / T)$ pairs of $\ell$-tuples of the same sum. 

In the actual proof, we further strengthen this argument by showing that the claim holds even when $K$ represents the expected number of such $\ell$-tuples of the same sum stored in the lazily sampled database, where the expectation is taken uniformly at random over all possible $G$ across all time steps. Thus, to extract $K$ pairs of $\ell$-tuples of the same sum, one can simply run $\mathcal{A}$ and halt at a uniformly random time, outputting all pairs stored in $G$ at that moment.

\medskip

\emph{The quantum case.} The key difference in the quantum setting is that the random oracle $G$ must be simulated using the compressed oracle formulation. We establish the following quantum analog: 
\begin{lemma}[Informal]
    If, at any moment during the computation, the compressed oracle database has a bound of at most $K$ on the number of $\ell$-tuples with the same sum, then given $T$ queries, the total success probability is at most $(T^2 K) / N_0$. By setting the probability to be a constant, we should have $K = \Omega(N_0/T^2)$. 
\end{lemma}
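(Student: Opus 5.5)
We prove this with a compressed-oracle progress-measure argument in the style of Zhandry's collision-finding bound and the multi-collision analysis of \cite{Hamoudi_2023}. Fix $H$ (sampled classically, or purified; the argument is uniform in $H$), and simulate $G$ with the compressed standard oracle $\csto$ on database register $D$. Let $\mathcal{V}_H$ be the set of $\ell$-tuples whose $H$-values sum to $0$. The restricted database is $D|_{\mathcal{V}_H}$. The hypothesis says that at every step the joint state is supported on databases with $|D|_{\mathcal{V}_H}|\le K$.

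We define the bad projector $P_{\rm col}$ onto databases containing two distinct tuples in $\mathcal{V}_H$ with equal $G$-values. The first standard step: by Zhandry's lemma, the probability that $\cA$ outputs a valid nested collision is at most $\bigl(\norm{P_{\rm col}\ket{\psi_T}}+O(1/\sqrt{N_0})\bigr)^2$, where $\ket{\psi_T}$ is the final joint state. Here we use that verifying the output requires querying $G$ on two tuples, which adds $O(1)$ queries. This reduces the task to bounding the norm of the bad component.

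The core step is a per-query bound. Write $\ket{\psi_t}$ for the state after $t$ queries and $\ket{\phi_t}=P_{\rm col}\ket{\psi_t}$. The goal is
\[
\norm{P_{\rm col}\,\csto\,(I-P_{\rm col})\ket{\psi}} \le O\!\left(\sqrt{K/N_0}\right)
\]
for any state $\ket{\psi}$ whose database has at most $K$ entries in $\mathcal{V}_H$. This follows the standard argument that a single query to $\csto$ on input $x$ only changes the $x$-th entry, mapping $\bot$ or an old value to a superposition over $[N_0]$ with amplitudes $O(1/\sqrt{N_0})$ on each specific value. Going from collision-free to collision requires the new value at $x\in\mathcal{V}_H$ to equal one of the at most $K$ existing values. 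That is at most $K$ values, each of amplitude $O(1/\sqrt{N_0})$, giving norm $O(\sqrt{K/N_0})$ uniformly over the superposition of queries $x$. The linear combination over $x$ and query registers is handled by noting that distinct $x$ are orthogonal in the query register. The decompression/recompression step of $\csto$ (the ``forgetting'' operation) must be checked so that it cannot create collisions except through this same amplitude term. I expect this to be the main technical obstacle, handled as in Zhandry's analysis by bounding the action of $\stddecomp$ on the relevant entry. Non-query unitaries of $\cA$ and operations on $H$ do not touch $D$, so they commute with $P_{\rm col}$.

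Finally, by the triangle inequality,
\[
\norm{\phi_{t+1}}\le\norm{\phi_t}+O\!\left(\sqrt{K/N_0}\right),
\]
so $\norm{\phi_T}\le O(T\sqrt{K/N_0})$. The success probability is therefore $O(T^2K/N_0)$. Setting this to a constant gives $K=\Omega(N_0/T^2)$.

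A caveat: the hypothesis ``at most $K$ tuples at any moment'' must be made precise as a projection applied at each step. Otherwise the state is not literally supported on such databases. I would either assume exact support or insert a projector $\Pi_{\le K}$ after each query and account for the lost mass by a hybrid argument, in the same way that the paper's strengthened version uses an expected count.
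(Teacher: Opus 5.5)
Your argument is correct and is essentially the paper's proof. Your per-query bound is \Cref{lem:progress-measure}, and the re-randomization case you flag is its $R$-case: by \Cref{each-query-in-compress-oracle}, each basis state puts only $O(1/N_0)$ amplitude on any particular new value, so the coherent sum over the entry's old value still yields $O(\sqrt{K/N_0})$. \Cref{remember-tuples} then sums over the $T$ queries and invokes Zhandry's lemma as you do. The one difference worth noting is that the paper bounds the amplitude entering $P_v$ separately for each labeled count $v$, in terms of $\norm{\Pi_v\ket{\psi}}$ and $\norm{\Pi_{v-1}\ket{\psi}}$, so summing the squares over $v$ directly gives $O(T^2V/N_0)$ with $V$ the time-averaged \emph{expected} count; this makes your $\Pi_{\le K}$ truncation and hybrid step unnecessary, and that step would in fact be lossy, since Markov only bounds the discarded mass at step $i$ by $V_i/K$.
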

The proof follows from a more refined analysis using the compressed oracle technique~\cite{zhandry2019record} and is presented in \Cref{sec:many_entries_database}. At a high level, each query increases the amplitude (rather than the probability) of the final outcome by at most $O(\sqrt{K/N_0})$. Consequently, after $T$ queries, the total amplitude is at most $O(T \sqrt{K/N_0})$, leading to a success probability bound of $O(T^2 K / N_0)$.  
We further strengthen this argument by showing that the claim remains true even when $K$ represents the expected number of such $\ell$-tuples, measured at a uniformly random time step. Thus, to extract $K = \Omega(N_0/T^2)$ pairs of $\ell$-tuples, one can simply run $\mathcal{A}$, halt at a uniformly random time, measure the database, and output all pairs obtained from the measurement outcomes.

\subsection*{Bounding the success probability in the massive-output case, with write-only memory (step 3.1).} 

In steps 1 and 2, we reduce an algorithm with space $S$ and query complexity $T$ to another algorithm that, with the same time and space complexity, outputs $K = \Omega(N_0 / T^2)$ $\ell$-tuples with the same sum. The next objective is to establish a time-space tradeoff for this large-output problem, which will then yield a corresponding tradeoff for the nested collision-finding problem. Similar bounds have been derived in~\cite{Hamoudi_2023} for multi-collision finding. However, while the setting is similar, our case presents additional challenges: in~\cite{Hamoudi_2023}, the quantum algorithm can coherently write to a quantum output register, with each write operation stored in a new register\footnote{More formally, each write operation coherently appends the output to the output register.}. In contrast, our setting requires each output $x$ to be stored in a designated register (the $x$-th register), introducing potential interference/entanglement between the algorithm and the output register.

The approach in~\cite{Hamoudi_2023} (building on earlier works such as~\cite{borodin1993time, KlauckSW07, dinur2020tight}) relies on the time-segmentation method. This technique partitions an algorithm into $L$ segments, where each segment is a non-uniform algorithm with $S$ (qu)bits and at most $T/L$ queries, with the goal of outputting $K/L$ $\ell$-tuples with the same sum. If one can show that such an algorithm, except with negligible probability, fails to output more than $K/L$ tuples, then by a standard union bound, the original algorithm cannot produce more than $K$ tuples (except with small probability).

This approach fails when an algorithm interacts with a memory where two distinct write-only operations can be applied to the same memory cell. Since we model the random oracle $G$ as a write-only quantum memory using the ``two-oracle recording'' technique, we must consider this type of memory rather than the one analyzed in~\cite{Hamoudi_2023}. 

To explain the challenges we face, we consider the following simplified question (we call it ``Sparse-Write Tail-Bound Composition Problem''): 
\begin{itemize}
    \item Consider two algorithms $\As, \Bs$, one working on registers $\sf AD$ and the other working on registers $\sf BD$. All registers are initiated as all zeros; specifically, $D$ consists of $|0^{M^\ell}\rangle$. 
    \item $\As$ can apply local unitary on $\sf A$ as well as controlled flip operation on $\sf AD$ jointly: $\ket x\ket y \to \ket x \ket {y \oplus e_x}$ where $e_x$ has only one $1$ at its $x$-th location.
    At the end of $\As$, the probability that measuring $\sf D$ yields a string $y$ with hamming weight more than $C$ is at most $\epsilon$.
    \item Similar for $\Bs$.
    \item Then the question is, can we show that if both $\As, \Bs$ run with the same $\sf D$, the probability that measuring $\sf D$ yields a string $y$ with hamming weight more than $2C$ is bounded by $\poly(\epsilon)$?
\end{itemize}
This question is straightforward to prove in the classical setting by incurring a union bound. However, in the quantum case, the quantum union bound~\cite{gao2015quantum,khabbazi2019union,o2022quantum} does not apply since potential interfence between $\As$ and $\Bs$'s writing behaviors. Moreover, if the above question is true, it will imply another seemingly irrelevant question (``Fourier Tail-Bound Composition Problem'') in the context of boolean functions:
\begin{itemize}
    \item Let $f, g: \{-1, 1\}^n \to \{-1,1\}$ be two functions. Assume $W^{> C}(f) \leq \epsilon$ and $W^{> C}(g) \leq \epsilon$; i.e., the Fourier weights of both functions at degrees above $k$ are small. Then do we have $W^{> 2C}(f\cdot g) \leq \poly(\epsilon)$?
\end{itemize}
As far as we know, this simple question in boolean function analysis has not been studied yet; all standard tools seem not to work for this problem. Therefore, we do not know if the claim is true; let alone for the original ‘Sparse-Write Tail-Bound Composition Problem.’

\subsection*{Bounding the expectation in the massive-output case, with write-only memory (step 3.2).} 

To address this, instead of looking at the composition of tail bounds, we look at the composition of expectations. Namely, we will examine for every $\vec{x} = (x_1, \ldots, x_\ell)$ of interest, what is the probability/expectation $w_{\vec{x}}$ that the write-only memory $G$ has recorded $\vec{x}$. By the linearity of expectation, the final expected number is the summation of the expectation of all inputs of interest. 
We define the following game:
\begin{itemize}
    \item Let $H, G$ be two random oracles.
    \item Let $\vec{x}$ be a uniformly random input of interest (i.e., form a $\ell$-tuple with the sum $0$ under $H$). 
    \item Run a (space-bounded) algorithm with oracle access to $H, G$ except $G(x)$ is replaced with a uniform superposition $\ket +^{\otimes \log N_0}$; i.e., only $G(x)$ is treated as a compressed oracle\footnote{In the actual analysis, we will treat $G$ as an oracle with binary outputs; i.e., split each output in $[N_0]$ into $\log {N_0}$ binary outputs and keep tracks the probability that each bit flipped. This is a minor difference and only introduces a multiplicative $\log {N_0}$ for the expectation. We do not complicate the intuition here and only provide the game without splitting outputs.}. 
    \item Finally, measure the qubit in the Hadamard basis. The algorithm wins if and only if the measurement outcome is not all ``$+$'', indicating the $x$-th entry is written. 
\end{itemize}

We show that we can apply the time-segmentation on the above game to remove space constraints, resulting in $L$ segments. Within each segment, there is a $T/L$ query algorithm with $S$ qubits of advice and tries to maximize the expectation of the above game. Let $w^{(i)}_{\vec{x}}$ be the probability/expectation that the $x$-th entry is written. By a standard triangle inequality, we show that:
\begin{align*}
    \sum_{\vec{x}} w_{\vec{x}} \leq  L \cdot \sum_{i=1}^L \sum_{\vec{x}} w^{(i)}_{\vec{x}}.
\end{align*}
Therefore, if we can bound the maximum expectation in the above game with a non-uniform algorithm of $T/L$ queries and $S$ qubits by $W$ (i.e., $\sum_{\vec{x}} w_{\vec{x}}^{(i)} \leq W$ for every $i$), we can show that a $T$ query and $S$ space algorithm can have a maximum expectation $L^2 \cdot W$.

To prove an upper bound of the expectation by a non-uniform algorithm, we use the quantum presampling tool introduced in~\cite{guo2021unifying} and improved in~\cite{liu2022nonuniformityquantumadvicequantum}, which is a powerful technique to remove the quantum advice by introducing a multi-instance version of the original game; for which, we leave all the details to the main body. Eventually, we are able to reduce the problem to a standard problem: how many distinct $\ell$-tuples with the same sum a uniform query-bounded algorithm can find.

\subsection*{The time bound for finding many $\ell$-tuples with the same sum (step 4).}

Previous works have established similar bounds by tracking the number of tuples discovered in each step using the compressed oracle technique~\cite{zhandry2019record}. For instance, \cite{liu2019finding,Hamoudi_2023} analyzed the case of finding multiple $2$-collisions. However, most of these works considered only cases where adding a new entry to the database introduces at most one new instance (e.g., a preimage or a collision pair), focusing solely on disjoint instances. For example, if four points $x_1, x_2, x_3, x_4$ map to the same value, prior work would treat them as two disjoint collisions, whereas in our setting, they correspond to six distinct collision pairs (i.e., any two distinct $x_i, x_j$ can form a pair of collisions). 

The key challenge in our proof is that in our case, adding a single new entry can introduce multiple new instances, since our target is to bound the number of $\ell$-tuples. Our proof proceeds inductively based on the tuple size $\ell$. 
\begin{itemize}
    \item \textbf{Base Case ($\ell = 1$).} We compute the probability that the algorithm finds a set of distinct points that all map to the same value under the random oracle using the standard compress oracle method. 

    \item \textbf{Inductive Step ($\ell > 1$).} For larger tuples, we extend the base case by analyzing how smaller tuples can be combined to form larger ones. The key observation is that an algorithm attempting to find multiple $\ell$-tuples with the same sum can proceed in one of two ways:
    \begin{itemize}
        \item It already stores at least $\eta$ distinct $(\ell-1)$-tuples with the same sum in its database at some point during execution.
        \item Alternatively, the following event occurs at least ${\kappa}/{\eta}$ times: a newly added entry combines with $(\ell-1)$ existing entries in the database to form at least one new $\ell$-tuple with the target sum. 
    \end{itemize}
    Here, $\eta$ is a threshold that we will set later. The probability of the first case can be bounded using the induction hypothesis. In the second case, by focusing on this specific event, we circumvent the challenge that arise when a single query introduces multiple new instances at once.
\end{itemize}
The details can be found in~\Cref{sec:time_bound_l_tuples}.
Finally, by combining all the steps, we are able to show a time-space tradeoff for finding a single nested collision.

\section{Preliminaries}
We use $[N]$ to denote the set $\set{0,1,2,\cdots,N-1}$. A random oracle with domain $[M]$ and image $[N]$ is a function $H$ sampled uniformly from all functions mapping $[M]$ to $[N]$. For two $n$-bit string $a,b$ we use $\langle a,b\rangle$ to denote its inner product mod 2.
\begin{lemma}[Chebyshev Bound]\label{lem:Chebyshev}
    Let $X$ be a random variable with expectation $\mu$ and variance $\sigma$, then
    \begin{equation*}
        \Pr\left[\abs{X-\mu}\geq k\right]\leq\frac{\sigma}{k^2}.
    \end{equation*}
\end{lemma}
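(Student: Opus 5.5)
This is the standard Chebyshev inequality, with $\sigma$ denoting the variance $\E[(X-\mu)^2]$. I would derive it directly from Markov's inequality applied to a suitable nonnegative random variable. Define $Y = (X-\mu)^2$. Then $Y \ge 0$ and, by definition of the variance, $\E[Y] = \sigma$. For $k>0$ the events $\{\abs{X-\mu}\ge k\}$ and $\{Y \ge k^2\}$ coincide, because squaring is monotone on nonnegative reals. So it suffices to bound $\Pr[Y\ge k^2]$.

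Markov's inequality for the nonnegative variable $Y$ can be proved in one line. The pointwise inequality $k^2\cdot \mathbbm{1}[Y\ge k^2] \le Y$ holds for every outcome: when the indicator is $1$ the left side is $k^2 \le Y$, and otherwise the left side is $0\le Y$. Taking expectations of both sides and using monotonicity and linearity of expectation gives
\begin{equation*}
k^2 \Pr[Y\ge k^2] \le \E[Y] = \sigma .
\end{equation*}
Dividing by $k^2$ yields $\Pr[\abs{X-\mu}\ge k] \le \sigma/k^2$, as claimed.

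There is essentially no obstacle here. The only points to check are the conventions. First, $k>0$ is needed (for $k\le 0$ the right-hand side is undefined or infinite, so the statement is trivial or vacuous). Second, $\sigma$ must be read as the variance rather than the standard deviation, which is the reading the statement intends. Third, the variance must be finite; if it is infinite the bound holds trivially.
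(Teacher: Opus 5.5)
Your proof is correct. The paper states this as a standard preliminary and gives no proof of its own, and applying Markov's inequality to $(X-\mu)^2$, as you do, is the textbook derivation. Your reading of $\sigma$ as the variance also matches how the paper later uses the lemma, with $k=\tfrac12\mathbb{E}[K]>0$ and $\sigma=\mathrm{Var}(K)$.

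One small correction to your side remark on conventions: for $k<0$ the right-hand side $\sigma/k^2$ is finite. The left-hand side then equals $1$, so the inequality can genuinely fail. This means $k>0$ is a real hypothesis of the statement, not a vacuous case.
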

\begin{lemma}[Grover Search,~\cite{grover1996fast}]\label{lem:grover_search}
    Let $f:S\rightarrow\set{0,1}$ be a function. There exists an efficient algorithm that finds a $x\in S$ such that $f(x)=1$ with $O\brackets{\sqrt{\frac{|S|}{|f|}}}$ calls to a $\Pi_f$ gate:
    \begin{equation*}
        \Pi_f\ket{x}:=(-1)^{f(x)}\ket{x}
    \end{equation*}
    if a uniform superposition over $S$ can be efficiently generated from $\ket{0}$.
\end{lemma}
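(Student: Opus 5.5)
The plan is to prove the lemma by amplitude amplification, i.e.\ the standard Grover/BBHT argument. Let $U$ be the efficient unitary with $U\ket{0}=\ket{\psi}:=|S|^{-1/2}\sum_{x\in S}\ket{x}$. Write $|f|:=|f^{-1}(1)|$ and define the normalized states
\[
\ket{\psi_1}:=|f|^{-1/2}\sum_{x:f(x)=1}\ket{x},\qquad \ket{\psi_0}:=(|S|-|f|)^{-1/2}\sum_{x:f(x)=0}\ket{x}.
\]
Then $\ket{\psi}=\sin\theta\ket{\psi_1}+\cos\theta\ket{\psi_0}$ with $\sin^2\theta=|f|/|S|$. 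Define the Grover iterate
\[
Q:=-U(I-2\ketbra{0}{0})U^\dagger\,\Pi_f=(2\ketbra{\psi}{\psi}-I)\,\Pi_f .
\]
Each application of $Q$ uses one call to $\Pi_f$, together with efficient applications of $U$ and $U^\dagger$ and a reflection about $\ket{0}$.

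The first step is the geometric fact. The real plane spanned by $\ket{\psi_0},\ket{\psi_1}$ is invariant under both reflections. On that plane, $\Pi_f$ reflects about $\ket{\psi_0}$ and $2\ketbra{\psi}{\psi}-I$ reflects about $\ket{\psi}$, so their product is a rotation by angle $2\theta$. Hence, by induction,
\[
Q^k\ket{\psi}=\sin((2k+1)\theta)\ket{\psi_1}+\cos((2k+1)\theta)\ket{\psi_0}.
\]
Measuring after $k$ iterations returns a marked $x$ with probability $\sin^2((2k+1)\theta)$. If $|f|$ (hence $\theta$) is known, take $k=\lfloor \pi/(4\theta)\rfloor$. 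Then $(2k+1)\theta$ lies within $\theta$ of $\pi/2$, so the success probability is at least $\cos^2\theta = 1-|f|/|S|$. If $|f|/|S|>3/4$, measure $\ket{\psi}$ directly instead, which succeeds with probability above $3/4$. Since $\theta\ge\sin\theta=\sqrt{|f|/|S|}$, we get $k=O(\sqrt{|S|/|f|})$, and constant success probability follows; repeating a constant number of times and checking each candidate with one extra $\Pi_f$ call boosts it further.

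The main obstacle is that the lemma does not assume $|f|$ is known, so the right $k$ is not available. For this I would follow Boyer--Brassard--H\o yer--Tapp. The key averaging lemma is that for $k$ uniform in $\{0,\dots,m-1\}$,
\[
\E_k\bigl[\sin^2((2k+1)\theta)\bigr]=\tfrac12-\frac{\sin(4m\theta)}{4m\sin(2\theta)},
\]
which is proved by summing a geometric series of $\cos((2k+1)2\theta)$. In particular this average is at least $1/4$ once $m\ge 1/\sin(2\theta)$. The algorithm then does the following: for $m=1,\lambda,\lambda^2,\dots$ with $\lambda\in(1,4/3)$, pick a random $k<m$, run $Q^k$ on $\ket{\psi}$, measure, and check the result with one $\Pi_f$ call. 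Stop on success, or once $m$ exceeds $\sqrt{|S|}$.

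To finish, note that before $m$ reaches $m_0:=1/\sin(2\theta)=O(\sqrt{|S|/|f|})$ (using $\theta\le\pi/3$ after the $3/4$ case split), the total cost is a geometric sum, $O(m_0)$. After that point, each round succeeds with probability at least $1/4$, while its cost grows by a factor of $\lambda<4/3$. The expected remaining cost is therefore also a convergent geometric series, of order $O(m_0)$. Markov's inequality converts this expected $O(\sqrt{|S|/|f|})$ bound into a constant-probability algorithm with that many $\Pi_f$ calls, which proves the lemma.
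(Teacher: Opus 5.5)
The paper does not prove this lemma; it quotes it as a black box from \cite{grover1996fast} and uses it only in step~4 of \Cref{thm:quantum-algorithm}. Your argument is the standard proof and is correct. The two-reflection rotation handles known $|f|$. The Boyer--Brassard--H\o yer--Tapp exponential search, with the averaging identity, handles unknown $|f|$. The latter is what the lemma as stated actually requires, since Grover's original analysis concerns a single marked item, so your write-up supplies more than the citation literally does.

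A few details to tighten:
\begin{enumerate}
\item An algorithm that does not know $|f|$ cannot branch on $|f|/|S|>3/4$. It does not need to: for $\theta\ge\pi/4$ the same identity already gives average success at least $1/4$ in every round. A bad round would need $\sin(4m\theta)>m\sin(2\theta)>0$. This forces $4m(\frac{\pi}{2}-\theta)>\pi$, hence $m\sin(2\theta)>1$, a contradiction.
\item Rather than stopping once $m$ exceeds $\sqrt{|S|}$, cap $m$ at $\sqrt{|S|}$ and continue. Alternatively, state explicitly that some executed round has $m\ge m_0$. This holds, since $m_0\le\frac34\sqrt{|S|}$ when $1\le|f|\le\frac34|S|$ and $\lambda<\frac43$, but you did not say it.
\item Without knowledge of $|f|$, what you obtain is a bound on the expected number of $\Pi_f$ calls. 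Via Markov, it equivalently says that with constant probability the algorithm halts successfully within $O(\sqrt{|S|/|f|})$ calls. It is not a hard budget the algorithm can enforce, and the lemma should be read in that sense.
\end{enumerate}
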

\begin{lemma}[Jensen's Inequality,~\cite{JensenInequality}]\label{lem:jensen_inequal}
    Let $D,g$ be positive integers. Let $c_0,c_1,\cdots,c_{D-1}$ be a distribution over $[D]$. Let $p_0,p_1,\cdots,p_{D-1}\in\mathbb{R}^D$ satisfies that $\sum_{i\in[D]}c_ip_i>0$. Then we have
    \begin{equation*}
        \sum_{i\in[D]}c_ip_i\leq\brackets{\sum_{i\in[D]}c_ip_i^g}^{\frac{1}{g}}.
    \end{equation*}
\end{lemma}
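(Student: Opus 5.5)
The plan is to derive the inequality from Hölder's inequality with exponents $g$ and $g/(g-1)$, after first fixing the range of validity. As literally stated, with arbitrary real $p_i$ and odd $g$, the claim is false. Take $g=3$, weight $0.99$ on $p=1$ and weight $0.01$ on $p=-50$. The mean is $0.49>0$, but $\sum_i c_ip_i^3 = 0.99-1250<0$, so the right-hand side is the cube root of a negative number and cannot dominate $0.49$.

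I will therefore prove the lemma under either of two hypotheses:
\begin{itemize}
\item all $p_i \ge 0$, which is how the lemma is used later (the $p_i$ are probabilities/expectations);
\item $g$ is even.
\end{itemize}
In both cases $p_i^g = |p_i|^g$ for all $i$, and this identity is the only property needed.

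The key steps, in order:
\begin{enumerate}
\item Use the hypothesis $\sum_i c_ip_i>0$ to get $\sum_i c_ip_i \le \sum_i c_i|p_i|$.
\item Split each weight as $c_i = c_i^{1-1/g}\cdot c_i^{1/g}$, which is legitimate since $c_i\ge 0$.
\item Apply Hölder's inequality with exponents $q=g/(g-1)$ and $g$, treating $g=1$ separately since it is trivial:
\[
\sum_i c_i^{1-1/g}\bigl(c_i^{1/g}|p_i|\bigr) \le \Bigl(\sum_i c_i\Bigr)^{1-1/g}\Bigl(\sum_i c_i|p_i|^g\Bigr)^{1/g}.
\]
\item Use $\sum_i c_i=1$ to drop the first factor.
\item Replace $|p_i|^g$ by $p_i^g$, which is exactly where the hypothesis ($p_i\ge 0$ or $g$ even) enters.
\end{enumerate}

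An equivalent route is convexity. The map $t\mapsto t^g$ is convex on $[0,\infty)$, or on all of $\mathbb{R}$ when $g$ is even. Finite Jensen then gives $\bigl(\sum_i c_ip_i\bigr)^g \le \sum_i c_ip_i^g$. Since the left-hand side is the $g$-th power of a positive number and $t\mapsto t^{1/g}$ is monotone on $[0,\infty)$, taking $g$-th roots yields the claim.

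The only genuine obstacle is the one identified at the start: handling signs in the odd-$g$ case. The calculation itself is routine. My plan is to state the nonnegativity assumption explicitly in the lemma, since all later applications satisfy it, rather than attempt a proof of the general statement, which the counterexample shows is false.
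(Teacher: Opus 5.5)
Your proof is correct, and so is your counterexample. With $g=3$ and weights $0.99, 0.01$ on $p=1, -50$, the mean is $0.49>0$ while $\sum_i c_ip_i^3<0$. So the lemma as printed is false for odd $g$, reading the $p_i$ as arbitrary reals (the ``$\mathbb{R}^D$'' is evidently a typo for $\mathbb{R}$).

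The paper gives no proof of its own; it simply cites Jensen's inequality. That citation amounts to convexity of $t\mapsto t^g$ on $[0,\infty)$ followed by taking $g$-th roots, which is exactly your second route. Your H\"older argument with exponents $g/(g-1)$ and $g$ is an equally standard alternative that is slightly more self-contained, since it never invokes Jensen.

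Of your two repairs, only nonnegativity matters for the paper. The lemma's sole use is in the proof of \Cref{lem:single_game_bound_1}, with weights $|c_i|^2$, $p_i=|\langle u_i|v_i\rangle|^2\in[0,1]$, and $g=2k-1$. Since that $g$ is odd, the ``$g$ even'' variant would not cover it.

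Once $p_i\ge 0$ is assumed, the hypothesis $\sum_i c_ip_i>0$ becomes superfluous, because the zero-mean case is trivial.
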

\section{Compress oracle and our model of computation}
\label{sec:compress_oracle}
\newcommand{\qryreg}{\mathbf{X}}
\newcommand{\ansreg}{\mathbf{U}}
\newcommand{\auxreg}{\mathbf{W}}
\newcommand{\roreg}{\mathbf{H}}
\newcommand{\dbreg}{\mathbf{D}}
\newcommand{\outreg}{\mathbf{R}}
\newcommand{\Hadamardo}{{\sf HaO}}

\newcommand{\lab}{\mathsf{Lbl}}
\newcommand{\maxcap}[1]{|#1|^{(\sf{max})}}

In this subsection, we recall the technique introduced by Zhandry \cite{zhandry2019record}. This part is adapted from~\cite{chung2020tight}. We present five equivalent oracle formulations: the standard oracle, the phase oracle, the compressed standard oracle, the compressed phase oracle, and the Hadamard oracle.

\renewcommand{\output}{{\sf Output}}
In most cases in this paper, the computation is about a state on five registers.
\begin{itemize}
    \item $\qryreg$ is the register that stores either an oracle query or an answer waiting to be written.
    \item $\ansreg$ is the register that stores the oracle's response or is used to store phase for the output process (as will be explained later).
    \item $\auxreg$ is the register that stores as ancilla qubits in the computation.
    \item $\outreg$ is the output register that is used to store answers.
    \item $\dbreg$ is the register that stores the random oracle $H$ or its database $D$ in the compressed oracle view.
\end{itemize}
We will explain how $\outreg$ works in \Cref{sec:time_bound_l_tuples}. Intuitively, it can always be viewed as the tensor product of $M'$ sub-registers with $n'$-qubits each for some $M'$ and $n'$.  The algorithm is only allowed to apply the $\output$ gate to a state of the form $\ket{x}_{\qryreg}\ket{u}_{\ansreg}$. This gate will add $u$ to the sub-register corresponding to $x$ on $\outreg$. We will write this process as:
\begin{equation*}
    \output\ket{x}_{\qryreg} \ket {u}_{\ansreg} \ket {w}_{\auxreg}\ket{r}_{\outreg} \otimes \ket {H}_{\dbreg} = \ket{x}_{\qryreg} \ket {u}_{\ansreg} \ket {w}_{\auxreg}\ket{r\oplus (x,u)}_{\outreg} \otimes \ket {H}_{\dbreg}.
\end{equation*}

\paragraph{Standard oracle.} Now we switch to our normal oracle setting. Let $H$ be a random oracle $[M] \to [N]$ where $N=2^n$. We can view an algorithm that runs in the random oracle model with respect to $H$ as the algorithm itself tensored with a random oracle register $\dbreg$ that is initialized to $\sum_{H}\ket{H}_{\dbreg}$ (ignoring the normalizing factor). The register $\dbreg$ stores the random function $\ket{H}_{\dbreg} = \ket {H(1)} \ket {H(2)} \cdots \ket {H(N)}$. The oracle unitary $\sto$ can be written as follows:
\begin{align*}
    \sto \ket{x}_{\qryreg} \ket {u}_{\ansreg} \ket {w}_{\auxreg}\ket{r}_{\outreg} \otimes \ket {H}_{\dbreg} = \ket{x}_{\qryreg} \ket {u \oplus H(x)}_{\ansreg} \ket {w}_{\auxreg}\ket{r}_{\outreg} \otimes \ket {H}_{\dbreg},  
\end{align*}
Note that there are some format mismatch in $\qryreg$ and $\ansreg$ when running $\sto$ and $\output$. We can assume that the algorithm itself can disambiguate this on its own since it knows which type of query it is using. In general, an algorithm consists of interleaving $\sto$, $\output$, local quantum unitaries $U_i$ that operate only on registers $\qryreg\otimes\ansreg\otimes\auxreg$ and a final computational measurement on the output register $\outreg$. The following proposition tells that the output distribution using a standard oracle is exactly the same as using a random oracle. 

\begin{lemma}[{\cite[Lemma 2]{zhandry2019record}}] \label{prop:purifyoracle}
    Let $\As$ be an (unbounded) quantum algorithm making oracle queries. The output of $\As$ given a random function $H$ is exactly identical to the output of $\As$ given access to a standard oracle. 
    Therefore, a random oracle with quantum query access can be perfectly simulated as a standard oracle. 
\end{lemma}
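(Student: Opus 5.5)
The plan is to prove that the standard oracle $\sto$ with register $\dbreg$ initialized to the uniform superposition $\frac{1}{\sqrt{N^M}}\sum_H \ket{H}_{\dbreg}$ yields exactly the same output distribution as running $\As$ on a fixed, uniformly sampled classical function $H$. The key observation is that $\sto$ acts on $\dbreg$ only as a \emph{control}: it maps $\ket{x}\ket{u}\ket{w}\ket{r}\ket{H}$ to $\ket{x}\ket{u\oplus H(x)}\ket{w}\ket{r}\ket{H}$, so the computational-basis value $H$ in $\dbreg$ is never changed. The local unitaries $U_i$ and the $\output$ gate act as the identity on $\dbreg$. Hence the whole evolution is block diagonal with respect to the computational basis of $\dbreg$: there are unitaries $V_H$ on $\qryreg\ansreg\auxreg\outreg$ such that the full circuit equals $\sum_H V_H \otimes \ketbra{H}{H}_{\dbreg}$.

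I would establish this block structure by induction over the gates. For each gate $g$ I would identify the operator $g_H$ obtained by fixing $\dbreg=\ket{H}$: for $\sto$ this is exactly the classical-function oracle $O_H:\ket{x}\ket{u}\mapsto\ket{x}\ket{u\oplus H(x)}$, and for the other gates it is the gate itself. Then $V_H$ is the product of the $g_H$, which is precisely the circuit $\As$ would run with oracle access to $O_H$.

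Next, the final state is $\frac{1}{\sqrt{N^M}}\sum_H V_H\ket{\psi_0}\otimes\ket{H}$. The final measurement is in the computational basis of $\outreg$ only. I would compute the probability of outcome $r$ as $\Pr[r]=\bigl\|(\ketbra{r}{r}_{\outreg}\otimes I)\,\text{state}\bigr\|^2$. Because the $\ket{H}$ are orthonormal, the cross terms vanish, giving $\Pr[r]=\frac{1}{N^M}\sum_H \bigl\|\ketbra{r}{r}V_H\ket{\psi_0}\bigr\|^2$. This is exactly $\E_H\Pr[\As^{H}\text{ outputs } r]$. Equivalently, tracing out $\dbreg$ (which is never touched again) gives the mixture over uniform $H$, and one could measure $\dbreg$ at the start without changing anything.

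No step is a real obstacle; the only care needed is bookkeeping. One must check that every allowed operation, including $\output$ and any intermediate measurements deferred to the end, commutes with the projectors $\ketbra{H}{H}_{\dbreg}$. Since by the model's definition no gate other than $\sto$ touches $\dbreg$, and $\sto$ is diagonal on it, this holds.
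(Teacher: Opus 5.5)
Your proposal is correct and is essentially the standard argument: the paper gives no proof of its own and simply cites Zhandry, whose proof is the observation that the oracle register is only ever used as a control. A computational-basis measurement of that register therefore commutes with every gate and can be moved to the start, which yields a uniformly random $H$. Your block decomposition $\sum_H V_H\otimes|H\rangle\langle H|$, together with the vanishing cross terms, is an explicit rendering of that commutation argument.
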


\paragraph{Phase oracle.} 
Define the unitary $V$ as $(I_{\qryreg} \otimes H^{\otimes n} \otimes I_{\auxreg\otimes\outreg\otimes\dbreg})$ which applies $H^{\otimes n}$ on the answer register $\ansreg$.  Define the phase oracle operator $\pho := V^\dagger \cdot \sto \cdot V$. 
\begin{align*}
    &\pho \ket{x}_{\qryreg} \ket{u}_{\ansreg}\ket{w}_{\auxreg}\ket{r}_{\outreg}\otimes \ket{H}_{\dbreg} \\
    =&  V^\dagger \cdot  \sto \cdot  \frac{1}{\sqrt{N}}  \sum_{y\in[N]}  (-1)^{\langle u,y\rangle} \ket{x}_{\qryreg} \ket{y}_{\ansreg} \ket{w}_{\auxreg}\ket{r}_{\outreg}\otimes \ket{H}_{\dbreg}  \\ 
    =& V^\dagger  \cdot   \frac{1}{\sqrt{N}}  \sum_{y\in[N]} (-1)^{\langle u,y\rangle} \ket{x}_{\qryreg} \ket {y + H(x)}_{\ansreg} \ket{w}_{\auxreg}\ket{r}_{\outreg}\otimes \ket{H}_{\dbreg}  \\ 
     =&   \frac{1}{N} \sum_{y,y'\in[N]}  (-1)^{\langle u,y\rangle + \langle y + H(x), y'\rangle}  \ket{x}_{\qryreg} \ket {y'}_{\ansreg} \ket{w}_{\auxreg}\ket{r}_{\outreg}\otimes \ket{H}_{\dbreg}  \\ 
     =&   \frac{1}{N} (-1)^{\langle u,H(x)\rangle} \sum_{y,y'\in[N]}  (-1)^{\langle y + H(x),y' +u\rangle}  \ket{x}_{\qryreg} \ket {y'}_{\ansreg} \ket{w}_{\auxreg}\ket{r}_{\outreg}\otimes \ket{H}_{\dbreg}  \\ 
    =&  \ket{x}_{\qryreg} \ket{u}_{\ansreg} \ket{w}_{\auxreg}\ket{r}_{\outreg}\otimes (-1)^{\langle u,H(x)\rangle} \ket {H}_{\dbreg}. 
\end{align*}
The following lemma states that a phase oracle is equivalent to a standard oracle.
\begin{lemma}[{\cite[Lemma 3]{zhandry2019record}}] 
    \label{lem:pho-simulate}
    Let $\As$ be an (unbounded) quantum algorithm making queries to a standard oracle. 
    Let $\Bs$ be the algorithm that is identical to $\As$, except it performs $V$ and $V^\dagger$ before and after each query.
    Then the output distributions of $\As$ (given access to a standard oracle) and $\Bs$ (given access to a phase oracle) are identical.
    Therefore, a quantum random oracle can be perfectly simulated as a phase oracle.
\end{lemma}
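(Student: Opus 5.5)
The plan is to show that the overall unitary implemented by $\Bs$ with the phase oracle is literally the same operator as the one implemented by $\As$ with the standard oracle. Equal final states then give equal output distributions. No probabilistic argument is needed; it is a direct operator identity, applied gate by gate.

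\textbf{Step 1: one query.} By definition $\pho = V^\dagger \cdot \sto \cdot V$, where $V$ applies $H^{\otimes n}$ on the answer register $\ansreg$. Since $H^{\otimes n}$ is Hermitian and squares to the identity, we have $V = V^\dagger$ and $V^2 = I$. In $\Bs$, each query is replaced by ``apply $V$, query $\pho$, apply $V^\dagger$''. This composite operator is
\begin{equation*}
V^\dagger \cdot \pho \cdot V = V^\dagger V^\dagger \cdot \sto \cdot V V = \sto .
\end{equation*}
Because $V=V^\dagger$, the conclusion does not depend on which of $V, V^\dagger$ is placed before the query and which after.

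\textbf{Step 2: $\Bs$ is a legal algorithm.} $V$ acts only on $\ansreg$, which belongs to the algorithm's workspace. So inserting $V$ and $V^\dagger$ amounts to adding local unitaries on $\qryreg\otimes\ansreg\otimes\auxreg$, which the model permits.

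\textbf{Step 3: the whole computation.} Write $\As$ as an interleaving of local unitaries $U_i$, $\output$ gates and $\sto$ queries, all acting on the same initial state (the oracle register initialized to the uniform superposition $\sum_H \ket{H}_{\dbreg}$). $\Bs$ uses the same $U_i$ and $\output$ gates and replaces each $\sto$ by $V^\dagger\,\pho\,V$. By Step 1, each such block equals $\sto$, so the product of all gates in $\Bs$ equals the product of all gates in $\As$. The final states coincide, and hence so does the distribution of the final computational-basis measurement on $\outreg$.

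\textbf{Step 4: simulating a random oracle.} Combining this with \Cref{prop:purifyoracle}, which says the standard oracle perfectly simulates a random function, shows that a quantum random oracle can be perfectly simulated as a phase oracle.

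The only point needing any care is the bookkeeping in Step 1, namely that the conjugation cancels exactly because $V^2=I$. There is no genuine obstacle.
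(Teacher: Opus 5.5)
Your proposal is correct and follows essentially the same route as the source: the paper gives no proof of its own, but it defines $\pho := V^\dagger \cdot \sto \cdot V$ with $V$ a Hadamard transform on $\ansreg$ and cites Zhandry, and the standard proof behind that citation is exactly your operator identity. Since $V=V^\dagger$ and $V^2=I$, each block $V^\dagger\,\pho\,V$ in $\Bs$ equals $\sto$, so the final states coincide, and your appeal to \Cref{prop:purifyoracle} properly covers the concluding ``therefore'' clause.
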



\paragraph*{Compressed standard oracle.} 
The compressed standard oracle can be viewed as a type of lazy sampling technique. Instead of initializing $H$ at the very beginning, the compress oracle creates a database $\ket{D}_{\dbreg}=\ket{D(1)}_{\dbreg_1}\ket{D(2)}_{\dbreg_2}\cdots\ket{D(N)}_{\dbreg_N}$ where $D(x)\in\mathbb{F}_N\cup\set{\bot}$ and $\ket{D}$ is initialized to $\ket{\emptyset}_{\dbreg}=\ket{\bot, \bot, \cdots, \bot}$ where $\bot$ is a symbol that indicates the lack of information of the algorithm on certain function values. Let $|D|$ denote the number of entries in $D$ that are not $\bot$. 

The database is initialized as an empty list $D_0$ of length $N$, in other words, it is initialized as the pure state $\ket{\emptyset} := \ket{\bot, \bot, \cdots, \bot}$. Let $|D|$ denote the number of entries in $D$ that are not $\bot$. 

For any $D$ and $x$ such that $D(x) = \bot$, we define $D \cup (x, u)$ to be the database $D'$, such that for every $x' \ne x$, $D'(x') = D(x)$ and at the input $x$, $D'(x) = u$. 

The compressed standard oracle is the unitary $\csto := \stddecomp \cdot \csto' \cdot \stddecomp$, where
\begin{itemize}
    
    \item $\csto'$ writes $D(x)$ to the answer register $\ansreg$ by writing $u\oplus D(x)$ into it when $D(x)\neq \bot$ as usual but does nothing when $D(x)=\bot$. Or to say that we can define addition for $\bot$: $u\oplus\bot=u$, $\forall u\in[N]$. 
    Formally,
    \begin{equation*}
        \csto'\ket{x}_{\qryreg}\ket{u}_{\ansreg}\ket{w}_{\auxreg}\ket{r}_{\outreg}\otimes\ket{D}_{\dbreg} = \ket{x}_{\qryreg}\ket{u\oplus D(x)}_{\ansreg}\ket{w}_{\auxreg}\ket{r}_{\outreg}\otimes\ket{D}_{\dbreg}.
    \end{equation*}
    
    \item When the algorithm queries, the database calls $\stddecomp$ which unfolds the database and samples a value $y$ for positions that the algorithm does not know what the value is. More specifically, $\stddecomp \ket{x}_{\qryreg}\ket{u}_{\ansreg}\ket{w}_{\auxreg}\ket{r}_{\outreg}\otimes\ket{D}_{\dbreg} := \ket{x}_{\qryreg}\ket{u}_{\ansreg}\ket{w}_{\auxreg}\ket{r}_{\outreg}\otimes\stddecomp_x\ket{D}_{\dbreg}$, where $\stddecomp_x$ works on $\dbreg_x$. 
    \begin{itemize}
        \item If $D(x) = \bot$, $\stddecomp_x$ maps $\ket \bot$ to 
        \[\frac{1}{\sqrt{N}} \sum_{y\in[N]} \ket y.\]
        
        \item If $D(x) \ne \bot$, $\stddecomp_x$ works on the $x$-th register, and it is an identity on 
        \[\frac{1}{\sqrt{N}} \sum_{y\in[N]} (-1)^{\langle u,y\rangle} \ket y\]
        for all $u \ne 0$; it maps the uniform superposition $\frac{1}{\sqrt{N}} \sum_{y\in[N]} \ket y$ to $\ket \bot$. 
        
        More formally, for $D'$ such that $D'(x) = \bot$, 
        \begin{align*}
            \stddecomp_x  \frac{1}{\sqrt{N}} \sum_{y\in[N]} (-1)^{\langle u,y\rangle} \ket{D' \cup (x, y)}_{\dbreg}
                        = \frac{1}{\sqrt{N}} \sum_{y\in[N]} (-1)^{\langle u,y\rangle}  \ket{D' \cup (x, y)}_{\dbreg}
        \end{align*}
        for any $u\neq 0$ and, 
        \begin{align*}   
            \stddecomp_x  \frac{1}{\sqrt{N}} \sum_{y\in[N]} \ket{D' \cup (x, y)}_{\dbreg}  =\ket{D'}_{\dbreg}.
        \end{align*}
    \end{itemize}
    Intuitively, it swaps a uniform superposition $\frac 1{\sqrt{N}} \sum_{y\in[N]} \ket y$ with $\ket \bot$ on $\dbreg_x$ and does nothing on other orthogonal basis. So it is a well defined unitary.
\end{itemize}
Zhandry proves, that ${\sf StO}$ and \csto{} are perfectly indistinguishable by any \textit{unbounded} quantum algorithm. 
\begin{lemma}[{\cite[Lemma 4]{zhandry2019record}}] 
    Let $\As$ be an (unbounded) quantum algorithm making oracle queries. The output of $\As$ given access to the standard oracle is exactly identical to the output of $\As$ given access to a compressed standard oracle. 
\end{lemma}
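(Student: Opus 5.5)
The plan is to show that, at the level of the joint state of the algorithm's registers and the oracle register, the standard-oracle evolution and the compressed-oracle evolution are related by a single fixed unitary acting only on the oracle register. Concretely, let $\mathsf{Comp}:=\bigotimes_{x}\stddecomp_x$ act on $\dbreg$. The first step is to check that $\mathsf{Comp}$ maps the initial standard-oracle state $\frac{1}{\sqrt{N^M}}\sum_H\ket{H}_{\dbreg}=\bigotimes_x\frac{1}{\sqrt N}\sum_y\ket{y}_{\dbreg_x}$ to $\ket{\emptyset}_{\dbreg}$. This is immediate, because each $\stddecomp_x$ sends the uniform superposition on $\dbreg_x$ to $\ket\bot$. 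Here I treat each $\dbreg_x$ as living in the $(N+1)$-dimensional space spanned by $[N]\cup\{\bot\}$, with the standard oracle's states confined to the $[N]$ part.

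The second step, which is the core, is to prove the conjugation identity
\[
\mathsf{Comp}\cdot\sto\cdot\mathsf{Comp}^\dagger \;=\; \csto
\]
on the relevant subspace. Since $\sto$ on query $\ket x$ only touches $\dbreg_x$ (it reads it and writes into $\ansreg$), the factors $\stddecomp_{x'}$ for $x'\neq x$ commute with $\sto$ and cancel. We are then left with $\stddecomp_x\,\sto\,\stddecomp_x^\dagger$. Each $\stddecomp_x$ is an involution (it swaps two vectors and fixes their orthogonal complement), so $\stddecomp_x^\dagger=\stddecomp_x$. Also, on states where $\dbreg_x$ holds a genuine value $y\in[N]$, $\sto$ agrees with $\csto'$ by definition. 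So controlled on $\ket x$, $\mathsf{Comp}\,\sto\,\mathsf{Comp}^\dagger=\stddecomp\cdot\sto\cdot\stddecomp$. The remaining issue is that the middle operator must be replaced by $\csto'$, which acts as the identity when $D(x)=\bot$.

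For this one has to argue that, immediately after the first $\stddecomp_x$, the $\dbreg_x$ register is never $\ket\bot$. A cleaner route avoids this detour: extend $\sto$ to act trivially when $D(x)=\bot$, which is exactly $\csto'$. Then show that $\mathsf{Comp}$ conjugates this extended operator to $\stddecomp\cdot\csto'\cdot\stddecomp$, and that the reachable states of the standard-oracle simulation never place $\bot$ in $\dbreg$. Since $\sto$ never creates $\bot$, the extension is harmless.

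The third step is an induction over the algorithm's steps. Local unitaries $U_i$ and the $\output$ gate act on $\qryreg\ansreg\auxreg\outreg$ only, so they commute with $\mathsf{Comp}$. Hence if $\ket{\psi_t}$ is the state after $t$ operations in the standard world and $\ket{\phi_t}$ the state in the compressed world, then $\ket{\phi_t}=(I\otimes\mathsf{Comp})\ket{\psi_t}$ for all $t$. The final measurement is on $\outreg$ (or more generally on non-$\dbreg$ registers). The unitary on $\dbreg$ does not change the reduced state of the other registers, so the output distributions coincide.

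I expect the main obstacle to be the careful treatment of the middle step. One must verify that $\stddecomp$ really is a well-defined unitary, and check the conjugation identity while respecting the $\bot$ bookkeeping, since $\csto'$ and $\sto$ genuinely differ on $\bot$-entries. I would handle this by a direct basis computation on $\dbreg_x$ in the Fourier basis $\{\frac{1}{\sqrt N}\sum_y(-1)^{\langle u,y\rangle}\ket y\}_u\cup\{\ket\bot\}$, together with the invariant that no $\bot$ appears in the standard-oracle world.
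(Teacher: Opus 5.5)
Your proposal is correct. It is essentially the standard argument behind Zhandry's Lemma 4, which the paper cites without reproving. You show that the compressed-world state is always $\mathsf{Comp}=\bigotimes_x\stddecomp_x$ applied to the standard-world state, using three facts: $\stddecomp_x$ is an involution; the factors $\stddecomp_{x'}$ with $x'\neq x$ commute with a query on $x$ and cancel; and $\sto$ coincides with $\csto'$ on the $\bot$-free subspace it preserves. The paper itself relies on this same invariant (compressed and standard states differ exactly by $\stddecomp_x$ on each $\dbreg_x$) in its proof of \Cref{lem:compressed-oracle-equivalent-to-normal-oracle}.
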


In this work, we only consider query complexity, and thus simulation efficiency is irrelevant to us. 
\paragraph{Compressed phase oracle} The compressed phase oracle is the unitary $\cpho:=\stddecomp\cdot\cpho'\cdot\stddecomp$, where
\begin{align*}
    &\cpho'\ket{x}_{\qryreg}\ket{u}_{\ansreg}\ket{w}_{\auxreg}\ket{r}_{\outreg}\otimes\ket{D}_{\dbreg}\\
    =&\ket{x}_{\qryreg}(-1)^{\langle u,D(x)\rangle}\ket{u}_{\ansreg}\ket{w}_{\auxreg}\ket{r}_{\outreg}\otimes\ket{D}_{\dbreg}\\
    =&\ket{x}_{\qryreg}\ket{u}_{\ansreg}\ket{w}_{\auxreg}\ket{r}_{\outreg}\otimes(-1)^{\langle u,D(x)\rangle}\ket{D}_{\dbreg}.
\end{align*}
Redefine the unitary $V$ as $(I_{\qryreg} \otimes H^{\otimes n} \otimes I_{\auxreg\otimes\outreg\otimes\dbreg})$. Note that $\cpho'=V^\dagger\cdot\csto'\cdot V$ and $V$ commutes with $\stddecomp$. Thus,
\begin{align*}
    \cpho=&\stddecomp\cdot\cpho'\cdot\stddecomp\\
    =&\stddecomp\cdot V^\dagger\cdot\csto' \cdot V\cdot\stddecomp\\
    =&V^\dagger\cdot \stddecomp\cdot\csto' \cdot\stddecomp\cdot V\\
    =&V^\dagger\cdot\csto\cdot V.
\end{align*}
By this equivalence, we obtain the following corollary:
\begin{corollary}
    \label{lem:cpho-simulate}
    Any quantum algorithm $\cal{A}$ equipped with a quantum random oracle can be perfectly simulated by another algorithm $\cal{B}$ equipped with a compressed phase oracle.\zikuan{Need to be precise? For example mention that the number of query is the same.}
\end{corollary}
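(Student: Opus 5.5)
The plan is to chain together the simulation equivalences already established and check that each step preserves the query count. Let $\As$ be any quantum algorithm given a quantum random oracle $H$. First, by \Cref{prop:purifyoracle}, the output distribution of $\As$ on a uniformly random $H$ equals its output distribution when $H$ is replaced by the standard oracle $\sto$, acting on a database register $\dbreg$ initialized to the uniform superposition over all $H$. Next, by Zhandry's Lemma 4 stated above, $\sto$ can be replaced by the compressed standard oracle $\csto$, with $\dbreg$ initialized to $\ket{\emptyset}$, without changing the output distribution. This already gives a perfect simulation of $\As$ using $\csto$, query for query.

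Then I use the operator identity just derived, $\cpho = V^\dagger \cdot \csto \cdot V$. Define $\Bs$ to be identical to $\As$, except that every call to $\csto$ is replaced by the sequence $V$, then $\cpho$, then $V^\dagger$. Since $V^\dagger V = I$, this sequence equals $V V^\dagger \csto V V^\dagger = \csto$. So $\Bs$ with access to $\cpho$ implements exactly the same global unitary on $\qryreg\ansreg\auxreg\outreg\dbreg$ as $\As$ with access to $\csto$. The final measurement is on $\outreg$ (or on $\As$'s own output registers), which $V$ does not touch, so the output distributions coincide.

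It remains to check the bookkeeping. $V$ acts only on $\ansreg$, which is a local register of the algorithm, so $V$ and $V^\dagger$ are legitimate local unitaries for $\Bs$ and involve no oracle calls. Each query of $\As$ becomes exactly one query of $\Bs$, so the query count is preserved. The space is unchanged, since no new registers are introduced. The $\output$ gates are untouched, because they commute with nothing problematic: they are simply copied into $\Bs$.

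I do not expect a real obstacle here. The only subtle point is the identity $V\,\stddecomp = \stddecomp\, V$, which the derivation above uses. It holds because $V$ acts on $\ansreg$ while $\stddecomp_x$ acts on $\dbreg_x$ controlled only on $\qryreg$. I would state this explicitly so the corollary is airtight, and add the remark that the simulation preserves both the number of queries and the space.
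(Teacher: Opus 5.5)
Your proposal is correct and follows the same route as the paper: you chain Zhandry's equivalences (random oracle $=$ $\sto$ $=$ $\csto$) with the identity $\cpho = V^\dagger\cdot\csto\cdot V$, which rests on $V$ commuting with $\stddecomp$, and then conjugate each query by the local unitary $V$, so the number of queries and the space are preserved. One cosmetic point: it does not matter whether you insert $V$ before and $V^\dagger$ after each query or the reverse, because $V$ is a tensor product of Hadamards and hence $V = V^\dagger$.
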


The following lemma states that the size of the database is at most the number of queries.
\begin{lemma}[\cite{chung2020tight}, Lemma 2.6]
    \label{lem:bounded-database-phase}
    Let $\As$ be a quantum algorithm making at most $T$ queries to a compressed phase oracle. The overall state of $\As$ and the oracle database can be written as 
    \begin{equation*}
        \sum_{x,u,w,r, D: |D| \leq T} \alpha_{x,u,w,r, D}\ket{x,u,w,r}\otimes\ket{D}.
    \end{equation*}
    
    Moreover, this holds even if the state is conditioned on arbitrary outcomes (with non-zero probability) of $\As$'s intermediate measurements. 
\end{lemma}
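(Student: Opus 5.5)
Since each $\stddecomp$ acts on a single database entry, I would prove the statement by induction on the number of oracle calls. The invariant is that after $t$ queries the joint state is supported on databases $D$ with $|D|\le t$. At $t=0$ the database is $\ket{\emptyset}$, so $|D|=0$ and the invariant holds.

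Local unitaries $U_i$ on $\qryreg\otimes\ansreg\otimes\auxreg$ and the $\output$ gate do not touch $\dbreg$, so they preserve the support condition. For a query I would write $\cpho=\stddecomp\cdot\cpho'\cdot\stddecomp$ and treat the factors separately. The middle factor $\cpho'$ only multiplies each basis state $\ket{x,u,w,r}\ket{D}$ by a phase, so it leaves the set of databases unchanged. Conditioned on the query register being $\ket{x}$, $\stddecomp$ acts as $\stddecomp_x$ on $\dbreg_x$ alone. It therefore maps a basis state with database $D$ into the span of databases that agree with $D$ off $x$ and are arbitrary at $x$. Such databases have size at most $|D|+1$, and this bound holds whether $D(x)=\bot$ or not.

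This crude accounting gives only $+2$ per query, because there are two $\stddecomp$ factors. I expect this to be the only real obstacle, and I would tighten it with a finer invariant. The claim to prove is that the state after $t$ queries lies in the span of $\ket{x,u,w,r}\ket{D}$ with $|D|\le t$ (for every $x$ in the query register). I then track where each $\ket{x}\ket{D}$ goes under one full query.

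If $D(x)=\bot$, the first $\stddecomp_x$ produces $\frac{1}{\sqrt N}\sum_y\ket{D\cup(x,y)}$, with size $|D|+1\le t+1$. Next, $\cpho'$ applies a phase on each branch. Finally, the last $\stddecomp_x$ acts only on entry $x$, whose other entries already form $D$. Every database in its output has $\bot$ or a value at $x$ and agrees with $D$ elsewhere, so its size is at most $|D|+1$.

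If $D(x)\ne\bot$, the first $\stddecomp_x$ may map $D$ to databases that agree with $D$ except that entry $x$ is $\bot$ or any value, so their sizes are at most $|D|$. The phase step does not change sizes. The second $\stddecomp_x$ then gives sizes at most $|D|\le t$ off $x$, plus one at $x$, which is at most $t+1$.

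The key point in both cases is that the final size is bounded by $|D\setminus\{x\}|+1\le |D|+1$. This holds because every intermediate operation touches only coordinate $x$, so the entries off $x$ never change. It closes the induction with bound $t+1$.

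For the final clause about intermediate measurements, I would note that measuring $\As$'s registers $\qryreg\ansreg\auxreg\outreg$ in any basis and renormalizing is a projection acting on $\As$'s side only. Such a projection preserves membership in the subspace spanned by $\ket{\cdot}\otimes\ket{D}$ with $|D|\le t$. Conditioning therefore keeps the invariant, and the induction continues unchanged. Taking $t=T$ yields the stated decomposition.
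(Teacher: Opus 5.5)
Your proof is correct and is the standard argument behind this lemma, which the paper does not reprove but cites from Chung et al.: conditioned on the query register holding $x$, each of the three factors of $\cpho=\stddecomp\cdot\cpho'\cdot\stddecomp$ acts only on $\dbreg_x$, so a query maps a database $D$ into databases of size at most $|D\setminus\{x\}|+1\le|D|+1$. Everything else (algorithm-side unitaries, output operations, and intermediate measurements followed by renormalization) has the form $A\otimes I_{\dbreg}$ up to scaling. These therefore preserve the subspace spanned by $\ket{x,u,w,r}\otimes\ket{D}$ with $|D|\le t$, which closes the induction exactly as you describe.
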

Here we further formalize the way an algorithm interacts with a compressed phase oracle. For a quantum algorithm $\As$ interacting with a compressed phase oracle $\cpho$, suppose it makes $T$ queries in total. Then the algorithm is equivalent to applying a sequence of unitaries on the joint system $\qryreg\otimes\ansreg\otimes\auxreg\otimes\outreg\otimes \dbreg$ to a certain initialized state and then measuring the output register $\outreg$ to obtain the output. Without loss of generality, we initialize the system to be
\begin{equation*}
    \ket{\Phi_{\sf{init}}} = \ket{0}_{\qryreg} \ket{0}_{\ansreg}\ket{0}_{\auxreg}\ket{0}_{\outreg}\otimes \ket{\emptyset}_{\dbreg}.
\end{equation*}
The algorithm $\As$ could be characterized as a sequence of local unitaries and compressed phase oracle queries, 
\begin{equation*} 
    (U_{T}\otimes I_{\dbreg}) \cdot \cpho\cdot (U_{T-1}\otimes I_{\dbreg})\cdot\cpho\cdots \cpho \cdot(U_{0}\otimes I_{\dbreg}),
\end{equation*}
where $U_0,\cdots,U_T$ are unitaries on the algorithm space $\qryreg\otimes\ansreg\otimes\auxreg\otimes\outreg$. Then, we denote $\ket{\Phi_i}$ to be the intermediate system state after applying $i$ oracle queries and some local unitaries,
\begin{equation*}
\ket{\Phi_i} = (U_{i}\otimes I_{\dbreg})\cdot \cpho\cdot (U_{i-1}\otimes I_{\dbreg})\cdot\cpho\cdots \cpho \cdot(U_{0}\otimes I_{\dbreg}) \ket{\Phi}_0,
\end{equation*}
where $i\in [T]$. 

We now relate the winning probability of the algorithm to the final state of the system, $\ket{\Phi_T}$. For the scenario of an algorithm interacting with a compressed oracle, we determine whether it wins or not by measuring the database $\dbreg$ and the output register $\outreg$ first, and then check whether the resulting database $D$ and output $r$ satisfy the goal of the algorithm. More formally, suppose that we formalize the goal of the algorithm as a relation $\Rs$ over all pairs of possible output and database, and the algorithm wins if and only if it outputs a pair of $(D,r)$ such that $(D,r)\in \Rs$. For example, in the case of the collision finding problem, $\Rs = \{(r,D)| r=(x,x'),x\neq x', D(x)=D(x')\neq \bot \}$. We then define $\Pi_{\sf{win}}$ as to be a projector onto all basis states $\ket{x}_{\qryreg}\ket{u}_{\ansreg}\ket{w}_{\auxreg}\ket{r}_{\outreg}\otimes\ket{D}_{\dbreg}$ such that $(r,D)\in \Rs$, in other words projecting on all ``winning'' outputs. Therefore, the winning probability of $\As$ equals $\norm{\Pi_{\sf{win}}{\ket{\Phi_T}}}^2$.

The following lemma establishes a connection between the winning probability when interacting with a random oracle and when interacting with a compressed oracle.

\begin{lemma}[{\cite[Lemma 5]{zhandry2019record}}] \label{lem:win-prob-db-algo}
Consider a quantum algorithm $\As$ making phase queries to a random oracle $H$ and outputting tuples $(x_1,\cdots,x_k,y_1,\cdots,y_k,z)$. Let $\Rs$ be a collection of such tuples. Let $p$ be the probability that, $\As$ outputs a tuple such that (1) the tuple is in $\Rs$, and (2) $H(x_i)=y_i$ for all $i$. Now consider running $\As$ with the oracle $\cpho$, and suppose the database register $\dbreg$ is measured after $\As$ produces its output. Let $p'$ be the probability that, $\As$ outputs a tuple such that (1) the tuple is in $\Rs$, and (2) $D(x_i)=y_i\neq \bot$ for all $i$. Then we have $\sqrt{p}\leq \sqrt{p'}+\sqrt{k/N}$, where $N$ is the range size.
    
\end{lemma}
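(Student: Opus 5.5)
The plan is to follow Zhandry's original argument for \Cref{lem:win-prob-db-algo}. We compare the final state of $\As$ run with the standard (phase) oracle against the final state of $\As$ run with $\cpho$. By \Cref{lem:pho-simulate} and the equivalence of $\sto$ and $\csto$ (hence of $\pho$ and $\cpho$ via conjugation by $V$), the two runs produce the same joint state of the algorithm registers and the oracle register, up to the isometry relating $\ket{H}_{\dbreg}$ to $\ket{D}_{\dbreg}$. Concretely, the standard-oracle final state equals $\stddecomp_{\rm all}$ applied to the compressed final state, where $\stddecomp_{\rm all}=\bigotimes_x \stddecomp_x$ decompresses every entry. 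So $p=\norm{\Pi_{\sf std}\,\stddecomp_{\rm all}\ket{\Phi_T}}^2$, where $\Pi_{\sf std}$ projects onto outputs $(x_1,\dots,x_k,y_1,\dots,y_k,z)\in\Rs$ with $H(x_i)=y_i$. Likewise $p'=\norm{\Pi'\ket{\Phi_T}}^2$, where $\Pi'$ additionally requires $D(x_i)=y_i\ne\bot$.

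We then fix an output tuple and look only at the entries $\dbreg_{x_1},\dots,\dbreg_{x_k}$. The operators $\stddecomp_{x'}$ for $x'\notin\{x_i\}$ act on other registers and do not affect the predicate, so they can be dropped. We split the compressed state on these $k$ entries into the component where every $D(x_i)\ne\bot$ and the component where some $D(x_i)=\bot$.
\begin{itemize}
\item On the first component, $\stddecomp_{x_i}$ acts on each non-$\bot$ register as the identity on the non-uniform Fourier vectors. It differs from the identity only in swapping the uniform superposition with $\ket{\bot}$, and that swap moves amplitude to $\bot$, which never satisfies $H(x_i)=y_i$ in the standard picture.
\item On the second component, each $\bot$ entry at $x_i$ becomes the uniform superposition $\frac{1}{\sqrt N}\sum_y\ket y$. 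Projecting onto the value $y_i$ therefore shrinks its amplitude by a factor $1/\sqrt N$ per such entry.
\end{itemize}

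Next we apply the triangle inequality to $\norm{\Pi_{\sf std}\,\stddecomp\ket{\Phi_T}}$, decomposing $\stddecomp_{x_i}$ as the identity plus a correction. The main term reproduces $\sqrt{p'}$. Each correction term, one per $x_i$, contributes at most $\sqrt{1/N}$ in norm. This holds because the correction is supported on the rank-one subspace spanned by the uniform superposition and $\ket\bot$, and the overlap of a uniform superposition with a fixed value $\ket{y_i}$ has squared norm $1/N$. The point is to sum these contributions in squared norm over the $k$ positions, not linearly, which gives $\sqrt{k/N}$ rather than $k/\sqrt N$. To get this we treat the corrections as orthogonal: the events ``position $x_i$ is the first index where a $\bot$ or uniform component appears'' are mutually exclusive, so their pieces add as orthogonal vectors.

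The main obstacle is exactly this orthogonality bookkeeping. A naive triangle inequality over the $k$ entries gives only $k/\sqrt N$. The approach I would try is to order the indices and write the difference as a telescoping sum over $i$, where the $i$-th term has the first $i-1$ entries already in the ``good'' form and the error localized at $x_i$. I would then show that these terms have disjoint support in the computational basis of $\dbreg_{x_i}$, since they carry a distinct $\bot$ pattern. With disjointness, Pythagoras gives $\sqrt{k/N}$.

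Finally, since the output registers and the $y_i$ are measured, the bound holds conditioned on each classical output value. Averaging via Cauchy--Schwarz (concavity of the square root) over output tuples preserves $\sqrt p\le\sqrt{p'}+\sqrt{k/N}$.
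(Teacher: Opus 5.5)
You re-derive Zhandry's lemma from scratch. The paper does not: it cites Zhandry's Lemma 5, stated for $\sto/\csto$, and transfers it to phase queries via $\pho=V^\dagger\sto V$ and $\cpho=V^\dagger\csto V$. Your skeleton is Zhandry's: decompress everything, drop entries outside $\{x_i\}$, split by $\bot$-pattern, telescope. As written, though, it is missing the structural fact both of your per-entry claims depend on. For every state reachable from $\ket{\bot,\dots,\bot}$ by $\cpho$ queries, each database entry lies in $\mathrm{span}\{\ket{\bot}\}\oplus\ket{\mu}^{\perp}$, where $\ket{\mu}:=\frac{1}{\sqrt N}\sum_{y\in[N]}\ket{y}$. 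This follows by induction, because the outer $\stddecomp$ in $\cpho=\stddecomp\cdot\cpho'\cdot\stddecomp$ sends $\ket{\mu}$ to $\ket{\bot}$; it is the same fact behind the paper's identification $\ket{\widehat 0}:=\ket{\bot}$.

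Without this invariant, both per-entry claims fail. Since $\stddecomp_x\ket{y}=\ket{y}-\frac1N\sum_{y'}\ket{y'}+\frac{1}{\sqrt N}\ket{\bot}$, removing the $\ket{\mu}$-component of a non-$\bot$ entry shifts its $y_i$-coefficient by $-\frac{1}{\sqrt N}\braket{\mu}{\psi}$. That shift can \emph{increase} the modulus by interference; it does not merely ``move amplitude to $\bot$''. So the all-non-$\bot$ part need not reproduce $\sqrt{p'}$. Also $\norm{\ketbra{y}{y}(\stddecomp_x-I)}=\sqrt{2/N}$, not $\sqrt{1/N}$. Concretely, take $k=1$, $\ket{w}=(\ket{\mu}-\ket{\bot})/\sqrt2$ and $\ket{\psi}\propto\ket{w}-\braket{y}{w}\ket{y}$. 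Then $p'=0$ but $p\approx 2/N$, so the statement is false for arbitrary database states. With the invariant, $\stddecomp_x$ acts on reachable states as $I+(\ket{\mu}-\ket{\bot})\bra{\bot}$. Hence the all-non-$\bot$ component is exactly unchanged, and $\bra{y}\stddecomp_x\ket{\psi}=\braket{y}{\psi}+\frac{1}{\sqrt N}\braket{\bot}{\psi}$.

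Your aggregation step also puts the orthogonality on the wrong side. After decompressing and projecting onto $D(x_i)=y_i$ for all $i$, every telescoping term sits on the same value $\ket{y_1,\dots,y_k}$ of the entries $x_1,\dots,x_k$. So the output pieces are not orthogonal and can add coherently. Take an equal superposition over the $k$ patterns in which exactly one $x_j$ is $\bot$ and the others hold the $\ket{\mu}^\perp$-part of $\ket{y_i}$. It has $p'=0$ and $\sqrt p\approx\sqrt{k/N}$, obtained precisely through coherent addition. Pythagoras on the outputs would instead give the false bound $\sqrt{1/N}$.

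The disjointness lives on the input side. Let $Q_j$ project onto ``$D(x_j)=\bot$ and $D(x_i)=y_i$ for all $i>j$''; these projectors are mutually orthogonal. On reachable states,
\[
\bigotimes_i\bra{y_i}\stddecomp_{x_i}-\bigotimes_i\bra{y_i}=\frac{1}{\sqrt N}\sum_j C_jQ_j,
\qquad
C_j=\bigotimes_{i<j}\bra{y_i}\stddecomp_{x_i}\otimes\bra{\bot}_{x_j}\otimes\bigotimes_{i>j}\bra{y_i},
\]
and each $C_j$ has norm at most $1$. The triangle inequality followed by Cauchy--Schwarz then bounds the correction by $\frac{1}{\sqrt N}\sum_j\norm{Q_j\ket{\psi}}\le\sqrt{k/N}$.

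Two smaller points. Combining over classical output values uses Minkowski's inequality in $\ell_2$, not concavity of the square root. Repeated $x_i$ also need a word: inconsistent $y_i$ give zero on both sides, and consistent repeats only lower $k$.
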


\begin{remark}
    The \Cref{lem:win-prob-db-algo} is a little different from the original statement of the lemma 5 in Zhandry's work \cite{zhandry2019record}, but it is not hard to see that they are equivalent. In the original lemma\cite{zhandry2019record}, the algorithm $\As$ is making standard queries in the first scenario, and making queries to oracle $\csto$ in the second scenario. Here in the statement, the algorithm uses phase queries and $\cpho$ oracle instead. However, since we previously showed that $\pho := V^\dagger \cdot \sto \cdot V$, and that $\cpho = V^\dagger\cdot\csto\cdot V$, the analysis could similarly apply to the phase oracle setting in \Cref{lem:win-prob-db-algo}.
\end{remark}

The following lemma shows how $\cpho$ affects the database under the standard basis:

\begin{lemma}[\cite{Hamoudi_2023}, Lemma 4.1]\label{each-query-in-compress-oracle}
    If the operator $\cpho$ is applied to a basis state $\ket{x}_{\qryreg}\ket{u}_{\ansreg}\ket{w}_{\auxreg}\ket{r}_{\outreg}\otimes\ket{D}_{\dbreg}$ where $u\neq 0$ then the register $\ket{D(x)}_{\dbreg_x}$ is mapped to
    \begin{align*}
        \circ&\sum_{y\in[N]}\frac{(-1)^{\langle u,y\rangle}}{\sqrt{N}}\ket{y} & \text{if }D(x)=\bot\\
        \circ&\substack{\frac{(-1)^{\langle u,D(x)\rangle}}{\sqrt{N}}\ket{\bot}+\frac{1+(-1)^{\langle u,D(x)\rangle}(N-2)}{N}\ket{D(x)}\\+\sum_{y\in[N]\backslash\set{D(x)}}\frac{1-(-1)^{\langle u,y\rangle}-(-1)^{\langle u,D(x)\rangle}}{N}\ket{y}} & \text{if }D(x)\in[N]
    \end{align*}
    and other registers are unchanged. If $u=0$ then none of the registers are changed.
\end{lemma}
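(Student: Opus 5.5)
The plan is to verify the claimed action of $\cpho=\stddecomp\cdot\cpho'\cdot\stddecomp$ directly on the basis state. We restrict to the single sub-register $\dbreg_x$, since all three operators act as the identity on $\dbreg_{x'}$ for $x'\neq x$ and on $\qryreg\ansreg\auxreg\outreg$ (the latter are only used as controls). Throughout, $x$ and $u$ are fixed. If $u=0$, then $\cpho'$ acts as the identity because $(-1)^{\langle 0,D(x)\rangle}=1$, including the convention $\bot\mapsto$ no phase. Hence $\cpho=\stddecomp^2=I$, since $\stddecomp_x$ is an involution: it swaps $\ket\bot$ with the uniform superposition $\ket{\hat 0}:=\frac1{\sqrt N}\sum_y\ket y$ and fixes the orthogonal complement. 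This settles the last sentence of the statement.

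For $u\neq0$, let $\ket{\hat v}:=\frac1{\sqrt N}\sum_y(-1)^{\langle v,y\rangle}\ket y$ denote the Fourier basis. Then $\stddecomp_x$ swaps $\ket\bot\leftrightarrow\ket{\hat0}$ and fixes each $\ket{\hat v}$ with $v\neq0$. On the span of $\ket y$, $y\in[N]$, the operator $\cpho'$ acts as the phase $(-1)^{\langle u,y\rangle}$, which maps $\ket{\hat v}\mapsto\ket{\widehat{v\oplus u}}$. On $\ket\bot$ it acts as the identity.

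\emph{Case $D(x)=\bot$.} First $\stddecomp_x$ sends $\ket\bot\mapsto\ket{\hat0}$. Then $\cpho'$ sends this to $\ket{\hat u}$. Since $u\ne0$, the final $\stddecomp_x$ fixes $\ket{\hat u}$. The result is $\sum_y\frac{(-1)^{\langle u,y\rangle}}{\sqrt N}\ket y$, as claimed.

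\emph{Case $D(x)=d\in[N]$.} Expand $\ket d=\frac1{\sqrt N}\sum_v(-1)^{\langle v,d\rangle}\ket{\hat v}$. The first $\stddecomp_x$ replaces the $v=0$ term $\frac1{\sqrt N}\ket{\hat0}$ by $\frac1{\sqrt N}\ket\bot$, giving $\ket d-\frac1{\sqrt N}\ket{\hat0}+\frac1{\sqrt N}\ket\bot$. Next apply $\cpho'$ to get
\[
(-1)^{\langle u,d\rangle}\ket d-\frac1{\sqrt N}\ket{\hat u}+\frac1{\sqrt N}\ket\bot .
\]
Finally apply $\stddecomp_x$ again. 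It fixes $\ket{\hat u}$. It maps $\ket\bot\mapsto\ket{\hat0}$. For $(-1)^{\langle u,d\rangle}\ket d$ we reuse the first computation: $\ket d\mapsto \ket d-\frac1{\sqrt N}\ket{\hat0}+\frac1{\sqrt N}\ket\bot$. Collecting terms gives
\[
(-1)^{\langle u,d\rangle}\Bigl(\ket d-\tfrac1{\sqrt N}\ket{\hat0}+\tfrac1{\sqrt N}\ket\bot\Bigr)-\tfrac1{\sqrt N}\ket{\hat u}+\tfrac1{\sqrt N}\ket{\hat 0}.
\]

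The only real work is reading off the coefficients in the computational basis. The coefficient of $\ket\bot$ is $\frac{(-1)^{\langle u,d\rangle}}{\sqrt N}$. For each $y$, the terms $\frac1{\sqrt N}\ket{\hat0}$ and $\frac1{\sqrt N}\ket{\hat u}$ contribute $\frac1N$ and $\frac{(-1)^{\langle u,y\rangle}}N$ respectively. So for $y\ne d$ the coefficient is $\frac{1-(-1)^{\langle u,y\rangle}-(-1)^{\langle u,d\rangle}}N$. For $y=d$ it is $(-1)^{\langle u,d\rangle}-\frac{(-1)^{\langle u,d\rangle}}N-\frac{(-1)^{\langle u,d\rangle}}N+\frac1N=\frac{1+(-1)^{\langle u,d\rangle}(N-2)}N$. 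This matches the statement.

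The main obstacle is purely bookkeeping. One must consistently use that $\stddecomp_x$ acts on $\ket d$ through its $\ket{\hat0}$ component only, and that $\cpho'$ leaves $\ket\bot$ untouched. Everything else is direct.
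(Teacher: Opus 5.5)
Your proposal is correct. The paper does not prove this lemma; it imports it from \cite{Hamoudi_2023}. Your direct verification is the standard computation: $\stddecomp_x$ swaps $\ket{\bot}$ with the uniform superposition and fixes the nontrivial characters, and $\cpho'$ shifts characters by $u$ while leaving $\ket{\bot}$ alone. This yields exactly the stated coefficients, and the $u=0$ case follows from $\stddecomp$ being an involution. It is also consistent with the paper's own Hadamard-oracle calculation $\cpho\ket{\widehat d}=\ket{\widehat{d\oplus u}}$ (with $\ket{\widehat 0}=\ket{\bot}$).

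One presentational slip: when reading off the coefficient of $\ket{y}$ for $y\neq d$, there are three contributions, with signs:
\begin{itemize}
\item $+\frac1N$ from $\frac1{\sqrt N}\ket{\hat 0}$;
\item $-\frac{(-1)^{\langle u,d\rangle}}{N}$ from $-\frac{(-1)^{\langle u,d\rangle}}{\sqrt N}\ket{\hat 0}$;
\item $-\frac{(-1)^{\langle u,y\rangle}}{N}$ from $-\frac1{\sqrt N}\ket{\hat u}$.
\end{itemize}
Your sentence mentions only two of them and drops the minus sign. Your final formula nonetheless includes all three correctly.
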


\paragraph{Hadamard Oracle}
Define $\ket{\widehat{v}}_{\dbreg_x}:=\frac{1}{\sqrt{N}} \sum_{y\in[N]}(-1)^{\langle v,y\rangle}\ket{y}_{\dbreg_x}$ for $v\neq 0$ and $\ket{\widehat{0}}_{\dbreg_x}=\ket{\bot}_{\dbreg_x}$. Define $\dbreg_{-x}:=\otimes_{x'\in[M]-\set{x}}\dbreg_{x'}$. Then we have
\begin{align*}
    &\cpho\ket{x}_{\qryreg}\ket{u}_{\ansreg}\ket{w}_{\auxreg}\ket{r}_{\outreg}\otimes\ket{\widehat{d}}_{\dbreg_x}\otimes\ket{D}_{\dbreg_{-x}}\\
    =&\stddecomp\cdot\cpho'\cdot\stddecomp\ket{x}_{\qryreg}\ket{u}_{\ansreg}\ket{w}_{\auxreg}\ket{r}_{\outreg}\otimes\ket{\widehat{d}}_{\dbreg_x}\otimes\ket{D}_{\dbreg_{-x}}\\
    =&\stddecomp\cdot\cpho'\ket{x}_{\qryreg}\ket{u}_{\ansreg}\ket{w}_{\auxreg}\ket{r}_{\outreg}\otimes\brackets{\sum_{y\in[N]}\frac{(-1)^{\langle d,y\rangle}}{\sqrt{N}}\ket{y}_{\dbreg_x}}\otimes\ket{D}_{\dbreg_{-x}}\\
    =&\stddecomp\ket{x}_{\qryreg}\ket{u}_{\ansreg}\ket{w}_{\auxreg}\ket{r}_{\outreg}\otimes\brackets{\sum_{y\in[N]}\frac{(-1)^{\langle d+u,y\rangle}}{\sqrt{N}}\ket{y}_{\dbreg_x}}\otimes\ket{D}_{\dbreg_{-x}}\\
      =&\ket{x}_{\qryreg}\ket{u}_{\ansreg}\ket{w}_{\auxreg}\ket{r}_{\outreg}\otimes\ket{\widehat{d\oplus u}}_{\dbreg_x}\otimes\ket{D}_{\dbreg_{-x}}.
\end{align*}

The compressed Hadamard oracle $\Hadamardo$ uses the database register under a different basis. Define $\Hadamardo\ket{x}_{\qryreg}\ket{u}_{\ansreg}\ket{w}_{\auxreg}\ket{r}_{\outreg}\otimes\ket{D}_{\dbreg}:=\ket{x}_{\qryreg}\ket{u}_{\ansreg}\ket{w}_{\auxreg}\ket{r}_{\outreg}\otimes\ket{D\oplus (x,u)}_{\dbreg}$ where $D\oplus (x,u)$ is the database resulting from applying $+u$ under $\mathbb{F}_{2}^n$ to the value on $\dbreg_x$ register on $D$. Since $\Hadamardo$ and $\cpho$ are two identical operators defined under two different bases and the basis choice of $\dbreg$ does not affect the behavior of the algorithm, we have the following lemma.

\begin{lemma}
    Let $\As$ be an (unbounded) quantum algorithm making oracle queries. The output of $\As$ given access to the compressed phase oracle is exactly identical to the output of $\As$ given access to a compressed Hadamard oracle. 
\end{lemma}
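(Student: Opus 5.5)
The plan is to show that $\cpho$ and $\Hadamardo$ are the same unitary written in two bases of $\dbreg$. Concretely, I will exhibit a fixed unitary $W$ acting only on $\dbreg$, and independent of the algorithm, such that $\Hadamardo = (I\otimes W)^\dagger\,\cpho\,(I\otimes W)$ on the whole system. Once that identity holds, running $\As$ with $\cpho$ or with $\Hadamardo$ produces joint states that differ only by $I\otimes W$ on $\dbreg$.

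\textbf{Step 1: define $W$ and check the intertwining relation.} Let $W=\bigotimes_x W_x$, where $W_x$ maps the computational basis state $\ket{v}_{\dbreg_x}$ ($v\in[N]$) to $\ket{\widehat v}_{\dbreg_x}$. Here $\ket{\widehat 0}=\ket\bot$, and for $v\neq 0$, $\ket{\widehat v}$ is the Hadamard-basis vector. The vectors $\ket\bot$ and $\ket{\widehat v}$ for $v\neq0$ are orthonormal, since each $\ket{\widehat v}$ with $v\ne0$ lies in $\mathrm{span}\{\ket y\}$ and is orthogonal to the uniform vector. So each $W_x$ is an isometry from an $N$-dimensional space onto the subspace reachable by $\cpho$.

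The computation displayed just before the lemma shows
\[
\cpho\ket{x,u,w,r}\ket{\widehat d}_{\dbreg_x}\ket{D}_{\dbreg_{-x}}=\ket{x,u,w,r}\ket{\widehat{d\oplus u}}_{\dbreg_x}\ket{D}_{\dbreg_{-x}} .
\]
This covers $d=0$, through the $\stddecomp$ action on $\ket\bot$, and also $u=0$. It is exactly $W$ applied to $\Hadamardo$'s action $\ket d\mapsto\ket{d\oplus u}$ on the $x$-th cell. It also holds when the other cells of $\dbreg_{-x}$ are in an arbitrary state, in particular a $W$-image. Hence $\cpho(I\otimes W)=(I\otimes W)\Hadamardo$ on all basis states, and by linearity on the whole space.

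\textbf{Step 2: propagate through the algorithm.} The initial database $\ket{\emptyset}=\ket{\bot}^{\otimes}$ equals $W\ket{0}^{\otimes}$. This is the natural initial state for $\Hadamardo$: all zeros means nothing has been written. The local unitaries $U_i\otimes I_{\dbreg}$ commute with $I\otimes W$. By induction on $i$, the state under $\cpho$ is $\ket{\Phi_i}=(I\otimes W)\ket{\Phi_i^{\sf Had}}$.

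\textbf{Step 3: conclude.} The output of $\As$ is obtained by measuring $\outreg$, or any registers outside $\dbreg$. Since $W$ is an isometry acting only on $\dbreg$, the reduced state on the non-$\dbreg$ registers is identical in both cases, so the output distributions coincide.

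\textbf{Main obstacle.} The only delicate point is the basis bookkeeping for $\dbreg_x$. The local Hilbert space has dimension $N+1$ with basis $\{\ket\bot\}\cup\{\ket y\}$, while $\Hadamardo$ uses only $N$ labels, and the uniform superposition is identified with $\ket\bot$. I would handle this by restricting to the invariant subspace spanned by $\ket\bot$ and the $\ket{\widehat v}$ for $v\ne0$. This subspace contains the initial state and is preserved by $\cpho$, since every step above maps it into itself. Within it, $W$ is a genuine unitary.
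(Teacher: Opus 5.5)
Your proof is correct and is essentially the paper's argument: the paper's one-line proof says that $\Hadamardo$ is just $\cpho$ written in the basis $\{\ket{\widehat{v}}\}$ of $\dbreg$ (with $\ket{\widehat{0}}=\ket{\bot}$), using the computation just before the lemma that $\cpho$ sends $\ket{\widehat{d}}_{\dbreg_x}$ to $\ket{\widehat{d\oplus u}}_{\dbreg_x}$. Your explicit isometry $W$, the check $\ket{\emptyset}=W\ket{0,\ldots,0}$, and the restriction to the $\cpho$-invariant complement of the uniform superposition make rigorous the dimension bookkeeping ($N+1$ versus $N$ per cell) that the paper leaves implicit.
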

\begin{proof}
    A Hadamard oracle is essentially a phase oracle under Hadamard basis.
\end{proof}

\section{Finding collisions implies many entries in the compressed oracle}\label{sec:many_entries_database}

Below we present a new lemma showing intuitively that, if an algorithm can find a pair of collisions with high probability, under the compressed oracle view, the expected number of non-$\bot$ entries in the compressed oracle cannot be too small. Our main proof idea is to first bound the increment in winning probability after each of the $T$ queries, then show that the overall winning probability is bounded by a combination of each increment. In \Cref{lem:progress-measure}, we carefully bound the weight increase on the basis where the database contains collision, by analyzing the evolution of the database during oracle query separately according to the number of existing non-$\bot$ entries. Then in \Cref{remember-tuples}, we combine the per-query increments, and connect it to the final winning probability.

We clarify that the lemma works for a slightly generalized version of the collision finding problem, namely the \emph{Labeled Collision Finding} problem, defined as follows:
\begin{definition}[Labeled Collision Finding]
    Given a random oracle $G: [M]\to [N_0]$, as well as a label function $\lab: [M]\to [N]$, for some fixed $y^*$ the goal is to find a pair $(x,x')\in [M]^2$ such that:
    \begin{itemize}
        \item $x\neq x'$.
        \item $\lab(x) = \lab(x')=y^*$.
        \item $G(x) = G(x')$.
    \end{itemize}
    We call the satisfying pair $(x,x')$ a pair of label-$y^*$ collisions.
\end{definition}


Notice that when the label function maps all inputs $x\in [M]$ to the same label, this problem becomes the standard collision finding problem, which requires only $G(x)=G(x'), x\neq x'$. For nontrivial label functions, the problem is essentially finding a collision pair of a random oracle $G$ that is also of the same label. We start by introducing a set of projectors on the joint system of algorithm register and database, we may use them to categorize which condition the database is on after or before each oracle query.
\begin{definition}
    \label{def:projectors}
    Given a label function $\lab:[M]\to [N]$, we define the following projectors by giving the basis states on which they project:
    \begin{itemize}
        \item $\Pi_v$: all basis states $\ket{x}_{\qryreg}\ket{u}_{\ansreg}\ket{w}_{\auxreg}\ket{r}_{\outreg}\otimes\ket{D}_{\dbreg}$ such that there are $v$ non-$\bot$ entries of the label $y^*$, i.e. $\sum_{x\in [M]}\mathds{1}\{D(x)\neq \bot, \lab(x)=y^*\}=v$. Let $\abs{D}_{y^*}$ denote the number of non-$\bot$ entries in $D$ with label $y^*$
        \item $P$: all basis states $\ket{x}_{\qryreg}\ket{u}_{\ansreg}\ket{w}_{\auxreg}\ket{r}_{\outreg}\otimes\ket{D}_{\dbreg}$ such that $\ket{D}_{\dbreg}$ contains a pair of labeled-collisions, i.e., $\exists x\neq x'$, $D(x)=D(x')\neq\bot, \lab(x) = \lab(x')=y^*$.
        \item $O$: all basis states $\ket{x}_{\qryreg}\ket{u}_{\ansreg}\ket{w}_{\auxreg}\ket{r}_{\outreg}\otimes\ket{D}_{\dbreg}$ such that: \\
        (1) there are no label-$y^*$ collisions in $D$; (2)$\lab(x)\neq y^*$.
        \item $Q$: all basis states $\ket{x}_{\qryreg}\ket{u}_{\ansreg}\ket{w}_{\auxreg}\ket{r}_{\outreg}\otimes\ket{D}_{\dbreg}$ such that: \\
        (1) there are no label-$y^*$ collisions in $D$;  (2) $D(x)=\bot$;  (3) $\lab(x)=y^*$; (4) $u\neq 0$.
        \item $R$: all basis states $\ket{x}_{\qryreg}\ket{u}_{\ansreg}\ket{w}_{\auxreg}\ket{r}_{\outreg}\otimes\ket{D}_{\dbreg}$ such that: \\
        (1) there are no label-$y^*$ collisions in $D$;  (2) $D(x)\neq\bot$; (3) $\lab(x)=y^*$; (4) $u\neq 0$.
        \item $S$: all basis states $\ket{x}_{\qryreg}\ket{u}_{\ansreg}\ket{w}_{\auxreg}\ket{r}_{\outreg}\otimes\ket{D}_{\dbreg}$ such that: \\
        (1) there are no label-$y^*$ collisions in $D$;  (2) $u=0$.(3) $\lab(x)=y^*$.
        \item Also define $P_v = \Pi_vP$, $O_v=\Pi_vO$, $Q_v = \Pi_vQ$, $R_v = \Pi_v R$, $S_v = \Pi_v S$.
    \end{itemize}
\end{definition}
Then we have the following properties:
\begin{itemize}
    \item[-] $\sum_{v=0}^M \Pi_v = I$.
    \item[-] $P+O+Q+R+S = I$, and $P,O,Q,R,S$ are orthogonal.
    \item[-] $P_v+O_v+Q_v+R_v+S_v = \Pi_v$, $\forall v\in [0,M]$, and that $P_v,O_v,Q_v,R_v,S_v$ are orthogonal.
    \item[-] $\forall v\neq v'$, their corresponding projectors are orthogonal, it also holds for $P_v,O_v,Q_v,R_v,$ and $S_v$.
    \item[-] Notice that the whether a state lies in the projected space of $P$ and $\Pi_v$ only depends on the $\dbreg$ register information of the state, so these projectors commute with local unitaries on algorithm space, $\Pi_v(U\otimes I_{\dbreg}) = (U\otimes I_{\dbreg})\Pi_v$, and it is the same for $P,O,Q,R,S$.
\end{itemize}

Based on the projectors, we could define the average non-empty size of the database, throughout the process of an algorithm interacting with a compressed oracle.

\begin{definition}
\label{def:non-bot size}
Suppose that we have an algorithm $\cA$ interacting with a compressed phase oracle $\cpho$, with a total of $T$ queries, $\cpho$ corresponds to a database with $M$ entries and each entry has an element in $\mathbb{F}_{N_0} \cup\set{\bot}$. Consider the case where we measure the database register after the $i$-th query, and we define
\begin{equation*}
    p_{i,v} = \norm{\Pi_v\ket{\Phi_i}}^2
\end{equation*} 
to be the probability of collapsing into a database with $v$ non-$\bot$ entries with label $y^*$, $\forall i\in [T], v\in [0,M]$. Then, we define
\begin{equation*}
    V_i = \sum_{v=0}^{M} v\cdot p_{i,v}
\end{equation*}
to be the expectation of the number of non-$\bot$ entries with label $y^*$ after $i$ queries, $\forall i\in [T]$. Furthermore, we define $V$ for the case when we randomly pick $i\gets [T]$ and measure the database register $\dbreg$ after $i$ oracle queries,
\begin{equation*}
    V = \frac{1}{T}\sum_{i=1}^T V_i = \frac{1}{T}\sum_{i=1}^T \sum_{v=0}^M v \norm{\Pi_v\ket{\Phi_i}}^2.
\end{equation*}    
\end{definition}

\begin{remark}
\label{rem:non-bot-entry}
    Notice that after $i$ queries to the oracle, we have $V_i = \sum_{v=0}^{M} v\cdot p_{i,v}$. It seems that this sum appears to have $M$ terms, but from \Cref{lem:bounded-database-phase} we know that at the moment the database register only has support over states that have at most $i$ non-$\bot$ entries, and thus the maximal single-labeled non-$\bot$ capacity is also at most $i$. Therefore, for $v>i$, we have $p_{i,v}=0$, and $V_i = \sum_{v=0}^{i} v\cdot p_{i,v}$.
\end{remark}
Now we move on to prove a proposition, intuitively bounding the increase in success probability after each oracle query.

\begin{proposition}
\label{lem:progress-measure}
    For an arbitrary state $\ket{\psi}$ in joint space $\qryreg\ansreg\auxreg\outreg\dbreg$, and an arbitrary local unitary $U$ on algorithm space $\qryreg\ansreg\auxreg\outreg$, we have
    \begin{equation*}
        \norm{P_v\cdot\cpho\cdot(U\otimes I_{\dbreg})\cdot(I-P)\ket{\psi}} \leq 3\sqrt{\frac{v-1}{N_0}}\norm{\Pi_v\ket{\psi}} + \sqrt{\frac{v-1}{N_0}}\norm{\Pi_{v-1}\ket{\psi}},
    \end{equation*}
    where $N_0$ is the range size of the compressed oracle, $v>0$ and $P,\Pi_v,P_v$ are projectors defined in \Cref{def:projectors}.
\end{proposition}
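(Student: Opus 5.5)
Since $U\otimes I_{\dbreg}$ commutes with $P$ and with every $\Pi_{v'}$, we set $\ket{\phi} := (U\otimes I_{\dbreg})(I-P)\ket{\psi}$. This state is supported on $O+Q+R+S$, and $\norm{\Pi_{v'}\ket{\phi}} \le \norm{\Pi_{v'}\ket{\psi}}$ for every $v'$. It therefore suffices to bound $\norm{P_v\,\cpho\,\ket{\phi}}$. By \Cref{each-query-in-compress-oracle}, $\cpho$ acts only on $\dbreg_x$, controlled by $\qryreg=x$ and $\ansreg=u$. Hence a basis state with label-$y^*$ count $v'$ is mapped into states with count in $\{v'-1,v',v'+1\}$. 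To land in $P_v$, the input must have count $v$ (an existing entry is overwritten) or count $v-1$ (a new entry is added). We decompose $\ket{\phi}=(O+Q+R+S)\ket{\phi}$ and handle each piece separately.

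\textbf{Pieces $O$ and $S$.} On $O$, the queried $x$ has $\lab(x)\ne y^*$. Changing $D(x)$ then cannot create a label-$y^*$ collision, and the image stays outside $P$. On $S$ we have $u=0$, so $\cpho$ acts trivially and the image again stays outside $P$. Both contributions vanish.

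\textbf{Piece $Q$.} Here $D(x)=\bot$, $\lab(x)=y^*$, $u\neq 0$. The database has no collision, and to land in $P_v$ the input must have count $v-1$, so the input lies in $Q_{v-1}$. The entry becomes $\sum_y (-1)^{\langle u,y\rangle}N_0^{-1/2}\ket{y}$. A collision arises only if $y$ equals one of the $v-1$ existing label-$y^*$ values. The amplitude fraction on these values is $\sqrt{(v-1)/N_0}$. Distinct basis inputs map to orthogonal outputs, because the rest of $D$, $x$, $u$ and $w,r$ are preserved and the $\dbreg_x$ content differs. Therefore orthogonality is maintained and the contribution is at most $\sqrt{(v-1)/N_0}\,\norm{\Pi_{v-1}\ket{\psi}}$.

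\textbf{Piece $R$.} Here $D(x)\in[N_0]$, with count $v$ retained. Using the formula in \Cref{each-query-in-compress-oracle}, the image has three components: a $\ket{\bot}$ term, a $\ket{D(x)}$ term, and $\ket{y}$ terms with $y\ne D(x)$ of amplitude at most $3/N_0$. The $\ket{\bot}$ term has count $v-1$. The $\ket{D(x)}$ term reproduces the collision-free database. Only the last component can enter $P_v$, namely when $y$ hits one of the other $v-1$ label-$y^*$ values. This amplitude is at most $3\sqrt{v-1}/N_0$ per basis state, which is at most $3\sqrt{(v-1)/N_0}$.

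The main obstacle is summing over superpositions in $R$. Different basis inputs $(D,x)$ and $(D',x)$ differing only at $\dbreg_x$ can map onto the same output $D\cup(x,y)$, so orthogonality fails. I would handle this as follows. Fix everything except the $\dbreg_x$ content and bound the operator norm of the map $\ket{d}\mapsto \sum_{y\in B}c_{d,y}\ket{y}$, where $B$ is the set of at most $v-1$ colliding values. I would try writing this map as the projector onto $B$ composed with the $\dbreg_x$ unitary. It is cleaner to use the Fourier form: in the $\ket{\widehat{\cdot}}$ basis $\cpho$ is a shift, and $\ket{\widehat d}\mapsto\ket{\widehat{d\oplus u}}$ is unitary. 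The component in $P_v$ is then the projector onto $\mathrm{span}\{\ket{y}:y\in B\}$ applied to (unitary)$\cdot$(input restricted to nonzero $\ket{\widehat{d}}$, orthogonal to $\ket{\bot}$). The subtle part is the mixing between $\ket{\bot}$ and the uniform superposition. To control it I would write $\cpho=\stddecomp\,\cpho'\,\stddecomp$ and track each factor. Each step either applies a projector onto $B$ of size $v-1$, giving operator norm $\sqrt{(v-1)/N_0}$ for components with flat amplitudes, or acts isometrically. Three such terms arise: $\stddecomp$ before, the phase, and $\stddecomp$ after. These yield the constant $3$ and the bound $3\sqrt{(v-1)/N_0}\,\norm{\Pi_v\ket{\psi}}$. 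The final bound follows from the triangle inequality over the $Q$ and $R$ pieces.
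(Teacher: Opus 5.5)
\paragraph{Overall.} Your proof follows the paper's. You use the same split $I-P=O+Q+R+S$, the same vanishing of the $O$ and $S$ pieces, and the same $Q$ bound via orthogonality of the outputs $\ket{x,u,w,r}\ket{D'\cup(x,y)}$. Your handling of $U$ is actually tidier than the paper's. The paper claims $U\otimes I_{\dbreg}$ commutes with $Q_{v-1}$ and $R_v$, which is false because those projectors constrain $\qryreg,\ansreg$; this is harmless only because it then bounds by $\Pi_{v-1},\Pi_v$. You use commutation with $P$ and $\Pi_{v'}$ only, which is all that is needed.

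\paragraph{The $R$ piece.} This is the one step that is not yet a proof. You rightly note that inputs differing only in $D(x)$ land on the same outputs. The paper's display for $\norm{P_v\cpho R\ket{\phi}}^2$ in fact sums squared amplitudes over $D(x)$ as if these were orthogonal, although its final bound is true. Your remedy, as phrased, is not an estimate: no factor of $\stddecomp\,\cpho'\,\stddecomp$ applies $\Pi_B$, every factor is unitary, and ``projector onto $B$ composed with a unitary'' only gives norm $1$.

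The missing point is to split off the diagonal phase, which cannot move mass into $B$. Fix a fiber $(x,u\neq 0,w,r,D_{-x})$. Let $\ket{\chi_s}:=N_0^{-1/2}\sum_{y\in[N_0]}(-1)^{\langle s,y\rangle}\ket{y}$, let $Z_u\ket{y}:=(-1)^{\langle u,y\rangle}\ket{y}$, and let $\Pi_B$ project onto $\mathrm{span}\{\ket{y}:y\in B\}$. One checks entrywise against \Cref{each-query-in-compress-oracle} that for every $\ket{\beta}\in\mathrm{span}\{\ket{y}:y\in[N_0]\}$,
\[
\cpho\ket{\beta}=Z_u\ket{\beta}+\braket{\chi_0}{\beta}\bigl(\ket{\chi_0}-\ket{\chi_u}\bigr)+\braket{\chi_u}{\beta}\bigl(\ket{\bot}-\ket{\chi_0}\bigr).
\]
Suppose $\ket{\beta}$ is supported off $B$, as it is on $R_v$. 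Then $\Pi_B$ kills both $Z_u\ket{\beta}$ and $\ket{\bot}$. What remains are three rank-one terms, each of norm at most $\norm{\beta}\,\norm{\Pi_B\ket{\chi_s}}=\norm{\beta}\sqrt{\abs{B}/N_0}\le\norm{\beta}\sqrt{(v-1)/N_0}$. That is your constant $3$. Orthogonality across fibers then gives the bound $3\sqrt{(v-1)/N_0}\,\norm{R_v\ket{\phi}}$.

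There is an even shorter route. Your per-input amplitude bound $3\sqrt{v-1}/N_0$ is exactly a column norm of the fiber matrix $\Pi_B\,\cpho\,\Pi_{[N_0]\setminus B}$. Summing over at most $N_0$ columns bounds its Hilbert--Schmidt norm, hence its operator norm, by $3\sqrt{(v-1)/N_0}$. No Fourier analysis is needed, and this is the Cauchy--Schwarz step the paper's computation implicitly relies on.
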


\begin{proof} \zihan{Put it in appendix?}
Define $\ket{\phi} = (U\otimes I_{\dbreg})\ket{\psi}$. Recall that $U\otimes I_{\dbreg}$ commutes with $I-P$, then
\begin{equation}
\label{eq:phi-psi}
\begin{aligned}
    P_v\cdot\cpho\cdot(U\otimes I_{\dbreg})\cdot(I-P)\ket{\psi} =& P_v\cdot\cpho\cdot(I-P)\ket{\phi}.
\end{aligned}
\end{equation}

We now analyze the resulting state after applying $\cpho$ to each support of $(I-P)\ket{\phi}$, considering $(I-P)\ket{\phi} = O\ket{\phi}+Q\ket{\phi}+R\ket{\phi}+S\ket{\phi}$.

We first consider $\cpho\cdot O\ket{\phi}$. Since $\lab(x)\neq y^*$, adding this point to the database does not change the structure of the database on label-$y^*$ entries. We have
\begin{equation*}
    P_v\cdot\cpho\cdot O\ket{\phi}=0
\end{equation*}
because $P_v\ket{\phi}=0$.\\
We then consider $\cpho\cdot Q\ket{\phi}$, which stands for the case of filling a new entry with label $y^*$ into the database. Since $Q\ket{\phi}=\sum_{b=0}^M Q_b\ket{\phi}$, we analyze them separately. Assume $Q_b\ket{\phi} = \sum_{x,u,w,r,D'}\alpha^{(b)}_{x,u,w,r,D'}\ket{x,u,w,r}\ket{D'}$, where $x,u,w,r$ takes arbitrary value from the register space of $\qryreg,\ansreg,\auxreg,\outreg$, and $D'$ takes the value of all database that $D'(x)=\bot$ and $\abs{D}_{y^*}=b$. Then by \Cref{each-query-in-compress-oracle}, we have
\begin{equation*}
\begin{aligned}
    P_v\cdot\cpho\cdot Q\ket{\phi} =& \sum_{b=0}^M P_v\cdot \cpho\cdot Q_b\ket{\phi} \\
    =& \sum_{b=0}^{M}P_v\cdot\cpho\left( \sum_{\substack{x,u,w,r,D' \\ \abs{D}_{y^*}=b}}\alpha^{(b)}_{x,u,w,r,D'}\ket{x,u,w,r}\ket{D'}\right)\\
    =& \sum_{b=0}^{M} \sum_{\substack{x,u,w,r,D' \\ \abs{D}_{y^*}=b}}\alpha^{(b)}_{x,u,w,r,D'} P\cdot \Pi_v \left(\ket{x,u,w,r}\otimes\sum_{y\in[N_0]}\frac{(-1)^{\langle u,y\rangle}}{\sqrt{N_0}}\ket{D'\cup(x,y)} \right)\\
    =&  \sum_{\substack{x,u,w,r,D' \\ \abs{D}_{y^*}=v-1}} \alpha^{(v-1)}_{x,u,w,r,D'}P \left(\ket{x,u,w,r}\otimes\sum_{y\in[N_0]}\frac{(-1)^{\langle u,y\rangle}}{\sqrt{N_0}}\ket{D'\cup(x,y)} \right)\\
    =&  \sum_{\substack{x,u,w,r,D' \\ \abs{D}_{y^*}=v-1}} \alpha^{(v-1)}_{x,u,w,r,D'}\ket{x,u,w,r}\otimes\sum_{\substack{y\ s.t.\\\exists x',D'(x')=y\text{ and }\lab(x)=y^*}}\frac{(-1)^{\langle u,y\rangle}}{\sqrt{N_0}}\ket{D'\cup(x,y)}.
\end{aligned}
\end{equation*}
The fourth equality holds as we want $\ket{x,u,w,r}\ket{D'\cup(x,y)}$ to be inside the projected subspace of $\Pi_v$, so $|D'\cup(x,y)|=b+1=v$. The fifth equality holds as we want $\ket{x,u,w,r}\ket{D'\cup(x,y)}$ to be inside the projected subspace of $P$, so $y$ needs to collide with one of the $v-1$ existing different value inside $D'$. Then, we have
\begin{equation}
\label{eq:Q-norm}
    \begin{aligned}
        \norm{P_v\cdot\cpho\cdot Q\ket{\phi}}^2 =& \sum_{\substack{x,u,w,r,D' \\ \abs{D}_{y^*}=v-1}} \sum_{\substack{y\ s.t.\\\exists x',D'(x')=y\text{ and }\lab(x)=y^*}} \abs{\alpha^{(v-1)}_{x,u,w,r,D'}\frac{(-1)^{\langle u,y\rangle}}{\sqrt{N_0}}}^2\\
        \leq&\frac{1}{\sqrt{N_0}} \sum_{\substack{x,u,w,r,D' \\ \abs{D}_{y^*}=v-1}} \abs{D}_{y^*}\cdot \abs{\alpha^{(v-1)}_{x,u,w,r,D'}}^2\\
        =& \frac{v-1}{N_0}\norm{Q_{v-1}\ket{\phi}}^2.
    \end{aligned}
\end{equation}

The inequality is because the number of $y$ such that there exists $x'$ such that $D'(x')=y$ and $\lab(x)=y^*$ is at most $|D|_{y^*}$. We now consider $\cpho\cdot R\ket{\phi}$, which stands for rerandomizing an existing entry in the database. Similarly, we assume $R_b\ket{\phi} = \sum_{x,u,w,r,D',y}\beta^{(b)}_{x,u,w,r,D',y}\ket{x,u,w,r}\ket{D'\cup(x,y)}$. Here $x,u,w,r$ takes arbitrary value; $D'$ still takes the value of all database that $D'(x)=\bot$ and $\abs{D'}_{y^*}=b-1$, so that $|D'\cup(x,y)|=b$; and $y$ takes all value in $\set{1,2,\cdots,N_0-1}$ such that $D'\cup(x,y)$ has no label-$y^*$ collision, therefore the state lies in support of $R_b$. Then by \Cref{each-query-in-compress-oracle}, we have
\begin{equation*}
\begin{aligned}
    &P_v\cdot\cpho\cdot R\ket{\phi}\\
    =& \sum_{b=0}^M P_v\cdot \cpho\cdot R_b\ket{\phi} \\
    =& \sum_{b=0}^{M}P_v\cdot\cpho\Bigg( \sum_{\substack{x,u,w,r,D',y \\\abs{D}_{y^*}=b-1,y\neq0}}\beta^{(b)}_{x,u,w,r,D',y}\ket{x,u,w,r}\ket{D'\cup(x,y)}\Bigg)\\ 
    =& \sum_{b=0}^{M}\sum_{\substack{x,u,w,r,D',y \\\abs{D}_{y^*}=b-1,y\neq 0}} \beta^{(b)}_{x,u,w,r,D',y} P\cdot \Pi_v\Bigg[ \ket{x,u,w,r}\otimes \Bigg(\frac{(-1)^{\langle u,y\rangle}}{\sqrt{N_0}}\ket{D'}+\\ &\frac{1+(-1)^{\langle u,y\rangle}(N_0-2)}{N_0}\ket{D'\cup(x,y)}+ \sum_{y'\neq y}\frac{1-(-1)^{\langle u,y'\rangle}-(-1)^{\langle u,y\rangle}}{N_0}\ket{D'\cup(x,y')}\Bigg) \Bigg]\\
    =& \sum_{\substack{x,u,w,r,D',y \\\abs{D}_{y^*}=v,y\neq 0}} \beta^{(v+1)}_{x,u,w,r,D',y} P\Bigg( \ket{x,u,w,r}\otimes \frac{(-1)^{\langle u,y\rangle}}{\sqrt{N_0}}\ket{D'} \Bigg) + \sum_{\substack{x,u,w,r,D',y \\\abs{D}_{y^*}=v-1,y\neq 0}} \beta^{(v)}_{x,u,w,r,D',y} P\\
     &\cdot \Bigg[ \ket{x,u,w,r}\otimes \Bigg(\frac{1+(-1)^{\langle u,y\rangle}(N_0-2)}{N_0}\ket{D'\cup(x,y)}+ \sum_{y'\neq y}\frac{1-(-1)^{\langle u,y'\rangle}-(-1)^{\langle u,y\rangle}}{N_0}\ket{D'\cup(x,y')}\Bigg) \Bigg]\\
    =& \sum_{\substack{x,u,w,r,D',y \\\abs{D}_{y^*}=v-1,y\neq 0}} \sum_{\substack{y'\ s.t.\\ \exists x',D'(x')=y'\text{ and }\lab(x)=y^*}} \beta^{(v)}_{x,u,w,r,D',y}\frac{1-(-1)^{\langle u,y'\rangle}-(-1)^{\langle u,y\rangle}}{N_0} \ket{x,u,w,r}\ket{D'\cup(x,y')}.
\end{aligned}
\end{equation*}
Similarly, the fourth equality holds as we want the database corresponds to each state to have $v$ non-$\bot$ entries. The fifth equality holds as given that there is no collisions in $D\cup(x,y)$, the only term that could fall in the projected subspace of $P$ is $\ket{x,u,w,r}\ket{D'\cup(x,y')}$, conditioning on that $y'$ needs to collide with one of the $v-1$ existing different value inside $D'$. Then, we have
\begin{equation}
\label{eq:R-norm}
    \begin{aligned}
        \norm{P_v\cdot\cpho\cdot R\ket{\phi}}^2 =& \sum_{\substack{x,u,w,r,D',y \\\abs{D}_{y^*}=v-1,y\neq 0}} \sum_{\substack{y'\ s.t.\\ \exists x',D'(x')=y'\text{ and }\lab(x)=y^*}} \abs{\beta^{(v)}_{x,u,w,r,D',y}\frac{1-(-1)^{\langle u,y'\rangle}-(-1)^{\langle u,y\rangle}}{N_0}}^2\\
        \leq& \frac{1}{N_0}\sum_{\substack{x,u,w,r,D',y \\\abs{D}_{y^*}=v-1,y\neq 0}} \abs{D}_{y^*}\cdot \abs{\beta^{(v)}_{x,u,w,r,D',y}}\cdot \abs{1-(-1)^{\langle u,y'\rangle}-(-1)^{\langle u,y\rangle}}^2\\
        \leq & \frac{9(v-1)}{N_0}\sum_{\substack{x,u,w,r,D',y \\\abs{D}_{y^*}=v-1,y\neq 0}}\abs{\beta^{(v)}_{x,u,w,r,D',y}}^2\\
        =& \frac{9(v-1)}{N_0} \norm{R_{v}\ket{\phi}}^2.
    \end{aligned}
\end{equation}

We finally consider $\cpho\cdot S\ket{\phi}$, in which case the oracle query does not change the database. $\cpho\ket{x,0,w,r}\ket{D} = (-1)^{\langle 0, D(x)\rangle}\ket{x,0,w,r}\ket{D} = \ket{x,0,w,r}\ket{D}$ for all $x,w,r,D$ , so that 
\begin{equation}
\label{eq:S-norm}
    \cpho\cdot S\ket{\phi} = S\ket{\phi},\ \norm{P_v\cdot\cpho\cdot S\ket{\phi}} = 0.
\end{equation}

Combining \Cref{eq:phi-psi,eq:Q-norm,eq:R-norm,eq:S-norm}, we have
\begin{equation}
\label{eq:phi-progress}
    \begin{aligned}
        &\norm{P_v\cdot\cpho\cdot(U\otimes I_{\dbreg})\cdot(I-P)\ket{\psi}}\\
        =& \norm{P_v\cdot\cpho\cdot(I-P)\ket{\phi}}\\
        \leq& \norm{P_v\cdot\cpho\cdot O \ket{\phi}} + \norm{P_v\cdot\cpho\cdot Q \ket{\phi}}+\norm{P_v\cdot\cpho\cdot R \ket{\phi}} + \norm{P_v\cdot\cpho\cdot S \ket{\phi}}\\
        \leq & \sqrt{\frac{v-1}{N_0}}\norm{Q_{v-1}\ket{\phi}} + 3\sqrt{\frac{v-1}{N_0}} \norm{R_{v}\ket{\phi}}.
    \end{aligned}
\end{equation}
Substituting $\ket{\phi} = (U\otimes I_{\dbreg})\ket{\psi}$, since $(U\otimes I_{\dbreg})$ commutes with $Q_{v-1},R_v$, we have $\norm{Q_{v-1}\ket{\phi}} = \norm{(U\otimes I_{\dbreg})Q_{v-1}\ket{\psi}} = \norm{Q_{v-1}\ket{\psi}}$. Similarly, $\norm{R_v\ket{\phi}} = \norm{R_v\ket{\psi}}$. Finally, substituting these into \Cref{eq:phi-progress}, we have
\begin{equation*}
    \begin{aligned}
        \norm{P_v\cdot\cpho\cdot(U\otimes I_{\dbreg})\cdot(I-P)\ket{\psi}} \leq& \sqrt{\frac{v-1}{N_0}}\norm{Q_{v-1}\ket{\psi}} + 3\sqrt{\frac{v-1}{N_0}} \norm{R_{v}\ket{\psi}}\\
        \leq& \sqrt{\frac{v-1}{N_0}}\norm{\Pi_{v-1}\ket{\psi}} + 3\sqrt{\frac{v-1}{N_0}} \norm{\Pi_{v}\ket{\psi}}.
    \end{aligned}
\end{equation*}

\end{proof}

We now proceed to the main helper lemma of this section.


\begin{lemma}[Helper lemma on bounded capacity collision finding]
\label{remember-tuples} 
For any fixed label function $\lab:[M]\to [N]$, let $\As$ be a quantum algorithm making $T$ queries to a compressed phase oracle, its goal is to find a pair of label-$y^*$ collision. Define $V$ according to \Cref{def:non-bot size}. Then its success probability is at most $O(T^2 V / N_0)$, where $N_0$ is the range size. Also, the constant in $O$ does not depend on $\lab$.
\end{lemma}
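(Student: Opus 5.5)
We plan to track the weight of the final state on the ``collision'' subspace $P$ and bound its growth query by query, using \Cref{lem:progress-measure} as the per-query increment bound. Write $\ket{\Phi_i}$ for the joint state after $i$ queries as in the paper. Since local unitaries commute with $P$, and $\cpho$ acting on $P\ket{\psi}$ can only leave $P$ through the $R$-type rerandomization (which we simply do not need to control for an upper bound), we can use the triangle inequality:
\begin{equation*}
\norm{P\ket{\Phi_{i+1}}}\le \norm{P\ket{\Phi_i}} + \norm{P\cdot\cpho\cdot(U_{i+1}\otimes I)(I-P)\ket{\Phi_i}}.
\end{equation*}
This follows by splitting $\ket{\Phi_i}=P\ket{\Phi_i}+(I-P)\ket{\Phi_i}$ and noting that $\norm{P\,\cpho\,U P\ket{\Phi_i}}\le\norm{P\ket{\Phi_i}}$. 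Telescoping gives
\begin{equation*}
\norm{P\ket{\Phi_T}}\le\sum_{i}\norm{P\,\cpho\,U(I-P)\ket{\Phi_i}}.
\end{equation*}

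For each increment we decompose $P=\sum_v P_v$ into orthogonal pieces. Then $\norm{P\,\cpho\,U(I-P)\ket{\Phi_i}}^2=\sum_v\norm{P_v\,\cpho\,U(I-P)\ket{\Phi_i}}^2$. By \Cref{lem:progress-measure} each summand is at most
\begin{equation*}
\frac{v-1}{N_0}\Big(3\norm{\Pi_v\ket{\Phi_i}}+\norm{\Pi_{v-1}\ket{\Phi_i}}\Big)^2\le \frac{2(v-1)}{N_0}\Big(9p_{i,v}+p_{i,v-1}\Big).
\end{equation*}
Summing over $v$ gives $O\big((V_i+1)/N_0\big)$; the $v-1$ weight on $p_{i,v-1}$ yields $v'$, which is already fine. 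So the $i$-th increment is $O(\sqrt{V_i/N_0})$. We need care here: the relevant $V_i$ is the one for the state right before query $i+1$, and the local unitary $U$ does not change the $\Pi_v$ distribution, so the indices match \Cref{def:non-bot size} up to a shift by one. The extra ``$+1$'' term can be dropped, since the $v=1$ term vanishes anyway because of the factor $v-1$.

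Summing over queries and applying Cauchy--Schwarz gives
\begin{equation*}
\norm{P\ket{\Phi_T}}\le O\Big(\sum_i\sqrt{V_i/N_0}\Big)\le O\Big(\sqrt{T\sum_iV_i/N_0}\Big)=O\big(\sqrt{T^2V/N_0}\big),
\end{equation*}
so $\norm{P\ket{\Phi_T}}^2=O(T^2V/N_0)$. Finally, we convert this database-side bound into the actual success probability using \Cref{lem:win-prob-db-algo} with $k=2$:
\begin{equation*}
\sqrt{p}\le\sqrt{p'}+\sqrt{2/N_0},
\end{equation*}
where $p'\le\norm{P\ket{\Phi_T}}^2$ because the winning projector is contained in $P$. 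Hence $p=O(T^2V/N_0+1/N_0)$. The additive $1/N_0$ term can be absorbed when $V\ge 1/T^2$; otherwise we would state the bound with this term, or note that the success probability with $V=0$ is trivially $O(1/N_0)$. No step depends on $\lab$ beyond what \Cref{lem:progress-measure} already handles uniformly, so the constant is label-independent.

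The main obstacle is the first step: justifying that weight already inside $P$ contributes at most its own norm, with no cross terms that break the telescoping, and matching the indices of $V_i$ across queries correctly. Once the increment is expressed through the $p_{i,v}$, the rest is the Cauchy--Schwarz bookkeeping above.
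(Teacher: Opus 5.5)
Your proposal is correct and is essentially the paper's proof. Your triangle-inequality telescoping of $\norm{P\ket{\Phi_i}}$ is the norm-level form of the paper's decomposition of $P\ket{\Phi_T}$ by the last query at which the state lies in $I-P$. After that, both arguments bound each increment by \Cref{lem:progress-measure} summed over the orthogonal $P_v$, apply Cauchy--Schwarz over the $T$ queries to get $O(T\sum_i V_i/N_0)=O(T^2V/N_0)$, and finish with \Cref{lem:win-prob-db-algo} for $k=2$.

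Your caveat about the additive $O(1/N_0)$ term is well taken: the paper absorbs it silently, which strictly requires $T^2V=\Omega(1)$. It is harmless in the later application, where the success probability is $\Omega(1)$.
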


The lemma basically shows that if we want a quantum algorithm to find collisions efficiently, the average non-$\bot$ entry number has to be sufficiently large. If an algorithm achieves the optimal $T=O\left(N_0^{1/3}\right)$ running time, then during its interaction with the compressed oracle, the database has an average of $O(T)$ non-$\bot$ entry. In other words, it is impossible to find an algorithm that can achieve optimal time efficiency while only touching few entries in the database.

\begin{proof}
    Define $\Pi_{\sf{col}}$ as a projector onto all basis states $\ket{x}_{\qryreg}\ket{u}_{\ansreg}\ket{w}_{\auxreg}\ket{r}_{\outreg}\otimes\ket{D}_{\dbreg}$ such that $r=(r_1,r_2)$ satisfies $D(r_1)=D(r_2)\neq \bot,\lab(r_1)=\lab(r_2)=y^*, r_1\neq r_2$. The projector describes the case when the algorithm $\As$ outputs a valid pair of collisions that could be verified by checking the database, and the probability of such case equals to $\norm{\Pi_{\sf{col}}\ket{\Phi_T}}^2$, while recall that $\ket{\Phi_T}$ is the final state after $T$-th query.
    By \Cref{lem:win-prob-db-algo}, we have 
    \begin{equation}
    \label{eq:win-prob}
        \sqrt{\Pr[\As\ wins]} \leq \norm{\Pi_{\sf{col}}\ket{\Phi_T}} + \sqrt{2/N_0}.
    \end{equation}

    Notice that the subspace that $\Pi_{\sf{col}}$ projects onto strictly contains the subspace that $P$ projects onto, as $P$ only requires the existence of a collision inside $D$, while $\Pi_{\sf{col}}$ also requires the output $r$ to be that collision. In other words, we have $\Pi_{\sf{col}}P = P\Pi_{\sf{col}} = \Pi_{\sf{col}}$. Since projectors do not increase the norm of a vector, we have
    \begin{equation}
    \label{eq:db-prob}
        \norm{\Pi_{\sf{col}}\ket{\Phi_T}} = \norm{\Pi_{\sf{col}}P\ket{\Phi_T}} \leq \norm{P\ket{\Phi_T}}.
    \end{equation}
    
    We then focus on the projected state $P\ket{\Phi_T}$,
    \begin{align*}
        &P\ket{\Phi_T}\\
        =& P\cdot\cpho\cdot (U_{T-1}\otimes I_{\dbreg})\cdot\cpho\cdots \cpho\cdot (U_{0}\otimes I_{\dbreg}) \ket{\Phi_0}\\
        =& P\cdot\cpho\cdot (U_{T-1}\otimes I_{\dbreg})\cdot(P+I-P)\cdot\cpho\cdots (P+I-P)\cdot\cpho\cdot (U_{0}\otimes I_{\dbreg})\cdot(P+I-P) \ket{\Phi_0}\\
        =& \sum_{i=1}^T P\cdot\cpho\cdot (U_{T-1}\otimes I_{\dbreg})\cdot P\cdot\cpho\cdot (U_{T-2}\otimes I_{\dbreg})\cdots P\cdot\cpho\cdot (U_{i-1}\otimes I_{\dbreg})\cdot(I-P)\\
        &\cdot\cpho\cdot (U_{i-2}\otimes I_{\dbreg})\cdot\cpho\cdots \cpho\cdot (U_{0}\otimes I_{\dbreg})\ket{\Phi_0}.
    \end{align*}
    The summation in the last equation is over $i$, the last oracle query that the state is in $I-P$ after the query. Strictly speaking, there should also be a term $P\ket{\Phi_{0}}$ added to the right hand side, but as $\ket{\Phi_{0}}$ stands for an empty database, we have $\ket{\Phi_{0}}=0$, so we omit this term. For $i\in [T]$, we define $\ket{\phi_i}$ to be the state projected onto the case that: (1) before the $i$-th query there are no collisions in the database; (2) the $i$-th query forms a collision. Specifically,
    \begin{equation*}
        \ket{\phi_i} = P\cdot\cpho\cdot (U_{i-1}\otimes I_{\dbreg})\cdot(I-P)\cdot\cpho\cdot (U_{i-2}\otimes I_{\dbreg})\cdot\cpho\cdots \cpho\cdot (U_{0}\otimes I_{\dbreg}) \ket{\Phi_0}.
    \end{equation*}

    Then, we have
    \begin{equation*}
        P\ket{\Phi_T} = \sum_{i=1}^T P\cdot\cpho\cdot (U_{T-1}\otimes I_{\dbreg})\cdots P\cdot\cpho\cdot (U_{i}\otimes I_{\dbreg})\ket{\phi_{i}}.
    \end{equation*}

    Take the norm of the state, and apply the Cauchy-Schwartz inequality, we have
    \begin{align*}
        \norm{P\ket{\Phi_T}}^2 \leq& T\sum_{i=1}^T \norm{ P\cdot\cpho\cdot (U_{T-1}\otimes I_{\dbreg})\cdots P\cdot\cpho\cdot (U_{i}\otimes I_{\dbreg})\ket{\phi_{i}}}^2\\
        \leq& T\sum_{i=1}^T \norm{\ket{\phi_{i}}}^2\\
        =& T\sum_{i=1}^T \norm{\sum_{v=0}^{M}\Pi_v\ket{\phi_{i}}}^2.
    \end{align*}

    For $i\in [T]$, $v\in [0,M]$ we define $\ket{\phi_{i,v}}=\Pi_v\ket{{\phi_i}}$ to represent the case when there are $v$ non-$\bot$ entries in the database, by the orthogonality of $\Pi_v$, we have
    \begin{equation}
    \label{eq:sum-phi-iv}
        \norm{P\ket{\Phi_T}}^2 \leq T\sum_{i=1}^T \norm{\sum_{v=0}^{M}\ket{\phi_{i,v}}}^2 = T\sum_{i=1}^T\sum_{v=0}^{M} \norm{\ket{\phi_{i,v}}}^2.
    \end{equation}

    Now we focus on how to bound $\norm{\ket{\phi_{i,v}}}$. For arbitrary $i\in [T]$, $v\in [0,M]$, we have 
    \begin{align*}
        \ket{\phi_{i,v}} =& P_v\cdot\cpho\cdot (U_{i-1}\otimes I_{\dbreg})\cdot(I-P)\cdot\cpho\cdot (U_{i-2}\otimes I_{\dbreg})\cdot\cpho\cdots \cpho\cdot (U_{0}\otimes I_{\dbreg}) \ket{\Phi_0}\\
        =& P_v\cdot\cpho\cdot (U_{i-1}\otimes I_{\dbreg})\cdot(I-P) \ket{\Phi_{i-1}}.
    \end{align*}

    Also notice that $P_v=0$ when $v=0$, as a database without non-$\bot$ entry cannot contain collisions. Then, $\ket{\phi_{i,0}}=0$. For other terms when $v>0$, by applying \Cref{lem:progress-measure}, we have 
    \begin{align*}
        \norm{\ket{\phi_{i,v}}} =& \norm{P_v\cdot\cpho\cdot (U_{i-1}\otimes I_{\dbreg})\cdot(I-P) \ket{\Phi_{i-1}}}\\
        \leq& 3\sqrt{\frac{v-1}{N_0}}\norm{\Pi_v\ket{\Phi_{i-1}}} + \sqrt{\frac{v-1}{N_0}}\norm{\Pi_{v-1}\ket{\Phi_{i-1}}}\\
        =& 3\sqrt{\frac{v-1}{N_0}}\sqrt{p_{i-1,v}} + \sqrt{\frac{v-1}{N_0}}\sqrt{p_{i-1,v-1}},
    \end{align*}
    where $p_{i,v}$ is defined in \Cref{def:non-bot size}. Applying another Cauchy-Schwartz inequality, we have
    \begin{equation*}
        \norm{\ket{\phi_{i,v}}}^2 \leq 2\left(\frac{9(v-1)}{N_0}p_{i-1,v} + \frac{v-1}{N_0} p_{i-1,v-1} \right).
    \end{equation*}
    Substituting that into \Cref{eq:sum-phi-iv}, we have
    \begin{equation*}
        \begin{aligned}
            \norm{P\ket{\Phi_T}}^2 \leq& T\sum_{i=1}^T\sum_{v=0}^{M} \norm{\ket{\phi_{i,v}}}^2 = T\sum_{i=1}^T\sum_{v=1}^{M} \norm{\ket{\phi_{i,v}}}^2\\
            \leq& T\sum_{i=1}^T\sum_{v=1}^{M}\frac{18(v-1)}{N_0}p_{i-1,v} + T\sum_{i=1}^T\sum_{v=1}^{M}\frac{2(v-1)}{N_0}p_{i-1,v-1}\\
            \leq& \frac{18T}{N_0}\sum_{i=1}^T\sum_{v=1}^{M}v\cdot p_{i-1,v} + \frac{2T}{N_0}\sum_{i=1}^T\sum_{v=0}^{M-1} v\cdot p_{i-1,v}\\
            \leq& \frac{20T}{N_0}\sum_{i=1}^T\sum_{v=0}^{M}v\cdot p_{i-1,v}\\
            =& \frac{20T}{N_0} (TV - V_T + V_0).
        \end{aligned}
    \end{equation*}
    Notice that according to \Cref{rem:non-bot-entry}, $V_i = \sum_{v=0}^{i} v\cdot p_{i,v}$, then $V_0 = 0$. Therefore, we have
    \begin{equation}
    \label{eq:T2V}
        \norm{P\ket{\Phi_T}}^2 \leq \frac{20T}{N_0} (TV - V_T) \leq 20\frac{T^2V}{N_0},
    \end{equation}
    as $V_T$ being a sum of probabilities has to be non-negative.
    Combining \Cref{eq:T2V,eq:win-prob,eq:db-prob}, we have
    \begin{equation*}
        \Pr[\As\ wins] \leq \left(\norm{P\ket{\Phi_T}} + \sqrt{\frac{2}{N_0}} \right)^2 \leq \left(\sqrt{\frac{20T^2V}{N_0}}+\sqrt{\frac{2}{N_0}} \right)^2 = O\left(\frac{T^2V}{N_0}\right).
    \end{equation*}

\end{proof}

\section{\texorpdfstring{$\ell$-Nested collision finding and upper bounds}{l-Nested collision finding and upper bounds}}
\label{sec:nested_col}
\begin{definition}[$\ell$-Nested Collision Finding]
\label{def:nested_col}
Let $N$ be the size of the range, let $N_0$ and $M$ be polynomials of $N$. For any constant integer $\ell>0$ and any target distribution $\cal{D}$, the following problem is called a $\ell$-Nested Collision Finding Problem: 
\begin{itemize}
    \item Input: two random oracles $H: [M] \to [N], G: [M^\ell] \to [N_0]$ and the target $y\gets\cal{D}$.
    \item The goal is to find two different $\ell$-tuples $(x_1,x_2,\cdots,x_\ell),(x'_1,x'_2,\cdots,x'_\ell)$ such that 
    \begin{itemize}
        \item $0\leq x_1<x_2<\cdots<x_\ell<M$, and $0\leq x'_1<x'_2<\cdots<x'_\ell<M$. We will call such tuples valid tuples.
        \item $\sum_{i=1}^{\ell} H(x_i)\equiv\sum_{i=1}^{\ell} H(x'_i)\equiv y\mod N$.
        \item $G(x_1, x_2, \cdots, x_\ell)=G(x_1, x_2, \cdots, x_\ell)$.
    \end{itemize}
\end{itemize}
\end{definition}
\begin{remark}
    Two special cases of the problem would be:
    \begin{itemize}
        \item $\cal{D}$ is the uniform distribution. That is, the inputs are two random functions $H,G$ and a random target $y$.
        \item $\cal{D}$ is the distribution with unique support $y$. For example, when $y=0$ the problem becomes:
        \begin{itemize}
            \item Input: two random oracles $H: [M] \to [N], G: [M^\ell] \to [N_0]$.
            \item The goal is to find two different valid $\ell$-tuples $(x_1,x_2,\cdots,x_\ell),(x'_1,x'_2,\cdots,x'_\ell)$ with sum $0$ on $H$ such that $G(x_1, x_2, \cdots, x_\ell)=G(x_1, x_2, \cdots, x_\ell)$.
        \end{itemize}
    \end{itemize}
\end{remark}
Now we present a classical algorithm and a quantum algorithm that solves this problem. In later sections, we show that without sufficient classical/quantum memory, no algorithm can perform as good as them, in terms of query complexity. Before that, we need the following lemma:
\begin{lemma}\label{lem:tuple_of_sum_y_even}
    For any $y$ and a random function $H:[M]\rightarrow [N]$. If we query $T$ points, the probability that we get $\frac{\binom{T}{\ell}}{2N}$ $\ell$-tuples (formed by queried points) of sum $y$ is constant.
\end{lemma}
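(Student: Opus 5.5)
Let $X$ be the number of valid $\ell$-tuples (increasing $x_1<\dots<x_\ell$) among the $T$ queried points whose $H$-sum is $y$ mod $N$. The plan is a second-moment argument using \Cref{lem:Chebyshev}: compute $\E[X]$ and $\mathrm{Var}[X]$, then show $\Pr[X\ge \binom{T}{\ell}/(2N)]$ is bounded below by a constant. Throughout I assume the queried points are distinct (repeated queries only help) and that $T$ is large enough that $\binom{T}{\ell}/N$ exceeds a sufficiently large constant. Otherwise the target $\binom{T}{\ell}/(2N)$ is below one, and the statement should be read in that regime or the constant adjusted.

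Write $X=\sum_{A}I_A$, summing over the $\binom{T}{\ell}$ $\ell$-subsets $A$ of the queried points, where $I_A$ indicates $\sum_{x\in A}H(x)\equiv y$. Since the values $H(x)$ are i.i.d.\ uniform on $[N]$ and $A$ has at least one element, the sum is uniform mod $N$. Hence $\Pr[I_A=1]=1/N$ and $\mu:=\E[X]=\binom{T}{\ell}/N$.

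For the variance, the key observation is pairwise independence. For $A\neq B$ there is some $x\in A\setminus B$, and conditioned on all other values, $H(x)$ makes $\sum_A$ uniform and independent of $\sum_B$. So $\Pr[I_A=I_B=1]=1/N^2$ and the covariance vanishes. This gives $\mathrm{Var}[X]=\sum_A \mathrm{Var}[I_A]\le \mu$.

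Chebyshev then gives $\Pr[X<\mu/2]\le \Pr[|X-\mu|\ge \mu/2]\le 4/\mu$, which is at most $1/2$ once $\mu\ge 8$. Thus $X\ge \binom{T}{\ell}/(2N)$ with probability at least $1/2$.

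The only real obstacle is the pairwise-independence step when $A$ and $B$ overlap heavily. It is handled by fixing the values outside one distinguishing element, as above. If the paper needs the bound for small $\mu$ too, I would instead state the lemma in the regime $\binom{T}{\ell}\ge 8N$, which is the regime used by the algorithms.
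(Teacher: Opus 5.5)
Your proposal is correct and follows the paper's proof essentially step for step: indicators over the $\binom{T}{\ell}$ subsets, pairwise independence giving $\mathrm{Var}[X]=\binom{T}{\ell}\frac{N-1}{N^2}$, and Chebyshev at deviation $\mu/2$. Your explicit regime $\binom{T}{\ell}\ge 8N$ is also the right one. The Chebyshev ratio is $4(N-1)/\binom{T}{\ell}$, so the paper's final claim that it is $<1/N$ does not follow in general; it is only a constant, which is all the lemma needs.
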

\begin{proof}
    Let $S$ be the set of queried points. For $|I|=\ell$ being a subset of $S$ define $K_I$ be event that the sum of the $\ell$-tuple on $H$ formed by points in $I$ is $y$. We have $\mathbb{E}[K_I]=\frac{1}{N}$ and $\text{Var}(K_I)=\frac{N-1}{N^2}$. Notice that $\set{K_I}_{I}$ is pairwise independent. Thus
    \begin{align*}
        \mathbb{E}[K]:=&\mathbb{E}\brackets{\sum_{I\subseteq S,|I|=\ell}K_I}=\sum_{I\subseteq S,|I|=\ell}\mathbb{E}(K_I)=\binom{T}{\ell}\frac{1}{N}\\
        \text{Var}(K):=&\text{Var}\brackets{\sum_{I\subseteq S,|I|=\ell}K_I}=\sum_{I\subseteq S,|I|=\ell}\text{Var}(K_I)=\binom{T}{\ell}\frac{N-1}{N^2}.
    \end{align*}
    Define $K=\sum_{I\subseteq S,|I|=\ell}K_I$ be the total number of $\ell$-tuples of sum $y$, by the Chebyshev bound \Cref{lem:Chebyshev}, 
    \begin{equation*}
        \Pr\left[K\leq \frac{1}{2}\cdot\frac{\binom{T}{\ell}}{N}\right]\leq\frac{\text{Var}(K)}{\frac{1}{4}\cdot\brackets{\mathbb{E}[K]}^2}<\frac{1}{N}.
    \end{equation*}
\end{proof}
\begin{theorem}\label{thm:classical-algorithm}
    For any $\ell\geq 1$, there exists a classical algorithm that uses $\Theta\brackets{N^\frac{1}{\ell}N_0^\frac{1}{2\ell}}$ queries that solve the $\ell$-Nested Collision Finding Problem with constant probability if $N_0=O\brackets{N^\frac{2}{\ell-1}}$.
\end{theorem}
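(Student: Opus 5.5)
The algorithm has two phases. First it queries $H$ on a set of $T$ distinct points to collect many valid $\ell$-tuples with sum $y$. Then it evaluates $G$ on all of those tuples and looks for a collision by the birthday paradox.

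Concretely, let $S\subseteq[M]$ be any fixed set of $T$ distinct points, with $T$ to be chosen, and query $H$ on all of $S$. For each $\ell$-subset $I\subseteq S$, sorted increasingly, the tuple is valid. The algorithm computes $\sum_{x\in I}H(x)$ locally with no further queries and keeps those subsets whose sum is $y$. By \Cref{lem:tuple_of_sum_y_even}, with constant probability (indeed $1-1/N$, conditioned on the target $y$) there are at least $K:=\binom{T}{\ell}/(2N)$ such tuples.

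Second phase. The algorithm queries $G$ on $K':=\min\brackets{K,\,c\sqrt{N_0}}$ of these tuples, for a suitable constant $c$. The $G$-values on distinct tuples are independent and uniform over $[N_0]$, and they are independent of $H$. So the birthday bound gives a collision among them with constant probability once $K'\ge c\sqrt{N_0}$. Any such collision is a nested collision pair.

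Parameter choice. We need $\binom{T}{\ell}/(2N)\ge c\sqrt{N_0}$, which holds for $T=\Theta\brackets{N^{1/\ell}N_0^{1/(2\ell)}}$ with a large enough constant. The total query count is then $T+c\sqrt{N_0}$. This is $\Theta(T)$ exactly when $\sqrt{N_0}=O\brackets{N^{1/\ell}N_0^{1/(2\ell)}}$, i.e. $N_0^{(\ell-1)/(2\ell)}=O(N^{1/\ell})$, i.e. $N_0=O\brackets{N^{2/(\ell-1)}}$, which is the hypothesis. (For $\ell=1$ the condition is vacuous and $T=\Theta(N\sqrt{N_0})$ works directly.) We also need $M\ge T$ so that distinct points exist; this holds since $M$ is a large enough polynomial, and if necessary it should be stated as an implicit assumption. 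The lower bound $\Omega(\cdot)$ implicit in ``$\Theta$'' is only about the algorithm's own count, so there is nothing more to prove there.

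The only delicate step is conditioning. The second phase must use tuples selected based on $H$, and $G$ must remain uniform on them. Because $G$ is independent of $H$ and the tuples are distinct, this is immediate. The two constant success probabilities then multiply to a constant, which completes the argument.
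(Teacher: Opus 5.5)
Your proposal is correct and is essentially the paper's proof: you collect $\Theta(\sqrt{N_0})$ valid sum-$y$ tuples from $T=\Theta(N^{1/\ell}N_0^{1/(2\ell)})$ queries to $H$ via \Cref{lem:tuple_of_sum_y_even}, birthday-search $G$ on them, and use $N_0=O(N^{2/(\ell-1)})$ to make the first phase dominate; capping the $G$-queries at $c\sqrt{N_0}$ is a good addition because it makes the query count deterministic. The only slip is the parenthetical ``$1-1/N$'': Chebyshev gives failure probability $4(N-1)/\binom{T}{\ell}\le 2/(c\sqrt{N_0})$ for your $T$, not $1/N$, though this still suffices for constant success probability.
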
 
\begin{proof}
    The algorithm is as follows:
    \begin{enumerate}
        \item In the first step, the algorithm produces $K=\Theta\brackets{N_0^\frac{1}{2}}$ $\ell$-tuples with the sum $y$ by random querying $T_1=\Theta\brackets{K^{\frac{1}{\ell}}N^{\frac{1}{\ell}}}$ points on $H$. This is done by setting an appropriate constant and applying \Cref{lem:tuple_of_sum_y_even}.
        \item Query all the $G$ values of these $\ell$ -tuples, this step costs $T_2=K$ queries. Output a collision if there is one.
    \end{enumerate}
    The success probability of this algorithm is constant since $K=\Theta\brackets{N_0^\frac{1}{2}}$ points of $G$ are sufficient for finding a collision with constant probability. When $N_0=O\brackets{N^\frac{2}{\ell-1}}$, the query complexity of the first step is larger than the one in the second step, thus the overall query complexity when $N_0=O\brackets{N^\frac{2}{\ell-1}}$ is $\Theta\brackets{N^\frac{1}{\ell}N_0^\frac{1}{2\ell}}$.
\end{proof}
\begin{theorem}\label{thm:quantum-algorithm}
    For any $\ell\geq 2$, there exists a quantum algorithm that uses $\Theta\brackets{N^{\frac{2}{2\ell+1}}N_0^{\frac{1}{2\ell+1}}}$ queries that solves the $\ell$-Nested Collision Finding problem with constant probability if $N_0=\Omega\brackets{N^{\frac{1}{\ell}}}$ and $N_0=O\brackets{N^{\frac{3}{\ell-1}}}$.
\end{theorem}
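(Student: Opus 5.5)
We mirror the two-stage classical algorithm of \Cref{thm:classical-algorithm}, but make both stages quantum. Stage one builds a list of $\ell$-tuples whose $H$-values sum to $y$. Stage two runs a BHT-style collision search on $G$ restricted to those tuples. Concretely, we fix parameters $K$ and $T_1$. We query $T_1$ points of $H$ classically and store them. By \Cref{lem:tuple_of_sum_y_even}, with constant probability this yields $\Theta(T_1^{\ell}/N)$ valid $\ell$-tuples of sum $y$. We cannot afford $T_1^\ell/N \approx N_0^{1/2}$, so instead we use a hybrid. We store only $\ell-1$ coordinates classically and find the last coordinate by Grover search.

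\textbf{Step 1: the classical list.} We query a set $A$ of $m$ points of $H$, which gives $\binom{m}{\ell-1}$ $(\ell-1)$-subsets. For each such subset $I$, completing it requires some $x$ with $H(x)=y-\sum_{i\in I}H(i)$. Grover search (\Cref{lem:grover_search}) over $[M]$ finds such an $x$ with $O(\sqrt{N})$ queries; here we assume $M\ge N$, or otherwise restrict the search domain suitably. Each completed tuple is a valid tuple of sum $y$.

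\textbf{Step 2: BHT on $G$.} We build a classical table of $k$ completed tuples together with their $G$-values, at cost $k\sqrt N$ for Grover plus $k$ for $G$. We then run Grover over the space of tuples to find one whose $G$-value collides with the table. That space consists of a random $(\ell-1)$-subset of $A$, an $x$, and a check that the $H$-sum equals $y$. A random candidate (subset, $x$) is a completed tuple with probability $1/N$, and it collides with the table with probability $k/N_0$. Grover therefore costs $O(\sqrt{NN_0/k})$ queries, where each iteration uses $O(1)$ queries to $H$ and $G$. This is valid provided there are enough candidates, i.e.\ $\binom{m}{\ell-1}M \gtrsim NN_0/k$.

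\textbf{Balancing.} The total cost is $m + k\sqrt N + \sqrt{NN_0/k}$. But this gives $N^{1/2}N_0^{1/3}$, which does not match the target $N^{2/(2\ell+1)}N_0^{1/(2\ell+1)}$. So the better strategy must keep $\ell-1$ coordinates from a stored list as well. In the revised version, the table entries are full tuples found as in \Cref{thm:classical-algorithm}: query $T_1$ points, giving $T_1^\ell/N$ tuples, and take $k$ of them. The Grover stage then searches over $(\ell-1)$-subsets of the stored $T_1$ points together with a free last coordinate $x$. Marked items have density $k/(NN_0)$, so the search costs $\sqrt{NN_0/k}$. The total is $T_1 + k + \sqrt{NN_0/k}$ with $k\le T_1^\ell/N$. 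Setting $k=T_1^\ell/N$ gives the balance $T_1 = \sqrt{N^2N_0/T_1^\ell}$, i.e.\ $T_1^{\ell+2}=N^2N_0$. This is still not the target exponent $2\ell+1$, so we also Grover the \emph{table-building} step. We take the $k$ tuples to each consist of $\ell-1$ stored points plus a Grover-found last point, and we let the search stage use $\ell-1$ stored points plus a Grover-found point, nesting the sum check inside. We then re-optimise to reach $T^{2\ell+1}=N^2N_0$.

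The conditions $N_0=\Omega(N^{1/\ell})$ and $N_0=O(N^{3/(\ell-1)})$ should be exactly what makes the balanced $k$ lie between $1$ and the number of available tuples, and what keeps the Grover domain large enough. The main obstacle is finding the right nesting and balance so that the exponent $2\ell+1$ appears. We would first solve the linear program in the exponents over the choices (number of stored $H$-points, table size, Grover stages), check that its optimum is $(2\log N+\log N_0)/(2\ell+1)$, and then verify the constant success probability via \Cref{lem:tuple_of_sum_y_even} and the standard BHT analysis.
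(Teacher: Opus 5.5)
There is a genuine gap: you never exhibit an algorithm that uses $\Theta(N^{2/(2\ell+1)}N_0^{1/(2\ell+1)})$ queries. The two versions you actually analyze cost $N^{1/2}N_0^{1/3}$ and $(N^2N_0)^{1/(\ell+2)}$, and both are worse than the target for $\ell\ge 2$. The final step (``Grover the table-building step \dots\ nesting the sum check inside \dots\ re-optimise'') is asserted rather than carried out: no marked-item density is computed and no parameters are fixed. The role of the two conditions on $N_0$ is likewise only guessed. Your versions fall short because of the search space in the second stage. You run Grover over pairs (stored $(\ell-1)$-subset, $x$), so a candidate is marked with probability $\frac{1}{N}\cdot\frac{k}{N_0}$ regardless of how many $H$-points you have stored. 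The stored points therefore buy you nothing in that stage.

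The missing idea is to run Grover only over the last coordinate $x\in[M]$, and to check all stored points at once by a classical lookup that makes no queries. The steps are as follows.
\begin{itemize}
\item Query $T_3$ fresh points of $H$, disjoint from the table points so that the found tuple differs from the one it collides with. Keep all $\Theta(T_3^{\ell-1})$ of their $(\ell-1)$-sums.
\item Mark $x$ if and only if $y-H(x)$ equals one of these sums and $G$ of the completed tuple equals one of the $K$ stored $G$-values.
\item Each Grover iteration then costs $O(1)$ queries. Using $T_3^{\ell-1}=O(N)$, each $x$ is marked with probability about $\frac{T_3^{\ell-1}}{N}\cdot\frac{K}{N_0}$, so this stage costs $\sqrt{NN_0/(KT_3^{\ell-1})}$.
\item Build the table classically, exactly as in the classical algorithm: $T_1=\Theta((KN)^{1/\ell})$ random $H$-queries, then $K$ queries to $G$. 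No Grover is needed for this step.
\end{itemize}

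For the parameters, take $K=\Theta(N_0^{\ell/(2\ell+1)}N^{-1/(2\ell+1)})$ and $T_3=\Theta((NN_0/K)^{1/(\ell+1)})$. Then $T_1$, $T_3$ and the Grover cost all equal $N^{2/(2\ell+1)}N_0^{1/(2\ell+1)}$. The requirement $K=\Omega(1)$ is exactly $N_0=\Omega(N^{1/\ell})$. The requirements $K=O(T_1)$ (so the $G$-queries are not the bottleneck) and $T_3^{\ell-1}=O(N)$ are each equivalent to $N_0=O(N^{3/(\ell-1)})$.
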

\begin{proof}
    The algorithm is as follows:
    \begin{enumerate}
        \item In the first step, the algorithm produces $K$ $\ell$-tuples with sum $y$ by randomly querying $T_1=\Theta\brackets{K^{\frac{1}{\ell}}N^{\frac{1}{\ell}}}$ points on $H$. This is done by setting an appropriate constant and applying \Cref{lem:tuple_of_sum_y_even}.
        \item Query all $G$ values of these $\ell$-tuples, this step costs $T_2=K$ queries.
        \item Query $T_3$ individual $H$ values different from the points queried in the first step. This will produce $\Theta\brackets{T_3^{\ell-1}}$ $\ell-1$ tuples. 
        \item Now we run a Grover search to find a $\ell$-tuple with:
        \begin{itemize}
            \item Sum $y$ on $H$.
            \item Its $\ell-1$ sub-tuple (the $\ell-1$-tuple formed by the first $\ell-1$ elements) being one of the $\ell-1$-tuples founded in step 3. This means that we only need to run the Grover search on the last element.
            \item Its $G$ value collides with one of the $\ell$-tuples in step 2.
        \end{itemize}
        The query complexity is $\Theta\brackets{\sqrt{\frac{N_0}{K}}\cdot\sqrt{\frac{N}{T_3^{\ell-1}}}}$ when $K=O(N_0)$ and $T_3^{\ell-1}=O(N)$ by \Cref{lem:grover_search}. This is because we can run a Grover search on the last element of the $\ell$-tuple. For each such element, the probability that there exists a $\ell-1$-tuple in step 3, with this element, adds up to a sum of $y$ on $H$, is $\frac{T_3^{\ell-1}}{N}$. And the probability that this $\ell$-tuple collides with a $\ell$-tuple in step 2 on $G$ is $\frac{K}{N_0}$.
    \end{enumerate}
    Set $T_3=O\brackets{\brackets{\frac{N_0N}{K}}^\frac{1}{\ell+1}}$ and $K=\Theta\brackets{\frac{N_0^{\frac{\ell}{2\ell+1}}}{N^{\frac{1}{2\ell+1}}}}=\Omega(1)$ because $N_0=\Omega\brackets{N^{\frac{1}{\ell}}}$. Now we have
    \qipeng{we should also set $N < N_0^\ell$?}
    \begin{itemize}
        \item Step 1 costs $\Theta\brackets{N^{\frac{2}{2\ell+1}}N_0^{\frac{1}{2\ell+1}}}$ queries.
        \item Step 2 costs $\Theta\brackets{\frac{N_0^{\frac{\ell}{2\ell+1}}}{N^{\frac{1}{2\ell+1}}}}$ queries. To make this step not a bottleneck, we need $N_0=O\brackets{N^{\frac{3}{\ell-1}}}$.
        \item Step 3 costs $\Theta\brackets{N^{\frac{2}{2\ell+1}}N_0^{\frac{1}{2\ell+1}}}$ queries. We also need $T_3^{\ell-1}=O(N_0)$. Adding the constraint $N_0=O\brackets{N^{\frac{3}{\ell-1}}}$ again.
        \item Step 4 costs $\Theta\brackets{N^{\frac{2}{2\ell+1}}N_0^{\frac{1}{2\ell+1}}}$ queries.
    \end{itemize}
\end{proof}
\section{\texorpdfstring{The time bound for finding $K$ $\ell$-tuples with the same sum}{The time bound for finding K l-tuples with the same sum}}
In this section, we first prove bounds on the probability that an algorithm on $T$ oracle queries, produces $K$ $\ell$-tuples with the same sum. Here, the sum of a $\ell$-tuple $(x_1,x_2,\cdots,x_\ell)$ is defined by $\sum_{i=1}^{\ell}H(x_i)\mod N$ where $H$ is the random oracle which will be clear from the content. We first prove the case where the algorithm is classical to demonstrate our idea. Then we shift to the quantum case and use the compress oracle technique to derive a similar bound. 
\subsection{The classical bound}
\begin{theorem}\label{thm:classical-time-bound-for-finding-tuples-database}
    Let $H:[M]\rightarrow[N]$ be a random oracle with sufficiently large polynomial $M$ in $N$ and let $K=\Omega(\log N)^\ell$. For any classical algorithm with $T=\Omega(K)$ oracle queries, the probability that there are $K$ distinct $\ell$-tuples consist of only queried points with the same sum is at most $O\brackets{\frac{T^{\ell}}{K^\frac{1}{\ell}N}}^{K^\frac{1}{\ell}}$.
\end{theorem}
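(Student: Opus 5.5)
The plan is induction on $\ell$, following the strategy sketched in step 4 of the technical overview. Write $k:=K^{1/\ell}$. The target is to show that the probability that the $T$ queried points contain $K$ distinct $\ell$-tuples with a common sum is at most $\bigl(C\,T^{\ell}/(kN)\bigr)^{k}$ for a constant $C$ depending only on $\ell$. Since the algorithm is classical, lazy sampling lets us treat each query to a fresh point as revealing a uniform value independent of everything so far. We may assume queries are distinct and non-adaptive repetitions are wasted, so the queried set has size at most $T$.

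\emph{Base case $\ell=1$.} We need $K=k$ queried points sharing a common value. Fix a target value $z$ and a subset of $k$ of the $T$ queries. Each of these $k$ queries hits $z$ with probability $1/N$ independently under lazy sampling, even adaptively. Union-bound over the $\binom{T}{k}\le (eT/k)^k$ subsets and the $N$ choices of $z$. A cleaner route avoids the factor $N$ by letting the first point of the subset fix the value and charging only the remaining $k-1$ points. This gives $N\binom{T}{k}N^{-k}\le N(eT/(kN))^{k}$. Because $k=\Omega(\log N)$, the stray factor $N$ is absorbed by raising the constant: $N^{1/k}=O(1)$. This yields $O(T/(kN))^{k}$.

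\emph{Inductive step.} Suppose $K$ distinct $\ell$-tuples share sum $z$. Fix a threshold $\eta$, to be chosen as roughly $K^{(\ell-1)/\ell}$, and split into two cases, processing queries in time order.

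Case (a): at some point, the queried points contain $\eta$ distinct $(\ell-1)$-tuples with a common sum. By the induction hypothesis with $K'=\eta$ (valid since $\eta=\Omega(\log N)^{\ell-1}$), this has probability at most $O\bigl(T^{\ell-1}/(\eta^{1/(\ell-1)}N)\bigr)^{\eta^{1/(\ell-1)}}$. With $\eta^{1/(\ell-1)}=k$, this is $O(T^{\ell-1}/(kN))^{k}$, which is dominated by the target since $T\ge1$.

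Case (b): at every moment, every sum value is achieved by fewer than $\eta$ $(\ell-1)$-tuples. Each $\ell$-tuple of sum $z$ is completed at the time its last point $x$ is queried. At that moment, $H(x)$ must equal $z-s$ for some sum $s$ of an $(\ell-1)$-tuple among the earlier points. Each such $x$ completes at most $\eta$ new tuples, because the number of $(\ell-1)$-tuples with sum $z-H(x)$ is below $\eta$. So at least $K/\eta=k$ queries must be ``hitting'' queries, meaning $H(x)\in z-\Sigma$, where $\Sigma$ is the set of $(\ell-1)$-sums present so far. Since $|\Sigma|\le\binom{T}{\ell-1}\le T^{\ell-1}$, each query hits with probability at most $T^{\ell-1}/N$, regardless of the past. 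Union-bound over the $\binom{T}{k}$ choices of which queries hit and over $z$ (with $z$ fixed by the first hitting query, or absorbed via $N^{1/k}=O(1)$ as in the base case). This gives $(eT/k)^k(T^{\ell-1}/N)^{k}=O\bigl(T^{\ell}/(kN)\bigr)^{k}$. Summing the two cases proves the claim.

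\emph{Main obstacle.} The delicate point is the bookkeeping in case (b). The hitting events must be bounded by $T^{\ell-1}/N$ per query, conditioned on the entire adaptive history. This is fine because the fresh value of $H(x)$ is independent and $\Sigma$ is determined before it is revealed. Care is also needed that a tuple's ``last point'' is uniquely defined, which holds because tuples are sets. A secondary issue is verifying the parameter ranges: the conditions $K=\Omega(\log N)^\ell$ and $T=\Omega(K)$ are needed so the induction hypothesis applies with $K'=\eta=k^{\ell-1}$, and so that the factor $N$ and the factor-$2$ loss from summing the two cases are absorbed into the $O(\cdot)$ inside the $k$-th power.
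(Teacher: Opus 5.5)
Your proposal is correct and follows essentially the same route as the paper. Both argue by induction on $\ell$ with a threshold of $K^{(\ell-1)/\ell}$ same-sum $(\ell-1)$-tuples, handled by the induction hypothesis. Otherwise at least $K^{1/\ell}$ ``effective'' queries are needed, each succeeding with probability at most $T^{\ell-1}/N$, and the factor $N$ from the union over target sums is absorbed because $K^{1/\ell}=\Omega(\log N)$. The only cosmetic difference is how you get $\binom{T}{K^{1/\ell}}\bigl(T^{\ell-1}/N\bigr)^{K^{1/\ell}}$: you use a direct union bound over query positions plus the chain rule, whereas the paper uses the recurrence $\Delta^y_{t,k}\le\Delta^y_{t-1,k}+\frac{t^{\ell-1}}{N}\Delta^y_{t-1,k-1}$.
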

\begin{remark}
    At the first glance, the bound may seem strange due to the superscript $\frac{1}{\ell}$ on $K$. This is in fact natural, because if an algorithm queries $O\brackets{K^\frac{1}{\ell}}$ points, with probability $\brackets{\frac{1}{N}}^{O\brackets{K^\frac{1}{\ell}}}$ it gets value $0$ on all queried points and the task is done automatically. So one would expect the term $K^{\frac{1}{\ell}}$ to exists in reasonable bounds.
\end{remark}
\begin{proof}
    In the classical case, query strategy is meaningless since each point of the random oracle is independent. Thus we assume the algorithm queries points sequentially one-by-one. We prove by induction on $\ell$. For $\ell=1$, let $y\in[N]$ and $\Delta^y_{t,k}$ be the probability that after $t$ queries one can find out at least $k$ distinct queried points $x_1,x_2,\cdots,x_k$ such that $H(x_1)=H(x_2)=\cdots=H(x_k)=y$. We have
    \begin{equation*}
        \Delta^y_{t,k}\leq\Delta^y_{t-1,k}+\frac{1}{N}\cdot\Delta^y_{t-1,k-1}.
    \end{equation*}
    Thus
    \begin{equation*}
        \Delta^y_{t,k}\leq \binom{t}{k}\brackets{\frac{1}{N}}^k.
    \end{equation*}
    The probability that there are $K$ distinct points $x_1,x_2,\cdots,x_k$ of the same value is at most
    \begin{align*}
        &\sum_{y\in[N]}\Delta^y_{T,K}\\
        \leq& N\binom{T}{K}\brackets{\frac{1}{N}}^K\\
        =& O\brackets{\frac{T}{KN}}^K.
    \end{align*}
    Suppose that the statement holds for $\ell=\ell^*-1$, now we prove that the statement holds for $\ell=\ell^*$. After $T$ queries there are two cases: for the first case, one can find out at least $K^{\frac{\ell-1}{\ell}}$ distinct $\ell-1$ tuples consist of only queried points with the same sum. The probability of this case, by induction, is $O\brackets{\frac{t^{\ell-1}}{K^\frac{1}{\ell}N}}^{K^\frac{1}{\ell}}$.\zihan{what is $k_0$?} The rest form the second case. Let $y\in[N]$. Now we want to calculate how many fractions of the second case satisfy that at some point one can find out at least $K$ distinct $\ell$ tuples with sum $y$. Note that number of distinct $\ell-1$ tuples with the same sum is at most $K^{\frac{\ell-1}{\ell}}$ since we are in the second case. Thus we can relax the requirement by assuming that we can find $K^{\frac{\ell-1}{\ell}}$ $\ell$-tuples of sum $y$ whenever a new point is added for which there exists a $\ell-1$ tuple in the database such that its sum plus the value of the new entry is $y$. Let $\Delta^y_{t,k}$ denote the probability that after $t$ queries there exists $k$ queries out of these $t$ queries such that that queried point with all the previous queried points can form a $\ell$-tuple of sum $y$ that contains that queried point. Since there are at most $(t+k_0)^{\ell-1}$ possible sum for $\ell-1$ tuples consists of only queried points we have,
    \begin{equation*}
        \Delta^y_{t,k}\leq\Delta^y_{t-1,k}+\frac{t^{\ell-1}}{N}\cdot\Delta^y_{t-1,k-1}.
    \end{equation*}
    Thus 
    \begin{equation*}
        \Delta^y_{t,k}\leq \binom{t}{k}\brackets{\frac{t^{\ell-1}}{N}}^{k}.
    \end{equation*}
    By union bound, the probability that an algorithm outputs $K$ distinct $\ell$-tuples with the same sum is bounded by the sum of probability that this algorithm outputs $K$ distinct $\ell$-tuples with sum $y$ over all $y\in[N]$. If an algorithm outputs $K$ distinct $\ell$-tuples with sum $y$, by above analysis, either one can find out $K^{\frac{\ell-1}{\ell}}$ distinct $\ell-1$ tuples with the same sum in its queried points or there exists $K^{\frac{1}{\ell}}$ queries out of these $t$ queries such that the queried point with all the previous queried points can form a $\ell$-tuple of sum $y$ that contains the queried point. Thus, the overall success probability of any algorithm is bounded by
    \begin{align*}
        &\sum_{y\in[N]}\Delta^y_{T,K^{\frac{1}{\ell}}}+O\brackets{\frac{T^{\ell-1}}{K^\frac{1}{\ell}N}}^{K^\frac{1}{\ell}}\\
        \leq& N\binom{T}{K^{\frac{1}{\ell}}}\brackets{\frac{T^{\ell-1}}{N}}^{K^{\frac{1}{\ell}}}+O\brackets{\frac{T^{\ell-1}}{K^\frac{1}{\ell}N}}^{K^\frac{1}{\ell}}\\
        =& O\brackets{\frac{T^{\ell}}{K^\frac{1}{\ell}N}}^{K^\frac{1}{\ell}}+O\brackets{\frac{T^{\ell-1}}{K^\frac{1}{\ell}N}}^{K^\frac{1}{\ell}}\\
        =& O\brackets{\frac{T^{\ell}}{K^\frac{1}{\ell}N}}^{K^\frac{1}{\ell}}.
    \end{align*}
    Note that since $\ell>0$ is a constant, doing induction on it preserves the $O(\cdot)$ notation.
\end{proof}
\begin{theorem}\label{thm:classical-time-bound-for-finding-tuples-final}
    Let $H:[M]\rightarrow[N]$ be a random oracle with sufficiently large polynomial $M$ in $N$ and let $K=\Omega(\log N)^{\ell}$. For any classical algorithm with $T=\Omega(K)$ oracle queries, the probability that it outputs $K$ distinct $\ell$-tuples with the same sum is at most $O\brackets{\frac{T^{\ell}}{K^\frac{1}{\ell}N}}^{K^\frac{1}{\ell}}$.
\end{theorem}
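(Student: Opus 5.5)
We reduce the statement to \Cref{thm:classical-time-bound-for-finding-tuples-database}, which already controls the event that the \emph{queried points} contain $K$ distinct $\ell$-tuples with the same sum. The only gap is that the algorithm's output may contain tuples with points it never queried. The plan is to turn any algorithm $\As$ that outputs $K$ such tuples into an algorithm $\As'$ that additionally queries every point appearing in its output. Its success event is then contained in the database event of \Cref{thm:classical-time-bound-for-finding-tuples-database}, up to an error term for the unqueried points.

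\textbf{Step 1: padding the algorithm.} Let $\As'$ run $\As$, and after $\As$ outputs $K$ tuples, query $H$ on every point appearing in those tuples. Set $m$ to be the number of distinct such points. Since $K$ distinct $\ell$-tuples need only $O(K^{1/\ell})$ distinct points but may use up to $\ell K$, the crude count is $m\le \ell K$. Then $\As'$ makes at most $T+\ell K=O(T)$ queries, using $T=\Omega(K)$. If $\As$ outputs $K$ distinct $\ell$-tuples that truly have the same sum under $H$, then after $\As'$ finishes, all points are queried and the database contains $K$ distinct $\ell$-tuples with the same sum. Applying \Cref{thm:classical-time-bound-for-finding-tuples-database} to $\As'$ with $O(T)$ queries gives probability $O\!\left(\frac{(cT)^{\ell}}{K^{1/\ell}N}\right)^{K^{1/\ell}}$. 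Since $\ell$ is constant, $c^\ell$ is absorbed into the $O(\cdot)$ inside the power.

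\textbf{Step 2: checking that no extra loss appears.} The padding argument needs no separate "guessing" term. Correctness of $\As$'s output is a statement about the true $H$, and after $\As'$ queries every point, the true values are exactly the database values. So the event "$\As$ succeeds" is contained in "$\As'$'s database contains $K$ same-sum tuples", and the bound transfers directly.

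One subtlety is that the induction in \Cref{thm:classical-time-bound-for-finding-tuples-database} treats queries as sequential and adaptive. The extra queries of $\As'$ are adaptive, chosen after seeing the previous answers. That proof handles adaptivity because each fresh query is uniform and independent of the past, so the recursion $\Delta_{t,k}\le \Delta_{t-1,k}+\frac{t^{\ell-1}}{N}\Delta_{t-1,k-1}$ still applies.

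\textbf{Main obstacle.} The delicate point is the query-count blowup. If $K$ is comparable to $T$, the extra $\ell K$ queries change $T$ only by a constant factor. That factor becomes $c^\ell$ inside the base, which is harmless because it sits inside $O(\cdot)$ before the exponent $K^{1/\ell}$. I would verify this carefully: the statement's $O(x)^{K^{1/\ell}}$ should be read as $(C x)^{K^{1/\ell}}$ for a constant $C$, and that reading is exactly what makes the constant-factor change in $T$ harmless.
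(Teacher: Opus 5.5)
Your proposal is correct and is essentially the paper's argument. The paper also constructs $\As'$ by running $\As$ and then querying every output point not already in its database, and concludes directly from \Cref{thm:classical-time-bound-for-finding-tuples-database}; your explicit checks (the padding costs at most $\ell K = O(T)$ extra queries, a constant factor absorbed into the base of the $O(\cdot)^{K^{1/\ell}}$ bound, and no guessing error term is needed classically) fill in details the paper leaves implicit.
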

\begin{proof}
    For any algorithm $\cal{A}$ with $T=\Omega(K)$ oracle queries, we can construct an algorithm $\cal{A}'$ such that all outputted points are queried at some point by running $\cal{A}$ first and querying any index that does not exist in its database at the end of the $\cal{A}$. By \Cref{thm:classical-time-bound-for-finding-tuples-database}, we prove the statement.
\end{proof}
\subsection{The quantum bound}\label{sec:time_bound_l_tuples}
In this section, we are going to prove a similar result in the quantum setting. Suppose that we start with an \textbf{arbitrary} state:
\begin{equation*}
    \ket{\psi_{{\sf st}}}=\sum_{x,u,w,r,D}\alpha_{x,u,w,r,D}\ket{x}_{\qryreg}\ket{u}_{\ansreg}\ket{w}_{\auxreg}\ket{r}_{\outreg}\ket{D}_{\dbreg}.
\end{equation*}
Define the following operators:
\begin{itemize}
    \item We extend the phase oracle unitary $\cpho$ and standard decompose unitary $\stddecomp_x$ to the case where $x=\bot$. In this case these unitaries acts as an identity over all registers. As a special reminder, not like the case in the compressed oracle setting, here $\ket{\bot}_{\qryreg}$ is orthogonal to any $\ket{x}_{\qryreg}$ for $x\in[M]$.
    \item Define $\Pi_{\geq k}$ as the projector that projects to the subspace spanned by all state $\ket{x,u,w,r}\ket{D}$ such that $D$ contains at least $k$ non-bot entries. Define $\Pi_{=k},\Pi_{<k}$ in the similar manner.
    \item Let $y\in[N]$, define $\Pi_{\geq k}^{\ell,y}$ as the projector that projects to the subspace spanned by all state $\ket{x,u,w,r}\ket{D}$ such that $D$ contains at least $k$ distinct $\ell$-tuples with sum $y$. Define $\Pi_{=k}^{\ell,y},\Pi_{<k}^{\ell,y}$ in the similar manner. Also if $y=*$ it means any $y$, for example, $\Pi_{\geq k}^{\ell,*}$ projects to the subspace spanned by all state $\ket{x,u,w,r}\ket{D}$ such that $D$ contains at least $k$ distinct $\ell$-tuples with the same sum.
    \item Let $y\in[N]$ and $\cal{O}$ be either the compressed phase oracle $\cpho$ or $\stddecomp$, define $\Pi_{{\sf inc}}^{y,\cal{O}}$ as the operator that ``projects'' to the state that after calling $\cal{O}$ the number of distinct $\ell$-tuples with sum $y$ increases. That is, 
    $\Pi_{{\sf inc}}^{y,\cal{O}}=\sum_k\brackets{\cal{O}^{-1}\Pi_{>k}^{\ell,y}\cal{O}\Pi_{=k}^{\ell,y}}$. \textbf{Note that although we use $\Pi$, this may not be a projection}.\zikuan{I'm not sure if this is a projection or not. But this affects nothing.}
\end{itemize}
\begin{theorem}\label{thm:quantum-time-bound-for-finding-tuples-database}
    For $\ell>0$ be an integer. Let $k_0,k_1,k_2,\cdots,k_\ell\geq 0$ be integers. Assume the following:
    \begin{itemize}
        \item The number of non-bot entries in the database register is at most $k_0$. In other words,
        \begin{equation*}
            \Pi_{>k_0}\ket{\psi_{{\sf st}}}=0.
        \end{equation*}
        \item For $i=1,2,\cdots,\ell$, the number of $i$-tuples with the same sum is at most $k_i$. In other words, for $i=1,2,\cdots,\ell-1$
        \begin{equation*}
            \Pi_{>k_i}^{i,*}\ket{\psi_{{\sf st}}}=0.
        \end{equation*}
    \end{itemize}
    Let $M$ be a sufficient large and let $K>\max_{i=1}^{\ell}k_i$. Let $K_{{\sf sol}}$ be the only real non-negative root of the following function
    \begin{equation*}
        f_{K,k_1,k_2,\cdots,k_\ell}(x)=x^\ell+\sum_{i=0}^{\ell-1}k_{\ell-i}x^i-K.
    \end{equation*}
    Note that when $x=0$, $f_{K,k_1,k_2,\cdots,k_\ell}(x)=k_\ell-K\leq 0$ and when $x\rightarrow\infty$, $f(x)\rightarrow\infty$. Thus there must be a real non-negative root. Also we know that this root is unique since all coefficients except the constant term are positive, meaning that $f_{K,k_1,k_2,\cdots,k_\ell}(x)$ is monotonically increasing when $x\geq 0$.\\
    If $K_{{\sf sol}}=\Omega(\log N)$ then for any quantum algorithm with $T=\Omega(K)$ compressed phase oracle queries, and $T$ additional $\stddecomp$ after that, the probability that \textbf{at some point during the algorithm} there are $K$ distinct $\ell$-tuples with the same sum in the database register under the view of compress oracle is at most $O\brackets{\frac{T^2(T+k_0)^{\ell-1}}{K_{{\sf sol}}^2N}}^{K_{{\sf sol}}}$.
    More specifically,
    \begin{equation*}
        \abs{\brackets{\sum_{i=0}^{2T-1}\prod_{i+1}^{j=2T-1}\brackets{\cal{O}_jU_j}\Pi_{\geq K}^{\ell,*}\cal{O}_iU_i\prod_{0}^{j=i-1}\brackets{\Pi_{<K}^{\ell,*}\cal{O}_jU_j}}\ket{\psi_{{\sf st}}}}^2=O\brackets{\frac{T^{\ell+1}}{K_{{\sf sol}}^2N}}^{K_{{\sf sol}}}.
    \end{equation*}
    where we additionally define $\cal{O}_i=\cpho$ when $i<T$ and $\cal{O}_i=\stddecomp $ otherwise. Here we use $\prod_{0}^{j=i-1}$ to denote the product of terms in the reverse order. For example for unitaries $\set{U_j}$ we have $\prod_{0}^{j=i-1}U_j=U_{i-1}U_{i-2}\cdots U_0$. To explain this term, $i$ (0-index) goes over all possible positions that indicate the first time that we can find $K$ $\ell$-tuples of the same sum. The later part $\prod_{0}^{j=i-1}\brackets{\Pi_{<K}^{\ell,*}\cal{O}_jU_j}$ denotes the process before the $i$-th query where we cannot find $K$ $\ell$-tuples of the same sum. The middle term $\Pi_{\geq K}^{\ell,*}\cal{O}_iU_i$ means that the first time that we can find $K$ $\ell$-tuples of the same sum is after the $i$-th query.
\end{theorem}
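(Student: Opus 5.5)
We follow the classical proof of \Cref{thm:classical-time-bound-for-finding-tuples-database}, now written as a progress-measure argument in the compressed phase oracle picture. The key observation is this. Suppose that throughout the run the database never holds more than $k_i$ $i$-tuples with a common sum for $i<\ell$ (this is a hypothesis, and it must be preserved along the run; see below). Then the number of $\ell$-tuples with sum $y$ can grow by more than a bounded amount only through events in which a newly written entry $(x,H(x))$ completes an $\ell$-tuple with sum $y$. We call such an event a \emph{hit}. Each hit creates at most $\sum_{i<\ell}$ (number of $i$-tuples with the same sum) new tuples, counted through all ways the new point and the previously recorded entries can combine. Reaching $K$ tuples starting from at most $k_\ell$ therefore forces roughly $K_{\sf sol}$ hits. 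This is exactly how the polynomial $f$ arises: after $x$ hits the count is at most $x^\ell+\sum_i k_{\ell-i}x^i$, since the hit points themselves form new lower-order tuples. So we first prove a purely combinatorial lemma: for every database path in which the count of $\ell$-tuples with sum $y$ goes from at most $k_\ell$ to at least $K$, there are at least $K_{\sf sol}$ query steps at which $\Pi^{y,\cal O}_{\sf inc}$ acts nontrivially.

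Next comes the per-query amplitude bound, in the style of \Cref{lem:progress-measure}. Apply \Cref{each-query-in-compress-oracle} to a basis state whose database has at most $k_0+t$ entries. The amplitude of writing a value $H(x)$ that completes some $\ell$-tuple with sum $y$ is at most $O(\sqrt{(k_0+t)^{\ell-1}/N})$, because there are at most $(k_0+t)^{\ell-1}$ existing $(\ell-1)$-tuple sums. The same holds for the re-randomization branch ($D(x)\neq\bot$), up to the constant $3$. For the trailing $\stddecomp$ steps the bound is the same or better, since they cannot add entries. We then define the counter projectors $\Pi^{(j)}$ onto states that have accumulated exactly $j$ hits; this requires purifying the hit count into an auxiliary counter register, or arguing path-wise via the ``first time'' decomposition shown in the displayed formula. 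Following \cite{liu2019finding,Hamoudi_2023}, we bound the norm of the component with $j$ hits after $t$ queries by $\binom{t}{j}\bigl(C\sqrt{(T+k_0)^{\ell-1}/N}\bigr)^j$, by induction on $t$ using the triangle inequality. Setting $j=K_{\sf sol}$, $t=2T$ and squaring gives $\binom{2T}{K_{\sf sol}}^2\bigl((T+k_0)^{\ell-1}/N\bigr)^{K_{\sf sol}}=O\bigl(T^2(T+k_0)^{\ell-1}/(K_{\sf sol}^2N)\bigr)^{K_{\sf sol}}$. Finally we take a union over $y\in[N]$. The resulting factor $N$ is absorbed because $K_{\sf sol}=\Omega(\log N)$, which is also what lets the constants sit inside the exponent. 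The sum over the first-time index $i$ in the displayed expression costs only a factor $2T$, which is absorbed in the same way.

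The main obstacle is the interaction between the hypotheses and the evolving state. The bounds $\Pi^{i,*}_{>k_i}\ket{\psi_{\sf st}}=0$ hold only at the start, while new entries can create new $(\ell-1)$-tuples with a common sum during the run. We handle this by keeping $k_0+T$ as the crude bound on the number of $(\ell-1)$-tuple sums in the per-query amplitude. For the hit-counting lemma we count, per hit, only tuples whose new point pairs with lower-order tuples, using $k_i$ plus the polynomial growth in the number of hits already taken. A second difficulty is that $\Pi^{y,\cal O}_{\sf inc}$ is not a projector and database entries can be un-written (the branch $\to\ket\bot$), so tuples can also disappear. The path-wise counter therefore has to track hits, not the current tuple count. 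We would try an explicit counter register that is incremented coherently whenever a query maps a basis database to one with more sum-$y$ tuples. We then bound its tail via the per-query lemma, in the style of the multi-collision analysis of \cite{Hamoudi_2023}.
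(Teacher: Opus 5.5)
There is a genuine gap in your hit-counting lemma, and it is exactly the issue you flag as the ``main obstacle'' without resolving it.

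\textbf{Why the lemma fails.} Your bound ``after $x$ hits the count is at most $x^\ell+\sum_i k_{\ell-i}x^i$'' only counts sum-$y$ $\ell$-tuples made of initial entries and hit points. It ignores entries written by non-hit queries, which can later be completed into sum-$y$ tuples.

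Concretely, take $\ell=2$ and an empty initial database, so $k_0=k_1=k_2=0$ and your polynomial allows at most $x^2$ tuples after $x$ hits. Let $m$ queries all write the same value $w$, with $2w\neq y$ and no entry of value $y-w$ present; none of these is a hit. A single later query writing $y-w$ is one hit, yet it creates $m$ tuples with sum $y$. So the pathwise claim ``reaching $K$ forces $K_{\sf sol}$ hits'' is false for $\ell\ge 2$. It is true only for $\ell=1$, where your argument coincides with the paper's base case.

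Such paths are unlikely, but only because many $(\ell-1)$-tuples sharing a common sum \emph{during the run} is unlikely. That is a probabilistic statement of the same type as the theorem for $\ell-1$. It does not follow from the initial hypotheses $\Pi^{i,*}_{>k_i}\ket{\psi_{\sf st}}=0$. Your fallback $(k_0+t)^{\ell-1}$ does not help either. It bounds the number of distinct $(\ell-1)$-sums a fresh value can hit, which controls the per-query amplitude. It says nothing about how many tuples share one sum, which is what controls how much a single hit adds.

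\textbf{The missing idea: induction on $\ell$ with a run-time threshold $K_{\sf thd}$.} The paper splits into two cases.
\begin{itemize}
\item Case (a): at some point the database holds at least $K_{\sf thd}$ $(\ell-1)$-tuples with a common sum. This is bounded by the induction hypothesis for $\ell-1$, with target $K_{\sf thd}$ and the same $k_0,\dots,k_{\ell-1}$.
\item Case (b): this never happens. Then every increase of the sum-$y$ count adds fewer than $K_{\sf thd}$ tuples, so at least $(K-k_\ell)/K_{\sf thd}$ increases are required. Your binomial amplitude recursion, with per-step factor $4\sqrt{(t+k_0)^{\ell-1}/N}$, handles this case.
\end{itemize}
Choosing $K_{\sf thd}=(K-k_\ell)/K_{\sf sol}$ gives $f_{K_{\sf thd},k_1,\dots,k_{\ell-1}}(K_{\sf sol})=f_{K,k_1,\dots,k_\ell}(K_{\sf sol})/K_{\sf sol}=0$. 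So the inductive term has the same exponent $K_{\sf sol}$, and this balancing is where the polynomial $f$ really comes from.

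With this recursion in place, the rest of your plan matches the paper: the per-query amplitude bound, the binomial solution, and the Cauchy--Schwarz union over $y$ with $N$ absorbed via $K_{\sf sol}=\Omega(\log N)$.
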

\begin{proof}
    We prove by induction on $\ell$. For $\ell=1$, define $\Delta^y_{t,k}$ as the amplitude on states after $t$ queries (from now on, if $t>T$ it means after the first $t-T$ $\stddecomp$) and at some point one can find out at least $k$ distinct points with value $y$ in the database,
    \begin{equation*}
        \Delta^y_{t,k}=\abs{\brackets{\sum_{i=0}^{t-1}\prod_{i+1}^{j=t-1}\brackets{\cal{O}_jU_j}\Pi_{\geq k}^{1,y}\cal{O}_iU_i\prod_{0}^{j=i-1}\brackets{\Pi_{<k}^{1,y}\cal{O}_jU_j}}\ket{\psi_{{\sf st}}}}.
    \end{equation*}
    Since we assume that the number of distinct points with every value $y\in[N]$ in the database of $\ket{\psi_{{\sf st}}}$ are at most $k_1$ we have:
    \begin{equation}\label{eq:delta-l=1-initial-1s}
        \Delta^{y}_{0,0}=\Delta^{y}_{0,1}=\cdots=\Delta^{y}_{0,k_1}=1.
    \end{equation}
    And 
    \begin{equation}\label{eq:delta-l=1-initial-0s}
        0=\Delta^{y}_{0,k_1+1}=\Delta^{y}_{0,k_1+2}=\cdots.
    \end{equation}
    Now we try to calculate the recursion formula for $\Delta^y_{t,k}$:
    \begin{align*}
        \Delta^y_{t,k}=&\abs{\brackets{\sum_{i=0}^{t-1}\prod_{i+1}^{j=t-1}\brackets{\cal{O}_jU_j}\Pi_{\geq k}^{1,y}\cal{O}_iU_i\prod_{0}^{j=i-1}\brackets{\Pi_{<k}^{1,y}\cal{O}_jU_j}}\ket{\psi_{{\sf st}}}}\\
        \leq&\abs{\brackets{\sum_{i=0}^{t-2}\prod_{i+1}^{j=t-1}\brackets{\cal{O}_jU_j}\Pi_{\geq k}^{1,y}\cal{O}_iU_i\prod_{0}^{j=i-1}\brackets{\Pi_{<k}^{1,y}\cal{O}_jU_j}}\ket{\psi_{{\sf st}}}}+\abs{\Pi_{\geq k}^{1,y}\cal{O}_{t-1}U_{t-1}\prod_{0}^{j=t-2}\brackets{\Pi_{<k}^{1,y}\cal{O}_jU_j}\ket{\psi_{{\sf st}}}}\\
        =&\abs{\brackets{\sum_{i=0}^{t-2}\prod_{i+1}^{j=t-2}\brackets{\cal{O}_jU_j}\Pi_{\geq k}^{1,y}\cal{O}_iU_i\prod_{0}^{j=i-1}\brackets{\Pi_{<k}^{1,y}\cal{O}_jU_j}}\ket{\psi_{{\sf st}}}}+\abs{\Pi_{\geq k}^{1,y}\cal{O}_{t-1}U_{t-1}\prod_{0}^{j=t-2}\brackets{\Pi_{<k}^{1,y}\cal{O}_jU_j}\ket{\psi_{{\sf st}}}}\\
        =&\Delta^y_{t-1,k}+\abs{\Pi_{\geq k}^{1,y}\cal{O}_{t-1}\Pi_{<k}^{1,y}U_{t-1}\cal{O}_{t-2}U_{t-2}\prod_{0}^{j=t-3}\brackets{\Pi_{<k}^{1,y}\cal{O}_jU_j}\ket{\psi_{{\sf st}}}}\\
        \leq&\Delta^y_{t-1,k}+\abs{\Pi_{\geq k}^{1,y}\cal{O}_{t-1}\Pi_{<k}^{1,y}\ket{\psi_t}}\abs{\prod_{0}^{j=t-2}\brackets{\Pi_{<k}^{1,y}\cal{O}_jU_j}\ket{\psi_{{\sf st}}}}
    \end{align*}
    where $\ket{\psi_t}\propto\prod_{0}^{j=t-2}\brackets{\Pi_{<k}^{1,y}\cal{O}_jU_j}\ket{\psi_{{\sf st}}}$ is the normalized state before the $t$-th oracle query. The two terms in the first inequality correspond to the case that $i\leq t-2$ and the case that $i=t-1$. Define the following projectors:
    \begin{itemize}
        \item $Q^{\ell,y}_{=k}$ projects to all state spanned by $\ket{x,u,w,r}\ket{D}$ such that there are $k$ distinct $\ell$-tuples of sum $y$, $D(x)=\bot$ and $u\neq 0$.
        \item $R^{\ell,y}_{=k}$ projects to all state spanned by $\ket{x,u,w,r}\ket{D}$ such that there are $k$ distinct $\ell$-tuples of sum $y$, $D(x)\neq\bot$ and $u\neq 0$.
        \item $S^{\ell,y}_{=k}$ projects to all state spanned by $\ket{x,u,w,r}\ket{D}$ such that there are $k$ distinct $\ell$-tuples of sum $y$ and $u=0$.
    \end{itemize}
    Because each oracle query add at most one entry in the database we have that
    \begin{align*}
        \Delta^y_{t,k}\leq&\Delta^y_{t-1,k}+\abs{\Pi_{\geq k}^{1,y}\cal{O}_{t-1}\Pi_{<k}^{1,y}\ket{\psi_t}}\abs{\prod_{0}^{j=t-2}\brackets{\Pi_{<k}^{1,y}\cal{O}_jU_j}\ket{\psi_{{\sf st}}}}\\
        \leq&\Delta^y_{t-1,k}+\abs{\Pi_{\geq k}^{1,y}\cal{O}_{t-1}\brackets{Q_{=k-1}^{1,y}+R_{=k-1}^{1,y}+S_{=k-1}^{1,y}}\ket{\psi_t}}\cdot\Delta^y_{t-1,k-1}.
    \end{align*}
    The reason that we can substitute the term $\abs{\prod_{0}^{j=t-2}\brackets{\Pi_{<k}^{1,y}\cal{O}_jU_j}\ket{\psi_{{\sf st}}}}$ with $\Delta^y_{t-1,k-1}$ is that the only part of $\ket{\psi_t}$ that can pass $\Pi_{\geq k}^{1,y}$ is inside $\Pi_{=k-1}^{1,y}$ (which can be decomposed into $Q_{=k-1}^{1,y}+R_{=k-1}^{1,y}+S_{=k-1}^{1,y}$) since each query adds at most one entry in the database. Thus we can project $\ket{\psi_t}$ onto $\Delta^y_{t-1,k-1}$ which has an amplitude of $\Delta^y_{t-1,k-1}$. To bound $\abs{\Pi_{\geq k}^{1,y}\cal{O}_{t-1}\brackets{Q_{=k-1}^{1,y}+R_{=k-1}^{1,y}+S_{=k-1}^{1,y}}\ket{\psi_t}}$, we need to consider two cases:
    \begin{itemize}
        \item When $t\leq T$, $\cal{O}_{t-1}=\cpho$ so we invoke \Cref{each-query-in-compress-oracle}. For $\ket{x,u,w,r}\ket{D}$ in $S_{=k-1}^{1,y}$ the state is unchanged after $\cal{O}_{t-1}$. For $\ket{x,u,w,r}\ket{D}$ in $Q_{=k-1}^{1,y}$, we have $\ket{D(x)}_{\dbreg_x}=\sum_{y'\in[N]}\frac{(-1)^{\langle u,y'\rangle}}{\sqrt{N}}\ket{y'}$ after $\cal{O}_{t-1}$. Since only one uniform random value $y'$ is added, we have $\abs{\Pi_{\geq k}^{1,y}\cal{O}_{t-1}Q_{=k-1}^{1,y}\ket{\psi_t}}\leq\sqrt{\frac{1}{N}}$. For $\ket{x,u,w,r}\ket{D}$ in $R_{=k-1}^{1,y}$, we have $\ket{D(x)}_{\dbreg_x}=\frac{(-1)^{\langle u,D(x)\rangle}}{\sqrt{N}}\ket{\bot}+\frac{1+(-1)^{\langle u,D(x)\rangle}(N-2)}{N}\ket{D(x)}+\sum_{y'\in[N]\backslash\set{D(x)}}\frac{1-(-1)^{\langle u,y'\rangle}-(-1)^{\langle u,D(x)\rangle}}{N}\ket{y'}$ after $\cal{O}_{t-1}$. Since the only case where a new value $y'$ is added has amplitude $\frac{1-(-1)^{\langle u,y'\rangle}-(-1)^{\langle u,D(x)\rangle}}{N}$, we have $\abs{\Pi_{\geq k}^{1,y}\cal{O}_{t-1}R_{=k-1}^{1,y}\ket{\psi_t}}\leq\frac{3}{N}$.
        \item When $t>T$, $\cal{O}_{t-1}=\stddecomp $. For $\ket{x,u,w,r}\ket{D}$ in $S_{=k-1}^{1,y}$ the state is unchanged after $\cal{O}_{t-1}$. For $\ket{x,u,w,r}\ket{D}$ in $Q_{=k-1}^{1,y}$, we have $\ket{D(x)}_{\dbreg_x}=\frac{1}{\sqrt{N}}\sum_{y'\in[N]}\ket{y'}$. Since only one uniform random value $y'$ is added, we have $\abs{\Pi_{\geq k}^{1,y}\cal{O}_{t-1}Q_{=k-1}^{1,y}\ket{\psi_t}}\leq\sqrt{\frac{1}{N}}$. For $\ket{x,u,w,r}\ket{D}$ in $R_{=k-1}^{1,y}$, we have $\ket{D(x)}_{\dbreg_x}=\frac{N-1}{N}\ket{D(x)}+\frac{1}{\sqrt{N}}\ket{\bot}-\frac{1}{N}\sum_{y'\in[N]\backslash\set{D(x)}}\ket{y'}$ after $\cal{O}_{t-1}$. Since the only case where a new value $y'$ is added has amplitude $\frac{1}{N}$, we have $\abs{\Pi_{\geq k}^{1,y}\cal{O}_{t-1}R_{=k-1}^{1,y}\ket{\psi_t}}\leq\frac{1}{N}$.
    \end{itemize}  
    Combining all cases above, we obtain the following recursion formula for $\Delta^y_{t,k}$:
    \begin{equation*}
        \Delta^y_{t,k}\leq\Delta^y_{t-1,k}+4\sqrt{\frac{1}{N}}\cdot\Delta^y_{t-1,k-1}
    \end{equation*}
    combine with \Cref{eq:delta-l=1-initial-1s} and \Cref{eq:delta-l=1-initial-0s} we get
    \begin{equation*}
        \Delta^{y}_{t,k}\leq \binom{t}{k-k_1}\brackets{4\sqrt{\frac{1}{N}}}^{k-k_1}
    \end{equation*}
    for $k>k_1$ and $y\in[N]$.
    The probability that at some point during the algorithm there are $K$ distinct points $x_1,x_2,\cdots,x_K$ of the same value in the database is at most
    \begin{align*}
        &\abs{\brackets{\sum_{i=0}^{2T-1}\prod_{i+1}^{j=2T-1}\brackets{\cal{O}_jU_j}\Pi_{\geq K}^{1,*}\cal{O}_iU_i\prod_{0}^{j=i-1}\brackets{\Pi_{<K}^{1,*}\cal{O}_jU_j}}\ket{\psi_{{\sf st}}}}^2\\
        \leq&\abs{\brackets{\sum_{y\in[N]}\sum_{i=0}^{2T-1}\prod_{i+1}^{j=2T-1}\brackets{\cal{O}_jU_j}\Pi_{\geq K}^{1,y}\cal{O}_iU_i\prod_{0}^{j=i-1}\brackets{\Pi_{<K}^{1,y}\cal{O}_jU_j}}\ket{\psi_{{\sf st}}}}^2\\
        \leq&N\sum_{y\in[N]}\abs{\brackets{\sum_{i=0}^{2T-1}\prod_{i+1}^{j=2T-1}\brackets{\cal{O}_jU_j}\Pi_{\geq K}^{1,y}\cal{O}_iU_i\prod_{0}^{j=i-1}\brackets{\Pi_{<K}^{1,y}\cal{O}_jU_j}}\ket{\psi_{{\sf st}}}}^2\\
        \leq&N\sum_{y\in[N]}\brackets{\Delta^y_{2T,K}}^2\\
        \leq& N^2\binom{2T}{K-k_1}^2\brackets{\frac{16}{N}}^{K-k_1}\\
        =&O\brackets{\frac{T^2}{(K-k_1)^2N}}^{K-k_1}.
    \end{align*}
    $N$ is absorbed into the $O(\cdot)$ notation because $K-k_1=\Omega(\log N)$.\\
    Suppose that the statement holds for $\ell=\ell^*-1$, now we prove that the statement holds for $\ell=\ell^*$. Imagine an algorithm that at some point one can find out at least $K$ distinct $\ell$ tuples with the same sum $y$ in its database. This algorithm must satisfy:
    \begin{itemize}
        \item Either at some point of this algorithm, one can find out at least $K_{{\sf thd}}$ distinct $\ell-1$ tuples with the same sum.
        \item Or there exists at least $\frac{K-k_\ell}{K_{{\sf thd}}}$ ``effective'' queries, meaning that after that query the queried point, with known points in the database forms at least one new $\ell$ tuple of sum $y$.
    \end{itemize}
    Intuitively, this is because if an algorithm never stores $K_{{\sf thd}}$ distinct $\ell-1$-tuples with the same sum in its database, each ``effective'' query can at most generate $K_{{\sf thd}}-1$ $\ell$-tuples with sum $y$ thus one need at least $\frac{K-k_\ell}{K_{{\sf thd}}}$ of them.\\
    Now let us formalize our proof. Define $\Gamma_{t,k}^y$ as the amplitude on states after $t$ queries and at some point there one can find at least $K$ $\ell$-tuples with sum $y$ in the database. In the following computation, we iterate over all $c_0,c_1,c_2,\cdots,c_t$ indicating the number of $\ell$-tuples of sum $y$ that one can find in the database.
    \begin{align*}
        \Gamma^y_{t,k}=&\abs{\brackets{\sum_{i=0}^{t-1}\prod_{i+1}^{j=t-1}\brackets{\cal{O}_jU_j}\Pi_{\geq k}^{\ell,y}\cal{O}_iU_i\prod_{0}^{j=i-1}\brackets{\Pi_{<k}^{\ell,y}\cal{O}_jU_j}}\ket{\psi_{{\sf st}}}}\\
        =&\abs{\brackets{\sum_{\substack{c_0,c_1,\cdots,c_t\\ c_0\leq k_\ell\text{ and }\exists i\text{ s.t. }c_i\geq k}}\prod_{0}^{j=t-1}\brackets{\Pi_{=c_{j+1}}^{\ell,y}\cal{O}_jU_j}\Pi_{=c_0}^{\ell,y}}\ket{\psi_{{\sf st}}}}\\
        \leq&\abs{\brackets{\sum_{\substack{c_0,c_1,\cdots,c_t\\ \exists i\text{ s.t. }c_i\geq c_{i-1}+K_{{\sf thd}}}}\prod_{0}^{j=t-1}\brackets{\Pi_{=c_{j+1}}^{\ell,y}\cal{O}_jU_j}\Pi_{=c_0}^{\ell,y}}\ket{\psi_{{\sf st}}}}\\
        &+\abs{\brackets{\sum_{\substack{c_0,c_1,\cdots,c_t\\ \forall i,c_i<c_{i-1}+K_{{\sf thd}}\\ \exists i\text{ s.t. }c_i\geq k}}\prod_{0}^{j=t-1}\brackets{\Pi_{=c_{j+1}}^{\ell,y}\cal{O}_jU_j}\Pi_{=c_0}^{\ell,y}}\ket{\psi_{{\sf st}}}}\\
        \leq&\abs{\brackets{\sum_{i=0}^{t-1}\prod_{i+1}^{j=t-1}\brackets{\cal{O}_jU_j}\Pi_{\geq K_{{\sf thd}}}^{\ell-1,*}\cal{O}_iU_i\prod_{0}^{j=i-1}\brackets{\Pi_{<K_{{\sf thd}}}^{\ell-1,*}\cal{O}_jU_j}}\ket{\psi_{{\sf st}}}}\\
        &+\abs{\brackets{\sum_{\substack{c_0,c_1,\cdots,c_t\\ \sum_{i=1}^{t}\mathbb{I}[c_i>c_{i-1}]\geq \frac{k-k_\ell}{K_{{\sf thd}}}}}\prod_{0}^{j=t-1}\brackets{\Pi_{=c_{j+1}}^{\ell,y}\cal{O}_jU_j}\Pi_{=c_0}^{\ell,y}}\ket{\psi_{{\sf st}}}}
    \end{align*}
    The first term is bounded by induction (we will show it later). Now we bound the second term which are states that experience $\frac{k-k_\ell}{K_{\sf thd}}$ ``jumps''. We redefine $\Delta^y_{t,k}$ as the amplitude on states after $t$ queries and there exists $k$ queries such that the number of distinct $\ell$-tuples with sum $y$ increases after it,
    \begin{equation*}
        \Delta^y_{t,k}=\abs{\brackets{\sum_{\substack{0=p_0<p_1< \\ \cdots<p_k\leq t}}\prod_{p_k}^{j=t-1}\brackets{\cal{O}_jU_j}\prod_{0}^{i=k-1}\brackets{{\sf Inc}^y_{p_{i+1}-1}\prod_{p_i}^{j=p_{i+1}-2}{\sf NonInc}^y_j}}\ket{\psi_{{\sf st}}}}
    \end{equation*}
    where ${\sf Inc}^y_i=\cal{O}_i\Pi_{{\sf Inc}}^{y,\cal{O}_i}U_i$, ${\sf NonInc}^y_i=\cal{O}_i\brackets{I-\Pi_{{\sf Inc}}^{y,\cal{O}_i}}U_i$.
    Iterating over $p_1,\cdots,p_k$ where ``jumps'' happen, we have
    \begin{align*}
        \Delta^y_{t,k}=&\abs{\brackets{\sum_{\substack{0=p_0<p_1< \\ \cdots<p_k\leq t}}\prod_{p_k}^{j=t-1}\brackets{\cal{O}_jU_j}\prod_{0}^{i=k-1}\brackets{{\sf Inc}^y_{p_{i+1}-1}\prod_{p_i}^{j=p_{i+1}-2}{\sf NonInc}^y_j}}\ket{\psi_{{\sf st}}}}\\
        \leq&\abs{\brackets{\sum_{\substack{0=p_0<p_1< \\ \cdots<p_k<t}}\prod_{p_k}^{j=t-1}\brackets{\cal{O}_jU_j}\prod_{0}^{i=k-1}\brackets{{\sf Inc}^y_{p_{i+1}-1}\prod_{p_i}^{j=p_{i+1}-2}{\sf NonInc}^y_j}}\ket{\psi_{{\sf st}}}}\\
        &+\abs{\brackets{\sum_{\substack{0=p_0<p_1< \\ \cdots<p_k=t}}\prod_{0}^{i=k-1}\brackets{{\sf Inc}^y_{p_{i+1}-1}\prod_{p_i}^{j=p_{i+1}-2}{\sf NonInc}^y_j}}\ket{\psi_{{\sf st}}}}\\
        =&\abs{\brackets{\sum_{\substack{0=p_0<p_1< \\ \cdots<p_k\leq t-1}}\prod_{p_k}^{j=t-2}\brackets{\cal{O}_jU_j}\prod_{0}^{i=k-1}\brackets{{\sf Inc}^y_{p_{i+1}-1}\prod_{p_i}^{j=p_{i+1}-2}{\sf NonInc}^y_j}}\ket{\psi_{{\sf st}}}}\\
        &+\abs{\brackets{\sum_{\substack{0=p_0<p_1< \\ \cdots<p_{k-1}<t}}\cal{O}_{t-1}\Pi_{{\sf inc}}^{y,\cal{O}_{t-1}}U_{t-1}\prod_{p_{k-1}}^{j=t-2}{\sf NonInc}^y_j\prod_{0}^{i=k-2}\brackets{{\sf Inc}^y_{p_{i+1}-1}\prod_{p_i}^{j=p_{i+1}-2}{\sf NonInc}^y_j}}\ket{\psi_{{\sf st}}}}\\
        \leq&\Delta^y_{t-1,k}+\abs{\cal{O}_{t-1}\Pi_{{\sf inc}}^{y,\cal{O}_{t-1}}\ket{\psi_t}}\\
        &\cdot\abs{\sum_{\substack{0=p_0<p_1< \\ \cdots<p_{k-1}< t}}{\prod_{p_{k-1}}^{j=t-2}\brackets{\cal{O}_jU_j}\prod_{0}^{i=k-2}\brackets{{\sf Inc}^y_{p_{i+1}-1}\prod_{p_i}^{j=p_{i+1}-2}{\sf NonInc}^y_j}}\ket{\psi_{{\sf st}}}}
    \end{align*}
    where $\ket{\psi_t}\propto\sum_{\substack{0=p_0<p_1< \\ \cdots<p_{k-1}<t}}\brackets{U_{t-1}\prod_{p_{k-1}}^{j=t-2}\brackets{\cal{O}_jU_j}\prod_{0}^{i=k-2}\brackets{{\sf Inc}^y_{p_{i+1}-1}\prod_{p_i}^{j=p_{i+1}-2}{\sf NonInc}^y_j}}\ket{\psi_{{\sf st}}}$ is the state (normalized) before the $t$-th oracle query. Define the following projectors:
    \begin{itemize}
        \item $Q$ projects to all states spanned by $\ket{x,u,w,r}\ket{D}$ such that $D(x)=\bot$ and $u\neq 0$.
        \item $R$ projects to all states spanned by $\ket{x,u,w,r}\ket{D}$ such that $D(x)\neq\bot$ and $u\neq 0$.
        \item $S$ projects to all states spanned by $\ket{x,u,w,r}\ket{D}$ such that $u=0$.
    \end{itemize}
    Use these projectors we have
    \begin{align*}
        \Delta^y_{t,k}\leq&\Delta^y_{t-1,k}+\abs{\cal{O}_{t-1}\Pi_{{\sf inc}}^{y,\cal{O}_{t-1}}\ket{\psi_t}}\\
        &\cdot\abs{\sum_{\substack{0=p_0<p_1< \\ \cdots<p_{k-1}< t}}{\prod_{p_{k-1}}^{j=t-2}\brackets{\cal{O}_jU_j}\prod_{0}^{i=k-2}\brackets{{\sf Inc}^y_{p_{i+1}-1}\prod_{p_i}^{j=p_{i+1}-2}{\sf NonInc}^y_j}}\ket{\psi_{{\sf st}}}}\\
        \leq&\Delta^y_{t-1,k}+\abs{\cal{O}_{t-1}\Pi_{{\sf inc}}^{y,\cal{O}_{t-1}}(Q+R+S)\ket{\psi_t}}\cdot\Delta^y_{t-1,k-1}.
    \end{align*}
    To bound $\abs{\cal{O}_{t-1}\Pi_{{\sf inc}}^{y,\cal{O}_{t-1}}(Q+R+S)\ket{\psi_t}}$, we need to consider two cases:
    \begin{itemize}
        \item When $t\leq T$, $\cal{O}_{t-1}=\cpho$ so we invoke \Cref{each-query-in-compress-oracle}. For $\ket{x,u,w,r}\ket{D}$ in $S$ the state is unchanged after $\cal{O}_{t-1}$. For $\ket{x,u,w,r}\ket{D}$ in $R$, we have $\ket{D(x)}_{\dbreg_x}=\sum_{y'\in[N]}\frac{(-1)^{\langle u,y'\rangle}}{\sqrt{N}}\ket{y'}$ after $\cal{O}_{t-1}$. Since only one uniform random value $y'$ is added, there are at most $(t+k_0)^{\ell-1}$ possible value of $y'$ such that it can form a $\ell$-tuple with sum $y$ with $\ell-1$ elements already in the database. Thus we have $\abs{\cal{O}_{t-1}\Pi_{{\sf inc}}^{y,\cal{O}_{t-1}}Q\ket{\psi_t}}\leq\sqrt{\frac{(t+k_0)^{\ell-1}}{N}}$. For $\ket{x,u,w,r}\ket{D}$ in $R$, we have $\ket{D(x)}_{\dbreg_x}=\frac{(-1)^{\langle u,D(x)\rangle}}{\sqrt{N}}\ket{\bot}+\frac{1+(-1)^{\langle u,D(x)\rangle}(N-2)}{N}\ket{D(x)}+\sum_{y'\in[N]\backslash\set{D(x)}}\frac{1-(-1)^{\langle u,y'\rangle}-(-1)^{\langle u,D(x)\rangle}}{N}\ket{y'}$ after $\cal{O}_{t-1}$. Since the only case where a new value $y'$ is added has amplitude $\frac{1-(-1)^{\langle u,y'\rangle}-(-1)^{\langle u,D(x)\rangle}}{N}$ and there are at most $(t+k_0)^{\ell-1}$ possible value of $y'$ such that it can form a $\ell$-tuple with sum $y$ with $\ell-1$ elements already in the database, we have $\abs{\cal{O}_{t-1}\Pi_{{\sf inc}}^{y,\cal{O}_{t-1}}R\ket{\psi_t}}\leq\frac{3(t+k_0)^{\ell-1}}{N}$.
        \item When $t>T$, $\cal{O}_{t-1}=\stddecomp $. For $\ket{x,u,w,r}\ket{D}$ in $S$ the state is unchanged after $\cal{O}_{t-1}$. For $\ket{x,u,w,r}\ket{D}$ in $Q$, we have $\ket{D(x)}_{\dbreg_x}=\frac{1}{\sqrt{N}}\sum_{y'\in[N]}\ket{y'}$ after $\cal{O}_{t-1}$. Since only one uniform random value $y'$ is added, there are at most $(t+k_0)^{\ell-1}$ possible value of $y'$ such that it can form a $\ell$-tuple with sum $y$ with $\ell-1$ elements already in the database. Thus we have $\abs{\cal{O}_{t-1}\Pi_{{\sf inc}}^{y,\cal{O}_{t-1}}Q\ket{\psi_t}}\leq\sqrt{\frac{(t+k_0)^{\ell-1}}{N}}$. For $\ket{x,u,w,r}\ket{D}$ in $R$, we have $\ket{D(x)}_{\dbreg_x}=\frac{N-1}{N}\ket{D(x)}+\frac{1}{\sqrt{N}}\ket{\bot}-\frac{1}{N}\sum_{y'\in[N]\backslash\set{D(x)}}\ket{y'}$ after $\cal{O}_{t-1}$. Since the only case where a new value $y'$ is added has amplitude $\frac{1}{N}$ and there are at most $(t+k_0)^{\ell-1}$ possible value of $y'$ such that it can form a $\ell$-tuple with sum $y$ with $\ell-1$ elements already in the database, we have $\abs{\cal{O}_{t-1}\Pi_{{\sf inc}}^{y,\cal{O}_{t-1}}R\ket{\psi_t}}\leq\frac{(t+k_0)^{\ell-1}}{N}$.
    \end{itemize}
    Combining all cases above, we obtain the following recursion formula for $\Delta^y_{t,k}$:
    \begin{equation*}
        \Delta^y_{t,k}\leq\Delta^y_{t-1,k}+\Delta^y_{t-1,k-1}\cdot \brackets{4\sqrt{\frac{(t+k_0)^{\ell-1}}{N}}}.
    \end{equation*}
    Thus 
    \begin{equation*}
        \Delta^y_{t,k}\leq \binom{t}{k-k_\ell}\brackets{\sqrt{4\frac{(t+k_0)^{\ell-1}}{N}}}^{k-k_\ell}.
    \end{equation*}
    The probability that at some point during the algorithm there are $K$ distinct $\ell$-tuples of sum $y\in[N]$ in the database is at most
    \begin{align*}
        &\abs{\brackets{\sum_{i=0}^{2T-1}\prod_{i+1}^{j=2T-1}\brackets{\cal{O}_jU_j}\Pi_{\geq K}^{\ell,*}\cal{O}_iU_i\prod_{0}^{j=i-1}\brackets{\Pi_{<K}^{\ell,*}\cal{O}_jU_j}}\ket{\psi_{{\sf st}}}}^2\\
        \leq&\abs{\brackets{\sum_{y\in[N]}\sum_{i=0}^{2T-1}\prod_{i+1}^{j=2T-1}\brackets{\cal{O}_jU_j}\Pi_{\geq K}^{\ell,y}\cal{O}_iU_i\prod_{0}^{j=i-1}\brackets{\Pi_{<K}^{\ell,y}\cal{O}_jU_j}}\ket{\psi_{{\sf st}}}}^2\\
        \leq&N\sum_{y\in[N]}\abs{\brackets{\sum_{i=0}^{2T-1}\prod_{i+1}^{j=2T-1}\brackets{\cal{O}_jU_j}\Pi_{\geq K}^{\ell,y}\cal{O}_iU_i\prod_{0}^{j=i-1}\brackets{\Pi_{<K}^{\ell,y}\cal{O}_jU_j}}\ket{\psi_{{\sf st}}}}^2\\
        \leq&N\sum_{y\in[N]}\brackets{\Gamma^{y}_{2T,K}}^2\\
        \leq&N\abs{\brackets{\sum_{i=0}^{2T-1}\prod_{i+1}^{j=2T-1}\brackets{\cal{O}_jU_j}\Pi_{\geq K_{{\sf thd}}}^{\ell-1,*}\cal{O}_iU_i\prod_{0}^{j=i-1}\brackets{\Pi_{<K_{{\sf thd}}}^{\ell-1,*}\cal{O}_jU_j}}\ket{\psi_{{\sf st}}}}^2\\
        &+N\abs{\brackets{\sum_{\substack{c_0,c_1,\cdots,c_t\\ \sum_{i=1}^{T}\mathbb{I}[c_i>c_{i-1}]\geq \frac{K-k_\ell}{K_{{\sf thd}}}}}\prod_{0}^{j=2T-1}\brackets{\Pi_{=c_{j+1}}^{\ell,y}\cal{O}_jU_j}\Pi_{=c_0}^{\ell,y}}\ket{\psi_{{\sf st}}}}^2\\
        \leq&NO\brackets{\frac{T^2(T+k_0)^{\ell-2}}{\brackets{K'_{{\sf sol}}}^2N}}^{K'_{{\sf sol}}}+N\brackets{\Delta^y_{2T,\frac{K-k_\ell}{K_{{\sf thd}}}}}^2\\
        \leq&NO\brackets{\frac{T^2(T+k_0)^{\ell-2}}{\brackets{K'_{{\sf sol}}}^2N}}^{K'_{{\sf sol}}}+N\binom{2T}{\frac{K-k_\ell}{K_{{\sf thd}}}}^2\brackets{\frac{16(2T+k_0)^{\ell-1}}{N}}^{\frac{K-k_\ell}{K_{{\sf thd}}}}\\
        \leq&NO\brackets{\frac{T^2(T+k_0)^{\ell-2}}{\brackets{K'_{{\sf sol}}}^2N}}^{K'_{{\sf sol}}}+NO\brackets{\frac{T^2(T+k_0)^{\ell-1}}{\brackets{\frac{K-k_\ell}{K_{{\sf thd}}}}^2N}}^{\frac{K-k_\ell}{K_{{\sf thd}}}}.
    \end{align*}
    We set $K_{{\sf thd}}$ in a way that $K_{{\sf sol}}:=\frac{K-k_\ell}{K_{{\sf thd}}}$ is the only real non-negative root of the function
    \begin{equation*}
        f_{K,k_1,k_2,\cdots,k_\ell}(x)=x^\ell+\sum_{i=0}^{\ell-1}k_{\ell-i}x^i-K.
    \end{equation*}
    Since we know that $K'_{{\sf sol}}$ is the only real non-negative root of the function
    \begin{equation*}
        f_{K_{{\sf thd}},k_1,k_2,\cdots,k_{\ell-1}}(x)=x^{\ell-1}+\sum_{i=0}^{\ell-2}k_{\ell-i-1}x^i-K_{{\sf thd}}.
    \end{equation*}
    We have that $K'_{{\sf sol}}=\frac{K-k_\ell}{K_{{\sf thd}}}$. In other words, $\frac{K-k_\ell}{K_{{\sf thd}}}$ is also the only real non-negative root of the function $f_{K_{{\sf thd}},k_1,k_2,\cdots,k_{\ell-1}}(x)$. Proved by
    \begin{align*}
        &f_{K_{{\sf thd}},k_1,k_2,\cdots,k_{\ell-1}}\brackets{\frac{K-k_\ell}{K_{{\sf thd}}}}\\
        =&\brackets{\frac{K-k_\ell}{K_{{\sf thd}}}}^{\ell-1}+\sum_{i=0}^{\ell-2}k_{\ell-i-1}\brackets{\frac{K-k_\ell}{K_{{\sf thd}}}}^i-K_{{\sf thd}}\\
        =&\brackets{\frac{K_{{\sf thd}}}{K-k_\ell}}\brackets{\brackets{\frac{K-k_\ell}{K_{{\sf thd}}}}^{\ell}+\sum_{i=1}^{\ell-1}k_{\ell-i}\brackets{\frac{K-k_\ell}{K_{{\sf thd}}}}^i-K+k_\ell}\\
        =&\brackets{\frac{K_{{\sf thd}}}{K-k_\ell}}\brackets{\brackets{\frac{K-k_\ell}{K_{{\sf thd}}}}^{\ell}+\sum_{i=0}^{\ell-1}k_{\ell-i}\brackets{\frac{K-k_\ell}{K_{{\sf thd}}}}^i-K}\\
        =&\brackets{\frac{K_{{\sf thd}}}{K-k_\ell}}f_{K,k_1,k_2,\cdots,k_\ell}\brackets{\frac{K-k_\ell}{K_{{\sf thd}}}}=0.
    \end{align*}
    Finishing the computation, the probability that at some point during the algorithm there are $K$ distinct $\ell$-tuples with the same sum in the database is at most
    \begin{align*}
        &NO\brackets{\frac{T^2(T+k_0)^{\ell-2}}{\brackets{K'_{{\sf sol}}}^2N}}^{K'_{{\sf sol}}}+NO\brackets{\frac{T^2(T+k_0)^{\ell-1}}{\brackets{\frac{K-k_\ell}{K_{{\sf thd}}}}^2N}}^{\frac{K-k_\ell}{K_{{\sf thd}}}}\\
        \leq& NO\brackets{\frac{T^2(T+k_0)^{\ell-2}}{K_{{\sf sol}}^2N}}^{K_{{\sf sol}}}+NO\brackets{\frac{T^2(T+k_0)^{\ell-1}}{K_{{\sf sol}}^2N}}^{K_{{\sf sol}}}\\
        \leq& O\brackets{\frac{T^2(T+k_0)^{\ell-1}}{K_{{\sf sol}}^2N}}^{K_{{\sf sol}}}\\
    \end{align*}
    The $N$ factor can be absorbed into the big-$O$ notation because $K_{{\sf sol}}=\Omega(\log N)$. Note that since $\ell>0$ is a constant, doing induction on it preserves the $O(\cdot)$ notation.
\end{proof}
\section{\texorpdfstring{The time bound for finding $K$ $\ell$-tuples with the same sum with advice}{The time bound for finding K l-tuples with the same sum with advice}}
\label{sec:time-bound-for-finding-K-l-tuples-with-the-same-sum-with-advice}

\newcommand{\seedreg}{\mathbf{S}}
\newcommand{\memreg}{\mathbf{B}}
\newcommand{\store}{{\sf Store}}
\renewcommand{\gen}{{\sf Gen}}
\newcommand{\ancillareg}{\mathbf{A}}

\qipeng{A global todo list here: optimize some subscripts/superscripts}

Now we consider an algorithm starting with $S$-qubit advice state 
\begin{equation*}
    \ket{\psi_{\sf st}}=\frac{1}{\sqrt{N^M2^{M^\ell}}}\sum_{H,\hat{r}}\ket{\psi_{H,\hat{r}}}_{\qryreg\ansreg\auxreg}\ket{\hat{r}}_{\outreg}\ket{H}_{\dbreg},
\end{equation*} 
where $\ket{\psi_{H,\hat{r}}}_{\qryreg\ansreg\auxreg}$ is of size $S$. $\hat{r}$ is in the Hadamard basis representation and $H$ is a standard oracle representation. $\outreg$ contains $M^\ell$ qubits. Let $\hat{r}_{(x_1,x_2,\cdots,x_\ell)}$/$r_{(x_1,x_2,\cdots,x_\ell)}$ be the value on the qubit in $\outreg$ corresponding to the tuple $(x_1,x_2,\cdots,x_\ell)$. The structure of $\ket{\psi_{\sf st}}$ means that $\outreg$, under the Hadamard basis is a classical control register. We will use this fact in the next section. We also introduce another random challenge register $\seedreg$ initialized to $\ket{0}_{\seedreg}$ and a memory qubit $\memreg$ initialized to $\ket{0}_{\memreg}$.\\
For a sequence of target values $\set{y_H}_{H\in[N]^M}$ (we will just use $\set{y_H}$ later) and an algorithm $\cal{A}$:
\begin{itemize}
    \item $\cal{A}$ is an isometry that receives $S$ qubit advice $\ket{\psi_{\sf st}}$. Its purification $\cal{A}'$ is  a unitary (with access to the oracle and the output register) on $\qryreg\ansreg\auxreg\ancillareg\outreg\dbreg$ where $\ancillareg$ is the purification register of arbitrary size initialized to $\ket{0}_{\ancillareg}$.
    \item $\cal{A}'$ only queries the oracle register by $\pho$ at most $T$ times.
    \item $\cal{A}'$ only interacts with the output register by $\output$ at most $T$ times. It can interact with both parts of the output register.
\end{itemize}
\begin{remark}
    In general, the advice may not be a pure state. But later on we will see that all experiments are linear which means we can take an average when the state is mixed. This will be used in later sections.
\end{remark}
We may also consider other inputs for $\cal{A}'$ other than $\ket{\psi_{\sf st}}$ later in this section. Define
\begin{equation*}
    \rho_{\sf mixed}=\brackets{\frac{I}{2^S}}_{\qryreg\ansreg\auxreg},
\end{equation*}
\begin{align*}
    \rho_{\sf standard}=&\rho_{\sf mixed}\otimes\ket{0}_{\ancillareg}\bra{0}\otimes\ket{0}_{\outreg}\bra{0}\otimes\frac{1}{N^M}\brackets{\sum_{H}\ket{H}_{\dbreg}}\brackets{\sum_{H}\bra{H}_{\dbreg}}\otimes\ket{0}_{\seedreg}\bra{0}\otimes\ket{0}_{\memreg}\bra{0}\\
    =&\rho_{\sf mixed}\otimes\ket{0}_{\ancillareg}\bra{0}\otimes\frac{1}{N^M2^{M^\ell}}\brackets{\sum_{H,\hat{r}}\ket{\hat{r}}_{\outreg}\ket{H}_{\dbreg}}\brackets{\sum_{H,\hat{r}}\bra{\hat{r}}_{\outreg}\bra{H}_{\dbreg}}\otimes\ket{0}_{\seedreg}\bra{0}\otimes\ket{0}_{\memreg}\bra{0}
\end{align*}
and
\begin{equation*}
    \rho_{\sf compressed}=\rho_{\sf mixed}\otimes\ket{0}_{\ancillareg}\bra{0}\otimes\ket{0}_{\outreg}\bra{0}\otimes\ket{\bot,\bot,\cdots,\bot}_{\dbreg}\bra{\bot,\bot,\cdots,\bot}\otimes\ket{0}_{\seedreg}\bra{0}\otimes\ket{0}_{\memreg}\bra{0}
\end{equation*}
which are inputs that the advice is removed.

\hypertarget{step1}{\textbf{Step 1: }} We are interested in the probability $P^{(1)}_{\sf FlipTest}:=\abs{\Pi_{{\sf FlipTest}}^{\cal{A}',\set{y_H}}\brackets{\ket{\psi_{\sf st}}\ket{0}_{\ancillareg}\ket{0}_{\seedreg}\ket{0}_{\memreg}}}^2$ where $\Pi_{{\sf FlipTest}}^{\cal{A}',\set{y_H}}$ is shown in \Cref{proj:single-game}.

\protocol
{$\Pi_{{\sf FlipTest}}^{\cal{A}',\set{y_H}}$:}
{Challenge an algorithm for finding a random $\ell$-tuple.}
{proj:single-game}
{
\begin{description}
\setlength{\parskip}{0.3mm} 
\setlength{\itemsep}{0.3mm} 
\item[Domain:] $\qryreg\ansreg\auxreg\outreg\dbreg\seedreg\memreg$.
\end{description}
\begin{enumerate}
    \item Apply the unitary $\gen$ that maps $\ket{0}_{\seedreg}$ to $\sum_{(x_1,x_2,\cdots,x_\ell)\in [M]^\ell}\ket{(x_1,x_2,\cdots,x_\ell)}$.
    \item Apply unitary $\store$:
    \begin{equation*}
        \ket{r}_{\outreg}\ket{(x_1,x_2,\cdots,x_\ell)}_{\seedreg}\ket{b}_{\memreg}\rightarrow\ket{r}_{\outreg}\ket{(x_1,x_2,\cdots,x_\ell)}_{\seedreg}\ket{b\oplus r_{(x_1,x_2,\cdots,x_\ell)}}_{\memreg}.
    \end{equation*}
    Which is a unitary that copies the output bit (in computational basis) on tuple $(x_1,x_2,\cdots,x_\ell)$ into the memory register.
    \item Run $\cal{A}'$.
    \item Apply unitary $\store^\dagger$.
    \item Project the state onto the state where
    \begin{enumerate}
        \item The value on $\memreg$ is $1$ under computational basis.
        \item The $\ell$-tuple on $\seedreg$ is valid and has sum $y_H$ where $H$ is the oracle on $\dbreg$.
    \end{enumerate}
    \item Apply unitary $\store$ again.
    \item Run $\cal{A}'^\dagger$ (run $\cal{A}'$ in the reverse order and change every unitary into its conjugate)
    \item Apply $\gen^\dagger\store^\dagger$.
\end{enumerate}
}


The idea of this experiment is that we want to know the probability that the algorithm $\cal{A}'$ recognizes a random $\ell$-tuple. A random $\ell$-tuple is stored in the register $\seedreg$ and the projector projects to the subspace where the bit corresponding to that $\ell$-tuple is flipped after running $\cal{A}'$ and that $\ell$-tuples has sum $y_H$. Ideally, we want to prove that the probability that any algorithm $\cal{A}'$ on any starting state $\ket{\psi_{\sf st}}$ recognizing a random $\ell$-tuple of sum $y_H$ is extremely small when $T$ is small. For example, if an algorithm $\cal{A}'$ can only output $K$ valid $\ell$-tuples of sum $y_H$, its success probability in ${\sf FlipTest}$ is just $\frac{K}{M^\ell}$. From the last section, we learn that for any uniform algorithm with $T$ queries, the chance of finding too many $\ell$-tuples of the same sum is very small. Even if we have a $S$-qubit advice, it should only increase the success probability by $2^S$ multiplicatively, which is not enough. The above reasoning seems to work out; however, in fact it takes many steps to implement it. Here is a road map that provides an intuition of our proof with links on these steps and connections between these steps \Cref{fig:roadmap} .

\begin{figure}[p]
\centering 
\scalebox{1
}{
\begin{tikzpicture}[
    primitive/.style={rectangle, draw, minimum width=3cm, minimum height=1cm, 
                      align=center, fill=blue!10},
    strong/.style={primitive, fill=red!10},
    assumptions/.style={rectangle, draw,  minimum width=3cm, minimum height=1cm, 
                      align=center, fill=gray!10},
    implies/.style={-{Stealth[length=4mm]},double, double distance=2pt},
    equivalent/.style={<->{Stealth[length=3mm]}, thick, double, double distance=2pt},
    dashed_implies/.style={-{Stealth[length=3mm]}, thick, dashed},
    node distance=2.5cm and 3cm
]

\node[primitive,draw=none] (OneFoldedNonUniformGame) {\hyperlink{step1}{\textbf{Step 1: }}The probability that $\cal{A}'$ on any \\starting state $\ket{\psi_{\sf st}}$ recognizing a \\random $\ell$-tuple  of sum $y_H$ which is \\$P^{(1)}_{\sf FlipTest}:=\abs{\Pi_{{\sf FlipTest}}^{\cal{A}',\set{y_H}}\brackets{\ket{\psi_{\sf st}}\ket{0}_{\ancillareg}\ket{0}_{\seedreg}\ket{0}_{\memreg}}}^2$};
\node[primitive, draw=none,right=1.5cm of OneFoldedNonUniformGame] (KFoldedNonUniformGame) {\hyperlink{step2}{\textbf{Step 2: }}The probability that $\cal{A}'$ on any starting \\state $\ket{\psi_{\sf st}}$ wins  \cite{marriott2005quantum}-form alternating \\game. The game is defined as an alternation\\ between $\Pi_{{\sf FlipTest}}^{\cal{A}',\set{y_H}}$ and $\Pi_{\sf reset}$ which reset\\ randomness that is needed in the challenge.};
\node[primitive, draw=none,below=1.2cm of KFoldedNonUniformGame] (KFoldedUniformGame) {\hyperlink{step3}{\textbf{Step 3: }}The probability that $\cal{A}'$ on maximally \\mixed state $\rho_{\sf standard}$ wins  \cite{marriott2005quantum}-form \\alternating experiment. The advice is \\now removed and a factor of $2^S$ is \\multiplied but absorbed by the exponent.};
\node[primitive, draw=none, left=1.5cm of KFoldedUniformGame] (KFoldedUniformGameOnlyLastRound) {\hyperlink{step4}{\textbf{Step 4: }}The probability that condition on $\cal{A}'$ \\on maximally mixed state $\rho_{\sf standard}$ \\winning all but the last round of the \\ \cite{marriott2005quantum}-form alternating experiment, it also \\wins the last round. Which is the probability \\that condition on ${\sf Expt}_{\sf FlipTest}^{\cal{A}',\set{y_H}}(S)$ outputting \\${\sf PredicatePassed}$, also outputs ${\sf Accepted}$.};
\node[primitive, draw=none, below=0.75cm of KFoldedUniformGameOnlyLastRound] (KFoldedUniformGameLastRoundProb) {\hyperlink{step5}{\textbf{Step 5: }}The probability that condition \\on $\cal{A}'$ winning all but the last round \\of the alternating experiment, it also \\succeeds in finding $K$ $\ell$-tuples\\ with sum $y_H$ in the last round.};
\node[primitive, draw=none, right=1.5cm of KFoldedUniformGameLastRoundProb] (KFoldedUniformGameOnlyLastRoundCompressed) {\hyperlink{step6}{\textbf{Step 6: }}The probability that condition on $\cal{A}'$ \\winning all but the last round of the \\alternating experiment, it also finds \\$K$ $\ell$-tuples with sum $y_H$ in the last round. \\(in the compressed oracle model)\hyperlink{step7}{\textbf{Step 7: }}It \\is at most the sum of two probabilities below.};
\node[primitive, draw=none,below = 0.75cm of KFoldedUniformGameOnlyLastRoundCompressed,xshift=-9cm] (BadStructure) {\hyperlink{step8a}{\textbf{Step 8(a): }}The probability that condition on \\$\cal{A}'$ winning all but the last round of the \\alternating experiment, the state before the \\last round has a bad structure where \\there are too many $i$-tuples with the \\same sum in the database for some $i$.\\ This is bounded by \Cref{lem:bound_with_bad_structure}.};
\node[primitive, draw=none,below = 1cm of KFoldedUniformGameOnlyLastRoundCompressed] (GoodStructureAndSucceed) {\hyperlink{step8b}{\textbf{Step 8(b): }}The probability that condition on \\$\cal{A}'$ winning all but the last round of the \\alternating experiment and a good structure \\on the resulting state, it also succeeds\\ in outputting $K$-tuples with sum $y_H$.\\ This is bounded by \Cref{lem:bound_with_good_structure}.};

\draw[implies] (KFoldedNonUniformGame) -- (OneFoldedNonUniformGame) node[midway,above,font=\small]{\Cref{lem:single_game_bound_1}};
\draw[implies] (KFoldedUniformGame) -- (KFoldedNonUniformGame) node[midway, right, font=\small] {\Cref{lem:single_game_bound_2}};
\draw[implies] (KFoldedUniformGameOnlyLastRound) -- (KFoldedUniformGame) node[midway, above, font=\small] {\Cref{lem:alternative_game_monotone}};
\draw[implies] (KFoldedUniformGameLastRoundProb) -- (KFoldedUniformGameOnlyLastRound) node[midway, right, font=\small] {\Cref{lem:switch_between_expectation_and_tail_bound}};
\draw[implies] (KFoldedUniformGameOnlyLastRoundCompressed) -- (KFoldedUniformGameLastRoundProb) node[midway, above, font=\small] {\Cref{lem:compressed-oracle-equivalent-to-normal-oracle}};
\draw[implies] (BadStructure) -- (KFoldedUniformGameOnlyLastRoundCompressed) node[midway, right, font=\small] {\quad\quad\Cref{lem:split_good_bad_structure}};
\draw[implies] (GoodStructureAndSucceed) -- (KFoldedUniformGameOnlyLastRoundCompressed) node[midway, right, font=\small] { };

\end{tikzpicture}
}
\caption{An overview of this section.}
\label{fig:roadmap}
\end{figure}
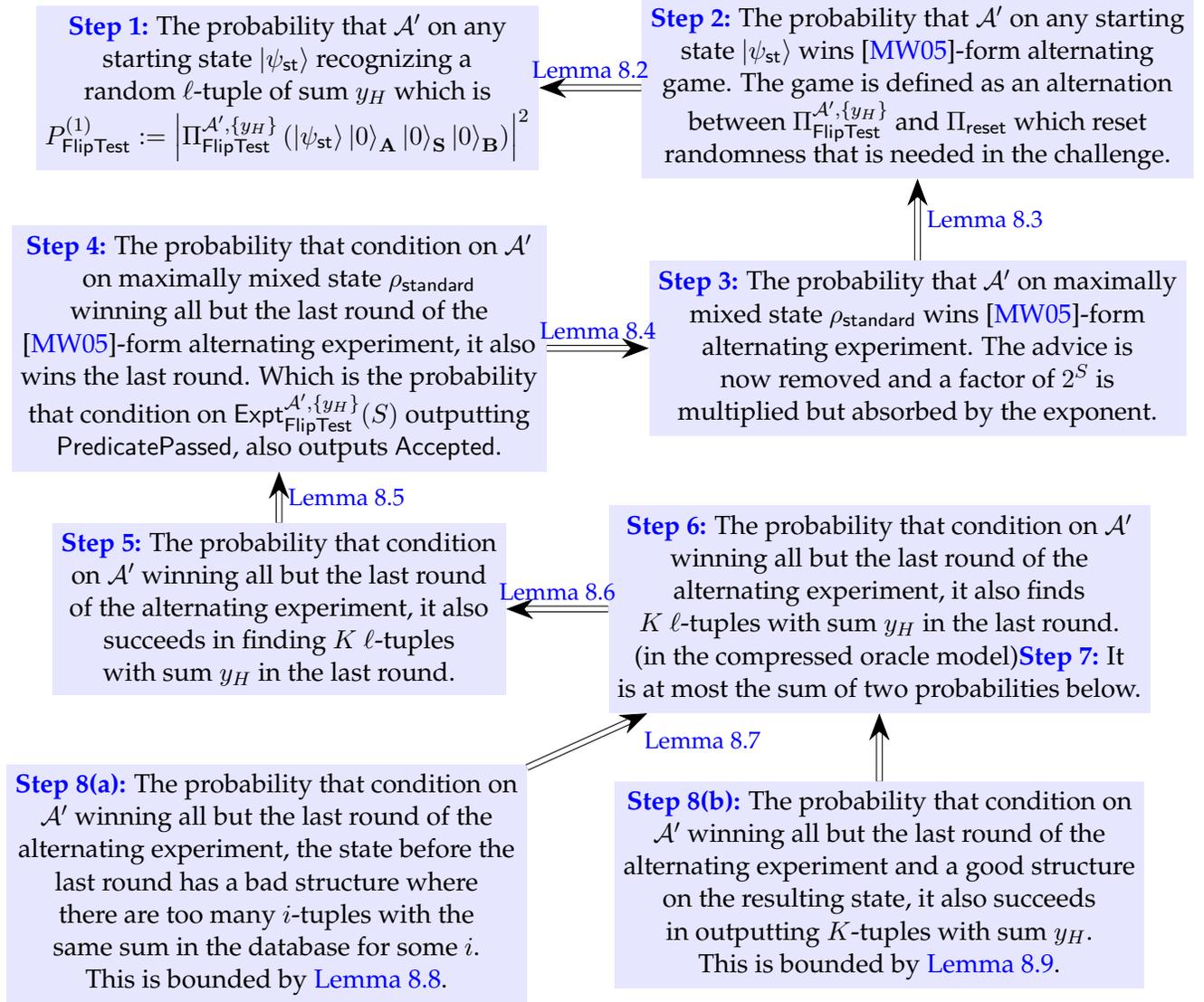 



\hypertarget{step2}{\textbf{Step 2: }} We first bound the probability $P^{(1)}_{\sf FlipTest}$ by the success probability of an alternating experiment. We introduce a \cite{marriott2005quantum}-form alternating experiment and prove the following result. \Cref{lem:single_game_bound_1}, \Cref{lem:single_game_bound_2} and \Cref{lem:alternative_game_monotone} are lemmas from \cite{liu2022nonuniformityquantumadvicequantum}, we include the proof in \Cref{sec:appendix_lemmas_alternating_game} for completeness.

\begin{lemma}[\cite{liu2022nonuniformityquantumadvicequantum}, Lemma 6.5]\label{lem:single_game_bound_1}
    Let $P^{(k)}_{\sf FlipTest}:=\abs{\brackets{\Pi_{{\sf FlipTest}}^{\cal{A}',\set{y_H}}\Pi_{\sf reset}}^k\brackets{\ket{\psi_{\sf st}}\ket{0}_{\ancillareg}\ket{0}_{\seedreg}\ket{0}_{\memreg}}}^2$ where $\Pi_{\sf reset}:=\ket{0}_{\seedreg}\bra{0}\otimes\ket{0}_{\memreg}\bra{0}$. We have 
    \begin{equation*}
        P^{(1)}_{\sf FlipTest}\leq\brackets{P^{(k)}_{\sf FlipTest}}^{\frac{1}{2k-1}}.
    \end{equation*}
\end{lemma}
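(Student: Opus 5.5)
The plan is to follow the standard Marriott--Watrous style argument. Write $\Pi_1:=\Pi_{{\sf FlipTest}}^{\cal{A}',\set{y_H}}$ and $\Pi_0:=\Pi_{\sf reset}$, and let $\ket{\phi}:=\ket{\psi_{\sf st}}\ket{0}_{\ancillareg}\ket{0}_{\seedreg}\ket{0}_{\memreg}$. Two structural facts come first.

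\begin{itemize}
\item $\Pi_1$ is a genuine orthogonal projector. It has the form $W^\dagger P W$, where $W=\cal{A}'\store\,\gen$ (up to the placement of $\store^\dagger$) is unitary and $P$ is the projector onto ``$\memreg=1$ and the tuple in $\seedreg$ is valid with sum $y_H$''. This $P$ is diagonal in a basis compatible with the one $\store$ and the oracle register act in, so $\Pi_1$ is Hermitian and idempotent.
\item $\Pi_0\ket{\phi}=\ket{\phi}$.
\end{itemize}

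Hence $P^{(1)}_{\sf FlipTest}=\norm{\Pi_1\Pi_0\ket{\phi}}^2=\bra{\phi}\Pi_0\Pi_1\Pi_0\ket{\phi}$, and $P^{(k)}_{\sf FlipTest}=\norm{(\Pi_1\Pi_0)^k\ket{\phi}}^2=\bra{\phi}(\Pi_0\Pi_1\Pi_0)^{k'}\ket{\phi}$ for an appropriate power $k'$. Concretely, $(\Pi_1\Pi_0)^{k\dagger}(\Pi_1\Pi_0)^k=(\Pi_0\Pi_1\Pi_0)^{2k-1}$, using $\Pi_0^2=\Pi_0$ and $\Pi_1^2=\Pi_1$ to collapse adjacent factors.

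Next, set $A:=\Pi_0\Pi_1\Pi_0$. This is a positive semidefinite operator with $0\preceq A\preceq I$. Take its spectral decomposition $A=\sum_j\lambda_j\ket{e_j}\bra{e_j}$ and put $c_j:=|\braket{e_j}{\phi}|^2$, a probability distribution since $\ket{\phi}$ is a unit vector. Then
\[
P^{(1)}_{\sf FlipTest}=\sum_j c_j\lambda_j, \qquad P^{(k)}_{\sf FlipTest}=\sum_j c_j\lambda_j^{2k-1}.
\]
Apply \Cref{lem:jensen_inequal} with $p_j=\lambda_j\ge 0$ and $g=2k-1$ (the case $\sum_j c_j\lambda_j=0$ is trivial). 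This gives $\sum_j c_j\lambda_j\le(\sum_j c_j\lambda_j^{2k-1})^{1/(2k-1)}$, which is exactly the claim.

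The only step needing care is the verification that $\Pi_1$ is a projector. In the procedure as written, step 6 applies $\store$ and step 8 applies $\gen^\dagger\store^\dagger$, so $\store$ and $\store^\dagger$ appear in a slightly nonsymmetric pattern. I would rewrite the whole procedure as $W^\dagger P W$ with $W=\store^\dagger\cal{A}'\store\,\gen$. This requires checking that the stated sequence of operations equals this conjugation. That check uses the fact that $\store$ is an involution and that $P$ commutes with the relevant $\store$ on the registers it acts on.

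If the literal sequence is not exactly of the form $W^\dagger P W$, I would instead redefine $\Pi_1$ in this symmetric form, matching the cited \cite{liu2022nonuniformityquantumadvicequantum}. The remainder of the argument only uses that $\Pi_0$ and $\Pi_1$ are projectors and that $\ket{\phi}$ lies in the range of $\Pi_0$.
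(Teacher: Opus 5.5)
Your proof is correct and is essentially the paper's argument. The paper decomposes along the Jordan basis of $\Pi_{{\sf FlipTest}}^{\cal{A}',\set{y_H}}$ and $\Pi_{\sf reset}$ and applies \Cref{lem:jensen_inequal} to $\sum_i |c_i|^2 p_i^{2k-1}$ with $p_i=|\braket{u_i}{v_i}|^2$; these $p_i$ are exactly the eigenvalues of your $\Pi_0\Pi_1\Pi_0$ on the range of $\Pi_0$, so your spectral-decomposition version is the same computation in a slightly more self-contained form. Your worry about $\Pi_1$ being a projector is also unfounded: steps 1--8 compose literally to $W^\dagger P W$ with $W={\sf Store}^\dagger\cal{A}'{\sf Store}\,{\sf Gen}$, so neither the involution property of ${\sf Store}$ nor any commutation argument is needed.
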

\hypertarget{step3}{\textbf{Step 3: }} Now we try to substitute the advice state with a $S$-qubit maximally mixed state and absorb the $2^S$ factor by the exponent.
\begin{lemma}\label{lem:single_game_bound_2}
    Let $P^{{\sf uniform},(k)}_{\sf FlipTest}=\trace{\brackets{\Pi_{{\sf FlipTest}}^{\cal{A}',\set{y_H}}\Pi_{\sf reset}}^k\rho_{\sf standard}\brackets{\Pi_{\sf reset}\Pi_{{\sf FlipTest}}^{\cal{A}',\set{y_H}}}^k}$. We have
    \begin{equation*}
        P^{(1)}_{\sf FlipTest}\leq2\brackets{P^{{\sf uniform},(S)}_{\sf FlipTest}}^{\frac{1}{2S-1}}.
    \end{equation*}
\end{lemma}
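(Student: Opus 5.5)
We will prove \Cref{lem:single_game_bound_2} by chaining \Cref{lem:single_game_bound_1} with $k=S$ and then replacing the advice by the maximally mixed state, with the resulting $2^S$ loss absorbed by the exponent $\frac{1}{2S-1}$.

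The first step is \Cref{lem:single_game_bound_1} with $k=S$, which gives
\[
P^{(1)}_{\sf FlipTest}\leq \brackets{P^{(S)}_{\sf FlipTest}}^{\frac{1}{2S-1}},
\]
where $P^{(S)}_{\sf FlipTest}$ is the squared norm of $\brackets{\Pi_{{\sf FlipTest}}^{\cal{A}',\set{y_H}}\Pi_{\sf reset}}^S$ applied to $\ket{\psi_{\sf st}}\ket{0}_{\ancillareg}\ket{0}_{\seedreg}\ket{0}_{\memreg}$. We therefore only need to compare $P^{(S)}_{\sf FlipTest}$ on the advice state with the same quantity on $\rho_{\sf standard}$.

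The second step is advice removal. Write $\ket{\psi_{\sf st}}$ as a uniform superposition over $(H,\hat r)$ of $\ket{\psi_{H,\hat r}}\ket{\hat r}\ket{H}$. We will argue that the operator $W:=\brackets{\Pi_{{\sf FlipTest}}^{\cal{A}',\set{y_H}}\Pi_{\sf reset}}^S$ is block diagonal in $(H,\hat r)$. The register $\dbreg$ is accessed only through $\pho$, which is diagonal in the $H$ basis; $\outreg$ is accessed only by ${\sf Output}$ (an $X$-type flip, hence diagonal in the Hadamard basis) and by $\store$, which reads $r$ in the computational basis. The delicate point is exactly $\store$: $\store$ and $\store^\dagger$ appear in conjugating pairs around the projection, so we will check that the composite map acts in a way that is classical-controlled on $\hat r$. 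This matches the remark in the paper that $\outreg$ in the Hadamard basis is a classical control register. Granting this, we get
\[
P^{(S)}_{\sf FlipTest}=\E_{H,\hat r}\norm{W_{H,\hat r}\ket{\psi_{H,\hat r}}\ket{0}}^2\le \E_{H,\hat r}\lambda_{\max}\brackets{W_{H,\hat r}^\dagger W_{H,\hat r}}.
\]
Since $\ket{\psi_{H,\hat r}}$ lives on $S$ qubits,
\[
\lambda_{\max}(X)\le \Tr(X)=2^S\Tr\brackets{X\,\frac{I}{2^S}}
\]
for PSD $X$. Hence $P^{(S)}_{\sf FlipTest}\le 2^S P^{{\sf uniform},(S)}_{\sf FlipTest}$, because $\rho_{\sf standard}$ is exactly this mixture of $I/2^S$ with the uniform $(H,\hat r)$. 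Linearity handles mixed advice, as the remark allows.

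Combining the two steps,
\[
P^{(1)}_{\sf FlipTest}\le 2^{\frac{S}{2S-1}}\brackets{P^{{\sf uniform},(S)}_{\sf FlipTest}}^{\frac{1}{2S-1}}\le 2\brackets{P^{{\sf uniform},(S)}_{\sf FlipTest}}^{\frac{1}{2S-1}},
\]
since $\frac{S}{2S-1}\le 1$ for $S\ge1$. The main obstacle is the block-diagonality (commutation) check for $\store$ against the Hadamard-basis control on $\outreg$. If it fails, the fallback is to purify the entire advice (including its $\outreg,\dbreg$ parts) and bound $\lambda_{\max}\le\Tr$ on the joint space. That costs a larger factor, so we would first try to keep the control structure intact by treating the pair $\store^\dagger\cdot\Pi\cdot\store$ as a single operator.
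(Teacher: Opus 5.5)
Your route is the paper's: \Cref{lem:single_game_bound_1} with $k=S$, a $2^S$ loss from replacing the advice by $I/2^S$, and $2^{S/(2S-1)}\le 2$. The paper merely asserts the $2^S$ step, and its displayed chain should read $(2^{-S}P^{(S)}_{\mathsf{FlipTest}})^{1/(2S-1)}\ge \frac12 (P^{(S)}_{\mathsf{FlipTest}})^{1/(2S-1)}\ge\frac12 P^{(1)}_{\mathsf{FlipTest}}$, which is exactly your chain.

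The block-diagonality you flag does hold, but not by isolating $\mathsf{Store}^\dagger\,\Pi\,\mathsf{Store}$: that operator equals $\frac12(I-Z_{\mathbf{R}_i}Z_{\mathbf{B}})$ times the sum-$y_H$ projector, which flips $\hat r_i$ (here $\mathbf{R}_i$ is the output entry of the challenged tuple $i$ held in $\mathbf{S}$). Instead, on inputs with $\mathbf{B}$ in $\ket{0}$, which $\Pi_{\mathsf{reset}}$ guarantees, $\mathsf{Store}^\dagger\mathcal{A}'\mathsf{Store}=\mathsf{CNOT}_{\mathbf{B}\to\mathbf{R}_i}\,\mathsf{SWAP}\,\mathcal{A}'\,\mathsf{SWAP}$, because $\mathcal{A}'$ touches $\mathbf{R}_i$ only through $X$-type flips. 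Since $\mathsf{CNOT}_{\mathbf{B}\to\mathbf{R}_i}$ commutes with $\ket{1}\bra{1}_{\mathbf{B}}$, each $\Pi_{\mathsf{FlipTest}}\Pi_{\mathsf{reset}}$ just parks $\hat r_i$ in $\mathbf{B}$ and restores it, the same trick as in \Cref{claim:game-equivalence}, and is therefore block diagonal in $(H,\hat r)$.
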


\hypertarget{step4}{\textbf{Step 4: }} Now we try to bound $P^{{\sf uniform},(S)}_{\sf FlipTest}$ and link it to a new experiment. Before that, we first show a ``monotone'' theorem on each alternation.
\begin{lemma}[\cite{liu2022nonuniformityquantumadvicequantum}, Corollary 6.10]\label{lem:alternative_game_monotone}
    For $k\geq 2$ we have
    \begin{equation*}
        \frac{P^{{\sf uniform},(k)}_{\sf FlipTest}}{P^{{\sf uniform},(k-1)}_{\sf FlipTest}}\leq \frac{P^{{\sf uniform},(k+1)}_{\sf FlipTest}}{P^{{\sf uniform},(k)}_{\sf FlipTest}}.
    \end{equation*}
\end{lemma}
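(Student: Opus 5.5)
The plan is to reduce all three quantities $P^{{\sf uniform},(k-1)}_{\sf FlipTest}, P^{{\sf uniform},(k)}_{\sf FlipTest}, P^{{\sf uniform},(k+1)}_{\sf FlipTest}$ to moments of a single positive semidefinite operator. Log-convexity then follows from Cauchy--Schwarz.

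Write $\Pi_F := \Pi_{{\sf FlipTest}}^{\cal{A}',\set{y_H}}$ and $\Pi_R := \Pi_{\sf reset}$. First I check that $\Pi_F$ is a genuine orthogonal projector. It has the form $W^\dagger \Pi W$ with $W = \cal{A}'\,\store^\dagger\cdots\gen$ unitary and $\Pi$ the projector in step 5, because step 6 onward is exactly the inverse of steps 1--4. Next, $\rho_{\sf standard}$ has $\seedreg\memreg$ in $\ket{0}\ket{0}$, so $\Pi_R\rho_{\sf standard}\Pi_R=\rho_{\sf standard}$. Using $\Pi_F^2=\Pi_F$ and $\Pi_R^2=\Pi_R$ and collapsing the middle, I get
\begin{equation*}
    (\Pi_R\Pi_F)^k(\Pi_F\Pi_R)^k = (\Pi_R\Pi_F\Pi_R)^{2k-1}.
\end{equation*}
For example, $k=1$ gives $\Pi_R\Pi_F\Pi_R$, and $k=2$ gives $\Pi_R\Pi_F\Pi_R\Pi_F\Pi_R\Pi_F\Pi_R=(\Pi_R\Pi_F\Pi_R)^3$. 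With $A:=\Pi_R\Pi_F\Pi_R$ and cyclicity of trace, this yields $P^{{\sf uniform},(k)}_{\sf FlipTest}=\trace{A^{2k-1}\rho_{\sf standard}}$.

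Since $A$ is positive semidefinite with $0\preceq A\preceq I$, I spectrally decompose $A=\sum_i\lambda_i\ketbra{v_i}{v_i}$ with $\lambda_i\in[0,1]$. Setting $c_i:=\bra{v_i}\rho_{\sf standard}\ket{v_i}\geq 0$ gives $a_k:=P^{{\sf uniform},(k)}_{\sf FlipTest}=\sum_i c_i\lambda_i^{2k-1}$. The exponent $2k-1$ is the average of $2k-3$ and $2k+1$, and $2k-3\geq 1$ because $k\geq 2$. Applying Cauchy--Schwarz to the vectors $(\sqrt{c_i}\lambda_i^{(2k-3)/2})_i$ and $(\sqrt{c_i}\lambda_i^{(2k+1)/2})_i$ therefore gives
\begin{equation*}
    a_k^2\leq a_{k-1}a_{k+1},
\end{equation*}
which is the claimed inequality after dividing by $a_{k-1}a_k$.

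The only real care needed is at two points. The first is verifying the operator identity above, in particular that $\Pi_F$ is idempotent and Hermitian. The second is the degenerate case where some $a_j=0$. If $a_{k-1}=0$, then $c_i\lambda_i=0$ for every $i$ (since $\lambda_i^{2k-3}>0$ whenever $\lambda_i>0$), so all higher moments also vanish. I will interpret the ratios under the convention that they are only compared when the denominators are nonzero, which is how the lemma is used downstream.
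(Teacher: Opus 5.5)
Your argument is correct and is essentially the paper's: the appendix proof also writes $P^{{\sf uniform},(k)}_{\sf FlipTest}$ as an odd moment $\sum_i c_i p_i^{2k-1}$. It does so via the Jordan decomposition of $\Pi_{{\sf FlipTest}}^{\cal{A}',\set{y_H}}$ and $\Pi_{\sf reset}$, which is the same decomposition as your spectral decomposition of $\Pi_R\Pi_F\Pi_R$ restricted to the image of $\Pi_{\sf reset}$, and its variance-nonnegativity step is exactly your Cauchy--Schwarz inequality $a_k^2\le a_{k-1}a_{k+1}$. The remaining difference is cosmetic: you handle the mixed state $\rho_{\sf standard}$ directly through $c_i=\bra{v_i}\rho_{\sf standard}\ket{v_i}$, whereas the paper computes with a pure state and relies on linearity.
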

From this theorem, we can first bound $P^{{\sf uniform},(S)}_{\sf FlipTest}$ in the following way:
\begin{equation*}
    P^{{\sf uniform},(S)}_{\sf FlipTest}=\prod_{k=1}^{S}\brackets{\frac{P^{{\sf uniform},(k)}_{\sf FlipTest}}{P^{{\sf uniform},(k-1)}_{\sf FlipTest}}}\leq\brackets{\frac{P^{{\sf uniform},(S)}_{\sf FlipTest}}{P^{{\sf uniform},(S-1)}_{\sf FlipTest}}}^S.
\end{equation*}
Define $\Pi_{\sf alter}^{(k)}=\brackets{\Pi_{\sf reset}\Pi_{{\sf FlipTest}}^{\cal{A}',\set{y_H}}}^{k}$. Define the following experiment \Cref{fig:find_random_tuple_with_predicate}. We can notice that 
\begin{align*}
    \frac{P^{{\sf uniform},(S)}_{\sf FlipTest}}{P^{{\sf uniform},(S-1)}_{\sf FlipTest}}=&\frac{\trace{\brackets{\Pi_{{\sf FlipTest}}^{\cal{A}',\set{y_H}}\Pi_{\sf reset}}^S\rho_{\sf standard}\brackets{\Pi_{\sf reset}\Pi_{{\sf FlipTest}}^{\cal{A}',\set{y_H}}}^S}}{\trace{\brackets{\Pi_{{\sf FlipTest}}^{\cal{A}',\set{y_H}}\Pi_{\sf reset}}^{S-1}\rho_{\sf standard}\brackets{\Pi_{\sf reset}\Pi_{{\sf FlipTest}}^{\cal{A}',\set{y_H}}}^{S-1}}}\\
    \leq&\frac{\trace{\Pi_{{\sf FlipTest}}^{\cal{A}',\set{y_H}}\Pi_{\sf alter}^{(S-1)}\rho_{\sf standard}\Pi_{\sf alter}^{(S-1)}\Pi_{{\sf FlipTest}}^{\cal{A}',\set{y_H}}}}{\trace{\Pi_{\sf alter}^{(S-1)}\rho_{\sf standard}\Pi_{\sf alter}^{(S-1)}}}\\
    =&\frac{\Pr\left[\set{\sf PredicatePassed, Accepted}\subseteq{\sf Expt}_{\sf FlipTest}^{\cal{A}',\set{y_H}}(S)\right]}{\Pr\left[{\sf PredicatePassed}\in{\sf Expt}_{\sf FlipTest}^{\cal{A}',\set{y_H}}(S)\right]}.
\end{align*}
\protocol
{${\sf Expt}_{\sf FlipTest}^{\cal{A}',\set{y_H}}$:}
{Challenge an algorithm for finding a random $\ell$-tuple with predicate.}
{fig:find_random_tuple_with_predicate}
{
\begin{description}
\setlength{\parskip}{0.3mm} 
\setlength{\itemsep}{0.3mm} 
\item[Parameters:] A round number $k$.
\item[Setup:] The algorithm starts with the empty state $\rho_{\sf standard}$.
\end{description}
\begin{enumerate}
    \item Apply a 2-outcome measurement $(\Pi_{\sf alter}^{(k-1)},I-\Pi_{\sf alter}^{(k-1)})$ to the whole state. 
    \begin{itemize}
        \item If the outcome state is in $\Pi_{\sf alter}^{(k-1)}$, let $R=\set{{\sf PredicatePassed}}$.
        \item If the outcome state is in $I-\Pi_{\sf alter}^{(k-1)}$, let $R=\set{{\sf PredicateFailed}}$.
    \end{itemize}
    \item Apply a 2-outcome measurement $(\Pi_{{\sf FlipTest}}^{\cal{A}',\set{y_H}},I-\Pi_{{\sf FlipTest}}^{\cal{A}',\set{y_H}})$.
    \begin{itemize}
        \item If the result state is in $\Pi_{{\sf FlipTest}}^{\cal{A}',\set{y_H}}$, output $R\cup\set{\sf Accepted}$.
        \item If the result state is in $I-\Pi_{{\sf FlipTest}}^{\cal{A}',\set{y_H}}$, output $R\cup\set{\sf Rejected}$.
    \end{itemize}
\end{enumerate}
}
\hypertarget{step5}{\textbf{Step 5: }}Now we turn to another new experiment \Cref{fig:standard_oracle_success_prob_estimation}. Instead of examining whether a single $\ell$-tuple is recognized, this experiment now cares about the probability of $\cal{A}'$ outputting at least $K$ $\ell$-tuples of the same sum. Here we say that $\cal{A}'$ outputs a $\ell$-tuple $(x_1,x_2,\cdots,x_\ell)$ if $r_{(x_1,x_2,\cdots,x_\ell)}$ is non-zero.
\protocol
{${\sf Expt}_{{\sf Standard}}^{\cal{A}'}$:}
{Verifying whether an algorithm succeeds in finding $K$ distinct $\ell$-tuples the same sum with a predicate.}
{fig:standard_oracle_success_prob_estimation}
{
\begin{description}
\setlength{\parskip}{0.3mm} 
\setlength{\itemsep}{0.3mm} 
\item[Parameters:] A projector $\Pi_{\sf predicate}$ and a threshold $K$.
\item[Setup:] The algorithm starts with the state $\rho_{\sf standard}$.
\end{description}
\begin{enumerate}
    \item Apply a 2-outcome measurement $(\Pi_{\sf predicate},I-\Pi_{\sf predicate})$ to the whole state. 
    \begin{itemize}
        \item If the outcome state is in $\Pi_{\sf predicate}$, let $R=\set{{\sf PredicatePassed}}$.
        \item If the outcome state is in $I-\Pi_{\sf predicate}$, let  $R=\set{{\sf PredicateFailed}}$.
    \end{itemize}
    \item For every $y\in[N]$, count how many distinct $\ell$-tuples outputted by the prover have sum $y$ for every $y$. If there exists $y\in[N]$ such that this number is at least $K$ then let $R\leftarrow R\cup\set{{\sf Accept0}}$. Otherwise, let $R\leftarrow R\cup\set{{\sf Rejected\_0}}$.
    \item Run the second stage of the algorithm $\cal{A}'$. All oracle queries use $\pho$, interacting with the oracle register $H$.
    \item For every $y\in[N]$, count how many distinct $\ell$-tuples outputted by the prover have sum $y$ for every $y$. If there exists $y\in[N]$ such that this number is at least $K$ then output $R\cup\set{{\sf Accept1}}$. Otherwise, output $R\cup\set{{\sf Rejected\_1}}$.
\end{enumerate}
}
Intuitively, ${\sf PredicatePassed}$/${\sf PredicateFailed}$ in the output of ${\sf Expt}_{{\sf Standard}}^{\cal{A}'}$ means that the algorithm passes/fails in the predicate test and ${\sf Accepted}$/${\sf Rejected}$ in the output of ${\sf Expt}_{{\sf Standard}}^{\cal{A}'}$ means the algorithm is/isn't able to find $K$ $\ell$-tuples of the same sum. 
\begin{lemma}\label{lem:switch_between_expectation_and_tail_bound}
    For any $k\geq 1$ and $K$ we have
    \begin{align*}
        &\Pr\left[{\sf PredicatePassed}\in{\sf Expt}_{\sf FlipTest}^{\cal{A}',\set{y_H}}(k)\right]\\
        =&\Pr\left[{\sf PredicatePassed}\in{\sf Expt}_{{\sf Standard}}^{\cal{A}'}\brackets{\Pi_{\sf alter}^{(k-1)},K}\right]
    \end{align*}
    and
    \begin{align*}
        &\Pr\left[\set{\sf PredicatePassed, Accepted}\subseteq{\sf Expt}_{\sf FlipTest}^{\cal{A}',\set{y_H}}(k)\right]\\
        \leq&\frac{8K}{M^\ell}\Pr\left[{\sf PredicatePassed}\in{\sf Expt}_{{\sf Standard}}^{\cal{A}'}\brackets{\Pi_{\sf alter}^{(k-1)},K}\right]\\
        &+4\Pr\left[\set{{\sf PredicatePassed,Accept0}}\subseteq{\sf Expt}_{{\sf Standard}}^{\cal{A}'}\brackets{\Pi_{\sf alter}^{(k-1)},K}\right]\\
        &+4\Pr\left[\set{{\sf PredicatePassed,Accept1}}\subseteq{\sf Expt}_{{\sf Standard}}^{\cal{A}'}\brackets{\Pi_{\sf alter}^{(k-1)},K}\right].
    \end{align*}
\end{lemma}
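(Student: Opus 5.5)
The first claim (equality of the predicate-passing probabilities) will follow directly from the definitions. In both experiments the first step applies the same two-outcome measurement $(\Pi_{\sf alter}^{(k-1)},I-\Pi_{\sf alter}^{(k-1)})$ to the same initial state $\rho_{\sf standard}$. So the probability of ${\sf PredicatePassed}$ is $\trace{\Pi_{\sf alter}^{(k-1)}\rho_{\sf standard}\Pi_{\sf alter}^{(k-1)}}$ in both cases. I will write $\rho'$ for the normalized post-measurement state conditioned on ${\sf PredicatePassed}$. The second claim then reduces to bounding $\trace{\Pi_{{\sf FlipTest}}^{\cal{A}',\set{y_H}}\rho'}$ by $8K/M^\ell+4\Pr[{\sf Accept0}\mid\rho']+4\Pr[{\sf Accept1}\mid\rho']$. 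By linearity it suffices to prove this for each pure component $\ket{\phi}$ of $\rho'$. Since $\outreg$ and $\dbreg$ in $\rho_{\sf standard}$ are, respectively, a classical register in the Hadamard basis and a standard-oracle superposition, I will further decompose along the computational basis of $\outreg$ (the output bits $r$) and along $H$.

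Next I will unfold $\Pi_{\sf FlipTest}$. Because it is $W^\dagger \Pi W$ with $W=\cal{A}'\,\store\,\gen$ followed by $\store^\dagger$ (up to the $\store$ conjugations), its acceptance probability on $\ket\phi$ equals $\norm{\Pi W\ket{\phi}}^2$. Here $\Pi$ projects onto $\memreg=1$ together with the tuple on $\seedreg$ being valid with sum $y_H$. After $\gen$ and $\store$, $\memreg$ holds $r^{\rm before}_{\vec x}$. After running $\cal{A}'$ and $\store^\dagger$, $\memreg$ holds $r^{\rm before}_{\vec x}\oplus r^{\rm after}_{\vec x}$. This records whether the bit was flipped, except that $\cal{A}'$ does not act on $\seedreg,\memreg$, so $\memreg$ is just a classical function of the tuple and the $\outreg$ values. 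I will therefore write $\memreg=1$ as ``$r^{\rm before}_{\vec x}=1$ or $r^{\rm after}_{\vec x}=1$'' and use the bound $(a+b)^2\le 2a^2+2b^2$ twice. That splits the acceptance amplitude into a piece where $\vec x$ is already output before $\cal{A}'$ and a piece where it is output after. This splitting is where the constants $4$ and $8$ should come from.

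For each piece, average over the uniform random tuple $\vec x\in[M]^\ell$ from $\gen$. Conditioned on basis states $(r,H)$, the fraction of tuples that are valid, have sum $y_H$ and have $r_{\vec x}=1$ is at most $c/M^\ell$, where $c$ is the number of output tuples with sum $y_H$. I will then split on whether $c<K$ or $c\ge K$. If $c<K$, the contribution is at most $K/M^\ell$. If $c\ge K$, the event ${\sf Accept0}$ (respectively ${\sf Accept1}$) occurs, because some $y$ (namely $y_H$) has at least $K$ tuples, and the contribution is at most its probability. For the ``before'' piece the relevant measurement is on $\rho'$ itself, giving ${\sf Accept0}$. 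For the ``after'' piece it is on $\cal{A}'\rho'\cal{A}'^\dagger$, giving ${\sf Accept1}$, since step 3 of ${\sf Expt}_{\sf Standard}$ runs the same $\cal{A}'$.

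The main obstacle is the interference between $\gen$'s superposition over $\vec x$ and the algorithm's action. The $\store$ before $\cal{A}'$ entangles $\seedreg\memreg$ with $\outreg$, while $\cal{A}'$ acts coherently on $\outreg$. So the ``after'' state is not literally $\cal{A}'\rho'\cal{A}'^\dagger$ tensored with a classical tuple. I plan to handle this by noting that $\seedreg$ and $\memreg$ are untouched by $\cal{A}'$. Measuring $\seedreg$ in the computational basis therefore commutes with $\cal{A}'$, so the state of the remaining registers conditioned on $\vec x$ has the correct marginal. The entanglement with $\memreg$ only reweights terms, and this is absorbed by the factor $2$ from the triangle inequality. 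I expect this bookkeeping, rather than the counting, to be the delicate part.
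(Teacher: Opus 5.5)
Your plan is essentially the paper's proof (the $1\to0$ and $0\to1$ flip terms are relaxed to $\Pi_1$ measured on ${\sf Gen}\,\rho_{\sf pass}{\sf Gen}^\dagger$ and on $\mathcal{A}'{\sf Gen}\,\rho_{\sf pass}{\sf Gen}^\dagger\mathcal{A}'^\dagger$, then split on $\Pi_{\sf Accepted}$), and you are in fact more careful than the paper, which drops the middle $\Pi_0$ from $\Pi_1\mathcal{A}'\Pi_0$ without justification even though interference can make $\norm{\Pi_1\mathcal{A}'v}<\norm{\Pi_1\mathcal{A}'\Pi_0 v}$. Make your repair algebraic, $\Pi_1\mathcal{A}'\Pi_0=\Pi_1\mathcal{A}'-\Pi_1\mathcal{A}'\Pi_1$ (since $\Pi_0+\Pi_1$ is the ``$\vec x$ valid with sum $y_H$'' projector, which commutes with $\mathcal{A}'$), rather than basing it on the ``correct marginal'' remark, which is false because copying $r_{\vec x}$ into $\mathbf{B}$ dephases it; also note that, like the paper, you overlook that step~2 of ${\sf Expt}_{\sf Standard}$ measures ${\sf Accept0}$ before $\mathcal{A}'$ runs, which costs a factor $2$ (pinching) on the ${\sf Accept1}$ term but still fits the stated bound because the two flip terms are orthogonal and need no factor $2$, giving coefficients $5K/M^\ell$, $3$, $4$ in place of $8K/M^\ell$, $4$, $4$.
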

\begin{proof}
    Let $\rho_{\sf pass}=\Pi_{\sf alter}^{(k-1)}\rho_{\sf standard}\Pi_{\sf alter}^{(k-1)}$ (not normalized). This is the state after step 1 in ${\sf Expt}_{\sf FlipTest}^{\cal{A}',\set{y_H}}(k)$ and it is also the state after step 1 in ${\sf Expt}_{{\sf Standard}}^{\cal{A}'}$. Thus the two probabilities in the first equation are exactly the same:
    \begin{align*}
        &\Pr\left[{\sf PredicatePassed}\in{\sf Expt}_{\sf FlipTest}^{\cal{A}',\set{y_H}}(k)\right]\\
        =&\trace{\rho_{\sf pass}}\\
        =&\Pr\left[{\sf PredicatePassed}\in{\sf Expt}_{{\sf Standard}}^{\cal{A}'}\brackets{\Pi_{\sf alter}^{(k-1)},K}\right].
    \end{align*}
    Now we prove the second inequality. Define 
    \begin{equation*}
        \Pi_b=\sum_{\substack{H,(x_1,x_2,\cdots,x_\ell)\text{ is a }\\ \text{valid $\ell$-tuple of sum }y_H}}\ket{H}_{\dbreg}\bra{H}\otimes\ket{(x_1,x_2,\cdots,x_\ell)}_{\seedreg}\bra{(x_1,x_2,\cdots,x_\ell)}\otimes\ket{b}_{\outreg_{(x_1,x_2,\cdots,x_\ell)}}\bra{b}
    \end{equation*}
    and
    \begin{equation*}
        \Pi_{\sf Accepted}=\sum_{\substack{H,r\text{ such that there are $K$}\\ \text{non-zero entries on $r$ that correspond}\\ \text{to a valid $\ell$-tuple of sum $y_H$}}}\ket{H}_{\dbreg}\bra{H}\otimes\ket{r}_{\outreg}\bra{r}.
    \end{equation*}
    We have
    \begin{align*}
        &\Pr\left[\set{\sf PredicatePassed, Accepted}\subseteq{\sf Expt}_{\sf FlipTest}^{\cal{A}',\set{y_H}}(k)\right]\\
        =&2\trace{\Pi_0\cal{A}'\Pi_1\gen\cdot\rho_{\sf pass}\cdot\gen^{\dagger}\Pi_1\cal{A}'^{\dagger}\Pi_0}+2\trace{\Pi_1\cal{A}'\Pi_0\gen\cdot\rho_{\sf pass}\cdot\gen^{\dagger}\Pi_0\cal{A}'^{\dagger}\Pi_1}\\
        \leq&2\trace{\Pi_1\gen\cdot\rho_{\sf pass}\cdot\gen^{\dagger}\Pi_1}+2\trace{\Pi_1\cal{A}'\cdot\gen\cdot\rho_{\sf pass}\cdot\gen^{\dagger}\cal{A}'^{\dagger}\Pi_1}\\
        \leq&4\trace{\Pi_1\gen(I-\Pi_{\sf Accepted})\rho_{\sf pass}(I-\Pi_{\sf Accepted})\gen^{\dagger}\Pi_1}\\
        &+4\trace{\Pi_1(I-\Pi_{\sf Accepted})\cal{A}'\cdot\gen\cdot\rho_{\sf pass}\cdot\gen^{\dagger}\cal{A}'^{\dagger}(I-\Pi_{\sf Accepted})\Pi_1}\\
        &+4\trace{\Pi_1\gen\Pi_{\sf Accepted}\cdot\rho_{\sf pass}\Pi_{\sf Accepted}\gen^{\dagger}\Pi_1}\\
        &+4\trace{\Pi_1\Pi_{\sf Accepted}\cal{A}'\cdot\gen\cdot\rho_{\sf pass}\cdot\gen^{\dagger}\cal{A}'^{\dagger}\Pi_{\sf Accepted}\Pi_1}\\
        \leq&\frac{8K}{M^\ell}\trace{\rho_{\sf pass}}+4\Pr\left[\set{{\sf PredicatePassed,Accept0}}\subseteq{\sf Expt}_{{\sf Standard}}^{\cal{A}'}\brackets{\Pi_{\sf alter}^{(k-1)},K}\right]\\
        &+4\Pr\left[\set{{\sf PredicatePassed,Accept1}}\subseteq{\sf Expt}_{{\sf Standard}}^{\cal{A}'}\brackets{\Pi_{\sf alter}^{(k-1)},K}\right]\\
        \leq&\frac{8K}{M^\ell}\Pr\left[{\sf PredicatePassed}\in{\sf Expt}_{{\sf Standard}}^{\cal{A}'}\brackets{\Pi_{\sf alter}^{(k-1)},K}\right]\\
        &+4\Pr\left[\set{{\sf PredicatePassed,Accept0}}\subseteq{\sf Expt}_{{\sf Standard}}^{\cal{A}'}\brackets{\Pi_{\sf alter}^{(k-1)},K}\right]\\
        &+4\Pr\left[\set{{\sf PredicatePassed,Accept1}}\subseteq{\sf Expt}_{{\sf Standard}}^{\cal{A}'}\brackets{\Pi_{\sf alter}^{(k-1)},K}\right].
    \end{align*}
    The reason why
    \begin{equation*}
        \trace{\Pi_1\gen(I-\Pi_{\sf Accepted})\rho_{\sf pass}(I-\Pi_{\sf Accepted})\gen^{\dagger}\Pi_1}\leq\frac{K}{M^\ell}\trace{\rho_{\sf pass}}
    \end{equation*}
    and
    \begin{equation*}
        \trace{\Pi_1(I-\Pi_{\sf Accepted})\cal{A}'\cdot\gen\cdot\rho_{\sf pass}\cdot\gen^{\dagger}\cal{A}'^{\dagger}(I-\Pi_{\sf Accepted})\Pi_1}\leq\frac{K}{M^\ell}\trace{\rho_{\sf pass}}
    \end{equation*}
    is that once the state is in the span of $(I-\Pi_{\sf Accepted})$ then the number of $1$ that corresponds to a valid $\ell$-tuple of sum $y_H$ is less than or equal to $K$. For a random challenge on $\seedreg$ the probability that the challenge hits one of the valid $\ell$-tuples of sum $y_H$ is at most $\frac{K}{M^\ell}$.
\end{proof}

\hypertarget{step6}{\textbf{Step 6: }}Now we switch the standard oracle into a compressed oracle.
\protocol
{${\sf Expt}_{{\sf Compressed}}^{\cal{A}'}$:}
{Verifying whether an algorithm succeeds in finding $K$ distinct $\ell$-tuples the same sum with a predicate.}
{fig:compressed_oracle_success_prob_estimation}
{
\begin{description}
\setlength{\parskip}{0.3mm} 
\setlength{\itemsep}{0.3mm} 
\item[Parameters:] A projector $\Pi_{\sf predicate}$ and a threshold $K$.
\item[Setup:] The algorithm starts with the state $\rho_{\sf compressed}$.
\end{description}
\begin{enumerate}
    \item Define $\Pi'_{{\sf predicate}}={\sf DecompressAll}^\dagger\Pi_{\sf predicate}{\sf DecompressAll}$ where ${\sf DecompressAll}$ is the unitary that applies $\stddecomp_x$ for every $x\in[M]$. Apply a 2-outcome measurement $(\Pi'_{{\sf predicate}},I-\Pi'_{{\sf predicate}})$ to the whole state. 
    \begin{itemize}
        \item If the outcome state is in $\Pi'_{\sf predicate}$, let $R=\set{{\sf PredicatePassed}}$.
        \item If the outcome state is in $I-\Pi'_{\sf predicate}$, let  $R=\set{{\sf PredicateFailed}}$.
    \end{itemize}
    \item For every $y\in[N]$, count how \qipeng{how?} many distinct $\ell$-tuples outputted by the prover have sum $y$ for every $y$. To do this, first run ${\sf DecompressAll}$ to convert the compress oracle into a standard oracle, count it and run ${\sf DecompressAll}^\dagger$ to recover. If there exists $y\in[N]$ such that this number is at least $K$ then let $R\leftarrow R\cup\set{{\sf Accept0}}$. Otherwise, let $R\leftarrow R\cup\set{{\sf Rejected\_0}}$.
    \item Run the second stage of the algorithm $\cal{A}'$. All oracle queries use $\cpho$, interacting with the database $\dbreg$.
    \item For every $y\in[N]$, count how \qipeng{how? similarly in every other game} many distinct $\ell$-tuples outputted by the prover have sum $y$ for every $y$. To do this, first run ${\sf DecompressAll}$ to convert the compress oracle into a standard oracle, count it and run ${\sf DecompressAll}^\dagger$ to recover. If there exists $y\in[N]$ such that this number is at least $K$ then output $R\cup\set{{\sf Accept1}}$. Otherwise, output $R\cup\set{{\sf Rejected\_1}}$.
\end{enumerate}
}
\begin{lemma}\label{lem:compressed-oracle-equivalent-to-normal-oracle}
    For any algorithm $\cal{A}'$ the probability that any event occurs in the experiment ${\sf Expt}_{{\sf Standard}}$ is the same as that in the experiment ${\sf Expt}_{{\sf Compressed}}$. That is, for any $K$ and any subset of results
    \begin{equation*}
        R\subseteq\set{\sf PredicatePassed,PredicateFailed,Accept0,Accept1,Rejected\_0,Rejected\_1},
    \end{equation*}
    we have
    \begin{align*}
        &\Pr\left[R\subseteq{\sf Expt}_{{\sf Standard}}^{\cal{A}'}(\Pi_{\sf predicate},K)\right]\\
        =&\Pr\left[R\subseteq{\sf Expt}_{{\sf Compressed}}^{\cal{A}'}(\Pi_{\sf predicate},K)\right].
    \end{align*}
\end{lemma}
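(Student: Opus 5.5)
We will show the two experiments are related by a single global unitary change of basis on $\dbreg$, namely ${\sf DecompressAll}$, which maps the initial state of ${\sf Expt}_{{\sf Compressed}}$ to that of ${\sf Expt}_{{\sf Standard}}$ and intertwines every operation performed in the two experiments. Since all outcome probabilities are traces of products of projectors and unitaries applied to the initial state, conjugating everything by this unitary leaves each probability unchanged.

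\textbf{Step 1: initial states.} By definition, $\stddecomp_x$ maps $\ket{\bot}$ to the uniform superposition $\frac{1}{\sqrt{N}}\sum_{y}\ket{y}$ on $\dbreg_x$. Hence ${\sf DecompressAll}\ket{\bot,\ldots,\bot}_{\dbreg}=\frac{1}{\sqrt{N^M}}\sum_H\ket{H}_{\dbreg}$. All other registers are identical in the two experiments, so ${\sf DecompressAll}\,\rho_{\sf compressed}\,{\sf DecompressAll}^\dagger=\rho_{\sf standard}$.

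\textbf{Step 2: the measurements.} Step 1 of ${\sf Expt}_{{\sf Compressed}}$ uses $\Pi'_{\sf predicate}={\sf DecompressAll}^\dagger\Pi_{\sf predicate}{\sf DecompressAll}$, which is exactly the conjugate of the standard-side projector. The counting measurements in steps 2 and 4 are explicitly implemented as ${\sf DecompressAll}^\dagger\,\Pi_{\sf count}\,{\sf DecompressAll}$, where $\Pi_{\sf count}$ is the standard-basis projector onto ``some $y$ has at least $K$ output tuples of sum $y$ under $H$.'' So these are also conjugates of the corresponding standard measurements.

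\textbf{Step 3: the algorithm's operations.} Local unitaries and the $\output$ gate act as the identity on $\dbreg$, so they commute with ${\sf DecompressAll}$. For the oracle queries we need ${\sf DecompressAll}^\dagger\cdot\pho\cdot{\sf DecompressAll}=\cpho$. We derive this from Zhandry's identity $\csto={\sf DecompressAll}^\dagger\cdot\sto\cdot{\sf DecompressAll}$. The identity holds because $\stddecomp_{x'}$ for $x'\neq x$ commutes with both $\sto$ and $\csto'$ when the query is $x$, and $\stddecomp_x$ is an involution; so the conjugation collapses to $\stddecomp_x\csto'\stddecomp_x$. We then use $\cpho=V^\dagger\csto V$ and $\pho=V^\dagger\sto V$, together with the fact that $V$ acts only on $\ansreg$ and therefore commutes with ${\sf DecompressAll}$.

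\textbf{Step 4: conclusion.} Every event probability in ${\sf Expt}_{{\sf Compressed}}$ has the form $\trace{\Pi_m W_m\cdots\Pi_1\rho_{\sf compressed}\Pi_1\cdots W_m^\dagger\Pi_m}$, where the $\Pi_i$ are the measurement projectors and the $W_i$ are the algorithm's operations. Inserting ${\sf DecompressAll}^\dagger{\sf DecompressAll}=I$ between consecutive factors and using cyclicity of trace turns each factor into its standard counterpart, giving exactly the probability in ${\sf Expt}_{{\sf Standard}}$.

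The only delicate point is the query identity in Step 3, specifically checking that the conjugation by ${\sf DecompressAll}$ reduces to $\stddecomp_x$ on the queried coordinate, including the extended case $x=\bot$ where everything acts as the identity. This is the content of Zhandry's equivalence lemma, and we invoke it directly.
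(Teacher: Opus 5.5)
Your proposal is correct and follows essentially the paper's argument. Both show that ${\sf Expt}_{\sf Compressed}$ is ${\sf Expt}_{\sf Standard}$ conjugated by ${\sf DecompressAll}$, using that $\stddecomp_x$ is an involution and that $\cpho=\stddecomp\cdot\cpho'\cdot\stddecomp$, where $\cpho'$ acts as $\pho$ on decompressed entries. Your global operator conjugation, with the conjugated measurements handled explicitly, is a more careful version of the paper's per-register computation on $\dbreg_x$.
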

\begin{proof}
    We prove that at any time the state on $\dbreg_x$ in ${\sf Expt}_{{\sf Compressed}}$ and in ${\sf Expt}_{{\sf Standard}}$ differs by exactly $\stddecomp_x$. At some point, the state on $\dbreg_x$ in ${\sf Expt}_{{\sf Compressed}}$ is
    \begin{align*}
        &\stddecomp_x\cdot\prod_j\cpho_{u_j}\ket{\bot}\\
        =&\stddecomp_x\cdot\prod_j\brackets{\stddecomp_x\cdot\cpho'_{u_j}\cdot\stddecomp_x}\ket{\bot}\\
        =&\prod_j\brackets{\cpho'_{u_j}}\stddecomp_x\ket{\bot}\\
        =&\prod_j\cpho'_{u_j}\brackets{\sum_{y\in[N]}\frac{1}{\sqrt{N}}\ket{y}}
    \end{align*}
    where $\cpho_{u_j}$ are all gates that act on this register.\\
    One can see that using compress oracle queries in the middle and a $\stddecomp$ at the end is equivalent to initializing that database with a uniform random value and then using $\cpho'$ for each oracle query. If you interpret the initialization $\ket{\bot}\rightarrow\brackets{\sum_{y\in[N]}\frac{1}{\sqrt{N}}\ket{y}}$ as the initialization of $H$ you will find that these two approaches are exactly the same. Thus, these two experiments must have the same result.
\end{proof}
\hypertarget{step7}{\textbf{Step 7: }}Now we consider another experiment \Cref{fig:compressed_oracle_success_prob_estimation_with_structural_test} that is similar to \Cref{fig:compressed_oracle_success_prob_estimation} but another structure projector is added between step 1 and step 2. The intuition on this experiment is that it splits the state into two types. In the case where the state has a good structure, which means that the number of $i$-tuples in the database is small for all $i$, the success probability in later stages will be bounded due to the lack of information on the tuples. We argue that it is unlikely for the state to have a bad structure, which means that there are too many $i$-tuples of the same sum for some $i$ that provide to much information in later stages. \\
\protocol
{${\sf Expt}_{{\sf Structured}}^{\cal{A}'}$:}
{Verifying whether an algorithm succeeds in finding $K$ distinct $\ell$-tuples the same sum with a predicate and a structural test.}
{fig:compressed_oracle_success_prob_estimation_with_structural_test}
{
\begin{description}
\setlength{\parskip}{0.3mm} 
\setlength{\itemsep}{0.3mm} 
\item[Parameters:] A projector $\Pi_{\sf predicate}$, a threshold $K$ and integers $k_0,k_1,\cdots,k_\ell$.
\item[Setup:] The algorithm starts with the state $\rho_{\sf compressed}$.
\end{description}
\begin{enumerate}
    \item Define $\Pi'_{{\sf predicate}}={\sf DecompressAll}^\dagger\Pi_{\sf predicate}{\sf DecompressAll}$ where ${\sf DecompressAll}$ is the unitary that applies $\stddecomp_x$ for every $x\in[M]$. Apply a 2-outcome measurement $(\Pi'_{{\sf predicate}},I-\Pi'_{{\sf predicate}})$ to the whole state. 
    \begin{itemize}
        \item If the outcome state is in $\Pi'_{\sf predicate}$, let $R=\set{{\sf PredicatePassed}}$.
        \item If the outcome state is in $I-\Pi'_{\sf predicate}$, let $R=\set{{\sf PredicateFailed}}$.
    \end{itemize}
    \item Define $\Pi_{{\sf structure}}=\Pi^{>k_0}\prod_{i=1}^{\ell}\Pi_{>k_i}^{i,*}$. Apply a 2-outcome measurement $(\Pi_{{\sf structure}},I-\Pi_{{\sf structure}})$ to the whole state. 
    \begin{itemize}
        \item If the outcome state is in $\Pi_{{\sf structure}}$, let $R\leftarrow R\cup\set{\sf GoodStructure}$.
        \item If the outcome state is in $I-\Pi_{{\sf structure}}$, let $R\leftarrow R\cup\set{\sf BadStructure}$.
    \end{itemize}
    \item For every $y\in[N]$, count how many distinct $\ell$-tuples outputted by the prover have sum $y$ for every $y$. To do this, first run ${\sf DecompressAll}$ to convert the compress oracle into a standard oracle, count it and run ${\sf DecompressAll}^\dagger$ to recover. If there exists $y\in[N]$ such that this number is at least $K$ then let $R\leftarrow R\cup\set{{\sf Accept0}}$. Otherwise, let $R\leftarrow R\cup\set{{\sf Rejected\_0}}$.
    \item Run the second stage of the algorithm $\cal{A}'$. All oracle queries use $\cpho$, interacting with the database $\dbreg$.
    \item For every $y\in[N]$, count how many distinct $\ell$-tuples outputted by the prover have sum $y$ for every $y$. To do this, first run ${\sf DecompressAll}$ to convert the compress oracle into a standard oracle, count it and run ${\sf DecompressAll}^\dagger$ to recover. If there exists $y\in[N]$ such that this number is at least $K$ then output $R\cup\set{{\sf Accept1}}$. Otherwise, output $R\cup\set{{\sf Rejected\_1}}$.
\end{enumerate}
}
Intuitively, ${\sf GoodStructure}$/${\sf BadStructure}$ in the output of ${\sf Expt}_{{\sf Structured}}^{\cal{A}'}(\Pi_{\sf predicate},K,k_0,k_1,\cdots,k_\ell)$ means that there doesn't/does exist $i\in\set{0,1,\cdots,\ell}$ and $y\in[N]$ such that there are more than $k_i$ $i$-tuples of sum $y$ in the database after step 1.
\newcommand{\prfreg}{\mathbf{P}}
\begin{lemma}\label{lem:split_good_bad_structure}
    For any $k_1,k_2,\cdots,k_\ell<K$. We can bound the success probability of the algorithm in the standard experiment condition on the predicate test passed by, intuitively, the probability of acceptance condition on a good structure on the database plus the probability of having a bad structure condition on predicate test passed:
    \begin{align*}
        &\frac{\Pr\left[\set{{\sf PredicatePassed,Accept0}}\subseteq{\sf Expt}_{{\sf Standard}}^{\cal{A}'}(\Pi_{\sf alter}^{(k)},K)\right]}{\Pr\left[{\sf PredicatePassed}\in{\sf Expt}_{{\sf Standard}}^{\cal{A}'}(\Pi_{\sf alter}^{(k)},K)\right]}\\
        &+\frac{\Pr\left[\set{{\sf PredicatePassed,Accept1}}\subseteq{\sf Expt}_{{\sf Standard}}^{\cal{A}'}(\Pi_{\sf alter}^{(k)},K)\right]}{\Pr\left[{\sf PredicatePassed}\in{\sf Expt}_{{\sf Standard}}^{\cal{A}'}(\Pi_{\sf alter}^{(k)},K)\right]}\\
        \leq&\frac{2\Pr\left[\set{{\sf PredicatePassed,GoodStructure,Accept0}}\subseteq{\sf Expt}_{{\sf Structured}}^{\cal{A}'}(\Pi_{\sf alter}^{(k)},K,k_0,k_1,\cdots,k_\ell)\right]}{\Pr\left[\set{{\sf PredicatePassed,GoodStructure}}\subseteq{\sf Expt}_{{\sf Structured}}^{\cal{A}'}(\Pi_{\sf alter}^{(k)},K,k_0,k_1,\cdots,k_\ell)\right]}\\
        &+\frac{2\Pr\left[\set{{\sf PredicatePassed,GoodStructure,Accept1}}\subseteq{\sf Expt}_{{\sf Structured}}^{\cal{A}'}(\Pi_{\sf alter}^{(k)},K,k_0,k_1,\cdots,k_\ell)\right]}{\Pr\left[\set{{\sf PredicatePassed,GoodStructure}}\subseteq{\sf Expt}_{{\sf Structured}}^{\cal{A}'}(\Pi_{\sf alter}^{(k)},K,k_0,k_1,\cdots,k_\ell)\right]}\\
        &+\frac{4\Pr\left[\set{{\sf PredicatePassed,BadStructure}}\subseteq{\sf Expt}_{{\sf Structured}}^{\cal{A}'}(\Pi_{\sf alter}^{(k)},K,k_0,k_1,\cdots,k_\ell)\right]}{\Pr\left[{\sf PredicatePassed}\in{\sf Expt}_{{\sf Structured}}^{\cal{A}'}(\Pi_{\sf alter}^{(k)},K,k_0,k_1,\cdots,k_\ell)\right]}.
    \end{align*}
\end{lemma}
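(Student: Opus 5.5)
First, I would pass from the standard experiment to the compressed one. By \Cref{lem:compressed-oracle-equivalent-to-normal-oracle}, every probability on the left-hand side (numerators and the common denominator) is unchanged when ${\sf Expt}_{{\sf Standard}}$ is replaced by ${\sf Expt}_{{\sf Compressed}}$. Next, note that ${\sf Expt}_{{\sf Structured}}$ agrees with ${\sf Expt}_{{\sf Compressed}}$ through step 1. Hence the denominators coincide: $\Pr[{\sf PredicatePassed}]$ is the same in all three experiments.

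Let $\rho_{\sf pass}$ denote the unnormalized post-predicate state, and $\Pi_{\sf good}:=\Pi_{\sf structure}$. The only difference between the two remaining experiments is that ${\sf Expt}_{{\sf Structured}}$ inserts the two-outcome measurement $(\Pi_{\sf good},I-\Pi_{\sf good})$. Write $\Pi_{\sf acc}$ for the effective acceptance operator $\Lambda$ of the later stages. In the Accept0 case this is the projector conjugated by ${\sf DecompressAll}$. In the Accept1 case it is $\Lambda=W^\dagger\Pi W$, where $W$ is the second stage of $\cA'$ followed by ${\sf DecompressAll}$. Either way $0\le\Lambda\le I$.

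The core step is the operator inequality $(a+b)(a+b)^\dagger\preceq 2aa^\dagger+2bb^\dagger$. Apply it with $a=\Pi_{\sf good}\sqrt{\rho}$ and $b=(I-\Pi_{\sf good})\sqrt{\rho}$:
\[
\Tr(\Lambda\rho_{\sf pass})\le 2\Tr(\Lambda\Pi_{\sf good}\rho_{\sf pass}\Pi_{\sf good})+2\Tr(\Lambda(I-\Pi_{\sf good})\rho_{\sf pass}(I-\Pi_{\sf good})).
\]
The first term is $2\Pr[\{{\sf PredicatePassed},{\sf GoodStructure},{\sf Accept}b\}]$ in ${\sf Expt}_{{\sf Structured}}$. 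For the second term, use $\Lambda\le I$ to bound it by $2\Pr[\{{\sf PredicatePassed},{\sf BadStructure}\}]$.

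Summing over $b\in\{0,1\}$ produces the factor $4$ on the bad-structure term. Dividing by the denominator $\Pr[{\sf PredicatePassed}]$ gives the good terms normalized by $\Pr[{\sf PredicatePassed}]$ and the bad term with the required form.

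The remaining task, which I expect to be the main obstacle, is the denominator of the good terms: the claim uses $\Pr[{\sf PredicatePassed},{\sf GoodStructure}]\le\Pr[{\sf PredicatePassed}]$ there, which only makes the right-hand side larger. This direction is correct, since replacing a denominator by a smaller positive quantity increases the fraction. So the inequality $\frac{x}{\Pr[{\sf PredicatePassed}]}\le\frac{x}{\Pr[{\sf PredicatePassed},{\sf GoodStructure}]}$ finishes it.

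The delicate point is therefore making the operator $\Lambda$ and the commutation bookkeeping precise. In particular, $\Pi_{\sf good}$ acts on the compressed database, while acceptance is checked after ${\sf DecompressAll}$. My plan handles this by treating acceptance abstractly as a POVM element, so no commutation is needed. The hypothesis $k_i<K$ will not be used here; it serves the subsequent lemmas.
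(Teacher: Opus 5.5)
Your proposal is correct and follows essentially the paper's route. You pass from ${\sf Expt}_{\sf Standard}$ to ${\sf Expt}_{\sf Compressed}$ via \Cref{lem:compressed-oracle-equivalent-to-normal-oracle}, and pay a factor $2$ for inserting the two-outcome structure measurement; your inequality $\rho\preceq 2\Pi\rho\Pi+2(I-\Pi)\rho(I-\Pi)$ is exactly the justification the paper leaves implicit in ``the constant 2 is due to the additional 2-outcome measurement.'' You then bound the bad-structure-and-accept term by $\Pr[\{{\sf PredicatePassed},{\sf BadStructure}\}]$, and enlarge the good-structure fractions by shrinking their denominators to $\Pr[\{{\sf PredicatePassed},{\sf GoodStructure}\}]$.

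One small correction: the effective operator for \textsf{Accept1} must account for the intervening \textsf{Accept0} measurement $P_0$. It is therefore $P_0W^\dagger\Pi W P_0+(I-P_0)W^\dagger\Pi W(I-P_0)$ rather than $W^\dagger\Pi W$. This operator still lies between $0$ and $I$ and is identical in both experiments, so your argument goes through unchanged.
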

\begin{proof}
    \begin{align*}
        &\frac{\Pr\left[\set{{\sf PredicatePassed,Accept0}}\subseteq{\sf Expt}_{{\sf Standard}}^{\cal{A}'}(\Pi_{\sf alter}^{(k)},K)\right]}{\Pr\left[{\sf PredicatePassed}\in{\sf Expt}_{{\sf Standard}}^{\cal{A}'}(\Pi_{\sf alter}^{(k)},K)\right]}\\
        &+\frac{\Pr\left[\set{{\sf PredicatePassed,Accept1}}\subseteq{\sf Expt}_{{\sf Standard}}^{\cal{A}'}(\Pi_{\sf alter}^{(k)},K)\right]}{\Pr\left[{\sf PredicatePassed}\in{\sf Expt}_{{\sf Standard}}^{\cal{A}'}(\Pi_{\sf alter}^{(k)},K)\right]}\\
        =&\frac{\Pr\left[\set{{\sf PredicatePassed,Accept0}}\subseteq{\sf Expt}_{{\sf Compressed}}^{\cal{A}'}(\Pi_{\sf alter}^{(k)},K)\right]}{\Pr\left[{\sf PredicatePassed}\in{\sf Expt}_{{\sf Compressed}}^{\cal{A}'}(\Pi_{\sf alter}^{(k)},K)\right]}\\
        &+\frac{\Pr\left[\set{{\sf PredicatePassed,Accept1}}\subseteq{\sf Expt}_{{\sf Compressed}}^{\cal{A}'}(\Pi_{\sf alter}^{(k)},K)\right]}{\Pr\left[{\sf PredicatePassed}\in{\sf Expt}_{{\sf Compressed}}^{\cal{A}'}(\Pi_{\sf alter}^{(k)},K)\right]}\\
        \leq&\frac{2\Pr\left[\set{{\sf PredicatePassed,Accept0}}\subseteq{\sf Expt}_{{\sf Structured}}^{\cal{A}'}(\Pi_{\sf alter}^{(k)},K,k_0,k_1,\cdots,k_\ell)\right]}{\Pr\left[{\sf PredicatePassed}\in{\sf Expt}_{{\sf Structured}}^{\cal{A}'}(\Pi_{\sf alter}^{(k)},K,k_0,k_1,\cdots,k_\ell)\right]}\\
        &+\frac{2\Pr\left[\set{{\sf PredicatePassed,Accept1}}\subseteq{\sf Expt}_{{\sf Structured}}^{\cal{A}'}(\Pi_{\sf alter}^{(k)},K,k_0,k_1,\cdots,k_\ell)\right]}{\Pr\left[{\sf PredicatePassed}\in{\sf Expt}_{{\sf Structured}}^{\cal{A}'}(\Pi_{\sf alter}^{(k)},K,k_0,k_1,\cdots,k_\ell)\right]}\\
        \leq&\frac{2\Pr\left[\set{{\sf PredicatePassed,GoodStructure,Accept0}}\subseteq{\sf Expt}_{{\sf Structured}}^{\cal{A}'}(\Pi_{\sf alter}^{(k)},K,k_0,k_1,\cdots,k_\ell)\right]}{\Pr\left[\set{{\sf PredicatePassed,GoodStructure}}\subseteq{\sf Expt}_{{\sf Structured}}^{\cal{A}'}(\Pi_{\sf alter}^{(k)},K,k_0,k_1,\cdots,k_\ell)\right]}\\
        &+\frac{2\Pr\left[\set{{\sf PredicatePassed,GoodStructure,Accept1}}\subseteq{\sf Expt}_{{\sf Structured}}^{\cal{A}'}(\Pi_{\sf alter}^{(k)},K,k_0,k_1,\cdots,k_\ell)\right]}{\Pr\left[\set{{\sf PredicatePassed,GoodStructure}}\subseteq{\sf Expt}_{{\sf Structured}}^{\cal{A}'}(\Pi_{\sf alter}^{(k)},K,k_0,k_1,\cdots,k_\ell)\right]}\\
        &+\frac{4\Pr\left[\set{{\sf PredicatePassed,BadStructure}}\subseteq{\sf Expt}_{{\sf Structured}}^{\cal{A}'}(\Pi_{\sf alter}^{(k)},K,k_0,k_1,\cdots,k_\ell)\right]}{\Pr\left[{\sf PredicatePassed}\in{\sf Expt}_{{\sf Structured}}^{\cal{A}'}(\Pi_{\sf alter}^{(k)},K,k_0,k_1,\cdots,k_\ell)\right]}.
    \end{align*}
    The constant $2$ in the second equation is due to the additional 2-outcome measurement on the structure. 
\end{proof}
\hypertarget{step8}{\textbf{Step 8(a): }} Now we prove that it is unlikely for the state to have a bad structure.
\begin{lemma}\label{lem:bound_with_bad_structure}
    For $k_0=2kT$ and $k_1,\cdots,k_\ell$, if $T=\Omega(k_i)$ for every $k_i$ then
    \begin{equation*}
        \Pr\left[{\sf BadStructure}\in{\sf Expt}_{{\sf Structured}}^{\cal{A}'}\brackets{\Pi_{\sf alter}^{(k)},K,k_0,k_1,\cdots,k_\ell}\right]\leq\sum_{i=1}^{\ell}O\brackets{\frac{(2kT)^{\ell+1}}{k_i^{\frac{2}{i}}N}}^{k_i^{\frac{1}{i}}}.
    \end{equation*}
\end{lemma}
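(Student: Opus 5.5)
The plan is to recognise that the predicate projector $\Pi_{\sf alter}^{(k)}$ is itself an interleaving of at most $k$ runs of $\cal{A}'$ and $\cal{A}'^\dagger$, together with the unitaries $\gen$, $\store$ and their inverses and the projectors $\Pi_{\sf reset}$. Each of these runs makes at most $T$ queries, so the whole predicate test can be viewed as a single (non-unitary, projector-interleaved) computation making at most $2kT$ queries to the compressed phase oracle. We start from $\rho_{\sf compressed}$, with empty database. By linearity we may replace the maximally mixed advice by an average over pure computational-basis starting states. It therefore suffices to bound, for each such pure start, the weight of the state after step 1 that lies outside $\Pi_{\sf structure}$.

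\textbf{Step 1: the $k_0$ condition.} By \Cref{lem:bounded-database-phase}, after $2kT$ queries every branch has at most $2kT=k_0$ non-$\bot$ entries. This conclusion is preserved under the intermediate projections, since the lemma holds conditioned on intermediate outcomes. Hence the $k_0$ condition is never violated.

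\textbf{Step 2: the $i$-tuple conditions.} Write the bad event as a union over $i=1,\dots,\ell$ of the events ``more than $k_i$ $i$-tuples with the same sum in the database''. For each $i$ we apply \Cref{thm:quantum-time-bound-for-finding-tuples-database} with the following parameters:
\begin{itemize}
\item tuple length $i$;
\item starting database empty, so all initial counts $k_1,\dots,k_{i-1},k_i$ in that theorem are $0$ and the initial $k_0$ is $0$;
\item threshold $K=k_i$;
\item query count $2kT$, which satisfies $2kT=\Omega(k_i)$ by hypothesis.
\end{itemize}
With zero initial counts, the polynomial $f$ reduces to $x^i-k_i$, so $K_{\sf sol}=k_i^{1/i}$. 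The theorem bounds the probability that the database \emph{at some point} contains $k_i$ tuples by
\[
O\brackets{\frac{(2kT)^2(2kT)^{i-1}}{k_i^{2/i}N}}^{k_i^{1/i}} .
\]
This is at most the stated term with exponent $\ell+1$, since $i\le\ell$ and $2kT\ge 1$. A union bound over $i$ (on the squared amplitudes, at the cost of a factor $\ell$ absorbed into the $O$) then gives the lemma. We also need $k_i^{1/i}=\Omega(\log N)$ for the theorem to apply; this presumably follows from the regime in which the lemma is later used, and I would add it as an implicit assumption.

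\textbf{Main obstacle.} \Cref{thm:quantum-time-bound-for-finding-tuples-database} is stated for an algorithm consisting of unitaries interleaved with $\cpho$ and $\stddecomp$. The predicate computation, however, interleaves projections ($\Pi_{\sf reset}$, the FlipTest projection onto $\memreg=1$ and valid tuples of sum $y_H$) and $\cal{A}'^\dagger$, which applies $\cpho^\dagger=\cpho$. I would handle this as follows:
\begin{itemize}
\item The projections act only on $\seedreg,\memreg$ and on $\dbreg$, through a condition that is diagonal in $H$.
\item After moving to the compressed picture via ${\sf DecompressAll}$, the FlipTest projection onto the sum condition must be re-expressed; I would implement it coherently with extra $\stddecomp$-type queries on the $\ell$ coordinates in $\seedreg$, adding $O(\ell)$ queries per round, which is harmless.
\item Every remaining projection is a contraction commuting with the database-counting projectors. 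So the amplitude recursion $\Delta^y_{t,k}\le\Delta^y_{t-1,k}+c\sqrt{(t+k_0)^{i-1}/N}\,\Delta^y_{t-1,k-1}$ still holds, because each step only ever uses that the operator has norm at most one and adds at most one database entry.
\end{itemize}
Checking this per-step property for the inserted projections, and the ``at some point'' formulation, which dominates the end-state event, is where I expect the real work.
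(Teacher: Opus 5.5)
\textbf{Gap in Step 1.} Step 1 claims the database never exceeds $k_0=2kT$ entries. This contradicts your own treatment of test 5(b).

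\Cref{lem:bounded-database-phase} counts only $\cpho$ calls. In the compressed picture, however, 5(b) is $\stddecomp\,\Pi_{\sf sum}\,\stddecomp$ on the $\ell$ seed coordinates, and this creates entries:
\begin{itemize}
\item a $\bot$ seed coordinate becomes the uniform superposition;
\item the sum test keeps amplitude $1/\sqrt N$ on a single value $\ket{y}$;
\item the recompressing $\stddecomp$ sends $\ket{y}$ to $\frac{1}{\sqrt N}\ket{\bot}$ plus a non-$\bot$ part of squared norm $1-\frac1N$.
\end{itemize}
So, conditioned on passing, a fresh seed coordinate ends up non-$\bot$ with probability $1-1/N$.

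Nothing forces $\cal{A}'^\dagger$ to erase the entries created by $\cal{A}'$, because the projections in between need not commute with it. For example, take $T=1$: query a uniform superposition of $x$, Fourier-transform the query register and output its content. The surviving operator then depends on both $H(x)$ and $H(x')$ with $x\neq x'$.

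Hence after $\Pi'_{\sf predicate}$ the database can hold up to $2kT+k\ell$ entries. For suitable $\cal{A}'$, the event ``more than $k_0=2kT$ entries'' carries a constant fraction of the ${\sf PredicatePassed}$ weight, and the stated bound does not control this. The paper's proof makes the same unjustified assertion: it purifies only 5(a) and $\Pi_{\sf reset}$ and declares that the first bullet never happens. Purifying 5(b) as well would require evaluating $H$ coherently on the seed, i.e.\ $2\ell$ extra queries per round, which forces the same correction.

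So the lemma should be stated with $k_0\ge 2kT+k\ell$. In the one place it is used ($k=S-1$, $k_0=2ST$, $T=\Omega(S^{2\ell})$) this slack is already present, so nothing downstream breaks.

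\textbf{The rest.} Otherwise your route is the paper's: zero initial counts, $K_{\sf sol}=k_i^{1/i}$, exponent $i+1\le\ell+1$, and a union bound over $i$. You are right that $k_i^{1/i}=\Omega(\log N)$ must be assumed; it holds for $k_i=S^{\ell+i}$.

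The real difference is how the interleaved projections are handled:
\begin{itemize}
\item The paper purifies them into $\prfreg$, so $\cal{B}$ becomes a unitary-plus-oracle algorithm and \Cref{thm:quantum-time-bound-for-finding-tuples-database} can be used as a black box.
\item You keep them as contractions commuting with the counting projectors. This obliges you to re-check the recursion, including the inductive $\Gamma$ decomposition and $\stddecomp$ calls interleaved with $\cpho$. In exchange, your explicit rewriting of 5(b) in the compressed picture is exactly what exposes the entry-creation problem above.
\end{itemize}

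One small step to add: the state after step 1 also contains the ${\sf PredicateFailed}$ branch $(I-\Pi'_{\sf predicate})\ket{\psi}$, which is not the output of your predicate computation. Since the start has an empty database, the bad-structure projector annihilates $\ket{\psi}$. That branch's bad part therefore has the same norm as that of $\Pi'_{\sf predicate}\ket{\psi}$, costing only a factor $2$.
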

\begin{proof}
    Consider the following algorithm $\cal{B}$:
    \begin{enumerate}
        \item Start with the state $\rho_{\sf compressed}\otimes\ket{0,0,\cdots,0}_{\prfreg}\bra{0,0,\cdots,0}$ where $\prfreg$ is a register that is used to purify measurements. 
        \item Run $\Pi_{\sf alter}^{(k)}$, but for every projection (step 5(a) in \Cref{proj:single-game} and $\Pi_{\sf reset}$) write the result of that projection on $\prfreg$ to purify the process.
    \end{enumerate}
    The probability
    \begin{equation*}
        \Pr\left[{\sf BadStructure}\in{\sf Expt}_{{\sf Structured}}^{\cal{A}'}\brackets{\Pi_{\sf alter}^{(k)},K,k_0,k_1,\cdots,k_\ell}\right]
    \end{equation*}
    is the probability that after running $\cal{B}$ on $\rho_{\sf compressed}$ one of the following happens:
    \begin{enumerate}
        \item The number of non-$\bot$ entries on $\dbreg$ is more than $k_0$.
        \item There exists $i\in 1,2,\cdots,\ell$ and $y\in[N]$ such that the number of $i$-tuple of sum $y$ is more than $k_i$.
    \end{enumerate}
    Note that algorithm $\cal{B}$ queries the compress oracle for at most $2kT$ times and the initial state $\rho_{\sf compressed}$ has an empty database. Thus, the first bullet above never happens. The second bullet for a fixed $i$ happens with probability $O\brackets{\frac{(2kT)^{\ell+1}}{k_i^{\frac{2}{i}}N}}^{k_i^{\frac{1}{i}}}$ by \Cref{thm:quantum-time-bound-for-finding-tuples-database}. By union bound, we can bound the probability by
    \begin{equation*}
        \Pr\left[{\sf BadStructure}\in{\sf Expt}_{{\sf Structured}}^{\cal{A}'}\brackets{\Pi_{\sf alter}^{(k)},K,k_0,k_1,\cdots,k_\ell}\right]\leq\sum_{i=1}^{\ell}O\brackets{\frac{(2kT)^{\ell+1}}{k_i^{\frac{2}{i}}N}}^{k_i^{\frac{1}{i}}}.
    \end{equation*}
\end{proof}
\hypertarget{step8}{\textbf{Step 8(b): }} Now we prove that if the state has a good structure, the success probability of later stages is bounded.
\begin{lemma}\label{lem:bound_with_good_structure}
    Condition on a the predicate passed and the structure being good, the success probability of the last round is bounded if $T=\Omega(K)$:
    \begin{align*}
        &\frac{\Pr\left[\set{{\sf PredicatePassed,GoodStructure,Accept0}}\subseteq{\sf Expt}_{{\sf Structured}}^{\cal{A}'}(\Pi_{\sf alter}^{(k)},K,k_0,k_1,\cdots,k_\ell)\right]}{\Pr\left[\set{{\sf PredicatePassed,GoodStructure}}\subseteq{\sf Expt}_{{\sf Structured}}^{\cal{A}'}(\Pi_{\sf alter}^{(k)},K,k_0,k_1,\cdots,k_\ell)\right]}\\
        &+\frac{\Pr\left[\set{{\sf PredicatePassed,GoodStructure,Accept1}}\subseteq{\sf Expt}_{{\sf Structured}}^{\cal{A}'}(\Pi_{\sf alter}^{(k)},K,k_0,k_1,\cdots,k_\ell)\right]}{\Pr\left[\set{{\sf PredicatePassed,GoodStructure}}\subseteq{\sf Expt}_{{\sf Structured}}^{\cal{A}'}(\Pi_{\sf alter}^{(k)},K,k_0,k_1,\cdots,k_\ell)\right]}\\
        \leq&O\brackets{\frac{(kT)^2(kT+k_0)^{\ell-1}}{K_{{\sf sol}}^2N}}^{K_{{\sf sol}}}.
    \end{align*}
    for any $k_0$ and $k_1,\cdots,k_\ell<K$ where $K_{{\sf sol}}$ is the only real non-negative root of the following function
    \begin{equation*}
        f_{K,k_1,k_2,\cdots,k_\ell}(x)=x^\ell+\sum_{i=0}^{\ell-1}k_{\ell-i}x^i-K.
    \end{equation*}
\end{lemma}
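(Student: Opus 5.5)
The plan is to condition on the post-measurement state after steps 1 and 2 of ${\sf Expt}_{{\sf Structured}}$, where both the predicate and the structure test have passed. Then I would apply \Cref{thm:quantum-time-bound-for-finding-tuples-database} to that state, taking it as $\ket{\psi_{\sf st}}$; for a mixed state I would purify it, or use linearity over an ensemble of pure states. Let $\rho_{\sf good}$ denote the normalized state after these two projections. Because the structure projector (read as the good-structure projector) has passed, every basis state in the support of $\rho_{\sf good}$ satisfies $\Pi_{>k_0}\rho_{\sf good}=0$ and $\Pi^{i,*}_{>k_i}\rho_{\sf good}=0$ for all $i=1,\dots,\ell$. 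These are exactly the hypotheses of \Cref{thm:quantum-time-bound-for-finding-tuples-database}.

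I would handle the two terms separately.

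For the ${\sf Accept0}$ term, counting is done by ${\sf DecompressAll}$. This amounts to applying $\stddecomp_x$ to every $x$, and only the $x$ with $D(x)\neq\bot$ matter up to a negligible part, since $\stddecomp$ on $\bot$ produces a uniform value. So I would model the "count" as the $\stddecomp$ tail phase of \Cref{thm:quantum-time-bound-for-finding-tuples-database} with zero $\cpho$ queries.

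For the ${\sf Accept1}$ term, the second stage of $\cal{A}'$ makes at most $T$ queries via $\cpho$, followed by the ${\sf DecompressAll}$ used for counting. I would first show that only the $\stddecomp_x$ on output-relevant and database-nonempty positions can create new tuples. This lets me bound the number of effective $\stddecomp$ steps by $O(kT+k_0)$, so the final decompression fits the theorem's "$T$ additional $\stddecomp$" format with $T$ replaced by $O(kT)$. The outputted tuples all refer to points whose $H$-values are determined by ${\sf DecompressAll}$. A tuple with an entry whose $D(x)=\bot$ becomes a uniform value on decompression, and I would fold these entries into the $\stddecomp$ phase the same way.

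Next I would sum over $y$ exactly as in the theorem's final computation. The event "$K$ tuples of the same sum in the database at some point" upper-bounds acceptance, since accepted outputs need their tuples present in the decompressed database. Applying the theorem with total query count $O(kT)$ and database bound $k_0$ gives amplitude squared $O\big(\frac{(kT)^2(kT+k_0)^{\ell-1}}{K_{\sf sol}^2N}\big)^{K_{\sf sol}}$. Normalizing by $\Pr[\{{\sf PredicatePassed},{\sf GoodStructure}\}]$ is automatic, because I apply the theorem to the normalized conditional state. The requirement $K_{\sf sol}=\Omega(\log N)$ comes from $T=\Omega(K)$ together with parameter choices that absorb the factor $N$, and $k_i<K$ guarantees $K_{\sf sol}>0$.

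The main obstacle is justifying that the ${\sf DecompressAll}$ counting step is covered by the theorem's $\stddecomp$ tail when it is applied to all $M$ positions rather than only $T$. I would first try to argue that $\stddecomp_x$ acts trivially (returning to $\bot$ or staying a uniform superposition) on positions outside the support, so that it contributes no increase in $\Pi^{\ell,y}$ beyond what the per-step recursion already counts. If that fails, I would instead bound acceptance by the database count before decompression plus a $\sqrt{K/N}$-type correction, in the style of \Cref{lem:win-prob-db-algo}.
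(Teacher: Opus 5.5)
Your overall architecture matches the paper's. You normalize the state after the predicate and good-structure projections, note that it satisfies the hypotheses of \Cref{thm:quantum-time-bound-for-finding-tuples-database}, and treat the counting as a $\stddecomp$ tail, preceded by the $T$ $\cpho$ queries of the second stage for ${\sf Accept1}$. The gap is exactly the step you flag as the main obstacle, and neither proposed resolution works.

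\textbf{The first resolution is false.} On an entry with $D(x)=\bot$, $\stddecomp_x$ produces $\frac{1}{\sqrt{N}}\sum_{y}\ket{y}$, a new non-$\bot$ entry with a uniformly random value. So it does increase the count tracked by $\Pi^{\ell,y}$.
\begin{itemize}
\item After decompressing all $M$ positions, the database holds about $\binom{M}{\ell}/N\gg K$ tuples of a common sum. The event ``$K$ same-sum tuples in the database at some point'' then has probability close to $1$, so your upper bound is vacuous. Equivalently, the theorem with $M$ tail steps gives nothing.
\item The same objection applies to your claim that only output-relevant and database-nonempty positions ``can create new tuples''; any $\bot$ position can.
\item It also applies to your ${\sf Accept0}$ claim that only positions with $D(x)\neq\bot$ matter ``up to a negligible part''. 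Fresh values at outputted points are exactly what the theorem's $\stddecomp$ steps must pay for.
\end{itemize}

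\textbf{The second fallback is too weak.} An additive $\sqrt{K/N}$-type correction as in \Cref{lem:win-prob-db-algo} is only polynomially small. The lemma asserts a bound exponentially small in $K_{\mathsf{sol}}$, and that exponential smallness is what is later used to push the error below $1/M^{\ell}$. So this route cannot give the statement.

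\textbf{The missing idea.} Acceptance only counts tuples marked in $\outreg$, so it depends on $H$ only at points occurring in outputted tuples.
\begin{itemize}
\item Conjugating the count by $\mathsf{DecompressAll}$ therefore equals conjugating it by a decompression, controlled on $\outreg$, that acts only on those points. The $\stddecomp_x$ at every other $x$ cancels against its inverse.
\item The paper implements this by swapping the outputted points one at a time into $\qryreg$ and applying $\stddecomp$. This is exactly the theorem's tail format, since $\stddecomp$ acts as the identity on $\ket{\bot}_{\qryreg}$ and so pads the count.
\item You then need to bound the number of such points, which you assert as $O(kT+k_0)$ without argument. The register $\outreg$ starts at $\ket{0}$, $\Pi^{(k)}_{\mathsf{alter}}$ contains at most $2kT$ $\mathsf{Output}$ operations, and every other operation is diagonal on $\outreg$. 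Hence there are $O(kT)$ outputted tuples and $O(kT)$ points, plus $O(T)$ more after the second stage in the ${\sf Accept1}$ case.
\item Decompressing non-outputted database entries is unnecessary.
\end{itemize}
With the tail restricted this way, the database event is a legitimate upper bound, and the theorem yields the stated estimate.

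\textbf{A smaller omission.} In the ${\sf Accept1}$ term, the state has already undergone the step-3 ${\sf Accept0}$ measurement before the second stage runs. Your plan ignores this; the paper charges a constant factor for it.
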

\begin{proof}
    To bound
    \begin{equation*}
        \frac{\Pr\left[\set{{\sf PredicatePassed,GoodStructure,Accept0}}\subseteq{\sf Expt}_{{\sf Structured}}^{\cal{A}'}(\Pi_{\sf alter}^{(k)},K,k_0,k_1,\cdots,k_\ell)\right]}{\Pr\left[\set{{\sf PredicatePassed,GoodStructure}}\subseteq{\sf Expt}_{{\sf Structured}}^{\cal{A}'}(\Pi_{\sf alter}^{(k)},K,k_0,k_1,\cdots,k_\ell)\right]}
    \end{equation*}
    in the last inequality, we consider the meaning of this probability. This probability is less than or equal to the success probability of this algorithm $\cal{B}_0$:
    \begin{enumerate}
        \item The initial state is 
        \begin{equation*}
            \rho=\frac{\Pi_{\sf structure}\Pi'_{\sf predicate}\rho_{\sf compressed}\Pi'_{\sf predicate}\Pi_{\sf structure}}{\trace{\Pi_{\sf structure}\Pi'_{\sf predicate}\rho_{\sf compressed}\Pi'_{\sf predicate}\Pi_{\sf structure}}}
        \end{equation*}
        where $\Pi'_{\sf predicate}={\sf DecompressAll}^\dagger\Pi_{\sf alter}^{(k)}{\sf DecompressAll}$.
        \item Controlled by the content on $\outreg$, let $Q=\bigcup_{r_{(x_1,x_2,\cdots,x_\ell)}=1}\set{x_1,x_2,\cdots,x_\ell}$ be all the points that are contained in the algorithm's output, swap them one by one to $\qryreg$ and run $\stddecomp$ every time to completely decompress the oracle into the standard oracle in these points.
        \item Count the maximum number of $\ell$-tuples that we can find in the compress oracle register $\dbreg$. The algorithm succeeds if this number is at least $K$.
    \end{enumerate}
    Note that algorithm $\cal{B}_0$ uses at most $O(kT)$ $\stddecomp$ because the number of $1$s on $\outreg$ is at most $O(kT)$ according to the definition of $\Pi_{\sf alter}^{(k)}$. Thus by \Cref{thm:quantum-time-bound-for-finding-tuples-database} the success probability of $\cal{B}_0$ is at most $O\brackets{\frac{(kT)^2(kT+k_0)^{\ell-1}}{K_{{\sf sol}}^2N}}^{K_{{\sf sol}}}$.
    Similarly, 
    \begin{equation*}
        \frac{\Pr\left[\set{{\sf PredicatePassed,GoodStructure,Accept1}}\subseteq{\sf Expt}_{{\sf Structured}}^{\cal{A}'}(\Pi_{\sf alter}^{(k)},K,k_0,k_1,\cdots,k_\ell)\right]}{\Pr\left[\set{{\sf PredicatePassed,GoodStructure}}\subseteq{\sf Expt}_{{\sf Structured}}^{\cal{A}'}(\Pi_{\sf alter}^{(k)},K,k_0,k_1,\cdots,k_\ell)\right]}
    \end{equation*}
    is less than or equal to two times (due to the 2-outcome measurement on step 3) of the success probability of this algorithm:
    \begin{enumerate}
        \item The initial state is 
        \begin{equation*}
            \rho=\frac{\Pi_{\sf structure}\Pi'_{\sf predicate}\rho_{\sf compressed}\Pi'_{\sf predicate}\Pi_{\sf structure}}{\trace{\Pi_{\sf structure}\Pi'_{\sf predicate}\rho_{\sf compressed}\Pi'_{\sf predicate}\Pi_{\sf structure}}}
        \end{equation*}
        where $\Pi'_{\sf predicate}={\sf DecompressAll}^\dagger\Pi_{\sf alter}^{(k)}{\sf DecompressAll}$.
        \item Run $\cal{A}'$. All oracle queries use $\cpho$.
        \item Controlled by the content on $\outreg$, let $Q=\bigcup_{r_{(x_1,x_2,\cdots,x_\ell)}=1}\set{x_1,x_2,\cdots,x_\ell}$ be all the points that are contained in the algorithm's output, swap them one by one to $\qryreg$ and run $\stddecomp$ every time to completely decompress the oracle into the standard oracle in these points.
        \item Count the maximum number of $\ell$-tuples that we can find in the compress oracle register $\dbreg$. The algorithm succeeds if this number is at least $K$.
    \end{enumerate}
    Note that algorithm $\cal{B}_1$ uses at most $O((k+1)T)$ $\stddecomp$ because the number of $1$s on $\outreg$ is at most $O((k+1)T)$ and at most $O(T)$ $\cpho$ during step 2. Thus, by \Cref{thm:quantum-time-bound-for-finding-tuples-database} the success probability of $\cal{B}_0$ is also at most $O\brackets{\frac{(kT)^2(kT+k_0)^{\ell-1}}{K_{{\sf sol}}^2N}}^{K_{{\sf sol}}}$.
\end{proof}
\begin{remark}
    One may wonder what if after projecting the state on $\Pi_{\sf structure}$ and run ${\sf DecompressAll}$, there still exists $\bot$ entries in the oracle. This will not happen because the Hadamard basis is the eigenbasis for $\Pi_{\sf structure}$.
\end{remark}
\begin{theorem}[Success probability in finding a random $\ell$-tuple with advice]\label{thm:non-uniform-random-challenge}
    For all $\set{y_H}$ and all algorithms $\cal{A}'$ with $T$ oracle query and $T$ $\output$ operations where $T=\Omega(S^{2\ell})$. Let $\ket{\psi_{\sf st}}$ be any $S$-qubit advice for the algorithm. The probability of $\cal{A}'$ successfully find a randomly generated $\ell$-tuple of sum $y_H$ is at most $O\brackets{\frac{S^{2\ell}}{M^\ell}}+O\brackets{\frac{(ST)^{\ell+1}}{N}}^{S}$. \zihan{is it the tighter bound or typo?}\zikuan{Typo.} That is
    \begin{equation*}
        P^{(1)}_{\sf FlipTest}=\abs{\Pi_{{\sf FlipTest}}^{\cal{A}',\set{y_H}}\brackets{\ket{\psi_{\sf st}}\ket{0}_{\ancillareg}\ket{0}_{\seedreg}\ket{0}_{\memreg}}}^2\leq O\brackets{\frac{S^{2\ell}}{M^\ell}}+O\brackets{\frac{(ST)^{\ell+1}}{N}}^{S}
    \end{equation*}
    where $\Pi_{{\sf FlipTest}}^{\cal{A}',\set{y_H}}$ is defined in \Cref{proj:single-game}.
\end{theorem}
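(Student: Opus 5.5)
The plan is to chain the lemmas of this section along the roadmap in \Cref{fig:roadmap}, setting the number of alternations to $k=S$ and choosing the threshold $K=\Theta(S^{2\ell})$ together with structure thresholds $k_i=\Theta(S^{i})$.

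First, \Cref{lem:single_game_bound_2} gives
\begin{equation*}
P^{(1)}_{\sf FlipTest}\leq 2\brackets{P^{{\sf uniform},(S)}_{\sf FlipTest}}^{\frac{1}{2S-1}}.
\end{equation*}
By \Cref{lem:alternative_game_monotone} and the telescoping computation after it, $P^{{\sf uniform},(S)}_{\sf FlipTest}\leq \rho^S$. Here $\rho$ is the conditional ratio $\Pr[\{{\sf PredicatePassed},{\sf Accepted}\}]/\Pr[{\sf PredicatePassed}]$ in ${\sf Expt}_{\sf FlipTest}(S)$. Hence $P^{(1)}_{\sf FlipTest}\leq 2\rho^{S/(2S-1)}=O(\rho^{1/2})$ up to constants. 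I will check whether the exponent actually yields $\rho$ or $\sqrt\rho$. If only $\sqrt\rho$ comes out, the stated bound still follows provided $\rho$ is itself bounded by the square of the target. Otherwise I would absorb this by the squared form of the first term (note $S^{2\ell}=(S^\ell)^2$, which suggests $K=S^\ell$ and a square-root loss).

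Second, I bound $\rho$. \Cref{lem:switch_between_expectation_and_tail_bound} gives
\begin{equation*}
\rho\leq \frac{8K}{M^\ell}+4\,\frac{\Pr[\textsf{Accept0}]+\Pr[\textsf{Accept1}]}{\Pr[\textsf{PredicatePassed}]},
\end{equation*}
where the probabilities are joint with ${\sf PredicatePassed}$ in ${\sf Expt}_{\sf Standard}$ with predicate $\Pi^{(S-1)}_{\sf alter}$. \Cref{lem:compressed-oracle-equivalent-to-normal-oracle} lets me move to the compressed experiment, and \Cref{lem:split_good_bad_structure} splits the second term into a good-structure part and a bad-structure part.

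For the good-structure part, \Cref{lem:bound_with_good_structure} with $k=S$ gives
\begin{equation*}
O\brackets{\frac{(ST)^2(ST+k_0)^{\ell-1}}{K_{\sf sol}^2N}}^{K_{\sf sol}}.
\end{equation*}
For the bad-structure part, \Cref{lem:bound_with_bad_structure} with $k_0=2ST$ gives
\begin{equation*}
\sum_i O\brackets{\frac{(ST)^{\ell+1}}{k_i^{2/i}N}}^{k_i^{1/i}}.
\end{equation*}
This second bound must then be divided by $\Pr[{\sf PredicatePassed}]$.

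I choose $k_i=c\,S^{i}$ so that $k_i^{1/i}=\Theta(S)$. I then choose $K$ so that the root $K_{\sf sol}$ of $f_{K,k_1,\dots,k_\ell}$ is $\Theta(S)$. Since all $k_i\approx S^i$, taking $K=C S^\ell$ for a large constant $C$ gives $K_{\sf sol}=\Theta(S)$, and $K>\max k_i$ holds. This also satisfies the hypotheses $K_{\sf sol}=\Omega(\log N)$ (using $S=\Omega(\log N)$) and $T=\Omega(K)$. With $ST+k_0=O(ST)$, both exponential terms then become $O\brackets{\frac{(ST)^{\ell+1}}{N}}^{\Theta(S)}$. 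The term $8K/M^\ell=O(S^\ell/M^\ell)$ after the square-root loss matches the stated $O(S^{2\ell}/M^\ell)$ up to how $K$ is tuned.

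The main obstacle is the bad-structure term, which is divided by $\Pr[{\sf PredicatePassed}]=P^{{\sf uniform},(S-1)}_{\sf FlipTest}$, and that quantity could be tiny. I would handle this by case analysis. If $P^{{\sf uniform},(S-1)}_{\sf FlipTest}$ is at most the target raised to the power $S-1$, then monotonicity directly gives $P^{{\sf uniform},(S)}\leq P^{{\sf uniform},(S-1)}$, which is small enough, and the $(2S-1)$-th root recovers the target. Otherwise, the denominator is at least the target to the power $S$, while the bad-structure numerator is $O((ST)^{\ell+1}/N)^{cS}$ with $c$ large, obtained by enlarging the constant in $k_i$. The ratio is then still $O((ST)^{\ell+1}/N)^{\Omega(S)}$.

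Finally, the $2^S$ loss from advice removal is already inside \Cref{lem:single_game_bound_2}, and constant factors raised to the power $S$ are absorbed into the $O(\cdot)^S$.
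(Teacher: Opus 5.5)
Your skeleton matches the paper's: advice removal with $S$ alternations, monotonicity, \Cref{lem:switch_between_expectation_and_tail_bound}, the compressed and structured experiments, and a case split to control the division by $\Pr[{\sf PredicatePassed}]$. The parameter choice $k_i=\Theta(S^i)$, however, breaks the bad-structure term. \Cref{lem:bound_with_bad_structure} then gives a numerator $O\brackets{x/S^2}^{\Theta(S)}$ with $x=(ST)^{\ell+1}/N$, and this must be divided by $\Pr[{\sf PredicatePassed}]$. A case split whose complementary branch already yields the target can only guarantee a lower bound of order $M^{-\Theta(\ell S)}$ on that probability; the paper uses $\Pr[{\sf PredicatePassed}]\geq P^{{\sf uniform},(S)}_{\sf FlipTest}\geq(16K/M^\ell)^{2S}$. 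An exponent of order $S$ cannot absorb $M^{\Theta(\ell S)}=2^{\Theta(S\log M)}$ unless its base is polynomially small in $M$. That fails when $x$ is a small constant and $S=\Theta(\log N)$, which is precisely the regime in which the theorem is invoked later (with $T'=cS^{-1}N^{\frac{1}{\ell+1}}$), and enlarging the constant in $k_i$ does not change this. The paper instead takes $k_i=S^{\ell+i}$, so every exponent $k_i^{1/i}$ is at least $S^2$. Since $S=\Omega(\log N)$ and $M=\poly(N)$, the factor $M^{2\ell S}=(M^{2\ell/S})^{S^2}=O(1)^{S^2}$ is then absorbed inside $O(\cdot)^{S^2}$. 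This forces $K>k_\ell=S^{2\ell}$, i.e.\ $K=2S^{2\ell}$, with $K_{\sf sol}$ still $\Theta(S)$. That choice, not a squared $S^\ell$, is the origin of the $S^{2\ell}/M^\ell$ term. Likewise $T=\Omega(k_i)$ is the origin of the hypothesis $T=\Omega(S^{2\ell})$.

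Your treatment of the $(2S-1)$-th root is also a gap. Passing through $P^{{\sf uniform},(S)}_{\sf FlipTest}\leq\rho^S$ gives only $P^{(1)}_{\sf FlipTest}\leq 2\rho^{S/(2S-1)}$, which is essentially $2\sqrt\rho$. Your repair does not work: for $K=\Theta(S^\ell)$, $\sqrt{8K/M^\ell}=\Theta(S^{\ell/2}M^{-\ell/2})$ is much larger than $S^{2\ell}/M^\ell$, and no choice of $K$ helps because that term is linear in $M^{-\ell}$. For the same reason, your first branch (threshold $\tau^{S-1}$) yields only about $\sqrt\tau$. The threshold must be of the paper's form $(16K/M^\ell)^{2S}$, whose complement gives $P^{(1)}_{\sf FlipTest}\leq 2(16K/M^\ell)^{2S/(2S-1)}\leq 32K/M^\ell$.

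The square-root loss disappears if you compare with $\rho$ directly. Take the Jordan decomposition of $(\Pi_{\sf FlipTest}^{\cal{A}',\set{y_H}},\Pi_{\sf reset})$, with $c_i^2$ the weights of $\rho_{\sf standard}$ on the Jordan vectors in the image of $\Pi_{\sf reset}$ and $p_i$ the overlaps; cross terms between blocks vanish. Then $P^{{\sf uniform},(S)}_{\sf FlipTest}=\sum_i c_i^2p_i^{2S-1}$ and $\rho=\sum_i c_i^2p_i^{2S-1}/\sum_i c_i^2p_i^{2S-2}$. The power-mean inequality $\brackets{\sum_i c_i^2p_i^{2S-2}}^{\frac{1}{2S-2}}\leq\brackets{\sum_i c_i^2p_i^{2S-1}}^{\frac{1}{2S-1}}$ gives $\brackets{P^{{\sf uniform},(S)}_{\sf FlipTest}}^{\frac{1}{2S-1}}\leq\rho$, hence $P^{(1)}_{\sf FlipTest}\leq 2\rho$. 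This is also the endpoint of the paper's chain. Note, however, that the paper's displayed intermediate $2P^{{\sf uniform},(S)}_{\sf FlipTest}/P^{{\sf uniform},(S-1)}_{\sf FlipTest}$ is not by itself an upper bound: a single block with overlap $p$ gives $p$ versus $p^2$. The power-mean comparison is therefore the step you actually need here.
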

\begin{proof}
    Set $k_0=2ST$, $k_i=S^{\ell+i}$ and $K=2S^{2\ell}$.
    \begin{claim}
        Let $K_{\sf sol}$ be the only non-negative real root for function
        \begin{equation*}
            f(x)=x^\ell+\sum_{i=0}^{\ell-1}k_{\ell-i}x^i-K=x^\ell+\sum_{i=0}^{\ell-1}S^{2\ell-i}x^i-S^{2\ell}.
        \end{equation*}
        We have that $K_{\sf sol}\geq S$.
    \end{claim}
    \begin{proof}
        \begin{align*}
            f(S+1)=&(S+1)^{\ell}+S^\ell\frac{(S+1)^{\ell}-S^\ell}{(S+1)-S}-2S^{2\ell}\\
            \leq&(S+1)^{\ell}S^\ell(S+1)^{\ell}-2S^{2\ell}<0.
        \end{align*}
        Since $f(x)$ is monotonically increasing on $(0,+\infty)$, $K_{\sf sol}\geq S$.
    \end{proof}
    Now we assume that
    \begin{equation*}
        P^{{\sf uniform},(S)}_{\sf FlipTest}\geq\brackets{\frac{16K}{M^\ell}}^{2S}
    \end{equation*}
    cause Otherwise, this theorem holds immediately by
    \begin{equation*}
        P^{(1)}_{\sf FlipTest}\leq2\brackets{P^{{\sf uniform},(S)}_{\sf FlipTest}}^{\frac{1}{2S-1}}\leq2\brackets{\frac{16K}{M^\ell}}^{\frac{2S}{2S-1}}\leq\frac{32K}{M^\ell}.
    \end{equation*}
    Now we begin the proof:
    \begin{align*}
        &P^{(1)}_{\sf FlipTest}\leq2\brackets{P^{{\sf uniform},(S)}_{\sf FlipTest}}^{\frac{1}{2S-1}}\leq\frac{2P^{{\sf uniform},(S)}_{\sf FlipTest}}{P^{{\sf uniform},(S-1)}_{\sf FlipTest}}\\
        \leq&\frac{2\Pr\left[\set{\sf PredicatePassed, Accepted}\subseteq{\sf Expt}_{\sf FlipTest}^{\cal{A}',\set{y_H}}(S)\right]}{\Pr\left[{\sf PredicatePassed}\in{\sf Expt}_{\sf FlipTest}^{\cal{A}',\set{y_H}}(S)\right]}\\
        \leq&\frac{16K}{M^\ell}+\frac{8\Pr\left[\set{{\sf PredicatePassed,Accept0}}\subseteq{\sf Expt}_{{\sf Standard}}^{\cal{A}'}\brackets{\Pi_{\sf alter}^{(S-1)},K}\right]}{\Pr\left[{\sf PredicatePassed}\in{\sf Expt}_{{\sf Standard}}^{\cal{A}'}\brackets{\Pi_{\sf alter}^{(S-1)},K}\right]}\\
        &+\frac{8\Pr\left[\set{{\sf PredicatePassed,Accept1}}\subseteq{\sf Expt}_{{\sf Standard}}^{\cal{A}'}\brackets{\Pi_{\sf alter}^{(S-1)},K}\right]}{\Pr\left[{\sf PredicatePassed}\in{\sf Expt}_{{\sf Standard}}^{\cal{A}'}\brackets{\Pi_{\sf alter}^{(S-1)},K}\right]}\\
        \leq&\frac{16K}{M^\ell}+\frac{16\Pr\left[\set{{\sf PredicatePassed,GoodStructure,Accept0}}\subseteq{\sf Expt}_{{\sf Structured}}^{\cal{A}'}(\Pi_{\sf alter}^{(S-1)},K,k_0,k_1,\cdots,k_\ell)\right]}{\Pr\left[\set{{\sf PredicatePassed,GoodStructure}}\subseteq{\sf Expt}_{{\sf Structured}}^{\cal{A}'}(\Pi_{\sf alter}^{(S-1)},K,k_0,k_1,\cdots,k_\ell)\right]}\\
        &+\frac{16\Pr\left[\set{{\sf PredicatePassed,GoodStructure,Accept1}}\subseteq{\sf Expt}_{{\sf Structured}}^{\cal{A}'}(\Pi_{\sf alter}^{(S-1)},K,k_0,k_1,\cdots,k_\ell)\right]}{\Pr\left[\set{{\sf PredicatePassed,GoodStructure}}\subseteq{\sf Expt}_{{\sf Structured}}^{\cal{A}'}(\Pi_{\sf alter}^{(S-1)},K,k_0,k_1,\cdots,k_\ell)\right]}\\
        &+\frac{32\Pr\left[{\sf BadStructure}\in{\sf Expt}_{{\sf Structured}}^{\cal{A}'}\brackets{\Pi_{\sf alter}^{(S-1)},K,k_0,k_1,\cdots,k_\ell}\right]}{\Pr\left[{\sf PredicatePassed}\in{\sf Expt}_{{\sf Structured}}^{\cal{A}'}\brackets{\Pi_{\sf alter}^{(S-1)},K,k_0,k_1,\cdots,k_\ell}\right]}\\
        \leq&\frac{16K}{M^\ell}+O\brackets{\frac{(ST)^2(ST+k_0)^{\ell-1}}{K_{{\sf sol}}^2N}}^{K_{{\sf sol}}}+\frac{\sum_{i=1}^{\ell}O\brackets{\frac{(2ST)^{i+1}}{k_i^{\frac{2}{i}}N}}^{k_i^{\frac{1}{i}}}}{\abs{\brackets{\Pi_{\sf reset}\Pi_{{\sf FlipTest}}^{\cal{A}',\set{y_H}}}^{S-1}\rho_{\sf compressed}\brackets{\Pi_{{\sf FlipTest}}^{\cal{A}',\set{y_H}}\Pi_{\sf reset}}^{S-1}}}\\
        \leq&\frac{16K}{M^\ell}+O\brackets{\frac{(ST)^2(ST+k_0)^{\ell-1}}{K_{{\sf sol}}^2N}}^{K_{{\sf sol}}}+\frac{\sum_{i=1}^{\ell}O\brackets{\frac{(2ST)^{i+1}}{k_i^{\frac{2}{i}}N}}^{k_i^{\frac{1}{i}}}}{P^{{\sf uniform},(S)}_{\sf FlipTest}}\\
        \leq&\frac{16K}{M^\ell}+O\brackets{\frac{(ST)^2(ST+k_0)^{\ell-1}}{K_{{\sf sol}}^2N}}^{K_{{\sf sol}}}+\frac{\sum_{i=1}^{\ell}O\brackets{\frac{(2ST)^{i+1}}{k_i^{\frac{2}{i}}N}}^{k_i^{\frac{1}{i}}}}{\brackets{\frac{16K}{M^\ell}}^{2S}}\\
        \leq&\frac{32S^{2\ell}}{M^\ell}+O\brackets{\frac{(ST)^{\ell+1}}{N}}^{S}+\frac{O\brackets{\frac{(ST)^{i+1}}{S^4N}}^{S^2}}{\brackets{\frac{32S^{2\ell}}{M^\ell}}^{2S}}\\
        \leq&O\brackets{\frac{S^{2\ell}}{M^\ell}}+O\brackets{\frac{(ST)^{\ell+1}}{N}}^{S}+O\brackets{\frac{S^{\ell-3}T^{\ell+1}}{N}}^{S^2}\\
        \leq&O\brackets{\frac{S^{2\ell}}{M^\ell}}+O\brackets{\frac{(ST)^{\ell+1}}{N}}^{S}.
    \end{align*}
    Here are the reasons on why these inequalities hold:
    \begin{itemize}
        \item The first and second inequality follows from \Cref{lem:single_game_bound_1}, \Cref{lem:single_game_bound_2} and \Cref{lem:alternative_game_monotone}.
        \item The third inequality follows from a direct interpretation.
        \item The fourth inequality follows from \Cref{lem:switch_between_expectation_and_tail_bound}.
        \item The fifth inequality follows from \Cref{lem:split_good_bad_structure}.
        \item The sixth and seventh inequality follows from \Cref{lem:bound_with_bad_structure}, \Cref{lem:bound_with_good_structure} and an interpretation of the term
        \begin{equation*}
            \Pr\left[{\sf PredicatePassed}\in{\sf Expt}_{{\sf Structured}}^{\cal{A}'}\brackets{\Pi_{\sf alter}^{(S-1)},K,k_0,k_1,\cdots,k_\ell}\right].
        \end{equation*}
        \item The eighth inequality holds because we assume $P^{{\sf uniform},(S)}_{\sf FlipTest}\geq\brackets{\frac{16K}{M^\ell}}^{2S}$.
    \end{itemize}
\end{proof}
\begin{corollary}[Success probability in finding a random $\ell$-tuple with mixed advice]\label{cor:non-uniform-random-challenge}
    For all $\set{y_H}$ and all algorithms $\cal{A}'$ with $T$ oracle query and $T$ $\output$ operations where $T=\Omega(S^{2\ell})$. Let 
    \begin{equation*}
        \rho_{\sf st}=\frac{1}{N^M2^{M^\ell}}\sum_{H,\hat{r}}(\rho_{H,\hat{r}})_{\qryreg\ansreg\auxreg}\otimes\ket{\hat{r}}_{\outreg}\bra{\hat{r}}\otimes\ket{H}_{\dbreg}\bra{H},
    \end{equation*} 
    be any $S$-qubit mixed state advice for the algorithm. The probability of $\cal{A}'$ succeeds in finding a randomly generated $\ell$-tuple of sum $y_H$ is at most $O\brackets{\frac{S^{2\ell}}{M^\ell}}+O\brackets{\frac{(ST)^{\ell+1}}{N}}^{S}$. 
\end{corollary}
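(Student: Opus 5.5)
The plan is to reduce the mixed-advice case to the pure-advice case of \Cref{thm:non-uniform-random-challenge} by linearity. The key observation is that the success probability is an affine function of the starting state. It is the expectation of the fixed positive operator
\begin{equation*}
    E:=\Pi_{{\sf FlipTest}}^{\cal{A}',\set{y_H}}\brackets{\,\cdot\,\otimes\ket{0}_{\ancillareg}\bra{0}\otimes\ket{0}_{\seedreg}\bra{0}\otimes\ket{0}_{\memreg}\bra{0}}\Pi_{{\sf FlipTest}}^{\cal{A}',\set{y_H}},
\end{equation*}
so the success probability on $\rho_{\sf st}$ equals $\trace{E(\rho_{\sf st})}$.

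First, for each $(H,\hat r)$ decompose $\rho_{H,\hat{r}}=\sum_j q_{H,\hat r,j}\ket{\phi_{H,\hat r,j}}\bra{\phi_{H,\hat r,j}}$ into pure $S$-qubit states with $\sum_j q_{H,\hat r,j}=1$.

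Next, I would like to write $\rho_{\sf st}$ as a convex combination of pure advice states of exactly the form required in \Cref{thm:non-uniform-random-challenge}. To do this, pick an independent index $j(H,\hat r)$ for every pair $(H,\hat r)$, i.e. a function $J$ from pairs to indices. Weight $J$ by $\prod_{H,\hat r}q_{H,\hat r,J(H,\hat r)}$ and set
\begin{equation*}
    \ket{\psi^{J}_{\sf st}}=\frac{1}{\sqrt{N^M2^{M^\ell}}}\sum_{H,\hat r}\ket{\phi_{H,\hat r,J(H,\hat r)}}\ket{\hat r}_{\outreg}\ket{H}_{\dbreg}.
\end{equation*}
This pure mixture does not reproduce $\rho_{\sf st}$ exactly, since it has coherences between different $(H,\hat r)$. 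The fix is to note that the register $\outreg\dbreg$ in the Hadamard/standard bases respectively acts as a classical control. Equivalently, one can apply a random phase $\sum_{H,\hat r}e^{i\theta_{H,\hat r}}\ket{\hat r,H}\bra{\hat r,H}$ with uniformly random $\theta$, and averaging over $\theta$ kills the off-diagonal blocks. So I take the ensemble over $(J,\theta)$ of the states $\ket{\psi^{J,\theta}_{\sf st}}$, with coefficients $e^{i\theta_{H,\hat r}}/\sqrt{N^M2^{M^\ell}}$. Each such state is still a valid pure $S$-qubit advice state in the sense of \Cref{thm:non-uniform-random-challenge}, because a phase can be absorbed into $\ket{\psi_{H,\hat r}}$. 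Its average over the ensemble is exactly $\rho_{\sf st}$.

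Finally, apply \Cref{thm:non-uniform-random-challenge} to each $\ket{\psi^{J,\theta}_{\sf st}}$, which bounds $\trace{E(\ket{\psi^{J,\theta}_{\sf st}}\bra{\psi^{J,\theta}_{\sf st}})}$ by $O(S^{2\ell}/M^\ell)+O((ST)^{\ell+1}/N)^S$ uniformly in $(J,\theta)$. Linearity of trace and of $E$ then gives the same bound for the average, which is the probability for $\rho_{\sf st}$. This uses the remark stated earlier that all experiments are linear.

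The only subtle step is verifying that the random-phase dephasing exactly reproduces $\rho_{\sf st}$ as the block-diagonal state written in the corollary. The constants in the $O(\cdot)$ of \Cref{thm:non-uniform-random-challenge} do not depend on the advice, so the uniform bound carries over. Everything else is a direct application.
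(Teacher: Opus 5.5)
Your proposal is correct and follows the same route as the paper, which simply notes that the success probability is linear in the input state and averages the bound of \Cref{thm:non-uniform-random-challenge}. Your random-phase dephasing over the $(H,\hat r)$ blocks, combined with the product choice of eigenvectors, makes explicit a step the paper leaves implicit: that the block-diagonal $\rho_{\sf st}$ really is a convex combination of pure advice states of the superposition form the theorem covers.
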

\begin{proof}
    In \Cref{thm:non-uniform-random-challenge}, the probability is linear in the input state $\ket{\psi_{\sf st}}$. By averaging, you get the same bound for any mixed state advice.
\end{proof}
\section{\texorpdfstring{The time-space bound for finding $K$ $\ell$-tuples with the same sum}{The time-space bound for finding K l-tuples with the same sum}}

\newcommand{\cB}{\mathcal{B}}
\newcommand{\fliptest}{\mathsf{FlipTest}}
\newcommand{\avgtest}{\mathsf{AvgFlipTest}}
\newcommand{\cnot}{\mathsf{CNOT}}
\newcommand{\swap}{\mathsf{SWAP}}

\newcommand{\yh}{\{y_H\}}
\newcommand{\bigk}{\widehat{\mathbf{K}}}
\newcommand{\capacity}{\mathsf{Cap}}
\newcommand{\allreg}{\qryreg\ansreg\auxreg\outreg\dbreg}
\newcommand{\workreg}{\qryreg\ansreg\auxreg}

\newcommand{\boundedflip}{$p_e$-bounded flipping}
\newcommand{\maxcapacity}{same-sum $\ell$-tuple capacity}
\newcommand{\subsetcapacity}{sum-$\{y_H\}$ $\ell$-tuple capacity}
\newcommand{\qubitsubsetcapacity}[1]{sum-$\{y_H\}$ $#1$-th qubit $\ell$-tuple capacity}

\newcommand{\game}[4]{
\begin{boxfig}{H}{\footnotesize 
\centering{\textbf{#1}}
    #4
\vspace{0.2em} } \caption{\label{#3} #2}
\end{boxfig}
}

\subsection{The classical bound}
In this section we bound the probability of a classical algorithm outputting $K$ $\ell$-tuples with the same sum using $S$-bit space and $T$ oracle queries. The model we are considering is that there is a write-only $M^\ell$-bit classical memory where each bit indicates whether its corresponding $\ell$-tuple is added to the output ($0$ for no and $1$ for yes). The algorithm can only access this part of the memory by making an oracle call to the ${\sf FLIP}(x_1,x_2,\cdots,x_\ell)$ functionality that flips the bit that corresponds to the $\ell$-tuple $(x_1,x_2,\cdots,x_\ell)$. The function ${\sf FLIP}$ will also reply to the algorithm with a random generated value $g_{(x_1,x_2,\cdots,x_\ell)}$ that is only generated once for the first time that the algorithm calls ${\sf FLIP}(x_1,x_2,\cdots,x_\ell)$ and remains the same for later calls. We say that the algorithm finds $K$ $\ell$-tuples with the same sum iff all bits that are $1$ correspond to tuples with the same sum and the number of such bits is at least $K$. Here we provide a statement stronger than what we need. Instead of proving a time-space bound for finding $K$ $\ell$-tuples of sum $y$, we prove a time-space bound for finding $K$ $\ell$-tuples with the same sum.
\begin{theorem}\label{thm:classical_time_space_bound_for_finding_tuples}
    Let $H:[M]\rightarrow[N]$ be a random oracle with sufficiently large $M$ polynomial in $N$. For $\ell\geq 2$ and for any classical algorithm with space $S$ and $T$ oracle queries on either $H$ or ${\sf FLIP}$. Then the expectation of the maximum number of $\ell$-tuples with the same sum found by the algorithm is $O\brackets{\frac{S^{\frac{\ell^2-1}{\ell}}T}{N^{\frac{1}{\ell}}}}$ if $S=\Omega(\log N)$ and $S=O\brackets{N^{\frac{1}{\ell^2-1}}}$.
\end{theorem}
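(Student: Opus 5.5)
We use time segmentation together with the classical tail bound of \Cref{thm:classical-time-bound-for-finding-tuples-database}, applied segment by segment.

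\textbf{Segmentation.} Split the $T$ queries into $L=T/T'$ consecutive segments of $T'$ queries each, with $T'$ fixed below. At the start of each segment the algorithm's entire state is an $S$-bit string; call it advice. The write-only memory ${\sf FLIP}$ is never read except through the random values $g$, which are independent of $H$. So within a segment the algorithm is a $T'$-query algorithm with $S$ bits of advice. Each bit set to $1$ during the segment corresponds to an $\ell$-tuple the algorithm has named. For a named tuple's sum to be checked, all of its points must have been queried to $H$.

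\textbf{Per-segment bound.} We claim that, except with probability $2^{-\Omega(S)}$, the points queried in a single segment, together with at most $O(S)$ points whose $H$-values are encoded in the advice, contain fewer than $K'$ same-sum $\ell$-tuples. Any tuple whose points lie outside these sets has a sum that is uniform and independent, so it only adds lower-order terms in expectation.

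To prove the claim, treat the advice as fixed and take a union bound over all $2^S$ advice strings. This reduces to a uniform algorithm making $T'+O(S)$ queries (the advice points can be re-queried). Now apply \Cref{thm:classical-time-bound-for-finding-tuples-final} with $K'$ chosen so that $K'^{1/\ell}=cS$. The failure probability is
\[
2^S\cdot O\!\left(\frac{(T'+S)^\ell}{S N}\right)^{cS},
\]
which is $2^{-\Omega(S)}$ provided $(T')^\ell\le \delta S N$ for a small constant $\delta$. We therefore set $T'=(\delta S N)^{1/\ell}$. The requirement $K'=\Omega((\log N)^\ell)$ holds because $S=\Omega(\log N)$. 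The requirement $T'=\Omega(K')=\Omega(S^\ell)$ is exactly where the upper limit $S=O(N^{1/(\ell^2-1)})$ is used.

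\textbf{Combining segments.} A same-sum $\ell$-tuple at the end of the algorithm may mix points from different segments. This is the main obstacle. A naive sum over segments, giving $L\cdot K'$, is not valid, because the final memory records tuples whose points were queried across many segments. We handle this by bounding what the algorithm can use when it names a tuple in segment $j$. The tuple's points are either queried in segment $j$ or recalled from memory, and memory holds only $O(S/\log N)$ points with known values. So each named tuple lies inside a set of size $T'+O(S)$. The per-segment tail bound therefore caps the number of same-sum tuples flipped in any one segment at $K'=(cS)^\ell$.

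We then take expectations. The bad events have probability $2^{-\Omega(S)}$, and even on a bad event at most $T$ tuples can be flipped, so they contribute $2^{-\Omega(S)}\cdot T$, which is negligible. Summing over the segments gives an expected maximum of
\[
L K' = \frac{T}{T'}\,(cS)^\ell = O\!\left(\frac{T S^\ell}{(SN)^{1/\ell}}\right)=O\!\left(\frac{S^{\frac{\ell^2-1}{\ell}}T}{N^{1/\ell}}\right),
\]
which is the stated bound.

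The step I expect to need the most care is the cross-segment accounting: showing rigorously that tuples assembled from remembered points are covered by the "advice points" term. I would first try to formalize this through lazy sampling of $H$, letting the advice carry only the identities of the remembered points.
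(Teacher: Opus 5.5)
Your skeleton is the paper's: segments of $T'=\Theta((SN)^{1/\ell})$ queries, a threshold $K'=\Theta(S^\ell)$ whose exponent $K'^{1/\ell}=\Theta(S)$ absorbs the $2^S$ loss from the segment's starting state, the condition $T'=\Omega(S^\ell)$ that yields $S=O(N^{1/(\ell^2-1)})$, and a final sum $LK'$. The one place you depart from the paper is the ``Combining segments'' paragraph. That argument is wrong as written, and it is also unnecessary.

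\textbf{What is false.} You treat the $S$-bit state as $O(S/\log N)$ remembered point--value pairs. You then claim that every other tuple has a sum ``uniform and independent'' of the segment's view, and that every named tuple lies in a set of size $T'+O(S)$ of known points. Neither holds.
\begin{itemize}
\item The state at a segment boundary is an arbitrary $S$-bit function of $H$ (and of the $g$-values). It can carry aggregate information about many earlier points, e.g.\ which of $2^S$ fixed groups of previously queried points is richest in same-sum tuples, or a sum value popular among them.
\item ${\sf FLIP}$ may be called on tuples whose points the segment never queries.
\end{itemize}
So sums outside your ``known'' set are not independent of the segment's view. For the same reason, the lazy-sampling formalization you propose, where the advice carries only identities of remembered points, would not be sound.

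\textbf{The repair uses a tool you already have.} Apply the union bound over the $2^S$ possible starting states directly to the event that the tuples \emph{flipped during segment $j$} contain $K'$ with a common sum. For each fixed state, the segment is a uniform algorithm with $T'$ queries. \Cref{thm:classical-time-bound-for-finding-tuples-final} then bounds the probability that its output contains $K'$ same-sum tuples, whatever points those tuples use. Its proof post-queries every output point, costing at most $\ell T'$ extra queries. Hence tuples mixing points from earlier segments, or using never-queried points, are covered automatically. Every tuple whose bit is $1$ at the end was flipped in some segment, so the final maximum over $y$ is at most the sum over segments of the per-segment maxima. The ``naive'' sum $LK'$ is therefore valid, which is exactly how the paper argues, if tersely.
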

\begin{proof}
    Divide the algorithm into $L=T/T'$ segments with each of them querying the random oracle for $T'=cS^{\frac{1}{\ell}}N^{\frac{1}{\ell}}=\Omega(S^\ell)$ times. By setting $c$ properly, by \Cref{thm:classical-time-bound-for-finding-tuples-final}, we can guarantee that the probability of outputting at least $S^{\ell}$ $\ell$-tuples with the same sum is at most $\frac{1}{M^\ell}$ for all segments since $M$ is a polynomial of $N$. This is because the advice for each segment (i.e. the internal state at the start of that segment) is of size $S$ and thus only provides a boost of $2^S$ on the probability in \Cref{thm:classical-time-bound-for-finding-tuples-final}. The exponent, $K^{\frac{1}{\ell}}=S=\Omega(\log N)$ makes it possible to absorb $2^S$ into any factor that is polynomial in $N$ into the constant $c$. With such $c$, the expected number of $\ell$-tuples with the same sum found in every segment is at most $S^{\ell}+1$. Thus the overall expectation of the maximum number of $\ell$-tuples with the same sum found by the algorithm is at most $O\brackets{LS^{\ell}}=O\brackets{\frac{S^{\frac{\ell^2-1}{\ell}}T}{N^{\frac{1}{\ell}}}}$.
\end{proof}

\subsection{The quantum bound}
Now we bound the \subsetcapacity{}(\Cref{def:subset-capacity}) of the state after an algorithm with $S$-qubit space and $T$ oracle queries to two oracles $H:[M]\to[N] ,G:[M^\ell]\to [N_0]$. That is to say, we take the imaginary process of measuring $\dbreg$ to obtain $H$, and measure $\outreg$ to obtain $\widehat{r}$, then look at the entries with indices $(x_1\cdots,x_{\ell})$ that have sum $y_H$, and then count the number of non-zero entries among these entries.  We specifically ask that the algorithm queries $G$ through $\Hadamardo$. 

We first describe the settings of all registers. Considering that a space-bounded algorithm can introduce ancilla qubits into the working registers and discard some of them on the fly, we specify an ancilla register $\ancillareg$ to include these qubits. $\ancillareg$ can be of arbitrary size, and an algorithm could apply unitary across $\ancillareg$ and $\workreg$, but it can only pass the state in working registers $\workreg$ to the next timestep but not $\ancillareg$. In other words, different steps of algorithms can only access disjoint parts of $\ancillareg$. The registers $\workreg\ancillareg\dbreg\outreg$ are initialized as:
\begin{equation*}
    \ket{\phi_{\sf st}}_{\workreg\ancillareg\dbreg\outreg} =\frac{1}{\sqrt{N^M N_0^{M^\ell}}}\sum_{H,G}\ket{\psi^{(0)}_{H,G}}_{\qryreg\ansreg\auxreg\ancillareg}\ket{H}_{\dbreg}\ket{\widehat{G}}_{\outreg}
\end{equation*}
where $\ket{\psi^{(0)}_{H,G}}_{\workreg\ancillareg}$ is the purification of the advice under $H,G$, while the algorithm could only access the mixed state in $\workreg$. For a uniform algorithm, we set $\ket{\psi^{(0)}_{H,G}}_{\workreg\ancillareg} = \ket{0}_{\workreg\ancillareg}$ for all $H,G$.
Here, $\ket{\widehat{G}}$ is the Hadamard basis state corresponding to $G\in\set{0,1}^{M^\ell}$, we can also write it as a tensor product of entry-wise Hadamard basis,
\begin{equation*}
    \ket{\widehat{G}}_{\outreg} = \frac{1}{\sqrt{N_0^{M^\ell}}}\sum_{R\in \set{0,1}^{M^\ell}} (-1)^{\langle G,R \rangle}\ket{R}_{\outreg} = \bigotimes_{i\in[M^\ell]} \frac{1}{\sqrt{N_0}}\sum_{r\in\set{0,1}} (-1)^{\langle G(i),R(i)\rangle} \ket{R(i)}_{\outreg_i} = \bigotimes_{i\in[M^\ell]} \ket{\widehat{G(i)}}_{\outreg_i},
\end{equation*}
with $\outreg_i$ being the $i$-th entry of the register with size $N_0$. 


We specify that all queries made to $G$ need to be performed through $\Hadamardo$ (in this section all $\Hadamardo$ refers to the Hadamard oracle query to $G$), which is consistent with the setting in the last section. Specifically, here we consider the $\Hadamardo$ oracle applied on $\outreg$,  $\Hadamardo\ket{x,u,w,a,H}_{\workreg\ancillareg\dbreg}\ket{R}_{\outreg} = \ket{x,u,w,a,H}_{\workreg\ancillareg\dbreg}\ket{R\oplus (x,u)}_{\outreg}$. Then, if we compare $\Hadamardo$ on $\outreg$ and the $\output$ operator for outputting $\ell$-tuples, they are identical. The initialization of registers here is also consistent with the last section. Specifically, for uniform algorithms, $\outreg$ is initialized as $\ket{0,\cdots,0} = \frac{1}{\sqrt{N_0^{M^\ell}}}\sum_{G\in\set{0,1}^{M^\ell}} \ket{\widehat{G}}$. 

Therefore, we could transform all results on algorithms with $H$ queries and $\output$ operations to results on algorithms with $H$ queries and $\Hadamardo$ $G$ queries. We denote the final state in the registers $\workreg\ancillareg\dbreg\outreg$ as:
\begin{equation*}
    \ket{\phi_{\mathsf{fin}}}_{\workreg\ancillareg\dbreg\outreg} =\frac{1}{\sqrt{N^M N_0^{M^\ell}}}\sum_{H,G}\ket{\psi^{\sf{fin}}_{H,G}}_{\qryreg\ansreg\auxreg\ancillareg}\ket{H}_{\dbreg}\ket{\widehat{G}}_{\outreg},
\end{equation*}
where $\ket{\psi^{\sf{fin}}_{H,G}}$ is the state in $\workreg\ancillareg$ after running algorithm $\cA$ under oracle $H,G$. We describe the final state as a pure state, since we can always extend the ancilla register $\ancillareg$ to purify the entire state. We further define our target rigorously, which we call \textbf{\subsetcapacity}, as the expected value of an observable on our final state $\ket{\phi_{\sf fin}}$.

\begin{definition}[\subsetcapacity]
    \label{def:subset-capacity}
    For any state in registers $\workreg\ancillareg\dbreg\outreg$, we define its \textbf{\subsetcapacity}:
    \begin{itemize}
        \item For $R\in [N_0]^{M^\ell}$, define its \textbf{subset-$Y$ capacity} relative to a subset $Y\subseteq [M^\ell]$,
        \begin{equation*}
            \capacity_{Y}(R) = \sum_{i\in Y} \mathds{1}\{R_i\neq 0\}.
        \end{equation*}

        \item For $R\in [N_0]^{M^\ell}$, view it as a $M^\ell$ entry table, then each entry has $n_0=\log N_0$ qubits, define its \textbf{subset-$Y$ $j$-th qubit capacity} relative to $Y\subseteq [M^\ell]$,
        \begin{equation*}
            \capacity^{(j)}_{Y}(R) = \sum_{i\in Y} \mathds{1}\{\textnormal{the $j$-th bit of } R_i=1\}.
        \end{equation*}

        \item Relative to a sequence $\yh$, define observable $\bigk_{\yh}$, as well as observable $\bigk_{\yh}^{(j)}$ for $j=1,\cdots,n_0$,
        \begin{equation*}
        \begin{split}
            \bigk_{\yh} =& I_{\qryreg\ansreg\auxreg\ancillareg}\otimes \sum_{H,R} \capacity_{Y_H}(R)\cdot \ketbra{H}{H}_{\dbreg}\otimes\ketbra{R}{R}_{\outreg},\\
            \bigk^{(j)}_{\yh} =& I_{\qryreg\ansreg\auxreg\ancillareg}\otimes \sum_{H,R} \capacity^{(j)}_{Y_H}(R)\cdot \ketbra{H}{H}_{\dbreg}\otimes\ketbra{R}{R}_{\outreg},
        \end{split}
        \end{equation*}
        while recall $Y_H$ is the set of $\ell$-tuples with sum $y_H$ decided by $\yh$ and $H$.
        
        

        \item For algorithm $\cA$, let the final state on $\workreg\ancillareg\dbreg\outreg$ be $\ket{\phi_{\sf fin}}$. We define the \textbf{\subsetcapacity{}} $K_{\yh}(\cA)$ and \textbf{\qubitsubsetcapacity{j}} $K^{(j)}_{\yh}(\cA)$ of $\cA$ relative to $\yh$, by taking the expected value of the observables on its final state,
        \begin{equation*}
            K_{\yh}(\cA):= \bra{\phi_{\sf fin}}\bigk_{\yh}\ket{\phi_{\sf fin}},\ \  K^{(j)}_{\yh}(\cA):= \bra{\phi_{\sf fin}}\bigk^{(j)}_{\yh}\ket{\phi_{\sf fin}}.
        \end{equation*}
        

    \end{itemize}
    
\end{definition}


In other words, $K^{(j)}_{\yh}(\cA)$ essentially measures the output register $\outreg$ after running the algorithm $\cA$, and counts the number of entries with $j$-th qubit being flipped and indices constrained by $\yh$, while $K_{\yh}(\cA)$ counts the index-constrained entries with any bit being flipped. To bound this property, we adopt a reordering procedure for oracle $G$, to help the analysis of $K_{\yh}(\cA)$.

\zihan{brief introduction to the reordering process}
The process follows an observation that $K_{\yh}(\cA)\leq\sum_{j=1}^{n_0}K_{\yh}^{(j)}(\cA)$, as a non-zero entry implies at least one non-zero bit. Therefore we turn to bound each $K_{\yh}^{(j)}(\cA)$ individually, by viewing $G$ as a concatenation of $n_0$ one-bit oracles, $g^{(1)},\cdots,g^{(n_0)}\in \bit^{M^\ell}$, which leads to the first step of reordering, splitting output oracle $G$. We split the register $\outreg$ into $n_0$ registers $\outreg^{(1)}, \cdots, \outreg^{(n_0)}$, where they each have the same number of entries as $\outreg$, and one qubit per entry. $\outreg^{(j)}$ simply takes the $j$-th qubit from every entry in $\outreg$, and therefore stores $g^{(j)}$ in Hadamard basis. The second step is fixing all but one of these $n_0$ oracles. Let $\outreg^{(-j)}=\otimes_{j'\neq j} \outreg^{(j')}$ be the $n_0-1$ registers other than $\outreg^{(j)}$, notice that the measurement of observable $K_{\yh}^{(j)}(\cA)$ acts trivially on $\outreg^{(-j)}$, so we could fix these $n_0-1$ oracles $\{g^{(j')}\}_{j'\neq j}$ as $G' \in \bit^{n_0-1}$ and examine the observable value for each $G'$, and then take expectation to get $K_{\yh}^{(j)}(\cA)$. For fixed $G'$, we remain the $\Hadamardo$ queries to $g^{(j)}$, and replace all $\Hadamardo$ queries to other $g^{(j')}$ ($j' \neq j$) with a fixed unitary hardcoded by $g^{(j')}$, then the algorithm $\cA$ could be simplified as a $G'$-hardcoded algorithm $\cA[j, G']$ that queries $g^{(j)}$ and $H$. This way, we focus on bounding $K_{\yh}(\cA[j, G'])$, reducing the problem to the case when the oracle $g^{(j)}$ in the output register $\outreg^{(j)}$ (stored in computation basis) is a one-bit oracle. 


We then introduce our theorem in bounding $K_{\yh}(\cA)$.

\begin{theorem}
\label{thm:two-oracle-K-l-tuple-time-space-tradeoff}
    Let $H:[M]\rightarrow[N]$, $G:[M^\ell]\rightarrow[N_0] $ be two random oracles. For any $\yh$ and any uniform quantum algorithm $\cA$ with space $S$ , $T$ queries to $H$, and $T$ $\Hadamardo$ queries to $G$. Then, we have the \subsetcapacity{} $K_{\yh}(\cA)$ is upper bounded by $\widetilde{O}\brackets{S^{2\ell+2}T^2 N^{-\frac{2}{\ell+1}}} $ if $S = \Omega(\log N)$ and $S=O\brackets{N^\frac{1}{(\ell+1)(2\ell+1)}}$. 
\end{theorem}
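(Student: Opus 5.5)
We prove the bound by time-segmentation combined with the per-tuple flip-test bound of \Cref{cor:non-uniform-random-challenge}, after reducing to a one-bit output oracle. First, using $K_{\yh}(\cA)\le\sum_{j=1}^{n_0}K^{(j)}_{\yh}(\cA)$ and the reordering described above, it suffices to bound $K_{\yh}(\cA[j,G'])$ for every fixed $j$ and hardcoded $G'$, losing only a factor $n_0=\log N_0$; this factor is absorbed into the $\widetilde{O}$. For a one-bit oracle $g^{(j)}$ held in Hadamard basis on $\outreg^{(j)}$, a $\Hadamardo$ query is exactly an $\output$ gate. By linearity of expectation, $K_{\yh}(\cA[j,G'])=\sum_{\vec x\in[M]^\ell}w_{\vec x}$, where $w_{\vec x}$ is the probability that entry $\vec x$ of $\outreg^{(j)}$ is $1$ and $\vec x$ is a valid tuple of sum $y_H$. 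Hence $K_{\yh}=M^\ell\cdot P_{\fliptest}$, the success probability in the single flip test of \Cref{proj:single-game} with a uniformly random challenge.

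Next, split $\cA$ into $L=T/T'$ segments of $T'$ queries each. Between segments only the $S$-qubit working register $\workreg$ is carried over; the ancilla parts are disjoint, while $\dbreg$ and $\outreg$ are global. Write the final flip indicator for $\vec x$ as a telescoping sum of the flips made in each segment. The $\outreg$ entry is a classical control in the Hadamard picture and each segment acts on it by $\oplus$, so the amplitude for ``$\vec x$ flipped at the end'' is bounded by the sum over segments of the amplitudes for ``$\vec x$ flipped during segment $i$.'' Cauchy--Schwartz then gives
\[
w_{\vec x}\le L\sum_{i=1}^L w^{(i)}_{\vec x},
\]
as in the overview. Each segment is a non-uniform algorithm with $S$-qubit (mixed) advice and $T'$ queries. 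Its input has exactly the form of $\rho_{\sf st}$ in \Cref{cor:non-uniform-random-challenge}: $\outreg$ is classical in Hadamard basis and $H$ is standard. Therefore
\[
\sum_{\vec x}w^{(i)}_{\vec x}\le M^\ell\left(O\!\left(\frac{S^{2\ell}}{M^\ell}\right)+O\!\left(\frac{(ST')^{\ell+1}}{N}\right)^{S}\right).
\]

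Choose $T'=c\,N^{1/(\ell+1)}/S$ with $c$ a small constant, so that $(ST')^{\ell+1}/N\le 1/2^{\Theta(\log N)}$. Since $S=\Omega(\log N)$, the second term is at most $M^{-\ell}$ and is dominated by the first. Each segment therefore contributes $O(S^{2\ell})$. Summing gives
\[
K_{\yh}\le L^2\cdot O(S^{2\ell})=O\!\left(S^{2\ell}\frac{T^2S^2}{N^{2/(\ell+1)}}\right)=O\!\left(S^{2\ell+2}T^2N^{-\frac{2}{\ell+1}}\right),
\]
which is the claim. Two side conditions remain. The requirement $T'=\Omega(S^{2\ell})$ of \Cref{cor:non-uniform-random-challenge} becomes $N^{1/(\ell+1)}/S\gtrsim S^{2\ell}$, i.e.\ $S=O(N^{1/((\ell+1)(2\ell+1))})$, which is exactly the stated upper bound on $S$. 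The lower bound $S=\Omega(\log N)$ is what absorbs the polynomial factor. If $T<T'$, a single segment suffices and the bound is even better.

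The main obstacle is the segmentation inequality $w_{\vec x}\le L\sum_i w^{(i)}_{\vec x}$. This is exactly where the naive tail-bound union argument fails, since the same cell can be written in several segments. I would prove it by expressing $\Pr[\text{bit }\vec x=1]$ as $\|\Pi_1 U_L\cdots U_1\ket{\phi}\|^2$ and writing $\Pi_1$ via the parity of the per-segment flips. Equivalently, conjugate by the $\outreg$-Hadamard: the flip of $\vec x$ corresponds to the sign of a phase on the $\widehat{G}$-basis, and the deviation is telescoped as $\sum_i(U_L\cdots U_i-U_L\cdots U_{i+1}\tilde U_i\cdots)$. I must then check that each per-segment term equals the flip-test quantity for that segment's advice. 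Here I would rely on the remark that $\outreg$ is a classical control register, so that in the flip test at each segment the prior segments' state acts as valid mixed advice of the required form.
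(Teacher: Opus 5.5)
Your proposal is essentially the paper's proof. It has the same ingredients: bit-splitting with $G'$ hardcoded; $L$-slicing with $T'=cS^{-1}N^{1/(\ell+1)}$; the hybrid/Cauchy--Schwarz bound $K_{\yh}(\cB)\le L^2M^\ell p_e$ of \Cref{lem:quantum-boolean-function-union-bound}; and \Cref{cor:non-uniform-random-challenge} applied to each segment. The prefix-state mismatch you flag is resolved in the paper by averaging the two hybrid orders, so that with $g(i)$ uniform the prefix becomes exactly $\ket{\psi^{(t-1)}_{H,g}}$, as in the flip test.

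Two small slips:
\begin{itemize}
\item The ratio $(ST')^{\ell+1}/N=c^{\ell+1}$ is a constant. Only its $S$-th power $O(c^{\ell+1})^S$ is driven below $M^{-\ell}$, by taking $c$ small relative to the constants in $S\ge c_2\log N$ and $M\le N^{c_1}$.
\item For $T<T'$, a single segment yields $O(S^{2\ell})$, which is larger than the stated formula, not smaller. The paper likewise glosses over this regime by treating $L=2T/T'$ as a real number.
\end{itemize}
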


\begin{proof}

    We first apply the \textbf{reordering process} to algorithm $\cA$. First, we split oracle $G$ into $n_0$ one-bit oracles, where $n_0 = \log N_0$, viewing $G$ as an $n_0$-bit oracle. Define one-bit oracles $g^{(1)}, g^{(2)}, \cdots, g^{(n_0)} : [M^\ell] \to \{0, 1\}$, where $g^{(j)}$ maps any input $x$ to the $j$-th bit of $G(x)$. Denote $g^{(j)}_x := g^{(j)}(x)$ as the one-bit entry of $g^{(j)}$ indexed by $x$, then $G(x)$ can be represented by a $\mathbb{F}_2^{n_0}$ vector $(g^{(1)}_x, g^{(2)}_x, \cdots, g^{(n_o)}_x)$. We also split the output register $\outreg$ into $\outreg^{(1)}, \cdots, \outreg^{(n_0)}$, where for each $j$, $\outreg^{(j)}$ is an $M^\ell$-entry, one-qubit-per-entry register that takes the $j$-th qubit of $\outreg$.

    For each $\Hadamardo$ query to $G$, we view it under the Hadamard basis. Let $G_x = G(x)$, and $G_{-x}$ denotes the concatenation of values in all $M^\ell - 1$ entries other than the $x$-th entry, then
    \begin{equation*}
    \begin{split}
    \Hadamardo\cdot \ket{x,u,w,a,H}_{\workreg\ancillareg\dbreg} \ket{\widehat{G}}_{\outreg} 
    &= \Hadamardo\cdot \ket{x,u,w,a,H} \ket{\widehat{G_x}}_{\outreg_x} \ket{\widehat{G_{-x}}}_{\outreg_{-x}}\\
    &= \Hadamardo\cdot \ket{x,u,w,a,H} \otimes\frac{1}{\sqrt{N_0}}\sum_{r\in[N_0]} (-1)^{\langle r,G_x\rangle}\ket{r}_{\outreg_x} \ket{\widehat{G_{-x}}}_{\outreg_{-x}}\\
    &= \ket{x,u,w,a,H}\otimes\frac{1}{\sqrt{N_0}}\sum_{r\in[N_0]}(-1)^{\langle r,G_x\rangle}\ket{r\oplus u}_{\outreg_x} \ket{\widehat{G_{-x}}}_{\outreg_{-x}}\\
    &= (-1)^{\langle u,G_x\rangle} \ket{x,u,w,a,H}\ket{\widehat{G_x}}_{\outreg_x}\ket{\widehat{G_{-x}}}_{\outreg_{-x}}\\
    &= (-1)^{\langle u,G_x\rangle} \ket{x,u,w,a,H}\ket{\widehat{G}}_{\outreg}. 
    \end{split}
    \end{equation*}
    Therefore, $\Hadamardo$ actually applies a phase kick-back onto the state controlled by the oracle $G$ in register $\outreg$. Moreover, after we split $\outreg$ into $\outreg^{(1)},\outreg^{(2)},\cdots,\outreg^{(n_0)}$ , we have,
    \begin{equation*}
        \Hadamardo\cdot \ket{x,u,w,a,H}_{\workreg\ancillareg\dbreg} \bigotimes_{j=1}^{n_0}\ket{\widehat{g^{(j)}}}_{\outreg^{(j)}} = \prod_{j=1}^{n_0}(-1)^{u^{(j)}\cdot g^{(j)}_x} \ \ket{x,u,w,a,H}_{\workreg\ancillareg\dbreg} \bigotimes_{j=1}^{n_0}\ket{\widehat{g^{(j)}}}_{\outreg^{(j)}},
    \end{equation*}
    where $u^{(j)}$ is the $j$-th bit of $u\in[N_0]$. Then, it could be viewed as the process of querying $g^{(1)}, g^{(2)}, \cdots, g^{(n_0)}$ in parallel, with phase kick-back multiplied together.

    
    We then switch our target from bounding $K_{\yh}(\cA)$ to bounding $K^{(j)}_{\yh}(\cA)$ for each $j=1,\cdots,n_0$. Notice that for any $R \in [N_0]^{M^\ell}$, $\capacity_Y(R) \leq \sum_{j=1}^{n_0} \capacity_Y^{(j)}(R)$. Then, we have $K_{\yh}(\cA) \leq \sum_{j=1}^{n_0} K^{(j)}_{\yh}(\cA)$. We therefore only need to bound $K^{(j)}_{\yh}(\cA)$ for each $j=1,\cdots,n_0$. We further denote $R^{(j)} \in \{0,1\}^{M^\ell}$ as the one-bit table to take the $j$-th bit of $R$ on each entry, then
    \begin{equation*}
    \begin{split}
    \bigk^{(j)}_{\yh} &= I_{\workreg\ancillareg} \otimes I_{\outreg^{(-j)}} \otimes \sum_{H, R^{(j)}} \capacity_{Y_H}(R^{(j)}) \cdot \ketbra{H}{H}_{\dbreg}\otimes\ketbra{R^{(j)}}{R^{(j)}}_{\outreg^{(j)}} \\
    &=: I_{\workreg\ancillareg} \otimes I_{\outreg^{(-j)}} \otimes \left(\Pi_{(j),\yh}^{K}\right)_{\outreg^{(j)}},
    \end{split}
    \end{equation*}
    where we define $\Pi_{(j),\yh}^{K}$ as its nontrivial projector on register $\dbreg\outreg^{(j)}$.
    
    Now that we are working on $K_{\yh}^{(j)}(\cA)$ for some fixed $j$, we apply the second step of reordering, fixing the oracle in $\outreg^{(-j)}$. For each $G' \in \{0,1\}^{(n_0-1)\cdot M^\ell}$, we define a $G'$-hardcoded algorithm $\cA[j,G']$, which essentially effects the same as running $\cA$ with $\outreg^{(-j)}$ initialized as $\ket{\widehat{G'}}$. We parse $G'$ as $G'=g^{(-j)} = g^{(1)} || \cdots || g^{(j-1)}|| g^{(j+1)} || \cdots || g^{(n_0)}$, viewing it as a concatenation of $n_0-1$ one-bit oracles. We further describe $\cA[j,G']$ in details:
    
    \begin{itemize}
        \item Runs on registers $\workreg\ancillareg\dbreg\outreg^{(j)}$, with $\outreg^{(j)}$ being the output register with one qubit per entry. It has query to oracle $H$, and $\Hadamardo$ query to oracle $g^{(j)}$;
        
        \item Starts with state $\ket{\phi_{\sf{st}}^{(j,G')}} = \frac{1}{\sqrt{N_M 2^{M^\ell}} } \sum_{H, g^{(j)}} \ket{\psi_{H, g^{(1)}, \cdots, g^{(n_0)}}^{(0)}}_{\workreg\ancillareg} \ket{H}_{\dbreg}  \ket{\widehat{g^{(j)}}}_{\outreg^{(j)}}$;
       
        \item For the operations, whenever $\cA$ applies a local unitary or an $H$-query, $\cA[j,G']$ does the same; whenever $\cA$ queries $G$ by $\Hadamardo$, $\cA[j,G']$ applies a $\Hadamardo$ query to $g^{(j)}$ (denoted as $\Hadamardo^{g^{(j)}}$ here) and a unitary $U_{\Hadamardo}[G']$. We define the $G'$-hardcoded unitary $U_{\Hadamardo}[G']: \ket{x,u,w}_{\workreg}\mapsto (-1)^{\langle u^{(-j)}, G'_x \rangle}\ket{x,u,w}_{\workreg}$, where $u^{(-j)}$ denotes the concatenation of the $n_0-1$ bits of $u$ other than its $j$-th bit. Notice that $U_{\Hadamardo}[G']$ is a local unitary that only acts on $\workreg$, it also extends to other registers trivially. We show that these two operations essentially simulate a $\Hadamardo$ query to $G$, where $G$ is the $n_0$-bit oracle induced by $g^{(j)}$ and $G'$, 
            \begin{equation*}
            \begin{split}
            U_{\Hadamardo}[G']\cdot \Hadamardo^{g^{(j)}} \cdot \ket{x,u,w,a,H}_{\workreg\ancillareg\dbreg}\ket{\widehat{g^{(j)}}}_{\outreg^{(j)}}
            &= (-1)^{\langle u^{(-j)}, G'_x \rangle}\cdot (-1)^{ u^{(j)}\cdot g_x^{(j)} }\cdot \ket{x,u,w,a,H}\ket{\widehat{g^{(j)}}}_{\outreg^{(j)}} \\
            &= (-1)^{\langle u, G_x \rangle} \ket{x,u,w,a,H}\ket{\widehat{g^{(j)}}}_{\outreg^{(j)}},
            \end{split}
            \end{equation*}
        where we obtain the same phase kick-back.
    
        \item Has the same amount of quantum space as $\cA$, which is the size of $\workreg$; as well as $T$ queries to $H$, and $T$ $\Hadamardo$ queries to $g^{(j)}$;
        
        \item Terminates with final state $\ket{\phi_{\sf{fin}}^{(j,G')}} = \frac{1}{\sqrt{N_M 2^{M^\ell}} } \sum_{H, g^{(j)}} \ket{\psi_{H, g^{(1)}, \cdots, g^{(n_0)}}^{\sf fin}}_{\workreg\ancillareg} \ket{H}_{\dbreg}  \ket{\widehat{g^{(j)}}}_{\outreg^{(j)}}$.
    \end{itemize}
    
    We then show that it suffices to bound $K_{\yh}(\cA[j, G'])$. Since $\cA[j,G']$ has $\outreg^{(j)}$ as its output register, then here we have $\bigk_{\yh}=I_{\workreg\ancillareg}\otimes \left(\Pi_{(j),\yh}^{K}\right)_{\outreg^{(j)}}$, therefore,
    \begin{equation}
    \label{eq:bigk-expectation}
    \begin{split}
    &\ \E_{G'} \left[ K_{\yh}(\cA[j,G']) \right]\\
    =&\  \frac{1}{2^{(n_0-1)M^\ell}} \sum_{g^{(-j)}} \left| \bigk_{\yh} \ket{\phi_{\sf{fin}}^{(j,g^{(-j)})}}_{\workreg\ancillareg\dbreg\outreg^{(j)}} \right|^2 \\
    =&\  \frac{1}{2^{(n_0-1)M^\ell}} \sum_{g^{(-j)}} \left| I_{\workreg\ancillareg} \otimes \left(\Pi_{(j),\yh}^{K}\right)_{\outreg^{(j)}} \cdot \frac{1}{\sqrt{N_M 2^{M^\ell}}} \sum_{H, g^{(j)}} \ket{\psi_{H, g^{(1)},\cdots,g^{(n_0)}}^{\sf{fin}}}_{\workreg\ancillareg} \ket{H}_{\dbreg} \ket{\widehat{g^{(j)}}}_{\outreg^{(j)}} \right|^2 \\
    =&\ \frac{1}{2^{(n_0-1)M^\ell}} \sum_{g^{(-j)}} \left| \bigk^{(j)}_{\yh} \cdot \frac{1}{\sqrt{N_M 2^{M^\ell}}} \sum_{H, g^{(j)}} \ket{\psi_{H, g^{(1)},\cdots,g^{(n_0)}}^{\sf{fin}}}\ket{H}_{\dbreg} \ket{\widehat{g^{(j)}}}_{\outreg^{(j)}} \ket{\widehat{g^{(-j)}}}_{\outreg^{(-j)}} \right|^2 \\
    =&\  \left| \sum_{g^{(-j)}} \bigk^{(j)}_{\yh} \cdot \frac{1}{\sqrt{N_M 2^{n_0 M^\ell}}} \sum_{H, g^{(j)}} \ket{\psi_{H, g^{(1)},\cdots,g^{(n_0)}}^{\sf{fin}}} \ket{H}_{\dbreg} \ket{\widehat{g^{(j)}}}_{\outreg^{(j)}} \ket{\widehat{g^{(-j)}}}_{\outreg^{(-j)}} \right|^2 \\
    =&\ \left| \bigk^{(j)}_{\yh} \ket{\phi_{\sf{fin}}} \right|^2 = K_{\yh}^{(j)}(\cA).
    \end{split}
    \end{equation}
    The third equation comes from the form of $\bigk^{(j)}_{\yh}$ acting on registers with output register $\outreg$, and that $\ket{\widehat{g^{(-j)}}}$ has norm 1; the fourth equation comes from the orthogonality of $\{\ket{\widehat{g^{(-j)}}}\}$ for different $g^{(-j)} \in \{0,1\}^{(n_0-1)\cdot M^\ell}$.
    
    Therefore, we only need to focus on bounding the \subsetcapacity{} $K_{\yh}(\cA[j, G'])$ for the $G'$-hardcoded algorithm $\cA[j,G']$, with fixed $j=1,\cdots,n_0$, and $G' \in \{0,1\}^{(n_0-1)\cdot M^\ell}$. For the simplicity of the following part of the proof, we \textbf{denote $\cA[j,G']$ as $\cB$}, \textbf{denote $g^{(j)}$ as $g$}, and \textbf{denote $\outreg^{(j)}$ as $\outreg'$}. 
    
    Then, by the constraint on $\cA$ in the theorem statement, we have: $\cB$ is a uniform algorithm with space $S$ and $T$ queries to $H$, $T$ $\Hadamardo$ queries to $g$, running on registers $\workreg\ancillareg\dbreg\outreg'$. We will be working on algorithm $\cB$ until the very end of the proof. 
    
    We then apply the \textbf{L-slicing} process onto $\cB$ as follows. View $\cB$ as a sequence of local unitaries, $H$ queries and $\Hadamardo$ queries to $g$, then slice it into $L=2T/T'$ segments $\cB_1,\cB_2,\cdots,\cB_L$ where each segment $\cB_t$ has at most $T'=c S^{-1} N^{\frac{1}{\ell+1}}$ overall queries to $H$ and $g$, $c$ is some constant we will fix at the end of analysis. Denote $\ket{\phi_0}_{\workreg\ancillareg\dbreg\outreg'}$ as the initial state before running algorithm, where $\ket{\psi_{H,g}^{(0)}}_{\workreg\ancillareg}$ is all zero for any $H,g$. The registers $\workreg\ancillareg$ are initialized to be all zero since $\cB$ is a uniform algorithm. For each $t=1,2,\cdots,L$, denote
    \begin{equation*}
        \ket{\phi_t}_{\workreg\ancillareg\dbreg\outreg'} = \frac{1}{\sqrt{N^M 2^{M^\ell}}}  \sum_{H,g} \ket{\psi_{H,g}^{(t)}}_{\workreg\ancillareg} \ket{H}_{\dbreg}\ket{\widehat{g}}_{\outreg'}
    \end{equation*}
    to be the purified state after running segments from $\cB_1,\cdots$ to $\cB_t$. 
    
    We hereby specify the access to ancilla register $\ancillareg = \bigotimes_{t=0}^L \ancillareg_t$. where each sub-register $\ancillareg_t$ is disjoint and of arbitrary size. From the perspective of $\cB_t$, the state before its operation has a purification across registers $\ancillareg_0\cdots\ancillareg_{t-1}$, but it has no access to $\ancillareg_0\cdots\ancillareg_{t-1}$ and can only apply unitaries over $\workreg$ and $\ancillareg_{t}$. In the following analysis, we may just use $\ancillareg$ in subscript for simplicity. For each induced segment $\cB_t$, it is a non-uniform algorithm with $S$-qubit advice, where the advice is the mixed state in register $\workreg$ after running $\cB_1,\cdots,\cB_{t-1}$. Specifically, adopting a view back to without purifying oracles, $H,g$ is classically sampled at the beginning, then algorithm $\cB_t$ takes advice with purification $\ket{\psi_{H,g}^{(t-1)}}_{\workreg\ancillareg}$ and apply some $H,g$-integrated local unitary $(U_{H,g}^{(t)})_{\workreg\ancillareg_{t}}$ and obtain $\ket{\psi_{H,g}^{(t)}}_{\workreg\ancillareg}$. 

    We then analyze the output register $\outreg'$ after applying each slicing-induced $\cB_t$. For any $t=1,\cdots,L$, $\cB_t$ has an $S$-qubit mixed state advice and less than $T'$ queries.\zikuan{T' queries?}
    Due to the range upper bound of $S$, we have $T' = \Omega(S^{2\ell})$.

    To characterize the constraint on such a non-uniform algorithm, we define the $\fliptest$ game in \Cref{game:flip-test}. We will later see that the average-case version of this game is equivalent to the random entry challenge game \Cref{proj:single-game}, where the lazy-sampling process of oracle $g$ corresponds to viewing $\outreg'$ as an output register. This allows us to use the bound on the success probability of the game from the last section. 
    
    The game $\fliptest$ is parameterized by fixed oracles $H,g$ and a sequence $\yh$. Let $Y_{H,\yh}\subseteq [M^\ell] $ denote the subset of indices which is an $\ell$-tuple with sum $y_H$, i.e. $Y_{H,\yh} = \{i=(i_1,\cdots,i_\ell) \mid \sum_{j=1}^\ell H(i_j) = y_H \}$. For simplicity, we will refer to it as $Y_H$ in the following context. Denote $\outreg'_{-i} = \bigotimes_{j\neq i}\outreg'_j$, and $g_{-i}$ as the $M^\ell-1$ entries oracle from $[M^\ell]\backslash \{i\}$ to $\set{0,1}$ copied from $g$.
    We maintain the notation of ancilla register $\ancillareg$, parsing it into $\ancillareg = \ancillareg_0\ancillareg_1$, where the other half of advice purification is in $\ancillareg_0$, and $\ancillareg_1$ is the unbounded extra space for $\cB$.
    \zikuan{I add the remark here, you can choose whether you want to use it.}
    \begin{remark}
        The reason that $\ancillareg_0$ exists is that the $S$-qubit advice state can be a mixed state and $\ancillareg_0$ serves as its purification, though it does not participate in any computation.
    \end{remark}
    
    \game
    {$\fliptest_{H,g}^{\yh}$:}
    {Challenge an algorithm for finding a random $\ell$-tuple when $H,g$ are fixed.}
    {game:flip-test}
    {
    \begin{description}
    \setlength{\parskip}{0.3mm} 
    \setlength{\itemsep}{0.3mm} 
    \item[Player:] an algorithm $\cB$, with purified advice state $\ket{\psi^{(0)}_{{H,g}}}_{\workreg\ancillareg}$.
    \item[Parameters:] $H,g,\yh$.
    \end{description}
    \begin{enumerate}
        \item Randomly sample an index $i\xleftarrow{\$} [M]^\ell$.
        \item Initialize registers $\workreg\ancillareg\dbreg\outreg'$ as: $ \ket{\psi^{(0)}_{H,g}}_{\workreg\ancillareg_0}\ket{0}_{\ancillareg_1} \ket{H}_{\dbreg} \ket{\widehat{g}}_{\outreg'}$.
        \item Replace the $i$-th entry of $\outreg'$ with $\ket{0}$, obtaining $\ket{\psi^{(0)}_{H,g}}_{\workreg\ancillareg_0}\ket{0}_{\ancillareg_1} \ket{H}_{\dbreg}\ket{\widehat{g_{-i}}}_{\outreg'_{-i}}\ket{0}_{\outreg'_i}$.
        \item Run $\cB$ on $\workreg\ancillareg_1$\zikuan{I think here it should be $\ancillareg_1$}, with $H$ queries onto $\dbreg$ and $\Hadamardo$\zikuan{Here should be $\output$, or $\Hadamardo$ when we view the Hadamard basis $\ket{\widehat{g}}_{\outreg'}$ as the standard basis.} queries onto $\outreg'$.
        \item Measure the $i$-th entry $\outreg'_i$ in computational basis, obtain outcome $R_i$, the player wins if both are true:
        \begin{enumerate}
            \item $R_i \neq 0$;
            \item $i\in Y_H$.
        \end{enumerate}
    \end{enumerate}
    }
    
    We further define game $\avgtest^{\yh}$ as the average-case version of $\fliptest$, while it runs $\fliptest$ under $\yh$ and uniformly chosen $H,g$. We now argue that $\avgtest^{\yh}$ is almost identical to the game defined in \Cref{proj:single-game} except the algorithm is equipped with $\Hadamardo$ instead of $\output$.
    
    \begin{claim}
    \label{claim:game-equivalence}
        For arbitrary $\yh$, any non-uniform algorithm $\cB$ with oracle queries to $H$ and $\Hadamardo$ queries to $g$, denote $\cB'$ as the algorithm obtained by only substituting all $\output$ operations to $\Hadamardo$ in $\cB$. define $(\rho_{\sf st})_{\allreg}$ to be the mixed state being initialized, with another half purification in $\ancillareg_0$, then we have 
        \begin{equation*}
            \Pr\left[\text{$\cB$ wins }\avgtest^{\yh}\right] =  \E_{H,g} \Pr\left[\text{$\cB$ wins }\fliptest_{H,g}^{\yh}\right] = \abs{\Pi_{{\sf FlipTest}}^{\cB',\set{y_H}}\brackets{\rho_{\sf st}\ket{0}_{\ancillareg_1\seedreg\memreg}\bra{0}}\brackets{\Pi_{{\sf FlipTest}}^{\cB',\set{y_H}}}^\dagger},
        \end{equation*}
        \zikuan{$\rho_{\sf st}$ should be tensor product with empty state on $\ancillareg_1\otimes\seedreg\otimes\memreg$?}
        where in the middle term we are taking expectation over uniformly sampled random oracles $H,g$.
    \end{claim}
    
    \begin{proof}
        We first look at the projector $\Pi_{{\sf FlipTest}}^{\cB',\set{y_H}}$ defined in \Cref{proj:single-game}. Notice that register $\seedreg$ is initialized to be the superposition over $[M^\ell]$, then controls over the choice of index in $\outreg'$, then measured in standard basis. This is equivalent to classically sample $i\xleftarrow{\$}[M^\ell]$. 
        
        Then, with random index $i$ fixed, the unitary $\sf Store$ applies a generalized $\cnot$ (we just use $\cnot$ with a slight abuse of notation) acting on two single qubit \zikuan{$2$-dimension or $\log 2$ sized I think?}registers $\outreg'_i$ and $\memreg$ before and after running the algorithm $\cB'$, where $\cnot_{\outreg'_i\memreg}$ denotes $\outreg'_i$ controlling $\memreg$. Notice that $\swap_{\outreg'_i\memreg} = \cnot_{\memreg\outreg'_i}\cnot_{\outreg'_i\memreg}\cnot_{\memreg\outreg'_i}$, also that algorithm $\cB$ only interacts with register $\outreg'_i$ by $\output$ operations, which stabilizes states in the Hadamard basis, therefore it commutes with $\cnot_{\memreg\outreg'_i}$. Also notice that the initial state on $\memreg$ is $\ket{0}$, and we measure $\memreg$ in computational basis, which also commutes with $\cnot_{\memreg\outreg'_i}$. Thus, with fixed random index $i$, replacing the two $\cnot_{\outreg'_i\memreg}$ with two $\swap_{\outreg'_i\memreg}$ does not affects the game. In this way, it is equivalent to the procedure in \Cref{game:flip-test} that swaps a $\ket{0}$ onto $\outreg'_i$ and then measure this entry.
    
        Then, consider the initialization of registers. In \Cref{proj:single-game}, $\rho_{\sf st}$ is the reduced state of some $\frac{1}{\sqrt{N^M2^{M^\ell}}}\sum_{H,\widehat{r}}\ket{\psi_{H,\widehat{r}}}_{\qryreg\ansreg\auxreg\ancillareg}\ket{\widehat{r}}_{\outreg'}\ket{H}_{\dbreg}$, with $r,H$ in uniform superposition. Since $\ket{\widehat{r}}\ket{H}$ is orthogonal for different $\widehat{r},H$, it is equivalent with averaging over the same game on all possible $\widehat{r},H$, which corresponds to $\avgtest^{\yh}$ as $\widehat{r},\widehat{g}$ are of same function in the process. 
    
        Therefore, we shown the equivalence of $\cB'$ projected onto $\Pi_{{\sf FlipTest}}^{\cB',\set{y_H}}$ and $\cB$ winning $\avgtest^{\yh}$ with corresponding advice state.
        
    \end{proof}

    Then, by \Cref{cor:non-uniform-random-challenge}, any $\cB_t$ has the following property:
    \begin{equation*}
        \forall\yh,\ \ p^{(t)}_{\yh} := \Pr\left[\cB_t \text{ wins }\avgtest \right] \leq p_e = O\brackets{\frac{S^{2\ell}}{M^\ell}} + O\brackets{\frac{(ST')^{\ell+1}}{N}}^{S},
    \end{equation*}
    where $p_e$ is obtained by taking a proper constant in the big-$O$ notation. We denote such a property as \textbf{\boundedflip}, namely winning the $\avgtest^{\yh}$ with at most probability $p_e$ for any $\yh$. We then glue the property on each $\cB_t$ together with the following lemma:
    
    \begin{lemma}[Quantum union bound on random sample game]
    \label{lem:quantum-boolean-function-union-bound}
        For any sequence $\yh$, let $\cB$ be an arbitrary algorithm with $H$ queries and $\Hadamardo$ queries to one-bit oracle $g$, let $(\cB_1,\cdots,\cB_L)$ be an L-slicing of $\cB$. Assume that $\forall t=1,\cdots,L$, the non-uniform algorithm $\cB_t$ is \boundedflip{}, then the \subsetcapacity{} of $\cB$ is bounded:
        \begin{equation*}
            K_{\yh}(\cB) \leq L^2 \cdot M^\ell p_e .
        \end{equation*}
    \end{lemma}

    \begin{proof}
        For $L$-slicing induced non-uniform algorithms $\cB_1,\cdots,\cB_L$, denote $p^{(t)}_{\yh}$ as the probability of $\cB_t$ winning the game $\avgtest^{\yh}$. Recall that we have specified the behavior of $\cB_t$ under fixed $H,g$ as follows:
        \begin{itemize}
            \item takes advice $\brackets{\rho_{H,g}^{(t-1)}}_{\workreg}$, which has a purification $\ket{\psi_{H,g}^{(t-1)}}_{\workreg\ancillareg_0\ancillareg_1\cdots\ancillareg_{t-1}}$;
            \item clarifies that it can only access registers $\workreg\ancillareg_t$;
            \item applies $H,g$-integrated unitary $\brackets{U_{H,g}^{(t)}}_{\workreg\ancillareg}$;
            \item obtains $\ket{\psi_{H,g}^{(t)}}_{\workreg\ancillareg} = U_{H,g}^{(t)} \ket{\psi_{H,g}^{(t-1)}}_{\workreg\ancillareg} $, with registers $\ancillareg_{t+1}\cdots \ancillareg_{L}$ still being all zeros.
        \end{itemize}
        We then look at the algorithm $\cB$ in a whole. Under any $H,g$, the registers $\workreg\ancillareg$ is initialized as $\ket{\psi_{H,g}^{(0)}}_{\workreg\ancillareg} = \ket{0}_{\workreg\ancillareg}$, and then the algorithm applies unitary $U_{H,g} = U_{H,g}^{(L)}\cdots U_{H,g}^{(2)}U_{H,g}^{(1)} $ over registers $\workreg\ancillareg$, and obtain $\ket{\psi_{H,g}^{(L)}}_{\workreg\ancillareg} = U_{H,g}\ket{0}_{\workreg\ancillareg}$. Denote  $\ket{\phi_{L}}_{\workreg\ancillareg\dbreg\outreg'}$ as the final state after running algorithm $\cB$. Then the \subsetcapacity{} of the final state is $\bra{\phi_L} \bigk_{\yh}\ket{\phi_L}$.

        We first represent $p^{(t)}_{\yh}$ for any $1\leq t\leq L$. For any $g$ and $s\in\set{0,1}$, define $g^{i,s}$ to be the oracle obtained by replacing the $i$-th entry of $g$ by $s$ while others remain. We consider the process of $\cB_t$ playing $\fliptest_{H,g}^{\yh}$ for each pair of oracles $H,g$. On $i\xleftarrow{\$}[M^\ell]$ being sampled as the random index in step $1$ of \Cref{game:flip-test}, the entire state in $\workreg\ancillareg\dbreg\outreg'$ evolves as follows:
        \begin{equation*}
        \begin{split}
            & \ket{\psi_{H,g}^{(t-1)}}_{\workreg\ancillareg}\ket{H}_{\dbreg}\ket{\widehat{g}}_{\outreg'}\\
            \xrightarrow{\text{replace }\outreg'_i} & \ket{\psi_{H,g}^{(t-1)}}_{\workreg\ancillareg}\ket{H}_{\dbreg}\ket{\widehat{g_{-i}}}_{\outreg'_{-i}}\ket{0}_{\outreg'_i}\\
            =& \frac{1}{\sqrt{2}}\sum_{s\in\set{0,1}} \ket{\psi_{H,g}^{(t-1)}}_{\workreg\ancillareg}\ket{H}_{\dbreg}\ket{\widehat{g_{-i}}}_{\outreg'_{-i}}\ket{{\widehat{s}}}_{\outreg'_i}\\
            \xrightarrow{\text{runs }\cB_t} & \frac{1}{\sqrt{2}}\sum_{s\in\set{0,1}} U_{H,g^{i,s}}^{(t)} \ket{\psi_{H,g}^{(t-1)}}_{\workreg\ancillareg} \ket{H}_{\dbreg} \ket{\widehat{g_{-i}}}_{\outreg'_{-i}}\ket{\widehat{s}}_{\outreg'_i}\\
            =& \frac{1}{\sqrt{2}}\sum_{s\in\set{0,1}} U_{H,g^{i,s}}^{(t)} \ket{\psi_{H,g}^{(t-1)}}_{\workreg\ancillareg} \ket{H}_{\dbreg} \ket{\widehat{g_{-i}}}_{\outreg'_{-i}} \frac{1}{\sqrt{2}} \sum_{r\in\set{0,1}}(-1)^{s\cdot r}\ket{r}_{\outreg'_i}  \\
            =& \sum_{r\in\set{0,1}}\left(\frac{1}{2}\sum_{s\in\set{0,1}}(-1)^{s\cdot r} U_{H,g^{i,s}}^{(t)} \ket{\psi_{H,g}^{(t-1)}}_{\workreg\ancillareg}\right) \ket{H}_{\dbreg} \ket{\widehat{g_{-i}}}_{\outreg'_{-i}} \ket{r}_{\outreg'_i},
        \end{split}
        \end{equation*}
        then when measuring $\outreg'_i$, suppose we get $r$ with probability $\Pr\bigl[r_i=r \mid i,\cB_t,H,g \bigr]$. Then, the overall winning probability is
        \begin{equation*}
        \begin{split}
            p_{H,g,\yh}^{(t)} =& \Pr\Bigl[i\in Y_H\land r_i=1\Big|\  i,\cB_t,H,g \Bigr]\\
                              =& \frac{1}{4M^\ell}\sum_{i\in Y_H} \Big|(U_{H,g^{i,0}}^{(t)} - U_{H,g^{i,1}}^{(t)}\big) \ket{\psi_{H,g}^{(t-1)}}_{\workreg\ancillareg} \Big|^2.
        \end{split}
        \end{equation*}

        We could then represent the \boundedflip{} property of $\cB_t$ by
        \begin{equation}
        \label{eq:quantum-union-bound-prob-formula}
            p_{\yh}^{(t)} = \frac{1}{4M^\ell} \E_{H,g}  \left[ \sum_{i\in Y_H} \Big|(U_{H,g^{i,0}}^{(t)} - U_{H,g^{i,1}}^{(t)}\big) \ket{\psi_{H,g}^{(t-1)}}_{\workreg\ancillareg} \Big|^2\right] \leq p_e.
        \end{equation}

        Then, we represent $\bigk_{\yh}$ in a similar way. For any $H$ and $i\in[M^\ell]$, define projector $\Pi_{i,H} = I_{\workreg\ancillareg}\otimes \ketbra{H}{H}_{\dbreg}\otimes I_{\outreg'_{-i}}\otimes \ketbra{1}{1}_{\outreg'_i}$, then with we notice that $\bigk_{\yh}$ could be written as a sum of projectors:
        \begin{equation*}
            \bigk_{\yh} = I_{\qryreg\ansreg\auxreg\ancillareg}\otimes \sum_{H,R} \capacity_{Y_H}(R)\cdot \ketbra{H}{H}_{\dbreg}\otimes\ketbra{R}{R}_{\outreg'} = \sum_{H}\sum_{i\in Y_H} \Pi_{i,H},
        \end{equation*}
        where each $Y_H$ relative to $H$ is induced by the fixed $\yh$. We then calculate the expected value of $\Pi_{i,H}$ on the final state $\ket{\phi_L}_{\workreg\ancillareg\dbreg\outreg'}$ for any $H$ and $i\in[M^\ell]$.
        \begin{equation*}
        \begin{split}
            \bra{\phi_L} \Pi_{i,H}\ket{\phi_L}
            =&\ \bra{\phi_L} \left( I_{\workreg\ancillareg\outreg'_{-i}}\otimes \ketbra{H}{H}_{\dbreg} \otimes \ketbra{1}{1}_{\outreg'_i}\right) \ket{\phi_L}  \\ 
            =&\ \left| \frac{1}{\sqrt{N^M 2^{M^\ell}}}  \sum_{g} \ket{\psi_{H,g}^{(L)}}_{\workreg\ancillareg} \ket{H}_{\dbreg} \ket{\widehat{g_{-i}}}_{\outreg'_{-i}} \ketbra{1}{1}_{\outreg'_i}\ket{\widehat{g(i)}}_{\outreg'_{i}}  \right|^2\\
            =&\ \frac{1}{N^M 2^{M^\ell}} \left| \sum_{g_{-i}} \frac{1}{\sqrt{2}} \Big(\ket{\psi_{H,g^{i,0}}^{(L)}} - \ket{\psi_{H,g^{i,1}}^{(L)}}\Big)_{\workreg\ancillareg} \ket{H}_{\dbreg} \ket{\widehat{g_{-i}}}_{\outreg'_{-i}} \ket{1}_{\outreg'_{i}}  \right|^2\\
            =&\ \frac{1}{2 N^M 2^{M^\ell}} \sum_{g_{-i}} \left| \Big(\ket{\psi_{H,g^{i,0}}^{(L)}} - \ket{\psi_{H,g^{i,1}}^{(L)}}\Big)_{\workreg\ancillareg} \ket{H}_{\dbreg} \ket{\widehat{g_{-i}}}_{\outreg'_{-i}} \ket{1}_{\outreg'_{i}}  \right|^2\\
            =&\ \frac{1}{2 N^M 2^{M^\ell}} \sum_{g_{-i}} \left|  \ket{\psi_{H,g^{i,0}}^{(L)}}_{\workreg\ancillareg} - \ket{\psi_{H,g^{i,1}}^{(L)}}_{\workreg\ancillareg} \right|^2\\
            =&\ \frac{1}{4 N^M 2^{M^\ell}} \sum_{g} \left|  \ket{\psi_{H,g^{i,0}}^{(L)}} - \ket{\psi_{H,g^{i,1}}^{(L)}} \right|^2\\
            =&\ \frac{1}{4N^M} \E_{g} \Big| \ket{\psi_{H,g^{i,0}}^{(L)}} - \ket{\psi_{H,g^{i,1}}^{(L)}} \Big|^2.
        \end{split}
        \end{equation*}
        The fourth equation is by the orthogonality of states with different $g_{-i}$ in $\ket{g_{-i}}_{\outreg'_{-i}}$, and the fifth equation is by ignoring the norm-$1$ subsystem states in registers $\dbreg,\outreg'_{-i},\outreg'$, while we omit the subscript $\workreg\ancillareg$ afterwards. Therefore, adding the expected value of these projectors together, we have 
        \begin{equation*}
            K_{\yh}(\cB) = \bra{\phi_L} \bigk_{\yh}\ket{\phi_L} = \frac{1}{4}\E_{H,g} \left[ \sum_{i\in Y_H} \Big| \ket{\psi_{H,g^{i,0}}^{(L)}} - \ket{\psi_{H,g^{i,1}}^{(L)}} \Big|^2 \right].
        \end{equation*}

        Then, we bridge the gap from $p^{(t)}_{\yh}$ to $K_{\yh}(\cB)$. For any fixed $H,g$, we bound the $L2$-distance of the final states $\ket{\psi^{(L)}_{H,g^{i,0}}}$ and $\ket{\psi^{(L)}_{H,g^{i,1}}}$ by the distance independently imposed by each segment $\cB_t$. Here, for $t_1\leq t_2$, denote $U_{H,g}^{t_1\to t_2} := U_{H,g}^{t_2}U_{H,g}^{t_2-1}\cdots U_{H,g}^{(t_1)}$; for $t_1>t_2$, denote $U_{H,g}^{t_1\to t_2} := I$. We finally obtain
        \begin{equation*}
            \begin{split}
            &K_{\yh}(\cB)\\
            =& \frac{1}{4} \E_{H,g} \left[ \sum_{i\in Y_H} \Big| \ket{\psi_{H,g^{i,0}}^{(L)}} - \ket{\psi_{H,g^{i,1}}^{(L)}} \Big|^2 \right]\\
            =& \frac{1}{8}  \E_{H,g} \left[ \sum_{i\in Y_H} \left| \left( U_{H,g^{i,0}}^{(1\to L)} - U_{H,g^{i,1}}^{(1\to L)} \right) \ket{0} \right|^2+ \sum_{i\in Y_H} \left| \left( U_{H,g^{i,1}}^{(1\to L)} - U_{H,g^{i,0}}^{(1\to L)} \right) \ket{0} \right|^2 \right]\\
            =& \frac{1}{8}  \E_{H,g} \Bigg[ \sum_{i\in Y_H} \left| \sum_{t=1}^L \left( U_{H,g^{i,1}}^{(t+1\to L)} U_{H,g^{i,0}}^{(1\to t)} - U_{H,g^{i,1}}^{(t\to L)} U_{H,g^{i,0}}^{(1\to t-1)} \right) \ket{0} \right|^2 \\
            &+ \sum_{i\in Y_H} \left| \sum_{t=1}^L \left( U_{H,g^{i,0}}^{(t+1\to L)} U_{H,g^{i,1}}^{(1\to t)} - U_{H,g^{i,0}}^{(t\to L)} U_{H,g^{i,1}}^{(1\to t-1)} \right) \ket{0} \right|^2 \Bigg]\\
            \leq& \frac{1}{8}  \E_{H,g} \Bigg[ \sum_{i\in Y_H} L\sum_{t=1}^L \left| U_{H,g^{i,1}}^{(t+1\to L)}\left(  U_{H,g^{i,1}}^{(t)} - U_{H,g^{i,0}}^{(t)}  \right) U_{H,g^{i,0}}^{(1\to t-1)}\ket{0} \right|^2 \\
            &+ \sum_{i\in Y_H} L\sum_{t=1}^L \left| U_{H,g^{i,0}}^{(t+1\to L)}\left(  U_{H,g^{i,0}}^{(t)} - U_{H,g^{i,1}}^{(t)}  \right) U_{H,g^{i,1}}^{(1\to t-1)}\ket{0} \right|^2\Bigg]\\
            =& \frac{L}{8}  \E_{H,g} \left[ \sum_{i\in Y_H} \sum_{t=1}^L \left| \left(  U_{H,g^{i,1}}^{(t)} - U_{H,g^{i,0}}^{(t)}  \right) \ket{\psi_{H,g^{i,0}}^{(t-1)}} \right|^2 + \sum_{i\in Y_H} \sum_{t=1}^L \left| \left(  U_{H,g^{i,0}}^{(t)} - U_{H,g^{i,1}}^{(t)}  \right) \ket{\psi_{H,g^{i,1}}^{(t-1)}} \right|^2\right]\\
            =& \frac{L}{4}  \E_{H,g} \left[ \sum_{i\in Y_H} \sum_{t=1}^L \left| \left(  U_{H,g^{i,1}}^{(t)} - U_{H,g^{i,0}}^{(t)}  \right) \ket{\psi_{H,g}^{(t-1)}} \right|^2 \right]\\
            =& L \sum_{t=1}^L M^\ell p^{(t)}_{\yh} \leq L^2 \cdot M^\ell p_e.
            \end{split}
        \end{equation*}
        where the first inequality is by Cauchy-Schwarz inequality; the fourth equation is by that unitaries are norm-preserving. The last equality and the last inequality are by \Cref{eq:quantum-union-bound-prob-formula}.


        
    \end{proof}

    By \Cref{lem:quantum-boolean-function-union-bound} and the bound on $p_e$, we have $K_{\yh}(\cA[j,G']) = K_{\yh}(\cB)\leq L^2 \cdot M^\ell p_e$, for any $j\in$. Therefore, by \Cref{eq:bigk-expectation}, we also have $K^{(j)}_{\yh}(\cA) = \E_{G'} K_{\yh}(\cA[j,G'])\leq L^2 \cdot M^\ell p_e $. We now switch back to our primary target $K_{\yh}(\cA)$. By carefully picking constant $c$, we have
    \begin{equation*}
        \begin{split}
            K_{\yh}(\cA) 
            \leq&\sum_{j=1}^{n_0} K_{\yh}^{(j)}(\cA)\\
            \leq& n_0\cdot L^2 M^{\ell}\cdot p_e \\
            \leq& n_0\cdot O\brackets{\frac{T^2}{T'^2} M^\ell \left(O\brackets{\frac{S^{2\ell}}{M^\ell}} + O\brackets{\frac{(ST')^{\ell+1}}{N}}^{S} \right) }\\
            =& \widetilde{O}\left(\frac{T^2}{(cS^{-1}N^{\frac{1}{\ell+1}})^2} M^\ell \left(\frac{S^{2\ell}}{M^\ell} + (c_3\cdot c^{\ell+1})^{S} \right) \right)\\
            =& \widetilde{O}\left(S^{2\ell+2}T^2N^{-\frac{2}{\ell+1}}\right).
        \end{split}
    \end{equation*}
    For the first equation, we choose $c_3$ larger than the constant in all big-$O$ notations in \Cref{thm:non-uniform-random-challenge}, then absorb all terms into a single big-$\widetilde{O}$ notation where we omit the logarithmic term $n_0$. The second equation comes from the choice of $c$ when we set $T'=cS^{-1}N^{\frac{1}{\ell+1}}$. Since $M=\poly(N)$, $S = \Omega(\log N)$, we choose $c_1,c_2>0$ such that $M\leq N^{c_1}$, $S\geq c_2\log N$. Then, we pick $c = 2^{-c_1/c_2} c_3^{-1/(\ell+1)} $, which guarantee that $(c_3\cdot c^{\ell+1})^{S}\leq \frac{1}{M^\ell}$.
    

\end{proof}

\section{\texorpdfstring{The time-space bound for solving $\ell$-Nested Collision Finding}{The time-space bound for solving l-Nested Collision Finding}}




\subsection{The classical bound}
\begin{theorem}\label{thm:classical-overall_lowerbound}
    For $\ell\geq 2$, any classical algorithm that solves the $\ell$-Nested Collision Finding problem with constant probability with space $S$  and $T$ oracle queries must satisfy $S^{\frac{\ell^2-1}{2\ell}}T=\Omega\brackets{N^{\frac{1}{2\ell}}N_0^{\frac{1}{2}}}$ if $S=\Omega(\log N)$ and $S=O\brackets{N^{\frac{1}{\ell^2-1}}}$.
\end{theorem}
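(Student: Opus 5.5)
The plan is to combine the classical two-oracle recording step with \Cref{thm:classical_time_space_bound_for_finding_tuples}. Let $\cA$ be a classical algorithm with space $S$ and $T$ queries that outputs a nested collision pair with constant probability $\epsilon$.

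\textbf{Recording $G$ as write-only memory.} First make $\cA$ query $G$ at both output tuples; this adds only two queries. Then simulate $G$ by lazy sampling. The lazily sampled table of $G$ is exactly the ${\sf FLIP}$ write-only memory from \Cref{thm:classical_time_space_bound_for_finding_tuples}. A query to $G$ on a new tuple writes that tuple into the memory and returns a fresh uniform value. We refine the model so that only tuples whose $H$-sum is $y$ count, and flip each tuple at most once, on its first query. Since ${\sf FLIP}$ charges no extra space beyond what $\cA$ already uses, the simulated algorithm still has space $S$ and $O(T)$ queries.

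\textbf{Bounding the collision probability.} Let $V$ be the number of distinct valid sum-$y$ tuples that have been queried to $G$ by the end. Success requires two such tuples with equal $G$-values. Each new such tuple collides with an earlier one with probability at most $(\text{current count})/N_0$, where the count is taken before its value is sampled. Summing over time gives
\[
\Pr[\text{success}] \le \E\!\left[\tbinom{V}{2}\right]/N_0 \le \E[V^2]/N_0 .
\]
\textbf{Main obstacle.} \Cref{thm:classical_time_space_bound_for_finding_tuples} controls $\E[V]$, not $\E[V^2]$. I would avoid second moments by reusing that theorem's proof, which gives a tail bound rather than just an expectation. It splits the run into $L = T/T'$ segments, and in each segment, except with probability $1/M^\ell$, at most $S^\ell$ tuples with the same sum are written. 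With $T'=cS^{1/\ell}N^{1/\ell}$, this yields $V \le L S^\ell + 1 = O\!\left(S^{\frac{\ell^2-1}{\ell}} T / N^{1/\ell}\right)$ with probability $1 - L/M^\ell = 1 - o(1)$.

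\textbf{Combining.} Let $B = O\!\left(S^{\frac{\ell^2-1}{\ell}} T / N^{1/\ell}\right)$ be this high-probability cap. Then
\[
\epsilon \le \frac{B^2}{N_0} + o(1),
\]
so $B = \Omega(\sqrt{N_0})$. This gives
\[
S^{\frac{\ell^2-1}{\ell}}\, T = \Omega\!\left(N^{1/\ell} N_0^{1/2}\right).
\]
The exponents do not match the stated form. Taking square roots as written gives $S^{\frac{\ell^2-1}{2\ell}}\, T^{1/2} = \Omega\!\left(N^{\frac{1}{2\ell}} N_0^{\frac14}\right)$, which is not the claimed inequality. Following the paper's overview, the collision event should instead be charged per step as $TK/N_0$, where $K$ is the per-time bound. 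Here $K$ is the number of sum-$y$ tuples written so far, which is at most $B$. This gives $\epsilon \le TB/N_0$.

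Even then, $T \cdot S^{\frac{\ell^2-1}{\ell}}\, T / N^{1/\ell} \ge \Omega(N_0)$ gives
\[
S^{\frac{\ell^2-1}{2\ell}}\, T = \Omega\!\left(N^{\frac{1}{2\ell}} N_0^{\frac12}\right),
\]
which matches the statement. So the final step uses the bound $\Pr[\text{success}] \le T \cdot \max_t K_t / N_0$, with $K_t \le B$ throughout the run with high probability, in place of the $\tbinom{V}{2}/N_0$ bound above.

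The range conditions $S = \Omega(\log N)$ and $S = O\!\left(N^{\frac{1}{\ell^2-1}}\right)$ are inherited from \Cref{thm:classical_time_space_bound_for_finding_tuples}. They ensure that the segment length satisfies $T' = \Omega(S^\ell)$ and that the $2^S$ advice factor is absorbed.
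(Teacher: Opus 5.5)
Your final argument is correct and is essentially the paper's. The paper's terse proof likewise treats the lazily sampled $G$ as write-only memory, charges each new sum-$y$ tuple a collision probability of at most (current count)$/N_0$, and combines $TK=\Omega(N_0)$ with \Cref{thm:classical_time_space_bound_for_finding_tuples}.

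The one difference is that the paper's $K$ is an expectation. By linearity, $\Pr[\text{success}]\le\sum_q \E[K_{q-1}]/N_0\le (T+2)\,\E[V]/N_0$. So the expectation statement of that theorem already suffices for the $TK/N_0$ route, and the high-probability cap you pull out of its proof is not needed there.

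The one misstep is your dismissal of the birthday bound: it does not fail to give the statement, it is strictly stronger. From $\epsilon\le B^2/N_0+o(1)$ you get $S^{\frac{\ell^2-1}{\ell}}T=\Omega\left(N^{1/\ell}N_0^{1/2}\right)$. Divide both sides by $S^{\frac{\ell^2-1}{2\ell}}$. Then use $S^{\ell^2-1}=O(N)$, which is exactly the hypothesis $S=O\left(N^{\frac{1}{\ell^2-1}}\right)$. This yields the claimed $S^{\frac{\ell^2-1}{2\ell}}T=\Omega\left(N^{\frac{1}{2\ell}}N_0^{1/2}\right)$.

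Equivalently, $V\le T+2$ always, so $\min(B,T+2)^2\le (T+2)B$. Hence the $\tbinom{V}{2}/N_0$ charge is never worse than the $TB/N_0$ charge. Taking square roots was the wrong way to compare the two bounds.

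The birthday route is where your tail cap is genuinely needed, since an expectation bound alone does not control $\E[V^2]$. In exchange it gives a better tradeoff. For example, with $\ell=2$, $N_0=N^2$ and $S=\Theta(\log N)$, it gives $T=\widetilde\Omega(N^{3/2})$ rather than $\widetilde\Omega(N^{5/4})$.
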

\begin{proof}
   Let $K$ to be the expectation of the number of $\ell$-tuples with the same sum one can find in all its queried tuples on $G$ after the algorithm finishes. Then it must satisfies $TK=\Omega(N_0)$. Thus constraints are:
    \begin{equation*}
        \left\{
        \begin{aligned}
            K&=O\brackets{\frac{S^{\frac{\ell^2-1}{\ell}}T}{N^{\frac{1}{\ell}}}}\\
            TK&=\Omega(N_0)
        \end{aligned}
        \right.
    \end{equation*}
    Combine two equations we have $S^{\frac{\ell^2-1}{2\ell}}T=\Omega\brackets{N^{\frac{1}{2\ell}}N_0^{\frac{1}{2}}}$.

\end{proof}
\begin{theorem}\label{classical-space-time-tradeoff}
    For any $\ell\geq 2$ and any $N_0=\Theta\brackets{N^\epsilon}$ where $\frac{1}{\ell}<\epsilon\leq\frac{2}{\ell-1}$ the corresponding $\ell$-Nested Collision Finding Problem has a space-time tradeoff for classical algorithms.
\end{theorem}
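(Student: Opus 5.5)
The plan is to compare the unbounded-space upper bound of \Cref{thm:classical-algorithm} with the space-bounded lower bound of \Cref{thm:classical-overall_lowerbound}. Write $N_0=\Theta(N^\epsilon)$. Since $\epsilon\le\frac{2}{\ell-1}$, \Cref{thm:classical-algorithm} applies, so some algorithm with unbounded memory solves the problem with constant probability using
\[
T_{\sf opt}=\Theta\brackets{N^{\frac1\ell}N_0^{\frac1{2\ell}}}=\Theta\brackets{N^{\frac{2+\epsilon}{2\ell}}}
\]
queries. To exhibit a space-time tradeoff, I will show that any algorithm with space $S=\widetilde{\Theta}(1)$ (more generally, $S$ below a polynomial threshold) needs polynomially more than $T_{\sf opt}$ queries.

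First, take $S$ with $S=\Omega(\log N)$ and $S=O\brackets{N^{\frac1{\ell^2-1}}}$, so the hypotheses of \Cref{thm:classical-overall_lowerbound} hold. That theorem gives
\[
T=\Omega\brackets{S^{-\frac{\ell^2-1}{2\ell}}N^{\frac1{2\ell}}N^{\frac\epsilon2}}.
\]
Dividing by $T_{\sf opt}$, the exponent of $N$ in the ratio is
\[
\frac1{2\ell}+\frac\epsilon2-\frac{2+\epsilon}{2\ell}=\frac{\epsilon(\ell-1)-1}{2\ell}.
\]
This is positive exactly when $\epsilon>\frac1{\ell-1}$. The ratio is then $N^{\delta}/S^{\frac{\ell^2-1}{2\ell}}$ with $\delta=\frac{\epsilon(\ell-1)-1}{2\ell}>0$. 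Hence for $S=\widetilde{\Theta}(1)$, and more generally for $S=o\brackets{N^{\frac{2\ell\delta}{\ell^2-1}}}$ within the allowed range, we get $T=\omega(T_{\sf opt})$. Meanwhile, the algorithm of \Cref{thm:classical-algorithm}, which stores $\Theta(T_{\sf opt})$ tuples, achieves $T_{\sf opt}$ with polynomial space. This gives the separation.

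The main obstacle is the range of $\epsilon$. The computation above only gives a gap when $\epsilon>\frac1{\ell-1}$. The statement's lower limit $\frac1\ell$ is smaller, and for $\frac1\ell<\epsilon\le\frac1{\ell-1}$ the naive comparison gives no gap. On that interval I would first try to make the requirement $TK=\Omega(N_0)$ in \Cref{thm:classical-overall_lowerbound} sharper. The bound $K^2=\Omega(N_0)$ mentioned in the overview, combined with $K=O\brackets{S^{\frac{\ell^2-1}{\ell}}T N^{-\frac1\ell}}$, gives
\[
T=\Omega\brackets{S^{-\frac{\ell^2-1}{\ell}}N^{\frac1\ell}N_0^{\frac12}}.
\]
The exponent gap against $T_{\sf opt}$ is then $\frac\epsilon2-\frac{\epsilon}{2\ell}>0$ for every $\epsilon>0$, which covers the whole stated range. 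The condition $\epsilon>\frac1\ell$ would be used to make the hypotheses of \Cref{thm:classical-algorithm} meaningful ($K=\Theta(N_0^{1/2})$ with $T_1\ge K$). If only the stated form of \Cref{thm:classical-overall_lowerbound} is allowed, I would restrict the conclusion to $\epsilon>\frac1{\ell-1}$.
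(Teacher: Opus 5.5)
Your argument for $\epsilon>\frac{1}{\ell-1}$ is the paper's proof: instantiate \Cref{thm:classical-overall_lowerbound} at $S=\Theta(\log N)$ and compare with \Cref{thm:classical-algorithm}. Your exponent $\frac{\epsilon(\ell-1)-1}{2\ell}$ is correct, and it exposes a slip in the paper. The paper's proof asserts the algorithm is ``strictly better than the bound when $\ell\geq 2$'', but on $(\frac1\ell,\frac1{\ell-1}]$ it is not. For example, with $\ell=2$ and $\epsilon=\frac34$, the space-bounded lower bound has exponent $\frac58$ while the algorithm uses $N^{11/16}$ queries. So the paper's route does not cover that subinterval either. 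Your guess about the role of $\epsilon>\frac1\ell$ is off. The condition $T_1\ge K$ is exactly the upper-limit condition $\epsilon\le\frac{2}{\ell-1}$, and nothing in \Cref{thm:classical-algorithm} needs $\epsilon>\frac1\ell$; that condition matches the quantum algorithm's hypothesis $N_0=\Omega(N^{1/\ell})$.

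The one real gap is your repair for the remaining range. The $K^2/N_0$ bound is only asserted in an overview footnote; the body proof of \Cref{thm:classical-overall_lowerbound} uses $TK=\Omega(N_0)$. Moreover, a birthday bound of that form controls a worst-case count or $\mathbb{E}[K^2]$, while \Cref{thm:classical_time_space_bound_for_finding_tuples} only bounds $\mathbb{E}[K]$. You cannot simply square an expectation.

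The bridge is a threshold argument. Let $K_y$ be the number of valid sum-$y$ tuples queried to $G$.
\begin{itemize}
\item Conditioned on $H$, each new such query gets a fresh uniform value, so a collision among the first $k$ of them occurs with probability at most $\binom{k}{2}/N_0$.
\item An output tuple never queried to $G$ matches with probability $1/N_0$.
\end{itemize}
Hence $\Pr[\text{win}]\le\Pr[K_y>k]+\frac{k^2}{2N_0}+\frac{1}{N_0}$. With success probability $\delta$ and $k=\sqrt{\delta N_0}$, this gives $\mathbb{E}[K_y]\ge k\Pr[K_y>k]=\Omega(\sqrt{N_0})$. Since $K_y$ is at most the same-sum maximum controlled by \Cref{thm:classical_time_space_bound_for_finding_tuples}, you get $S^{\frac{\ell^2-1}{\ell}}T=\Omega(N^{\frac1\ell}N_0^{\frac12})$. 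The gap exponent is then $\frac{\epsilon(\ell-1)}{2\ell}>0$ for every $\epsilon>0$.

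With this lemma written out, your proof covers the full stated range and strengthens the paper's classical lower bound. Without it, you have proved the statement only for $\frac{1}{\ell-1}<\epsilon\le\frac{2}{\ell-1}$, which is also all the paper's proof establishes.
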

\begin{proof}
    By \Cref{thm:classical-overall_lowerbound}, any classical algorithm that solves the $\ell$-Nested Collision Finding Problem when $\ell\geq 2$ and $N_0=\Theta\brackets{N^\epsilon}$ with constant probability using $S$ bit space and $T$ oracle queries must satisfy $S^{\frac{\ell^2-1}{2\ell}}T=\Omega\brackets{N^{\frac{1}{2\ell}+\frac{\epsilon}{2}}}$. This implies that with limited space, say $S=O(\log N)$, any classical algorithm that succeeds with constant probability should use $T=\Omega\brackets{N^{\frac{1}{2\ell}+\frac{\epsilon}{2}}}$ oracle queries. But with enough space, the classical algorithm described in \Cref{thm:classical-algorithm} solves this problem when $N_0=O\brackets{N^{\frac{2}{\ell-1}}}$ with constant probability using $T=O\brackets{N^{\frac{1}{\ell}+\frac{\eps}{2\ell}}}$ oracle queries which is strictly better than the bound when $\ell\geq 2$.
\end{proof}
\subsection{The quantum bound}
\begin{theorem}\label{thm:quantum-overall-lowerbound}
    For $\ell\geq 2$, any quantum algorithm that solves the $\ell$-Nested Collision Finding problem with constant probability with space $S$ and $T$ oracle queries must satisfy $S^{\frac{\ell+1}{2}}T=\Omega\brackets{N^{\frac{1}{2(\ell+1)}}N_0^{\frac{1}{4}}}$ if $S=\Omega(\log N)$ and $S=O\brackets{N^{\frac{1}{(\ell+1)(2\ell+1)}}}$.

    
\end{theorem}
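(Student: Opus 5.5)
The plan mirrors the classical proof of \Cref{thm:classical-overall_lowerbound}. We combine two facts about the number of same-sum $\ell$-tuples recorded in the ``write-only memory'' $G$. One fact says the algorithm must record many of them. The other, from \Cref{thm:two-oracle-K-l-tuple-time-space-tradeoff}, says a space-bounded algorithm can only record few.

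\textbf{Step 1: fix the problem and switch to the Hadamard view.} Fix the target distribution; for concreteness $y_H=y$ fixed, or else average over $y$ and set $y_H$ accordingly. Simulate $G$ by the compressed phase oracle (\Cref{lem:cpho-simulate}). Then pass to the equivalent Hadamard oracle $\Hadamardo$, under which the database register of $G$ plays exactly the role of the output register $\outreg$ of \Cref{sec:time-bound-for-finding-K-l-tuples-with-the-same-sum-with-advice}.

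\textbf{Step 2: success forces a large average label-$y_H$ count.} Apply \Cref{remember-tuples} to the oracle $G:[M^\ell]\to[N_0]$ with label function
\[
\lab(x_1,\dots,x_\ell)=\sum_i H(x_i),
\]
target label $y^*=y_H$, and $H$ fixed; valid (sorted) tuples only, all others given a dummy label. Averaging over $H$ (the constant in \Cref{remember-tuples} is independent of $\lab$), success with constant probability gives
\[
\E_H[V]=\Omega(N_0/T^2),
\]
where $V$ is the expected number of non-$\bot$ label-$y_H$ entries, measured after a uniformly random query $i\in[T]$. A non-$\bot$ entry in the compressed database corresponds to a nonzero Hadamard-basis entry of $\outreg$. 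So $V_i$ equals the $\{y_H\}$-sum capacity $K_{\yh}(\cA_i)$ of the algorithm $\cA_i$ that runs $\cA$ and halts after $i$ queries, where queries to $H$ are standard and queries to $G$ are $\Hadamardo$.

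\textbf{Step 3: bound each truncated capacity and conclude.} Each $\cA_i$ uses space $S$ and at most $T$ queries to each oracle. So \Cref{thm:two-oracle-K-l-tuple-time-space-tradeoff} gives
\[
K_{\yh}(\cA_i)=\widetilde O\brackets{S^{2\ell+2}T^2N^{-\frac{2}{\ell+1}}}
\]
in the stated range of $S$. Averaging over $i$ and $H$ and comparing with Step 2 yields
\[
S^{2\ell+2}T^4\,\widetilde\Omega(1)\ \ge\ N^{\frac{2}{\ell+1}}N_0 .
\]
Taking fourth roots gives $S^{\frac{\ell+1}{2}}T=\Omega\brackets{N^{\frac{1}{2(\ell+1)}}N_0^{\frac14}}$. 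The polylogarithmic factor $n_0=\log N_0$ is absorbed using $S=\Omega(\log N)$, or else tolerated as $\widetilde\Omega$.

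\textbf{Main obstacle: justifying the identification in Step 2.} Three points need care:
\begin{itemize}
\item The compressed database of $G$ after $i$ queries must coincide with the $\outreg$ register of the Hadamard-oracle picture. This is true because $\cpho$ and $\Hadamardo$ are the same operator in different bases, and $\ket{\bot}=\ket{\widehat 0}$.
\item The label projector $\Pi_v$ depends on $H$. This is handled by measuring the $H$ register first: it is a classical control in the standard oracle view, and it commutes with everything acting on $G$.
\item The algorithm's interleaved queries to $H$ are harmless for \Cref{remember-tuples}, since they are local unitaries from $G$'s point of view once $H$ is fixed.
\end{itemize}
If these hold, the rest is arithmetic.
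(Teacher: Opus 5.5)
Your proposal follows the paper's proof essentially step for step. You fix the target $y$; the paper picks the $y^*$ in the support of $\cD$ that maximizes success, and your averaging variant also works. You then apply \Cref{remember-tuples} for each fixed $H$ in the $\Hadamardo$ picture with label $\lab_H(x_1,\dots,x_\ell)=\sum_j H(x_j)$ to get $\E_H[K_H]=\Omega(N_0/T^2)$, and compare this with the time-averaged bound from \Cref{thm:two-oracle-K-l-tuple-time-space-tradeoff}.

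One small caveat concerns the $n_0=\log N_0$ factor. It cannot be absorbed via $S=\Omega(\log N)$ without raising the exponent of $S$ by $1/4$. So, like the paper's own proof, you actually obtain the $\widetilde\Omega$ form of the bound.
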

\begin{proof}
    By the definition of $\ell$-Nested Collision Finding Problem, the algorithm is given a target $y$ sampled from an arbitrary distribution $\cD$. We will prove that for any fixed $y$, the algorithm solving problem will have $S^{\frac{\ell+1}{2}}T=\widetilde\Omega\brackets{N^{\frac{1}{2(\ell+1)}}N_0^{\frac{1}{4}}}$. Then, for any distribution $\cD$, if an algorithm $\cA$ solves the $\ell$-Nested Collision Finding Problem with respect to $\cD$, take $y^*$ from the support of $\cD$ such that $\cA$ has largest winning probability with $y^*$ as target, then $\cA$ also solves the $\ell$-Nested Collision Finding Problem for $y^*$, and therefore also satisfy the bound $S^{\frac{\ell+1}{2}}T=\widetilde\Omega\brackets{N^{\frac{1}{2(\ell+1)}}N_0^{\frac{1}{4}}}$. 

    We now consider the case when the target $y$ is fixed, we fix $\yh$ such that $y_H=y$ for all $H$. We have shown that the algorithms having each of $\cpho$, $\Hadamardo$ and $\csto$ as the oracle query form are equivalent, therefore without loss of generality, we consider an arbitrary algorithm $\cA$ interacting with random oracle $G$ in $\Hadamardo$, and all registers $\workreg\ancillareg\dbreg\outreg$ are therefore initialized as $\ket{\Phi_0}=\frac{1}{\sqrt{N^M2^{M^\ell}}}\sum_{H,G}\ket{0}_{\workreg\ancillareg}\ket{\widehat{G}}_{\outreg}\ket{H}_{\dbreg}$.

    For $t=1,\cdots,T$, we look at the state in all registers after $t$ oracle queries, let $K^{(t)}$ be the expected \subsetcapacity{} of state $\ket{\Phi_t}$ with respect to our fixed $\yh$. Let $K=\frac{1}{T}\sum_{t=1}^T K^{(t)}$. By \Cref{thm:two-oracle-K-l-tuple-time-space-tradeoff}, algorithm up to time step $t$ has $t\leq T$ queries, we have
    \begin{equation}
        \label{eq:final-K-UB}
         K = \frac{1}{T}\sum_{t=1}^T\widetilde{O}\brackets{S^{2\ell+2}T^2 N^{-\frac{2}{\ell+1}}} = \widetilde{O}\brackets{S^{2\ell+2}T^2 N^{-\frac{2}{\ell+1}}}.
    \end{equation}
    We can also define the value of \subsetcapacity{} for any fixed $H$. Since we could view $H$ as classically uniform-randomly sampled, then whenever we look at the state in $\workreg\ancillareg\dbreg\outreg$ and project it to subspaces where $\dbreg$ is $\ket{H}_{\dbreg}$, their corresponding projected component all have the same amplitude. Also, if we denote $\ket{\Phi_{i,H}}_{\workreg\ancillareg\dbreg\outreg}$ as the normalized state after projecting it to the subspace where $\dbreg$ is $\ket{H}_{\dbreg}$, $\ket{\Phi_i} = \frac{1}{\sqrt{N^M}}\sum_{H} \ket{\Phi_{i,H}} $. Therefore, define $K^{(t)}_H$ as the expected \subsetcapacity{} of $\ket{\Phi_{i,H}}$, and $K_H = \frac{1}{T}\sum_{t=1}^T K_H^{(t)} $, we have $K = \E_{H} \left[K_H \right]$.

    Now we look at the probability of finding collisions in $G$ with bounded expected \maxcapacity{} by applying \Cref{remember-tuples}. Notice that in the original setting of \Cref{remember-tuples}, $G$ oracle is purified in register $\dbreg$, initialized as $\ket{\bot,\cdots,\bot}_{\dbreg} $, where the algorithm queries it using $\cpho$, $\outreg$ is used to output collisions, and there is no oracle $H$; for fixed label function, $V$ is defined by looking at register $\dbreg$ in compressed basis and count its non-$\bot$ entries among the designated label. 
    
    Here, we argue that this is equivalent to our current scenario under each fixed $H\in[N]^M$. Now output the collisions in some pre-determined positions of $\workreg$. We set the domain of $G$ oracle to be $[M^\ell]$, $G$ is purified in $\outreg$, initialized as $\ket{0,\cdots,0}_{\outreg}$, where the algorithm queries it using $\Hadamardo$. Notice that $\Hadamardo$ is essentially a $\cpho$ under Hadamard basis of register $\outreg$, the current standard basis state $\ket{0}$ is $\ket{\widehat{0}}=\ket{\bot}$ under Hadamard basis, so it is also consistent. For our fixed $H$, we define its induced label function $\lab_{H}:[M^\ell]\to [N]$, where $\lab_{H}(x_1,\cdots,x_\ell) = \sum_{j}H(x_j)\mod N$, therefore $\ell$-tuples with label $y_H$ are $\ell$-tuples with sum $y_H$. With fixed $H$, the state after $t$-th query becomes $\ket{\Phi_{i,H}}$, and $K_H$ exactly characterizes the expected same-label capacity averaging over all time steps from $1$ to $T$. 
    
    Therefore, we could safely borrow \Cref{remember-tuples}. For any fixed $H$ and label function $\lab_H$ as defined above, the probability of $\cA$ outputting a pair of labeled-collision in $G$ is exactly the probability of $\cA$ outputting a pair of colliding $\ell$-tuples in $G$ of target sum $y$, we denote this probability as $P_{H}$, then 
    \begin{equation}
        P_H = O\brackets{\frac{T^2 K_H}{N_0}}.
    \end{equation}

    Finally, the algorithm that solves the $\ell$-Nested Collision Finding problem (defined in \Cref{def:nested_col}) finds a pair of colliding $\ell$-tuples in $G$ of the target sum $y$ with randomly sampled $H$, therefore $\E_{H}\left[P_H\right] = \Omega(1)$. Then we have
    \begin{equation}
        \label{eq:final-K-LB}
        K = \E_{H} \left[K_H \right] = \Omega\brackets{T^{-2}N_0 }\cdot\E_{H}\left[P_H\right] = \Omega\brackets{T^{-2}N_0 }.
    \end{equation}
    Since the constant in the $O$-notation in \Cref{remember-tuples} does not depend on $H$, we can take this expectation over $H$. Combine \Cref{eq:final-K-LB} and \Cref{eq:final-K-UB}, we have $S^{\frac{\ell+1}{2}}T=\widetilde\Omega\brackets{N^{\frac{1}{2(\ell+1)}}N_0^{\frac{1}{4}}}$.
\end{proof}
\begin{theorem}\label{thm:quantum-space-time-tradeoff}
    For any $\ell\geq 4$ and any $N_0=\Theta\brackets{N^\epsilon}$ where $\frac{4\ell+6}{2\ell^2-\ell-3}<\epsilon\leq\frac{3}{\ell-1}$ the corresponding $\ell$-Nested Collision Finding Problem has a space-time tradeoff quantum algorithms.
\end{theorem}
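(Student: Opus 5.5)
The plan mirrors the classical separation argument (\Cref{classical-space-time-tradeoff}). We compare the space-bounded lower bound of \Cref{thm:quantum-overall-lowerbound} with the unbounded-space upper bound of \Cref{thm:quantum-algorithm}, specialized to $N_0=\Theta(N^\epsilon)$. By ``space-time tradeoff'' we mean two things. First, with small space, say $S=\Theta(\log N)$, every algorithm needs polynomially more queries than the optimal unbounded-space algorithm. Second, the space range where the lower bound applies is nonempty.

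First, the lower bound. With $S=\Theta(\log N)$, which lies in the allowed range $\Omega(\log N)\le S\le O(N^{1/((\ell+1)(2\ell+1))})$, \Cref{thm:quantum-overall-lowerbound} gives
\[
T=\widetilde\Omega\brackets{N^{\frac{1}{2(\ell+1)}+\frac{\epsilon}{4}}}.
\]
More generally, for any $S=N^{o(1)}$ the same exponent holds up to $N^{o(1)}$ factors.

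Second, the upper bound. Under $\frac1\ell\le\epsilon\le\frac{3}{\ell-1}$, \Cref{thm:quantum-algorithm} gives an algorithm with
\[
T=O\brackets{N^{\frac{2+\epsilon}{2\ell+1}}}.
\]
We must check that the lower endpoint $\frac{4\ell+6}{2\ell^2-\ell-3}$ is at least $\frac1\ell$. Cross-multiplying, this reduces to $4\ell^2+6\ell\ge 2\ell^2-\ell-3$, which is obvious. So the hypotheses of \Cref{thm:quantum-algorithm} hold.

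Third, the comparison. We need
\[
\frac{1}{2(\ell+1)}+\frac{\epsilon}{4}>\frac{2+\epsilon}{2\ell+1}.
\]
Multiply through by $4(\ell+1)(2\ell+1)$. The inequality becomes $2(2\ell+1)+\epsilon(\ell+1)(2\ell+1)>4(\ell+1)(2+\epsilon)$. This simplifies to $\epsilon\,(2\ell^2+3\ell+1-4\ell-4)>8\ell+8-4\ell-2$, that is,
\[
\epsilon\,(2\ell^2-\ell-3)>4\ell+6.
\]
This is exactly the stated lower bound on $\epsilon$, since $2\ell^2-\ell-3=(2\ell-3)(\ell+1)>0$ for $\ell\ge2$.

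Finally, the interval must be nonempty: $\frac{4\ell+6}{(2\ell-3)(\ell+1)}<\frac{3}{\ell-1}$. This is equivalent to $(4\ell+6)(\ell-1)<3(2\ell^2-\ell-3)$, i.e. $4\ell^2+2\ell-6<6\ell^2-3\ell-9$, i.e. $2\ell^2-5\ell-3>0$. The roots of $2\ell^2-5\ell-3$ are $-\tfrac12$ and $3$, so this holds exactly when $\ell>3$, which explains the hypothesis $\ell\ge4$.

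The polynomial gap in the exponents absorbs the polylogarithmic factors hidden in $\widetilde\Omega$. Hence every algorithm with $S=O(\log N)$, or more generally any $S\le N^{\delta}$ for a sufficiently small $\delta>0$ depending on the gap, is polynomially slower than the optimal algorithm. The main obstacle is only bookkeeping: the algorithm in \Cref{thm:quantum-algorithm} uses space comparable to its query count, and the lower bound's range condition on $S$ must be respected. Both are handled by fixing $S=\Theta(\log N)$.
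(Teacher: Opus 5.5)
Your proposal is correct and follows the same route as the paper: specialize the lower bound of \Cref{thm:quantum-overall-lowerbound} at $S=\Theta(\log N)$, compare the exponent $\frac{1}{2(\ell+1)}+\frac{\epsilon}{4}$ with the exponent $\frac{2+\epsilon}{2\ell+1}$ from \Cref{thm:quantum-algorithm}, and observe that the resulting interval for $\epsilon$ is nonempty exactly when $\ell\geq 4$. Your explicit check that $\frac{4\ell+6}{2\ell^2-\ell-3}\geq\frac{1}{\ell}$, which guarantees the algorithm's hypothesis $N_0=\Omega(N^{1/\ell})$, covers a detail the paper's proof leaves implicit.
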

\begin{proof}
    By \Cref{thm:quantum-overall-lowerbound}, any quantum algorithm that solves the $\ell$-Nested Collision Finding Problem when $\ell\geq 2$ and $N_0=\Theta\brackets{N^\epsilon}$ with constant probability using $S$ qubit space and $T$ oracle queries must satisfy $S^{\frac{\ell+1}{2}}T=\widetilde\Omega\brackets{N^{\frac{1}{2(\ell+1)}+\frac{\eps}{4}}}$. This implies that with limited space, say $S=O(\log N)$, any quantum algorithm that succeeds with constant probability should use $T=\widetilde{\Omega}\brackets{N^{\frac{1}{2(\ell+1)}+\frac{\eps}{4}}}$ oracle queries. But with enough space, the quantum algorithm described in \Cref{thm:quantum-algorithm} solves this problem when $N_0=O\brackets{N^{\frac{3}{\ell-1}}}$ with constant probability using $T=\Theta\brackets{N^{\frac{2+\eps}{2\ell+1}}}$ oracle queries which is strictly better than the bound when $\epsilon>\frac{4\ell+6}{2\ell^2-\ell-3}$. But notice that only when $\ell\geq 4$ there exists $\epsilon$ such that $\frac{4\ell+6}{2\ell^2-\ell-3}<\epsilon\leq\frac{3}{\ell-1}$.
\end{proof}

\printbibliography
\appendix

\section{\texorpdfstring{Proof for \texorpdfstring{\Cref{lem:single_game_bound_1}}{Lemma~\ref{lem:single_game_bound_1}}, \texorpdfstring{\Cref{lem:single_game_bound_2}}{Lemma~\ref{lem:single_game_bound_2}} and \texorpdfstring{\Cref{lem:alternative_game_monotone}}{Lemma~\ref{lem:alternative_game_monotone}}}{Proof for lemmas regarding the alternating game}}\label{sec:appendix_lemmas_alternating_game}
\subsection{Proof for \texorpdfstring{\Cref{lem:single_game_bound_1}}{Lemma~\ref{lem:single_game_bound_1}}}
\begin{proof}
    Let $\set{(\ket{u_i},\ket{v_i}}_{i\in[D]}$ be the Jordan basis of projectors $\Pi_{{\sf FlipTest}}^{\cal{A},\set{y_H}}$ and $\Pi_{\sf reset}$. We can rewrite the initial state as:
    \begin{equation*}
        \ket{\psi_{\sf st}}\ket{0}_{\seedreg}\ket{0}_{\memreg}=\sum_{i\in[D]}c_i\ket{v_i}.
    \end{equation*}
    Then the final state is
    \begin{equation*}
        \brackets{\Pi_{{\sf FlipTest}}^{\cal{A},\set{y_H}}\Pi_{\sf reset}}^k\brackets{\ket{\psi_{\sf st}}\ket{0}_{\seedreg}\ket{0}_{\memreg}}=\sum_{i\in[D]}c_ip_i^{\frac{2k-1}{2}}\ket{u_i}
    \end{equation*}
    where $p_i=\abs{\braket{u_i}{v_i}}^2$. By Jensen's inequality \Cref{lem:jensen_inequal}, we have
    \begin{align*}
        P^{(k)}_{\sf FlipTest}=&\abs{\brackets{\Pi_{{\sf FlipTest}}^{\cal{A},\set{y_H}}\Pi_{\sf reset}}^k\brackets{\ket{\psi_{\sf st}}\ket{0}_{\seedreg}\ket{0}_{\memreg}}}^2\\
        =&\sum_{i\in[D]}c_i^2p_i^{2k-1}\\
        \geq&\brackets{\sum_{i\in[D]}c_i^2p_i}^{2k-1}\\
        =&\brackets{P^{(1)}_{\sf FlipTest}}^{2k-1}.
    \end{align*}
\end{proof}
\subsection{Proof for \texorpdfstring{\Cref{lem:single_game_bound_2}}{Lemma~\ref{lem:single_game_bound_2}}}
\begin{proof}
    Replacement of the advice state to a maximally mixed state can only affect the success probability by $\frac{1}{2^S}$ since the advice is a $S$-qubit state. 
    \begin{align*}
        &\brackets{P^{{\sf uniform},(S)}_{\sf FlipTest}}^{\frac{1}{2S-1}}\\
        \geq&\brackets{\frac{1}{2^S}P^{(S)}_{\sf FlipTest}}^{\frac{1}{2S-1}}\\
        \geq&\frac{1}{2}\brackets{\frac{1}{2^S}P^{(S)}_{\sf FlipTest}}^{\frac{1}{2S-1}}\\
        \geq&\frac{1}{2}P^{(1)}_{\sf FlipTest}.
    \end{align*}
\end{proof}
\subsection{Proof for \texorpdfstring{\Cref{lem:alternative_game_monotone}}{Lemma~\ref{lem:alternative_game_monotone}}}
\begin{proof}
    It is equivalent to prove for $k\geq 2$:
    \begin{equation*}
        S_{k-1}:=\frac{\sum_{i\in[D]}c_i^2p_i^{2k-1}}{\sum_{i\in[D]}c_i^2p_i^{2k-3}}\leq S_k:=\frac{\sum_{i\in[D]}c_i^2p_i^{2k+1}}{\sum_{i\in[D]}c_i^2p_i^{2k-1}}.
    \end{equation*}
    Define $\alpha_i=\frac{c_i^2p_i^{2k-3}}{\sum_{i\in[D]}c_i^2p_i^{2k-3}}$, we have $S_k=\sum_{i\in[D]}\alpha_ip_i^2$. Let $\beta_i=\frac{\alpha_ip_i^2}{\mu}$ where $\mu=\sum_{i\in[D]}\alpha_ip_i^2$.
    \begin{align*}
        \beta_i=&\frac{\alpha_ip_i^2}{\mu}\\
        =&\frac{c_i^2p_i^{2k-1}}{\sum_{i\in[D]}c_i^2p_i^{2k-3}\cdot\mu}\\
        =&\frac{c_i^2p_i^{2k-1}}{\sum_{i\in[D]}c_i^2p_i^{2k-3}\cdot\brackets{\sum_{i\in[D]}c_i^2p_i^{2k-1}/\brackets{\sum_{i\in[D]}c_i^2p_i^{2k-3}}}}\\
        =&\frac{c_i^2p_i^{2k-1}}{\sum_{i\in[D]}c_i^2p_i^{2k-1}}.
    \end{align*}
    Thus $S_k=\sum_i\beta_ip_i^2$. Now it we prove that $\sum_{i\in[D]}\beta_ip_i^2\geq\sum_{i\in[D]}\alpha_ip_i^2$ instead, which is
    \begin{equation}\label{eq:variance_equation}
        \sum_{i\in[D]}\alpha_ip_i^4\geq\brackets{\sum_{i\in[D]}\alpha_ip_i^2}^2
    \end{equation}
    when multiple $\mu$ on both sides. Let $\mathbf{P}$ be a random variable that takes value $p_i^2$ w.p. $\alpha_i$. We have $\mathbb{E}[\mathbf{P}]=\sum_{i\in[D]}\alpha_ip_i^2$ and $\mathbb{E}[\mathbf{P}^2]=\sum_{i\in[D]}\alpha_ip_i^4$. \Cref{eq:variance_equation} follows from $\mathbf{Var}[\mathbf{P}]=\mathbb{E}[\mathbf{P}^2]-\mathbb{E}[\mathbf{P}]^2\geq 0$.
\end{proof}

\end{document}